\setlist{leftmargin=5.5mm}
\crefname{algocf}{alg.}{algs.}
\Crefname{algocf}{Algorithm}{Algorithms}
\newtheorem*{remark*}{Remark}
\newtheorem{definition}{Definition}
\newtheorem{lemma}{Lemma}
\newtheorem{corollary}{Corollary}
\newtheorem{example}{Example}
\definecolor{orangethema}{RGB}{255,200,150}
\definecolor{bluethema}{RGB}{186,225,246}
\definecolor{pinkthema}{RGB}{204,0,102}
\newcounter{mycommentc}
\patchcmd{\algocf@makecaption@ruled}{\hsize}{\textwidth}{}{} 
\patchcmd{\@algocf@start}{-1.5em}{0em}{}{} 
\newcommand{\multiset}[1]{\{\!\!\{ #1 \}\!\!\}}		                  
\newcommand{\XOR}{\mathrm{XOR}}
\newcommand{\XORh}{\mathrm{XORh}}
\newcommand{\AG}{\mathcal{AG}}
\newcommand{\AGh}{\mathcal{AG}\mathrm{h}}
\newcommand{\X}{\mathcal{X}}                            
\newcommand{\M}{\mathcal{M}}                            
\newcommand{\N}{\mathcal{N}}                            
\newcommand{\V}{\mathcal{V}}                            
\newcommand{\F}{\mathcal{F}}		                    
\newcommand{\T}{\mathcal{T}}		                    
\newcommand{\Pos}[1]{\mathcal{P}os(#1)}                 
\newcommand{\Pset}{\mathcal{P}}                         
\newcommand{\vars}[1]{\mathit{vars}(#1)}                
\newcommand{\names}[1]{\mathit{names}(#1)}              
\newcommand{\dom}[1]{\mathit{dom}(#1)}                  
\newcommand{\img}[1]{\mathit{img}(#1)}                  
\newcommand{\mgu}[2]{\mathrm{mgu}_{#1}(#2)}             
\newcommand{\eqE}[1]{=_{#1}}                            
\newcommand{\eqES}[1]{=^s_{#1}}                         
\newcommand{\R}{\mathcal{R}}				            
\newcommand{\rwstep}[2][]{                              
    \ifthenelse{\isempty{#1}}{\rightarrow_{#2}}{\rightarrow_{#2,#1}}
}
\newcommand{\rwstepC}[2][]{                             
    \ifthenelse{\isempty{#1}}{\rightarrow_{#2}}{\rightarrow_{#2/#1}}
}
\newcommand{\rwsteps}[2][]{                             
    \ifthenelse{\isempty{#1}}{\rightarrow_{#2}}{\rightarrow^*_{#2,#1}}
}
\newcommand{\norm}[2][]{{                               
    \ifthenelse{\isempty{#1}}{\downarrow_{#2}}{\downarrow_{#2,#1}}
}}
\newcommand{\minOrd}[2][]{{
    \ifthenelse{\isempty{#1}}{\min_{#2}}{\min_{#2,#1}}  
}}
\newcommand{\symbeval}[1]{\overline{#1}}                
\newcommand{\removeeval}[1]{\underline{#1}}             
\newcommand{\Alg}{\mathcal{A}}                          
\newcommand{\Rn}{{\R_\downarrow}}
\newcommand{\En}{{E_\downarrow}}
\newcommand{\Ea}{{E_\Alg}}
\newcommand{\nf}[3]{\mathrm{nf}_{#1,#2}(#3)}
\newcommand{\st}[1]{\mathrm{st}(#1)}
\newcommand{\sst}[1]{\mathrm{st}^s(#1)}
\newcommand{\OEval}[1]{\Downarrow'_{#1}}
\newcommand{\Eval}[1]{\Downarrow_{#1}}
\newcommand{\sizeEval}[1]{|#1|_{TE}}
\newcommand{\overlapset}[3]{#1 \mathrel{\overset{#2}{\rhd}} #3}
\newcommand{\overlapseteq}[3]{#1 \mathrel{\overset{#2}{\rhd\mspace{-4mu}\lhd}} #3}
\newcommandx{\rightrwlabel}[5][5=]{(\ifthenelse{\isempty{#5}}{}{#5,}#1,#2:\,#3 \rightarrow #4)}
\newcommandx{\leftrwlabel}[5][5=]{(\ifthenelse{\isempty{#5}}{}{#5,}#1,#2:\,#4 \leftarrow #3)}
\newcommandx{\leftrightrwlabel}[6][1=\sim,6=]{(\ifthenelse{\isempty{#6}}{}{#6,}#2,#3:\,#4 #1 #5)}
\newcommand{\rightrwstep}[5][]{\xrightarrow[\ifthenelse{\isempty{#1}}{}{#1,} #2,#3]{#4\ \rightarrow\ #5}}
\newcommand{\leftrwstep}[5][]{\xleftarrow[#2,#3\ifthenelse{\isempty{#1}}{}{,#1} ]{#5\ \leftarrow\ #4}}
\newcommand{\Rightrwstep}[1]{\xRightarrow{#1}}
\newcommand{\Leftrwstep}[1]{\xLeftarrow{#1}}
\newcommand{\Leftrightrwstep}[1]{\xLeftrightarrow{#1}}
\newcommand{\reverse}[1]{#1^{-1}}          
\newcommand{\pos}{\mathit{pos}}                     
\newcommand{\lessRwLbl}[1]{\prec_{#1}}           
\newcommand{\eqRwLbl}[1]{\backsimeq_{#1}}      
\newcommand{\leqRwLbl}[1]{\preccurlyeq_{#1}}           
\newcommand{\measureF}{\mathsf{rf}}
\newcommand{\measure}[1]{\mathsf{rf}(#1)}
\newcommand{\measureTerm}[1]{\mathsf{rf}_2(#1)}
\newcommand{\RelPara}{\approx_{\para}}              
\newcommand{\tikzrightrwstep}[7][]{
    \draw[->] (#2) edge node[auto,sloped] {\tiny$#5 \rightarrow #6$} node[auto,sloped,below] {\tiny$\ifthenelse{\isempty{#1}}{}{#1,} #3, #4$} (#7);
}
\newcommand{\tikzleftrwstep}[7][]{
    \draw[<-] (#2) edge node[auto,sloped] {\tiny$#6 \leftarrow #5$} node[auto,sloped,below] {\tiny$#3, #4\ifthenelse{\isempty{#1}}{}{, #1}$} (#7);
}
\newcommand{\tikzRightrwstep}[3]{
    \draw (#1) edge[double equal sign distance,-Implies] node[auto,sloped] {\tiny$#2$} (#3);
}
\newcommand{\tikzLeftrwstep}[3]{
    \draw (#1) edge[double equal sign distance,Implies-] node[auto,sloped] {\tiny$#2$} (#3);
}
\newcommand{\rwstepOrd}[5][]{
    \ifthenelse{\isempty{#1}}{
        \xrightarrow{#4\,:\,#2 \rightarrow #3}_{#5}
    }{
        \xrightarrow{#1,#4\,:\,#2 \rightarrow #3}_{#5}
    }
}
\newcommand{\rwLeftRightlabel}[7][]{                            
    #6,#7,#2,#3:\,#4 \ifthenelse{\isempty{#1}}{\sim}{#1} #5
}
\newcommand{\rwLeftrightlabel}[7][]{                            
    #6,#7,#2,#3:\,#4 \mathrel{\ifthenelse{\isempty{#1}}{\sim}{#1}} #5
}
\newcommand{\rwLeftrightstep}[7][]{                             
    \xleftrightarrow[\ifthenelse{\isempty{#6}}{}{#6,} #2,#3 \ifthenelse{\isempty{#7}}{}{,#7}]{#4\ \ifthenelse{\isempty{#1}}{\sim}{#1}\ #5}
}
\newcommand{\rwRightstep}[5][]{                             
    \xrightarrow[\ifthenelse{\isempty{#1}}{}{#1,} #2,#3]{#4\ \rightarrow\ #5}
}
\newcommand{\rwstepRightFull}[6]{
    \xrightarrow[\ifthenelse{\isempty{#5}}{}{#5,}#1,#2 \ifthenelse{\isempty{#6}}{}{,#6}]{\,#3 \rightarrow #4}
}
\newcommand{\rwLeftrightStep}[1]{\xLeftrightarrow{#1}}
\newcommand{\rwLeftStep}[1]{\xLeftarrow{#1}}
\newcommand{\rwRightStep}[1]{\xRightarrow{#1}}
\newcommand{\rwstepLeftFull}[6]{
    \xleftarrow[\ifthenelse{\isempty{#6}}{}{#6,}#1,#2 \ifthenelse{\isempty{#5}}{}{,#5}]{\,#4 \leftarrow #3}
}
\newcommand{\rwstepOrdLeft}[5][]{
    \ifthenelse{\isempty{#1}}{
        \prescript{}{#5}{\xleftarrow{#4\,:\,#2 \rightarrow #3}}
    }{
        \prescript{}{#5}{\xleftarrow{#1,#4\,:\,#2 \rightarrow #3}}
    }
}
\newcommand{\rwstepOrdEq}[5][]{
    \ifthenelse{\isempty{#1}}{
        \xleftrightarrow{#4\,:\,#2 \rightarrow #3}_{#5}
    }{
        \xleftrightarrow{#1,#4\,:\,#2 \rightarrow #3}_{#5}
    }
}
\newcommand{\normstep}[1][]{\rightsquigarrow_{#1}}
\newcommand{\PosX}[1]{\mathcal{P}os_{\setminus \X}(#1)}
\newcommand{\para}{\mathrel{||}}
\newcommand{\amin}{\mathsf{a}_{\mathrm{min}}}
\newcommand{\bin}{\mathsf{b}}
\newcommand{\tr}{\mathsf{tr}}
\newcommand{\RNormR}{\textsc{NormR}\xspace}
\newcommand{\RNormL}{\textsc{NormL}\xspace}
\newcommand{\REq}{\textsc{Eq}\xspace}
\newcommand{\RSubsume}{\textsc{Sub}\xspace}
\newcommand{\RVar}{\textsc{Var}\xspace}
\newcommand{\ROrd}{\textsc{Ord}\xspace}
\title{%
  Automatic verification of Finite Variant Property beyond
  convergent equational theories
\thanks{This work received funding from the France 2030 program
		managed by the French National Research Agency under grant
		agreement No. ANR-22-PECY-0006.}}
\author{
  \IEEEauthorblockN{Vincent Cheval\,\orcidlink{0000-0002-3622-2129}}
  \IEEEauthorblockA{
    University of Oxford, United Kingdom}
  \and
  \IEEEauthorblockN{Caroline Fontaine\,\orcidlink{0000-0001-8184-7366}}
  \IEEEauthorblockA{
    Universit\'e Paris-Saclay, CNRS, ENS Paris-Saclay, \\
    Laboratoire M\'ethodes Formelles, France}
}
\begin{document}

\maketitle

\begin{IEEEkeywords}
Rewriting, Equational Theory, Finite Variant Property, Verification, Unification, Cryptographic protocols, Symbolic models
\end{IEEEkeywords}

\pagestyle{plain} 
\thispagestyle{plain}

\begin{abstract}

Computer-aided analysis of security protocols heavily relies on equational theories to model cryptographic primitives. Most automated verifiers for security protocols focus on equational theories that satisfy the \emph{Finite Variant Property (FVP)}, for which solving unification is decidable. However, they either require to prove FVP by hand or at least to provide a representation as an $E$-convergent rewrite system, usually $E$ being at most the equational theory for an associative and commutative function symbol (AC). 
The verifier ProVerif is probably the only exception amongst these tools as it automatically proves FVP without requiring a representation, but on a small class of equational theories.
In this work, we propose a novel semi-decision procedure for proving FVP, without the need for a specific representation, and for a class of theories that goes beyond the ones expressed by an $E$-convergent rewrite system. We implemented a prototype and successfully applied it on several theories from the literature. 
\end{abstract}


\section{Introduction}
\label{sec:intro}

Cryptographic protocols are distributed programs designed to secure communications. They are used in diverse critical applications, such as electronic voting, cryptocurrencies, secure messaging, online payment, etc. Each protocol comes with its own list of security claims, some of which are common to all applications (e.g., secrecy and authentication) and others of which are tailored to specific domains (e.g., coercion resistance in electronic voting).

The design of such protocols being notoriously error prone, it is now standard practice to provide formal proofs that the target security properties are really satisfied. Over the years, there have been many success stories on the symbolic analysis of impactful real-life security protocols, for example TLS~\cite{DBLP:conf/sp/BhargavanBK17,DBLP:conf/sp/CremersHSM16,DBLP:conf/ccs/CremersHHSM17}, ECH~\cite{DBLP:conf/ccs/BhargavanCW22}, Signal~\cite{DBLP:conf/eurosp/KobeissiBB17}, 5G-AKA~\cite{DBLP:conf/ccs/BasinDHRSS18}, Noise~\cite{DBLP:conf/eurosp/KobeissiNB19}, EMV~\cite{DBLP:conf/sp/BasinST21} and IKEv2~\cite{DBLP:conf/acsac/GazdagGGHL21}. 
These successes were partly due to advances in the capabilities of automatic verifiers for cryptographic properties, as ProVerif~\cite{manual-proverif,DBLP:conf/sp/BlanchetCC22}, Tamarin~\cite{manual-tamarin,DBLP:journals/ieeesp/BasinCDS22}, Maude-NPA~\cite{manual-maude-npa,DBLP:journals/tcs/EscobarMM06}, DeepSec~\cite{DBLP:conf/sp/ChevalKR18,DBLP:conf/cav/ChevalKR18} and AKiSs~\cite{DBLP:journals/tocl/ChadhaCCK16,DBLP:conf/csfw/BaeldeDGK17}.

All the above-mentioned tools rely on similar underlying symbolic models. In particular, messages exchanged over the network are expressed by terms constructed out of function symbols (representing cryptographic primitives), names (representing large numbers such as keys), and variables.  Here, a term $\mathsf{enc}(m,k)$ represents the symmetric encryption of the plain text $m$ with the key $k$ using the encryption algorithm $\mathsf{enc}$. The algebraic properties of these cryptographic primitives are expressed by means of an \emph{equational theory}, comprising of equations between two terms that indicate that they
correspond to the same message. In our example, an encryption algorithm comes with an associated decryption algorithm, denoted here $\mathsf{dec}$. The link between $\mathsf{enc}$ and $\mathsf{dec}$ can be expressed by the equation $\mathsf{dec}(\mathsf{enc}(x,y),y) = x$. A set $E$ of such equations is an equational theory and induces an equivalence relation $=_E$ which represents the terms \emph{equal modulo $E$}. For example, $\mathsf{dec}(\mathsf{enc}(\mathsf{h}(a),k),k) =_E \mathsf{h}(a)$ but $\mathsf{dec}(\mathsf{enc}(\mathsf{h}(a),k),k') \neq_E \mathsf{h}(a)$ as the decryption key $k'$ is different from the encryption key $k$.

Checking equality of two terms modulo an equational theory $E$ is the most basic problem that these automatic tools must decide. However, as they consider an active attacker that controls the network, they need to verify the security claims for every possible message crafted by the attacker that would be accepted by honest participants.
Among other things, this entails checking  whether two terms can \emph{unify modulo the equational theory $E$}, that is, for terms $u$ and $v$, checking whether there exists a substitution $\sigma$, i.e. a mapping from variables to terms, such that $u\sigma =_E v\sigma$. Such a $\sigma$ is called a \emph{unifier}. Not only do the tools need to
decide whether such a unifier exists,
they also need the ability to compute the set of \emph{most general unifiers}.
This is a computationally hard problem that heavily depends on the equations in $E$ and that is undecidable in general.

The problem of unification modulo an equational theory has been extensively studied~\cite{DBLP:books/el/RV01/BaaderS01,DBLP:conf/ccl/BoudedC94,KIRCHNER1989171,DBLP:conf/fscd/Ayala-RinconFSS22,DBLP:journals/jsc/BaaderS96}. A very successful technique consists in finding a representation of $E$ as a \emph{convergent rewrite system} and applying \emph{basic narrowing}~\cite{DBLP:conf/rta/RetyKKL85,DBLP:journals/jsc/NuttRS89}. A rewrite system $\R$ is a set of oriented equations also called \emph{rules}, which allows rewriting a term $u$ to another term $v$ through a \emph{rewrite step}, denoted $u \rwsteps{\R} v$. For example, when $\R = \{ \mathsf{dec}(\mathsf{enc}(x,y),y) \rightarrow x \}$, we have $\mathsf{g}(\mathsf{dec}(\mathsf{enc}(\mathsf{h}(a),k),k),b) \rwsteps{\R} \mathsf{g}(\mathsf{h}(a),b)$.  Convergence implies that the rewrite system is \emph{terminating}, i.e., it does not allow infinite sequences of rewrite steps, and that it is \emph{confluent}, i.e., all sequences of rewrite steps from an initial term $t$ end up  in the same term.  Basic narrowing does not always terminate, but it is a generic technique that covers a large class of cryptographic primitives. This is for instance the underlying technique used by the verifier DeepSec~\cite{DBLP:conf/sp/ChevalKR18,DBLP:conf/cav/ChevalKR18}.  However, basic narrowing fails when the equational theory contains function symbols that are associative and commutative (AC), such as Exclusive-Or (XOR).
Not only does basic narrowing fail as such equational theory cannot be represented by a terminating rewrite system, but even basic AC-narrowing (that is basic narrowing modulo AC) was shown to be incomplete~\cite{DBLP:conf/rta/Comon-LundhD05}.  

Many works have thus developed ad-hoc algorithms solving the unification problem for specific equational theories: AC~\cite{DBLP:conf/ccl/BoudedC94,DBLP:journals/jar/Boudet93,DBLP:journals/jsc/Fages87,DBLP:journals/jacm/Stickel81,DBLP:conf/lics/BoudetCD90}, Abelian Groups ($\mathcal{AG}$)~\cite{10.1007/3-540-44881-0_13,MR0749246}, AC plus unit, idempotency (ACUI) and distributivity~\cite{DBLP:journals/jar/AnantharamanNR04}, Exclusive-Or with homomorphism (XORh)~\cite{DBLP:conf/cade/LiuL11}, $\mathcal{AG}$ with homomorphism ($\mathcal{AG}h$)~\cite{DBLP:journals/iandc/LiuL14}, etc. However, in the context of the automated verification of cryptographic protocols, ad-hoc procedures are not sufficient. The ever-increasing complexity of protocols and their cryptographic primitives require tools that support user-defined equational theories and primitives. 

In 2005, Comon and Delaune introduced the \emph{finite variant property} (FVP)~\cite{DBLP:conf/rta/Comon-LundhD05} for equational theories. They present a general technique, showing that when an equational theory $E$ satisfies the FVP modulo some smaller equational theory $E'$, the problem of unification modulo $E$ can be reduced to the problem of unification modulo $E'$, using what is now known as \emph{$E$-variant narrowing}. This is why FVP is of particular interest in the case of protocol analyses, as many user-defined equational theories can be reduced in practice to unification modulo AC. Since 2005, and although the definition of FVP is quite general, most efforts in the literature for proving the FVP have focused on cases where $E$ can be represented by a $E'$-convergent rewrite system either by hand~\cite{DBLP:conf/rta/Comon-LundhD05,DBLP:conf/wrla/Meseguer20} or automatically~\cite{DBLP:conf/rta/EscobarMS08,cholewa2014variants}.

It is important to notice that 
using the FVP to solve the unification problem is usually less efficient than using a bespoke algorithm. This is partly due to the fact that unification modulo AC is extremely costly, even on small problems, leading to significant decreases in performance, whereas unification modulo XOR and $\mathcal{AG}$ are much more efficient. As an example, the unification of $x + x + x + x$ and $x_1 + x_2 + x_3 + x_4$ modulo AC has more than 34 billion most general unifiers \cite{DBLP:journals/jar/Boudet93} whereas the unification modulo XOR has only 57. Nevertheless, using the FVP for solving unification remains in our opinion the most successful technique in practice.

The same year Comon and Delaune introduced the FVP, Blanchet, Abadi and Fournet published a paper \cite{DBLP:journals/jlp/BlanchetAF08} on an extension of the ProVerif tool in which they handled the equational theories by introducing the notion of \emph{extended signature modelling an equational theory}. In~\cite{DBLP:journals/jlp/BlanchetAF08}, they described two procedures to create such extended signatures from an equational theory given as input: one for $E'$-convergent equational theories with $E' = \emptyset$, and the other for \emph{linear equational theories} (i.e., those for which in all equations  each variable occurs at most once in the left-hand side and at most once in the
right-hand side) that cannot be expressed as a convergent rewrite system. 
Interestingly, their notion of extended signatures modelling an equational theory~\cite{DBLP:journals/jlp/BlanchetAF08} can be shown to coincide with the FVP when $E' = \emptyset$. 

As mentioned, most past works on the FVP have been focusing on equational theories $E$ that can be represented by an $E'$-convergent rewrite system $\R$ where $E' = AC$. Furthermore, they also require the rewrite system $\R$ to be provided in order to either compute $E$-variants or check the FVP. This is the case for the Maude-NPA and Maude tools. The Tamarin tool is even further restrictive as it only allows four groups of built-in function symbols with associative-commutative properties: Exclusive-OR, Diffie-Hellman groups (Abelian Group with an exponentiation operator), Bilinear-pairing and multiset (simple AC function symbol). In particular, 
\begin{inparaenum}[(i)]
    \item the function symbols defined in these equational theories cannot be used in user-defined equational theories; 
    \item user-defined primitives cannot have function symbols with associative-commutative properties.
\end{inparaenum}

The limitations in Tamarin partly come from the fact that creating an $AC$-convergent rewrite system $\R$ that satisfies the FVP can be very tricky. In the case of $\mathcal{AG}$, the only known representation showing the FVP is a peculiar rewrite system containing 10 rewrite rules, first proposed by Lankford~\cite{hullot1980catalogue,DBLP:conf/rta/Comon-LundhD05}.
In contrast, we make the following observations on the algorithms proposed in~\cite{DBLP:journals/jlp/BlanchetAF08} and currently used in ProVerif:
\begin{itemize}
\item they can be used to effectively compute variants of terms and most general unifiers;
\item they do not require any input other than the equational theory $E$ itself (no mandatory rewrite system requested);
\item they handle some theories that are not convergent (e.g., they handle the linear theory $\mathrm{exp}(\mathrm{exp}(g,x),y) = \mathrm{exp}(\mathrm{exp}(g,y),x)$);
\item they terminate only if $E$ has the FVP modulo the empty theory. In particular, they cannot handle theory modulo AC. 
\end{itemize}

\subsubsection*{Our contributions} This present paper generalises and improves upon the framework and algorithms of~\cite{DBLP:journals/jlp/BlanchetAF08,DBLP:journals/jlp/BlanchetAF08}. Our main contributions are as follows:
\begin{itemize}
    \item we introduce a new notion of \emph{Rewrite Theory mimicking an equational theory $E$} that implies the FVP modulo an equational theory beyond AC, e.g. $\XOR$, $\XORh$, $\AG$, $\AGh$; thereby avoiding the costly unification modulo AC;
    \item we show under which conditions our framework coincides with the FVP;
    \item we provide a new semi-decision procedure that can automatically transform an equational theory $E$ into a rewrite theory that mimics $E$. We lift the convergent or linear restrictions from~\cite{DBLP:journals/jlp/BlanchetAF08} as our procedure requires no initial condition on $E$;
    \item we consider some optimisation techniques for both scope and efficiency;
    \item we have implemented in a prototype our algorithm restricted to FVP modulo AC and successfully tested it on several theories with known FVP modulo AC from the literature (e.g. $\XOR$, $\AG$, Diffie-Hellman bilinear pairing). We also showed that many of these equational theories augmented with a homomorphic symbol with different group operators also satisfy the FVP. This is for example the case for $\XOR$, $\AG$, ElGamal and (a)symmetric encryption. Finally, we also provide several toy examples showing that the prototype also works on equational theory without AC-convergent rewrite system representation.
    The code and examples are available at~\cite{prototypeFVP}. 
    \end{itemize}
We believe that our work is generic enough to be ported to most automatic verifiers of security protocols, leading to a significant improvement in their scope, usability and efficiency. 

\subsubsection*{Outline} In \Cref{sec:prelim}, we provide some preliminary definitions, including the finite variant property. \Cref{sec:extended-signature} presents the framework with the new notion of rewrite theory and \Cref{sec:variant} explains how it relates to the finite variant property. In \Cref{sec:generation-extended-signature}, we shall describe our semi-decision procedure for generating rewrite theories and present a detailed overview of its proof of correctness. \Cref{sec:optimisations} focuses on three possible optimisations to our procedure. Finally, in \Cref{sec:discussion}, we present experimental results on a subset of equational theories that can be handled by our prototype and discuss some adjacent properties of our procedure and its limitations.


\section{Preliminaries}
\label{sec:prelim}


\subsection{Terms} 

We use classical notation and terminology from \cite{DBLP:conf/rta/Comon-LundhD05} on terms, unification, rewrite systems. Let $\X$ be an infinite set of variables and $\N$ be an infinite enumerable set of names (also called \emph{free constant} in the rewriting community). 
The set $\T(\F,\X \cup \N)$ consists  of all terms built over the finite ranked alphabet $\F$ of function symbols, variables from  $\X$ and names from $\N$. A term $t$ is \emph{ground} when $t \in \T(\F,\N)$.
The set of positions of a term $t$ is written $\Pos{t}$.
The subterm of $t \in \T(\F,\X \cup \N)$ at position $p \in \Pos{t}$ is written $t|_p$. We denote $\st{t}$ (resp. $\sst{t}$) the set of all (resp. strict) subterms of $t$.
The term obtained by replacing $t|_p$ with $u$ is denoted $t[u]_p$. The set of variables occurring in $t$ is denoted $\vars{t}$. The set of names occurring in $t$ is denoted $\names{t}$.

A context $C[\_]$ is a term where one of its subterms is replaced by a hole denoted by $\_$. Given a term $u$, $C[u]$ denotes the term obtained from $C[\_]$ by replacing the hole $\_$ with $u$.

A \emph{substitution} $\sigma$ is a mapping from $\X$ to $\T(\F,\X \cup \N)$. The \emph{domain} of $\sigma$, denoted $\dom{\sigma}$, is the set of variables $x$ such that $x\neq x\sigma$. The \emph{image} of $\sigma$, denoted $\img{\sigma}$, is the set of terms $x\sigma$ where $x\in \dom{\sigma}$. Given $\mathsf{N},\mathsf{N}' \subseteq \N$, a bijective mapping $\rho$ from $\mathsf{N}$ to $\mathsf{N}'$ is called a \emph{renaming}. We will denote $\dom{\rho} = \mathsf{N}$ and $\img{\rho} = \mathsf{N}'$.


\subsection{Equational theories} 

An \emph{equational theory} $E$ is a set of equations (unordered pairs of terms) from $\T(\F,\X)$, meaning that they should not contain names. It induces the relation  $=_E$ which is the least congruence on $\T(\F,\X \cup \N)$ such that $u\sigma =_E v\sigma$ for all pairs $u = v \in E$ and substitutions $\sigma$.
$E$ is \textit{regular} if for all equations $t_1 = t_2 \in E$, $vars(t_1) = vars(t_2)$. $E$ is \emph{trivial} if for all terms $s, t \in \T(\F,\X \cup \N)$, $s =_E t$.

Two terms $s, t \in \T(\F,\X \cup \N)$ are \emph{$E$-unifiable} if there exists a substitution $\sigma$, called \emph{$E$-unifier}, such that $s\sigma =_E t\sigma$. For example, using the infix notation for $+$, the set $E_\oplus$ of equations for XOR is as follows:
\[
    E_\oplus = \left\{ \begin{array}{c}x + y = y + x, x + (y + z) = (x + y) + z,\\
    x + x = 0 , x + 0 = x \end{array}\right\}
\]
The equations for Abelian Groups ($\mathcal{AG}$) include $x * 1 = x$ and $x * inv(x) = 1$ with AC equations for $*$.
We say that a finite set of substitutions $S = \{\sigma_1,\ldots,\sigma_n\}$ is a \emph{complete set of $E$-unifiers of $s,t$} if for all $E$-unifiers $\sigma$ of $s,t$, there exist $i \in \{1,\ldots,n\}$ and a substitution $\theta$ such that for all $x \in \vars{s,t}$, $x\sigma =_E x\sigma_i\theta$. We say that $S$ is a \emph{complete set of most general $E$-unifiers of $s,t$}, denoted by $\mgu{E}{s,t}$, when it is a complete set of $E$-unifiers and no substitution is an instance of another, i.e. for all $\sigma_1,\sigma_2 \in S$, there is no substitution $\alpha$ such that $\sigma_1 =_E \sigma_2\alpha$.
Note that it is easy to derive an algorithm that produces finite complete sets of most-general $E$-unifiers from an algorithm that produces finite complete sets of $E$-unifiers. 


\subsection{Rewriting}

A \emph{term rewrite system} (TRS) is a finite set of \textit{rewrite rules} $\ell \rightarrow r$ where $\ell, r \in \T(\F,\X)$. 
A term $s \in \T(\F,\X \cup \N)$ rewrites to $t$ by a TRS $\R$, denoted $s \rwstep{\R} t$, if there exist $\ell \rightarrow r$ in $\R$, $p \in \Pos{s}$ and a substitution $\sigma$ such that $s|_p = \ell\sigma$ and $t = s[r\sigma]_p$. 

Given a rewrite system $\R$ and an equational theory $E$, \emph{$s$ rewrites into $t$ by $\R$ modulo $E$}, denoted $s \rwstep[E]{\R} t$, iff there exist a position $p \in \Pos{s}$, a rule $\ell \rightarrow r \in \R$ and a substitution $\sigma$ such that $s|_p =_E \ell\sigma$ and $t = s[r\sigma]_p$. The relation $\rwstep[E]{\R}$ has usually been used~\cite{DBLP:conf/icalp/JouannaudKK83} to implement the larger relation ${\rwstepC[E]{\R}} = (=_E \circ \rwsteps[]{\R} \circ =_E)$.

A rewrite system $\R$ is \emph{$E$-confluent} if and only if for all $s,t \in \T(\F,\X \cup \N)$, if $s =_{(\R^=) \cup E} t$ then there exist $s',t'$ such that $s \rwsteps[E]{\R} s'$, $t \rwsteps[E]{\R} t'$ and $s' =_E t'$, where $\R^= = \{ \ell = r \mid \ell \rightarrow r \in \R\}$ and $\rwsteps[E]{\R}$ is the reflexive transitive closure of $\rwstep[E]{\R}$. 
We say that $\R$ is \emph{$E$-terminating} when the relation $\rwstepC[E]{\R}$ is well founded. $\R$ is said to be \emph{$E$-convergent} when it is both $E$-terminating and $E$-confluent. Finally, we also say that $\R$ is $E$-confluent (resp. convergent) for $E'$ when $\R$ is $E$-confluent (resp. convergent) and $(\R^=) \cup E$ is the same relation as $E'$, i.e. $s =_{(\R^=) \cup E} t$ iff $s =_{E'} t$.

\begin{example}
\label{ex:XOR}
Defining $\R = \{ x + 0 \rightarrow x; x + x \rightarrow 0 \}$ and $\R_{\oplus} = \R \cup \{ x + (x + y) \rightarrow y \}$, it is well known that $\R_{\oplus}$ is $AC$-convergent for XOR but not $\R$. Indeed, the term $t_1 = (a + a) + b \rwsteps[AC]{\R} b$ but the term $t_2 = a + (a + b)$ cannot be rewritten by $\rwstep[AC]{\R}$ since neither $a$ nor $(a + b)$ can be $AC$-matched with the left hand side of a rule in $\R$. In other words, $t_2 \not\rwstep[AC]{\R}$. Therefore, $\R$ is not $AC$-confluent.

This also illustrates the difference between $\rwstepC[AC]{\R}$ and $\rwstep[AC]{\R}$, as $t_2 \rwstepC[AC]{\R} 0 + b \rwstepC[AC]{\R} b$.
\end{example}

\subsection{Ordering terms}
A strict order on terms $>$ is said to be a \emph{rewrite order} when it is closed by application of contexts (\emph{i.e.}, $t > s$ implies $C[t] > C[s]$) and substitutions (i.e., $t > s$ implies $t\sigma > s\sigma$). When the rewrite order is well-founded, it is called a \emph{reduction order}. 
An order $>$ is \emph{$E$-compatible} if $s' =_E s > t =_E t'$ implies $s' > t'$. We say that $>$ is \emph{$E$-total on ground terms} ($E$-total for short) if for all $s,t \in \T(\F,\N)$, $s \neq_E t$ implies $s > t$ or $t > s$. Furthermore, $>$ is \emph{$E$-compatible with a rewrite system $R$} when it is $E$-compatible and for all $\ell \rightarrow r \in \R$, we have $\ell > r$. It is well known that a rewrite system $\R$ is $E$-terminating if and only if there exists a reduction order $E$-compatible with $\R$.
An $E$-compatible reduction order induces a well-defined order on the set of $=_E$-equivalence classes. Moreover, when $>$ is in addition $E$-total, $>$ becomes total on $\T(\F,\N)/_{=_E}$. 
In such a case, given another equational theory $E'$ such that $E \subseteq E'$, we define $\minOrd[>]{E'}(t)$ as the smallest element of $\T(\F,\N)/_{=_E}$ by $>$ such that $t =_{E'} \minOrd[>]{E'}(t)$. In other words, $\minOrd[>]{E'}(t)$ is the smallest term modulo $E$ by $>$ that is equal modulo $E'$ to $t$.

\begin{example}
\label{ex:reduction order count}
Consider the order as follows: $t >_\# s$ when
\begin{itemize}
\item the number of function symbols or names in $t$ is greater than in $s$
\item for all $x \in \X$, the number of occurrences of $x$ in $t$ is greater than in $s$
\item at least one of these inequalities is strict.
\end{itemize}
The order $>_\#$ is a reduction order AC-compatible but not AC-total.
\end{example}

\begin{remark*}
It was already shown in~\cite{DBLP:conf/lics/Baader97} that the existence of a non-empty $E$-compatible reduction order necessarily implies $E$ to be regular. In other words, when $E$ is not regular, no constraint system can be $E$-terminating. Indeed, consider an equation $u = C[x]$ in $E$ with $x$ a variable not in $u$, and a rewrite rule $\ell \rightarrow r$ in $\R$. We have $C[\ell] \rwstep{\R} C[r] =_E C[\ell]$. Hence, $=_E \circ \rwstep[]{\R} \circ =_E$ is not well founded.
\end{remark*}

There exist many ways to implement a reduction order, specially when $E = \emptyset$. The most commonly used in the literature are \emph{recursive path ordering} (RPO)~\cite{DBLP:journals/ipl/JouannaudL82} and \emph{lexicographic path ordering} (LPO)~\cite{DBLP:journals/jar/Lescanne90}. They enjoy nice properties such as the \emph{subterm property} (i.e. $C[s] > s$ for all non-empty contexts $C[\_]$) and being total on ground terms. When $E \neq \emptyset$, it is more complicated to create an $E$-compatible and $E$-total reduction order but previous works have been successful when considering theories such as AC~\cite{DBLP:conf/rta/RubioN93,DBLP:conf/rta/Rubio99,DBLP:journals/tcs/RubioN95,DBLP:journals/tplp/0002WHM16} or permutations~\cite{DBLP:conf/fscd/0001L21}. 

\begin{example}
\label{ex:order AC}
Consider $\F = \F_{AC} \cup \F_o$ where $\F_{AC}$ are the binary function symbols with associative and commutative properties, and $\F_o$ are all the other function symbols.

To define the order $>_{AC}$ from \cite{DBLP:conf/rta/Rubio99}, we first need to consider a strict order on the function symbols, denoted $>_\F$. Additionally, we need consider the terms in their \emph{flattened form}, e.g., $f(u_1,f(u_2,u_3))$ with $f \in \F_{AC}$ is represented as $f(u_1,u_2,u_3)$. This is a classic representation of terms when working with AC function symbols. Let us denote by $\mathsf{tf}(t)$ the term $t$ in flattened representation. We now define some preliminary notions as follows.

For all $s = f(s_1,\ldots,s_n)$ with $f \in \F_{AC}$, we define:
\begin{itemize}
\item $\mathrm{EmbSmall}(s) = \{ \mathsf{tf}(f(s_1,\ldots,s_{i-1},v_j,s_{i+1},\ldots,s_n)) \mid s_i = h(v_1,\ldots,v_r) \wedge f >_\F h \wedge j \in \{1,\ldots,r\}\}$;
\item $\mathrm{BigHead}(s) = \multiset{s_i \mid i \in \{1, \ldots,n\} \wedge top(s_i) >_\F f}$;
\item $\mathrm{NoSmallHead}(s) = \multiset{s_i \mid i \in \{1, \ldots,n\} \wedge f  \not>_\F top(s_i)}$;
\item $\#(f(s_1,\ldots,s_n)) = \#_v(s_1) + \ldots +  \#_v(s_n)$ with $\#_v(x) = x$ and $\#_v(t) = 1$ if $t$ is not a variable;
\end{itemize}
where $\multiset{u_1,u_2,\ldots,u_k}$ is a multiset.

The order $>_{AC}$ is defined as: $s = f(s_1,\ldots, s_n) >_{AC} g(t_1,\ldots,t_m)$ if and only if one of the following properties holds
\begin{enumerate}
    \item $s_i \geq_{AC} t$ for some $i \in \{1, \ldots, n\}$
    \item $f >_\F g$ and $s >_{AC} t_i$ for all $i \in \{1, \ldots, m\}$
    \item $f = g \in \F_o$ and $(s_1,\ldots,s_n) >^{lex}_{AC} (t_1,\ldots,t_m)$ and $s >_{AC} t_i$ for all $i \in \{1, \ldots, m\}$
    \item $f = g \in \F_{AC}$ and $\exists s' \in \mathrm{EmbSmall}(s)$ s.t. $s' \geq_{AC} t$
    \item $f = g \in \F_{AC}$ and $\forall t' \in \mathrm{EmbSmall}(t)$, $s >_{AC} t'$ and $\mathrm{NoSmallHead}(s) \geq^{mul}_{AC} \mathrm{NoSmallHead}(t)$ and either:
    \begin{itemize}
    \item $\mathrm{BigHead}(s) >^{mul}_{AC} \mathrm{BigHead}(t)$ or 
    \item $\#(s) > \#(t)$ or
    \item $\#(s) \geq \#(t)$ or $\multiset{ s_1, \ldots, s_n} >^{mul}_{AC} \multiset{ t_1, \ldots, t_m}$
    \end{itemize}
\end{enumerate} 
with $>^{lex}_{AC}$ the lexicographic extension of $>_{AC}$ on sequences and $>^{mul}_{AC}$ is the multiset extension of $>_{AC}$ on finite multisets. 

The order $>_{AC}$ was shown in \cite{DBLP:conf/rta/Rubio99} to be a AC-compatible and AC-total reduction order. It is also AC-compatible with $\R_{\oplus}$ from \Cref{ex:XOR}. This is one of the two reduction orders we implemented in our prototype.
\end{example}

Most orders in the literature do not consider names in terms. For the need of our algorithm, we will require our reduction order to be stable by application of renamings that preserve the order. Formally, a renaming $\rho$ \emph{preserves $>$} when for all $a,b \in \dom{\rho}$, $a > b$ iff $a\rho > b\rho$. We say that $>$ is \emph{stable by renaming} when for all terms $s,t$, for all renamings $\rho$ preserving $>$, if $\names{s,t} \subseteq \dom{\rho}$ then $s > t$ iff $s\rho > t\rho$. 

\subsection{Finite variant property}

We rely on the seminal notion of finite variant property which has been initially introduced by~\cite{DBLP:conf/rta/Comon-LundhD05} adapted to our notation.
Let $E \subseteq E'$ be two equational theories. Let $>$ be an $E$-total and $E$-compatible reduction order.
Given two terms $t,t' \in \T(\F,\X \cup \N)$, $(t',\theta)$ is a \emph{$E'$-variant} of $t$ when $t\theta =_{E'} t'$. A \emph{complete set of $E'$-variants modulo $E$ of $t$} is a set $S$ of $E'$-variants of $t$ such that for all substitutions $\sigma$ closing for $t$ (i.e. $t\sigma$ is ground), there exist $(t',\theta) \in S$ and a substitution $\alpha$ such that $\minOrd[>]{E'}(t\sigma) =_{E} t'\alpha$.


\begin{definition}[\cite{DBLP:conf/rta/Comon-LundhD05}]
\label{def:finite_variant}
Given two equational theories $E \subseteq E'$, $E'$ has the \emph{finite variant property modulo $E$} w.r.t. $>$ if for all terms $t$, there exists a finite complete set of $E'$-variants modulo $E$ of $t$.
\end{definition}

Intuitively, if $S$ is the complete set of $E'$-variants modulo $E$ of $t$ then the smallest term by $>$ equal modulo $E'$ to any closing instantiation of $t$ is an instantiation (modulo $E$) of a term in $S$. To draw a parallel with the notion of most general unifiers, one could say that the terms in $S$ of $t$ are the \emph{most general smallest $E$-instantiations of $t$ by $>$}. Note that even a simple term can have many variants.

\begin{example}
Coming back to the theory $E_{\oplus}$ of XOR, the complete set of $E_{\oplus}$-variants modulo $AC$ of $x_1 + x_2$ is composed of 7 variants:
\[
\begin{array}{@{}c@{}}
x_1, \{ x_2 \mapsto 0\} \quad x_2, \{ x_1 \mapsto 0\} \quad y, \{ x_1 \mapsto y + x_2\}\\
y_1 + y_2, \{ x_1 \mapsto y_1 + z, x_2 \mapsto y_2 + z\} \quad 0, \{ x_2 \mapsto x_1\}\\
y, \{ x_2 \mapsto x_1 + y\} \quad  x_1 + x_2, \emptyset
\end{array}
\]
For the Abelian Group theory, the term $x_1 + x_2$ has 47 variants.
\end{example}


In the rest of this paper, we will say that $>$ is a \emph{$E$-strong reduction order} when it is $E$-compatible, $E$-total and stable by renaming.

\section{Rewrite theory}
\label{sec:extended-signature}

In the vein of previous
works~\cite{DBLP:journals/jlp/BlanchetAF08,DBLP:conf/rta/EscobarMS08,DBLP:conf/rta/Comon-LundhD05,DBLP:journals/entcs/EscobarMS09,cholewa2014variants},
a rewrite theory $T$ can be seen as a decomposition of the main
equational theory $E$ into a rewrite system and a smaller equational
theory. However, as previously mentioned, the decomposition of an
equation theory $E$ into a rewrite system that is AC-convergent does
not necessarily lead to an efficient unification algorithm as the
unification modulo AC is notoriously slow. 
On the other hand, the problem of matching modulo AC, that is,
checking whether there exists $\sigma$ such that $u\sigma =_{AC} v$,
can be often solved efficiently in practice~\cite{10.1007/3-540-44881-0_3}, despite still being an NP-complete problem~\cite{BENANAV1987203}. 
This distinction between unification and matching modulo AC gave rise
to one of the key component of our rewrite theories: instead of decomposing the main equation theory $E$ into one rewrite system and one small equational theory, a rewrite theory $T$ will decompose $E$ into two sets of rewrite rules $\Rn$ and $\R$ and two equational theories $\En$ and $\Ea$ such that $\En \subseteq \Ea \subseteq E$.
Intuitively, we will assume the existence of
an efficient algorithm for solving unification modulo $\Ea$
(e.g. XOR, $\mathcal{AG}$) whereas we only require an efficient
algorithm for solving matching modulo $\En$ (e.g. AC). 
Similarly,
the set of rewrite rules $\R$ will be used to compute the variants of terms modulo $\Ea$, 
whereas the set of rewrite rules $\Rn$ will be used to normalise the terms
modulo $\En$. Finally, our rewrite theory $T$ will also include an
$\En$-strong reduction order compatible with $\Rn$. This
order will serve several purposes: first, it allows us to ensure
$\En$-termination of $\Rn$ (we do not require $\En$-convergence);
second, it will be used to show that $E$ has the finite variant property modulo $\Ea$; and it will play a crucial part in the proof of correctness of our main algorithm (see \Cref{sec:generation-extended-signature}). 


\begin{definition}
\label{def:extended_signature}
A \emph{rewrite theory} $T$ is a tuple $(>,\R,\Rn,\En,\Ea)$ where:
\begin{enumerate}[label=\textbf{S\arabic*}]
    \item\label{S:subset} $\En$ and $\Ea$ are equational theories with $\En \subseteq \Ea$
    \item\label{S:order} $>$ is a $\En$-strong reduction order compatible with $\Rn$
    \item\label{S:std} $\R$ is a rewrite system such that for all $f \in \F$, denoting $\ell = f(x_1,\ldots,x_n)$, we have $(\ell \rightarrow \ell) \in \R$, and for all $\ell' \rightarrow r' \in \R$, we have $\vars{r'} \subseteq \vars{\ell'}$
\end{enumerate}
\end{definition} 

In \Cref{S:std}, the presence of rules of the form $f(x_1,\allowbreak\ldots,x_n) \rightarrow f(x_1,\ldots,x_n)$ is used in the next section to homogenise the definition of computation of most general unifiers and variants. Intuitively, when computing the variants of a term $u$, we will use the rules in $\R$ to rewrite all symbols in $u$. Hence, the above rule represents the case when the symbol $f$ is actually not rewritten.

\begin{example}
\label{ex:signature}
Coming back to the theory $E_{\oplus}$, we can define two rewrite theories $T_1 = (>_{AC},\R_1,\Rn_1,AC,AC)$ and $T_2 = (>_{AC},\allowbreak \R_2,\Rn_2,AC,E_{\oplus})$ where $\R_1$ is the rewrite system that includes the following rules:
\[
\begin{array}{c}
x_1 + 0 \rightarrow x_1 \qquad  0 + x_2 \rightarrow x_2    \qquad (y + x_2) + x_2 \rightarrow y  \\
x_1 + x_1 \rightarrow 0 \qquad  (y_1 + z) + (y_2 + z) \rightarrow y_1 + y_2  \\
x_1 + x_2 \rightarrow x_1 + x_2 \qquad x_1 + (x_1 + y) \rightarrow y \qquad 0 \rightarrow 0
\end{array}
\] 
and $\R_2 = \{ x_1 + x_2 \rightarrow x_1 + x_2; 0 \rightarrow 0 \}$. 

Notice that $\R_1$ (resp. $\R_2$) corresponds to the $E_{\oplus}$-variants modulo $AC$ (resp. $E_{\oplus}$) of $x_1 + x_2$ and $0$. Though we will formally define the notion of a rewrite theory mimicking an equational theory later in \Cref{def:mimics}, $\R_1$ and $\R_2$ containing variants will explain why both rewrite theories $T_1$ and $T_2$ mimic $E_{\oplus}$. 

Hence, for a larger equational theory $E$ that contains $E_{\oplus}$, by taking $\Ea$ to be $E_{\oplus}$ instead of $AC$ in the rewrite theory, we also reduce the number of rewrite rules in the rewrite system $\R$.
Consider, as toy-example, $E_{\mathsf{ed}} = E_{\oplus} \cup \{ \mathsf{d}(\mathsf{e}(x + k_2,k_1),k_1,k_2) = x \}$.
The term $\mathsf{d}(\mathsf{e}(x + r,k),k,r)$ has 8 $E_{\mathsf{ed}}$-variants modulo AC whereas it only has 2 $E_{\mathsf{ed}}$-variants modulo $E_{\oplus}$, corresponding to the following two rules: $\mathsf{d}(\mathsf{e}(x + k_2,k_1),k_1,k_2) \rightarrow x$ and $\mathsf{d}(x,k_1,k_2) \rightarrow \mathsf{d}(x,k_1,k_2)$.

The rewrite systems $\Rn_1$ and $\Rn_2$ in the rewrite theories $T_1$ and $T_2$ are only used for normalisation. As such, their content can vary and will mostly be useful for the generation of rewrite theories (see \Cref{sec:generation-extended-signature}). For instance, on input $E_\oplus$, our prototype will start by building a rewrite theory with $\Rn_1 = \emptyset$ but gradually augment it to reach $\Rn_1 = \{ x + 0 \rightarrow x; x + x \rightarrow 0 \}$.
\end{example}

Although the rewrite system $\Rn$ in a rewrite theory $T$ is intuitively used for normalisation, \Cref{def:extended_signature} does not impose $\En$-convergence of $\Rn$. Therefore, we rely on $>$ being a $\En$-strong reduction order compatible with $\Rn$ to define a notion of normal form with respect to a rewrite theory $T$.


\begin{definition}
\label{def:normal form}
Let $T = (>,\R,\Rn,\En,\Ea)$ be a rewrite theory and $E$ an equational theory. Let $k$ be the smallest name in $\N$ by $>$. A set of terms $\M$ is \emph{in normal form}, denoted $\nf{T}{E}{\M}$, when $\Ea \subseteq E$ and there exists an injective substitution $\sigma$ such that  $\dom{\sigma} = \vars{\M}$, $\img{\sigma} \subseteq \N$ and:
\begin{itemize}
\item $\forall a \in \img{\sigma}$, $a \neq k$ and $\forall b \in \names{\M}$, $a > b$
\item $\forall t \in \M$, $t\sigma =_\En \minOrd[>]{E}(t\sigma)$.
\end{itemize}
\end{definition}

Intuitively, a term $t$ is in normal form when there is no other term equal to $t$ modulo $E$ that is strictly smaller than $t$ by $>$. Such intuition only holds on ground terms as the order $>$ is $\En$-total. When the term $t$ contains variables, it may not be ordered with other terms. Thus, to discuss the normal form of a term with variables, we will consider all variables of $t$ as names, i.e. we will close $t$ with some substitution $\sigma$ such that $\img{\sigma}$ only contains names. However, the choice of names in the substitution $\sigma$ may impact the minimality by $>$. In particular, to distinguish variables with names in $t$, it is important for $\sigma$ to select names not already in $t$. Similarly, when we need to consider multiple terms in normal forms, e.g. in a set of terms $\M$, 
we need to select names that are not already in any of the terms in $\M$. This is the purpose of the first item of \Cref{def:normal form} which guarantees that we do not confuse the names already in $\M$ with the one used to close $\M$, i.e. in $\img{\sigma}$.


\begin{example}
Consider two symbols $f/2$ and $h/1$, and an equational theory $E$ representing that $f$ is commutative. Take $\En = \emptyset$. Consider a lexicographic path ordering such that on ground terms, the smallest terms are $a < b < h(a) < c < \ldots$ where $a,b,c$ are names. By definition, $f(x,h(a)) =_E f(h(a),x)$. With such order, both terms $f(x,h(a))$ and $f(h(a),x)$ individually are in normal form since $f(b,h(a)) < f(h(a),b)$ ($\sigma = \{ x \rightarrow b\}$) and $f(h(a),c) < f(c,h(a))$ ($\sigma = \{ x \rightarrow c\}$). In other words, there is a way to close each term $t$ such that the resulting term $t\sigma$ is the smallest by $<$ amongst the terms equal modulo $E$, that is $t\sigma = \minOrd[>]{E}(t\sigma)$. However, the set $\{ f(x,h(a)), h(b) \}$ is not in normal form whereas $\{ f(h(a),x), h(b) \}$ is in normal form since we cannot instantiate $x$ by $b$ any more as it already occurs in the sets.
\end{example}


Manipulating the injective substitution $\sigma$ and $\minOrd[>]{E}$ in the definition can be quite cumbersome. Hence, we state below some interesting properties that show how we can manipulate sets of terms that are in normal form, without relying on the renaming nor $\minOrd[>]{E}$. 
The proof of \Cref{lem:norm} can be found in \Cref{sec:app-extended-signature}.


\begin{restatable}{lemma}{lemnormalisation}
\label{lem:norm}
Let $T = (>,\R,\Rn,\En,\Ea)$ be a rewrite theory and $E$ be
a non-trivial equational theory. Assume that $\Ea \subseteq E$, and
that for all terms $t$ and $s$, $t \rwstep[\En]{\Rn} s$ implies $t =_E s$.
Let $\M$ be a set of terms such that $\nf{T}{E}{\M}$. Then the following properties hold:
\begin{enumerate}[label=\textbf{N\arabic*}]
\item for all $t \in \st{\M}$, $t$ is irreducible by $\rwstep[\En]{\Rn}$. \label{lem:norm-irreducible}
\item for all $s,t \in \st{\M}$, if $s =_E t$ then $s =_\En t$. \label{lem:norm-eq}
\item for all $x \in \X$, $\nf{T}{E}{\M \cup \{x\}}$. \label{lem:norm-var}
\item for all terms $t$, there exists a term $s$ such that $t =_E s$ and $\nf{T}{E}{\M \cup \{s\}}$. \label{lem:norm-addition}
\end{enumerate}
\end{restatable}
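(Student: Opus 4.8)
For each of the four properties the approach is to exhibit an injective substitution satisfying \Cref{def:normal form}, built from the witness $\sigma$ given by the hypothesis $\nf{T}{E}{\M}$, and to reduce everything to two preliminary observations about $\minOrd[>]{E}$ that I would establish first. \emph{Observation (a): a subterm of an $\En$-minimal ground term is itself $\En$-minimal in its $E$-class}: if $w$ is ground with $w =_\En \minOrd[>]{E}(w)$ and $q\in\Pos{w}$, then any $v =_E w|_q$ strictly smaller than $w|_q$ would give $w[v]_q < w$ (closure of $>$ under contexts, plus $\En$-compatibility) with $w[v]_q =_E w$, contradicting minimality. \emph{Observation (b): $\minOrd[>]{E}(a) =_\En a$ for every name $a$}: I would prove this in two steps. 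First, $\names{\minOrd[>]{E}(t)} \subseteq \names{t} \cup \{k\}$ for every $t$, since a name $c\neq k$ occurring in $\minOrd[>]{E}(t)$ but not in $t$ is a free constant ($E$ contains no names), and replacing it by $k$ — using $k < c$ and context closure — yields a strictly smaller $E$-equivalent term. Then, writing $v = \minOrd[>]{E}(a)$: by the first step $v$ uses only the names $a$ and $k$; $v$ must contain $a$, for otherwise $v =_E z$ for a fresh variable $z$, forcing $E$ trivial; so $v = D[a,\dots,a]$ for a context $D$ over $\F\cup\{k\}$ with $D[x,\dots,x] =_E x$; and if $v \neq_\En a$ then $D$ is non-trivial and $a > D[a,\dots] > D[D[a,\dots],\dots] > \cdots$ is an infinite descent, contradicting well-foundedness. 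Here I use that $\En$ is regular (hence distinct names are $\neq_\En$), which holds because an $\En$-compatible reduction order exists; non-triviality of $E$ is used throughout.

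\textbf{N1 and N2.} For \Cref{lem:norm-irreducible}, if some $t\in\st{\M}$ were reducible, say $t\rwstep[\En]{\Rn} t'$, then $t\sigma \rwstep[\En]{\Rn} t'\sigma$ (rewriting modulo $\En$ is stable under substitution); by \Cref{S:order} this step strictly decreases $>$, so a closing ground instance of $t'\sigma$ is a ground term strictly below $t\sigma$ and $E$-equal to it (by the hypothesis on $\Rn$). But $t\sigma$ is a subterm of some $u\sigma$ with $u\in\M$, hence $\En$-minimal in its $E$-class by Observation (a) — contradiction. For \Cref{lem:norm-eq}, from $s=_E t$ with $s,t\in\st{\M}$ we get $s\sigma =_E t\sigma$; both are $\En$-minimal in the same $E$-class by Observation (a), so $s\sigma =_\En t\sigma$; since $\sigma$ is injective and the names of $\img{\sigma}$ are free constants not occurring in $s,t$ (they lie outside $\names{\M}$), replacing them by their preimage variables throughout a derivation of $s\sigma =_\En t\sigma$ yields $s =_\En t$.

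\textbf{N3 and N4.} For \Cref{lem:norm-var}: if $x\in\vars{\M}$ then $x\sigma$ is $\En$-minimal in its $E$-class by Observation (a), and $\sigma$ already witnesses $\nf{T}{E}{\M\cup\{x\}}$ since the variable set is unchanged; if $x\notin\vars{\M}$, pick a name $a\neq k$ with $a\notin\img{\sigma}$ and $a > b$ for all $b\in\names{\M}$, and set $\sigma' = \sigma\cup\{x\mapsto a\}$, so that the conditions on $\M$ are inherited and the new one, $x\sigma' = a =_\En\minOrd[>]{E}(a)$, is Observation (b). For \Cref{lem:norm-addition}: applying \Cref{lem:norm-var} to each variable of $t$ not in $\M$ gives a witness $\hat\sigma$ of $\nf{T}{E}{\M\cup\vars{t}}$, and using stability by renaming I would first replace it by a witness whose image names are all $\neq k$, above every name of $t$, and disjoint from $\names{t}$. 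Then $t\hat\sigma$ is ground; set $w = \minOrd[>]{E}(t\hat\sigma)$ and let $s$ be $w$ with every name of $\img{\hat\sigma}$ replaced by its $\hat\sigma$-preimage variable, so $s\hat\sigma = w$. Since $\names{w}\subseteq\names{t\hat\sigma}\cup\{k\}$ (first step of Observation (b)), we get $\names{s}\subseteq\names{t}\cup\{k\}$, so the names of $\img{\hat\sigma}$ occur in neither $s$ nor $t$; undoing $\hat\sigma$ in a derivation of $w =_E t\hat\sigma$ then gives $t =_E s$. Finally the restriction $\sigma^\ast$ of $\hat\sigma$ to $\vars{\M}\cup\vars{s}$ witnesses $\nf{T}{E}{\M\cup\{s\}}$: it is injective, its image names are $\neq k$ and dominate $\names{\M}\cup\names{s}$ (contained in $\names{\M}\cup\names{t}\cup\{k\}$ by the freshness arrangement), and $s\sigma^\ast = w =_\En \minOrd[>]{E}(s\sigma^\ast)$ because $s\sigma^\ast = w =_E t\hat\sigma$.

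\textbf{Main obstacle.} I expect Observation (b), together with the freshness bookkeeping it enables, to be the delicate part: one has to show that free constants are their own $E$-normal forms, pin down exactly which names can appear in $\minOrd[>]{E}(t\hat\sigma)$, and re-scale the given witness (via a renaming preserving $>$) so that every newly introduced name stays above the arbitrary names of $t$. This is precisely where the hypotheses that $E$ is non-trivial and that $>$ is an $\En$-strong reduction order (well-founded, $\En$-compatible, stable by renaming) are genuinely used; the remainder is routine substitution bookkeeping.
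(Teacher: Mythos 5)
Your proposal is correct and takes essentially the same route as the paper's proof: the same preliminary observation that every subterm $t$ of $\M$ satisfies $t\sigma =_\En \minOrd[>]{E}(t\sigma)$, the same irreducibility and uniqueness arguments for \Cref{lem:norm-irreducible,lem:norm-eq}, the same descent/non-triviality argument establishing $\minOrd[>]{E}(a) =_\En a$ for names in \Cref{lem:norm-var}, and the same renaming-based bookkeeping for \Cref{lem:norm-addition} (which the paper factors out as a separate lemma showing $\minOrd[>]{E}$ commutes with order-preserving renamings). The differences are purely organizational.
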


\Cref{lem:norm-irreducible} is particularly useful when the order $>$
is not (easily) computable as it provides a sound and easy way for checking that $t$ is not in normal form. Additionally, since $>$ is compatible with $\Rn$, $t \rwsteps[\En]{\Rn} t'$ implies $t > t'$. Hence, an irreducible term obtained by rewriting $t$ is a potential candidate for being minimal by $>$ amongst the equivalence class of $t$ modulo $E$.
We can now define when a rewrite theory $T$ mimics an equational theory $E$.


\begin{definition}
\label{def:mimics}
A rewrite theory $T = (>,\R,\Rn,\En,\Ea)$ \emph{mimics an equational theory} $E$ when:
\begin{enumerate}[label=\textbf{M\arabic*}]
\item \label{M:EaE}$\Ea \subseteq E$ and if $t \rwstep[\En]{\Rn} t'$ then $t =_E t'$.
  
\item \label{M:eqcomplete} If $f(t_1, \ldots, t_n) \rightarrow t$ is in $\R$ then $f(t_1, \ldots, t_n) =_E t$.
    
\item \label{M:ind1} If $f(t_1, \ldots, t_n) =_E t$ and $\nf{T}{E}{\{ t_1,\ldots,t_n, t \}}$ then there exist $\sigma$ and $f(s_1,\ldots, s_n) \rightarrow s$ in $\R$ such that, $t =_{\Ea} s\sigma$, and for all $i\in \{ 1, \ldots, n \}$, $t_i =_{\Ea} s_i\sigma$ .

\end{enumerate}
In \Cref{M:ind1}, the variables in the rewrite rule $f(s_1,\ldots, s_n) \rightarrow s$ are assumed to be freshly renamed, to ensure that $\vars{s_1,\ldots,s_n,s} \cap \vars{t_1,\ldots,t_n,t} = \emptyset$.
\end{definition}


\begin{example}
$T_1$ and $T_2$ from \Cref{ex:signature} both mimic $E_{\oplus}$. The rewrite theory $T_2$ trivially mimics $E_{\oplus}$, but is useless since the equational theory $\Ea$ in $T_2$ is also $E_{\oplus}$. The rewrite theory $T_{\mathsf{ed}} = (>_{AC},\R,\Rn,\allowbreak AC, E_\oplus)$ mimics $E_{\mathsf{ed}}$ with $\emptyset \subseteq \Rn \subseteq \R_{\oplus}$ and $\R$ containing the rules:
\[
\begin{array}{c}
x + y \rightarrow x + y \qquad  0 \rightarrow 0 \qquad \mathsf{e}(x,y) \rightarrow \mathsf{e}(x,y) \\
\mathsf{d}(\mathsf{e}(x + k_2,k_1),k_1,k_2) \rightarrow x \qquad \mathsf{d}(x,k_1,k_2) \rightarrow \mathsf{d}(x,k_1,k_2)
\end{array}
\] 
\end{example}


\section{Complete set of \texorpdfstring{$E$}{E}-variants}
\label{sec:variant}

To show the relation between rewrite theories mimicking equational theories and the finite variant property, we first need to define the notion of \emph{to-evaluate symbols} and \emph{to-evaluate terms}. For each function symbol $f \in \F$, we associate a function symbol $\symbeval{f}$ that will correspond to a symbol that needs to be evaluated. We define $\symbeval{\F} = \{ \symbeval{f} \mid f \in \F\}$. The symbols in $\symbeval{\F}$ are called \emph{to-evaluate symbols} (TE symbols for short). 


\begin{definition}
A \emph{to-evaluate term} (TE-term for short) is a term $t \in \T(\symbeval{\F} \cup \F,\X \cup \N)$ such that either $t \in \T(\F,\X \cup \N)$, or $t = \symbeval{f}(t_1,\ldots,t_n)$, and $t_1,\ldots,t_n$ are TE-terms.
\end{definition}

Intuitively, in a to-evaluate term, no to-evaluate symbol in $\overline{\F}$ can occur "below" a standard symbol from $\F$; or in other words, no to-evaluate symbol can occur in a subterm rooted by a symbol from $\F$.
Given a term $t$, we denote by $\symbeval{t}$ the TE-term obtained from $t$ in which all function symbols $f$ have been replaced by $\symbeval{f}$.

\begin{example}
\label{ex:to-evaluate}
Consider $T = (>,\R,\emptyset,\emptyset,\emptyset)$ a rewrite theory  mimicking standard randomized encryption $E = \{ \mathrm{dec}(\allowbreak\mathrm{enc}(x,r,k),k) = x \}$. The set of rewrite rules $\R$ can be composed of the following rules:
\[
\begin{array}{c}
\mathrm{dec}(\mathrm{enc}(x,r,k),k) \rightarrow x \qquad \mathrm{dec}(x,k) \rightarrow \mathrm{dec}(x,k)\\
\mathrm{enc}(x,r,y) \rightarrow \mathrm{enc}(x,r,y)
\end{array}
\]
We have that $\symbeval{\mathrm{dec}}(\mathrm{enc}(m,r,k),k')$ is a TE-term but $\mathrm{enc}(\symbeval{\mathrm{dec}}(x,k'),r,k')$ is not a TE-term. 
\end{example}

We now  define the \emph{close evaluation of TE-terms}, from which we will derive the procedure to check the equality modulo $E$.


\begin{definition}
\label{def:evaluation}
Let $T = (>,\R,\Rn,\En,\Ea)$ be a rewrite theory. We define the non-deterministic \emph{close evaluation of a TE-term} $t$ into a term $s$, denoted $t \Eval{T} s$, as follows:
\[
\begin{array}{l}
t \Eval{T} t \quad \text{if }t \in \T(\F,\X)\\[1.5mm]
\symbeval{f}(t_1, \ldots, t_n) \Eval{T} r\sigma \qquad\hfill \text{if }f(\ell_1, \ldots, \ell_n) \rightarrow r \in \R\text{, and}\\
\quad \hfill\forall i \in \{1, \ldots,n\}, t_i \Eval{T} s_i \text{ and } s_i =_\Ea \ell_i\sigma
\end{array}
\]
\end{definition}

The evaluation on TE-terms intuitively evaluates the function symbols from the bottom up.

\begin{example}
Coming back to the rewrite theory $T$ of \Cref{ex:to-evaluate}, denoting $t = \symbeval{\mathrm{dec}}(\mathrm{enc}(m,r,k'),k')$ we have $t \Eval{T} m$ and $t \Eval{T} \mathrm{dec}(\mathrm{enc}(m,r,k'),k')$. The first evaluation corresponds to the evaluation of $\symbeval{\mathrm{dec}}$ with the rewrite rule $\mathrm{dec}(\mathrm{enc}(x,r,k),k) \rightarrow x$ whereas the second evaluation corresponds to the evaluation of $\symbeval{\mathrm{dec}}$ with the rewrite rule $\mathrm{dec}(x,k) \rightarrow \mathrm{dec}(x,k)$.
\end{example}

The evaluation on TR-terms allows us to have a simple procedure to test the equality modulo $E$, as shown in the following result 
(proof in \Cref{sec:app-proof variant}).


\begin{restatable}[Equality modulo $E$]{theorem}{thequalitymodulo}
\label{th:equality modulo}
Let $T = (>,\R,\Rn,\En,\allowbreak\Ea)$ be a rewrite theory and $E$ a non-trivial equational theory. If $T$ mimics $E$ then for all $t,s \in \T(\F,\X \cup \N)$,
$t =_E s$ if and only if there exist $t',s'$ such that $\symbeval{t} \Eval{T} t'$, $\symbeval{s} \Eval{T} s'$ and $t' =_\Ea s'$.
\end{restatable}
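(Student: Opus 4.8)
The plan is to prove both directions of the equivalence. For the "if" direction, suppose $\symbeval{t} \Eval{T} t'$, $\symbeval{s} \Eval{T} s'$ and $t' =_{\Ea} s'$. I would first establish an auxiliary claim by induction on the structure of TE-terms: whenever $u \Eval{T} u'$, then the term obtained from $u$ by stripping the bars (i.e., replacing every $\symbeval{f}$ with $f$) is equal modulo $E$ to $u'$. The base case $u \in \T(\F,\X)$ is immediate since $u' = u$. For the inductive step $u = \symbeval{f}(t_1,\ldots,t_n) \Eval{T} r\sigma$ with $f(\ell_1,\ldots,\ell_n) \to r \in \R$ and $t_i \Eval{T} s_i$, $s_i =_{\Ea} \ell_i\sigma$: by the induction hypothesis each $t_i$ with bars stripped equals $\bar s_i$ modulo $E$ where... more precisely, let $\hat t_i$ denote the bar-stripped version of $t_i$; then $\hat t_i =_E s_i$. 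By \ref{M:eqcomplete}, $f(\ell_1,\ldots,\ell_n) =_E r$, hence $f(\ell_1,\ldots,\ell_n)\sigma =_E r\sigma$, and since $\Ea \subseteq E$ (\ref{M:EaE}) we get $\ell_i\sigma =_E s_i =_E \hat t_i$, so $f(\hat t_1,\ldots,\hat t_n) =_E f(\ell_1\sigma,\ldots,\ell_n\sigma) =_E r\sigma$. Applying this claim to $\symbeval{t}$ and $\symbeval{s}$ (whose bar-stripped forms are exactly $t$ and $s$) gives $t =_E t'$ and $s =_E s'$; combined with $t' =_{\Ea} s' $ and $\Ea \subseteq E$ this yields $t =_E s$.

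For the "only if" direction, assume $t =_E s$; I must produce evaluations $\symbeval{t} \Eval{T} t'$ and $\symbeval{s} \Eval{T} s'$ with $t' =_{\Ea} s'$. The natural strategy is to show that any term $t$ can be evaluated to a term in normal form, and that such a normal form is essentially unique modulo $\Ea$ for a given $=_E$-class. Concretely, I would prove by induction on $t$ the following: for every term $t$ and every set $\M$ of terms with $\nf{T}{E}{\M}$, there exists $t'$ with $\symbeval{t} \Eval{T} t'$ and $\nf{T}{E}{\M \cup \{t'\}}$. In the induction step, write $t = f(t_1,\ldots,t_n)$; by induction (applied successively, threading $\M$ through as it grows, using \ref{lem:norm-addition} to keep it in normal form) obtain $t_i'$ with $\symbeval{t_i} \Eval{T} t_i'$ and all of $t_1',\ldots,t_n'$ jointly in normal form with $\M$. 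Then $f(t_1',\ldots,t_n')$ may not be in normal form, so I pick a normal form $w$ of it with $f(t_1',\ldots,t_n') =_E w$ (via \ref{lem:norm-addition}), so that $\nf{T}{E}{\{t_1',\ldots,t_n',w\}}$; now \ref{M:ind1} gives a rule $f(\ell_1,\ldots,\ell_n) \to r$ in $\R$ and a substitution $\sigma$ with $w =_{\Ea} r\sigma$ and $t_i' =_{\Ea} \ell_i\sigma$ for all $i$, which is exactly the side condition needed to conclude $\symbeval{f}(t_1,\ldots,t_n) = \symbeval{t} \Eval{T} r\sigma$, and $r\sigma =_{\Ea} w$ is in normal form (using $\Ea$-compatibility of $>$, or more carefully \ref{lem:norm-eq}-type reasoning, to transfer normal-form-ness from $w$ to $r\sigma$). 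There is a subtlety here about whether $r\sigma$ rather than $w$ is in normal form; I would handle it by noting $r\sigma =_{\Ea} w$ and $\Ea \subseteq \En$... actually $\En \subseteq \Ea$, so this needs care — I would instead observe that the normal-form property we ultimately need is only "there is SOME evaluation landing in the right $=_{\Ea}$-class", so I may keep $w$ as the canonical representative and carry $t'$ with $t' =_{\Ea} w$.

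Having this, apply it to $t$ with $\M = \emptyset$ to get $t'$ with $\symbeval{t} \Eval{T} t'$ and $\nf{T}{E}{\{t'\}}$, and similarly $s'$ with $\symbeval{s} \Eval{T} s'$; I'd actually want to run the two evaluations against a common $\M$ so the resulting normal forms live in the same "named" world. Then $t' =_E s'$ (since $t =_E t'$, $s =_E s'$ by the claim from the first direction, and $t =_E s$), and since $\{t', s'\}$ is in normal form, \ref{lem:norm-eq} (applied with $\M = \{t',s'\}$, noting $t', s' \in \st{\M}$) gives $t' =_{\En} s'$, hence $t' =_{\Ea} s'$ as $\En \subseteq \Ea$.

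The main obstacle I anticipate is the bookkeeping around normal forms in the "only if" direction: ensuring that when we combine the recursively-obtained normal-form subterms $t_i'$ into $f(t_1',\ldots,t_n')$ and then pass to its normal form $w$, the whole collection including $w$ (and its rewrite $r\sigma$) stays simultaneously in normal form so that \ref{M:ind1} applies and so that the final application of \ref{lem:norm-eq} is legitimate. This requires carefully threading the growing set $\M$ through the induction and invoking \ref{lem:norm-addition} and \ref{lem:norm-var} at each step, plus a small argument (using $\Ea$-compatibility of $>$ and that $\En$-minimality is what normal form tracks) that $r\sigma$ and $w$ being $=_{\Ea}$-equal suffices for our purposes even though they need not be $=_{\En}$-equal. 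The non-triviality hypothesis on $E$ is needed precisely so that \ref{lem:norm} is applicable.
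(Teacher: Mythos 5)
Your proof is correct and is essentially the paper's: the ``if'' direction is the paper's soundness lemma ($u \Eval{T} u'$ implies $\removeeval{u} =_E u'$, proved exactly as you do from \Cref{M:EaE} and \Cref{M:eqcomplete}), and the ``only if'' direction is its completeness lemma, built from the same ingredients (\Cref{lem:norm-addition} and \Cref{lem:norm-var} to thread a growing normal-form set through the induction, \Cref{M:ind1} to fire a rule of $\R$, and \Cref{lem:norm-eq} to upgrade $=_E$ to $=_{\En} \subseteq {=_{\Ea}}$ on the two resulting normal forms at the end). The only divergence is organisational: the paper runs the induction top-down --- it fixes the target normal form of the whole term first and chooses compatible normal forms of the subterms before recursing --- whereas you go bottom-up and therefore meet the subtlety that the evaluation result $r\sigma$ is only $=_{\Ea}$-equal to a normal form $w$ rather than being one; your fix (carrying $w$ as a separate existential witness and asserting only $t' =_{\Ea} w$) is precisely what the paper's lemma does implicitly by concluding $t'' =_{\Ea} t'$ rather than normality of $t''$.
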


We can now define the open evaluation of a sequence of TE-terms $L$ as a relation $L \OEval{T} (L',\sigma)$ where $\sigma$ are the instantiations of the variables of $L$ obtained during the evaluation of the TE-symbols in $L$ and where $L'$ is the result of such evaluation.


\begin{definition}
\label{def:open evaluation}
Let $T = (>,\R,\Rn,\En,\Ea)$ be a rewrite theory. We define the open evaluation on a sequence of TE-terms $L$, denoted $L \OEval{T} (L',\sigma)$, as follows:
\[
\begin{minipage}{\columnwidth}
\begin{tabbing}
$[] \OEval{T} ([],\emptyset)$\\[1.5mm]
$[t] \OEval{T} ([t],\emptyset)$ \quad if $t \in \T(\F,\X \cup \N)$\\[1.5mm]
$[\symbeval{h}(t_1, \ldots, t_n)] \OEval{T} ([u\sigma_u], \sigma'\sigma_u)$
\\
\quad if $[t_1; \ldots; t_n] \OEval{T} ([s_1; \ldots; s_n],\sigma')$, $h(u_1, \ldots, u_n) \rightarrow u \in \R$, \\
\quad and $\sigma_u \in \mgu{\Ea}{(s_1,\ldots,s_n),(u_1,\ldots,u_n)}$
\\[1.5mm]
$t \cdot L \OEval{T} (s\sigma' \cdot L',  \sigma\sigma')$
\quad if $[t] \OEval{T} ([s],\sigma)$ and $L\sigma \OEval{T} (L',\sigma')$
\end{tabbing}
\end{minipage}
\]
\end{definition}

\begin{example}
Coming back to the rewrite theory $T$ of \Cref{ex:to-evaluate}, we consider the $t = \symbeval{\mathrm{dec}}(x,\mathrm{dec}(y,b))$. Notice here that only the outer symbol $\mathrm{dec}$ is to be evaluated. In particular $[\mathrm{dec}(y,b)] \OEval{T} ([\mathrm{dec}(y,b)],\emptyset)$. Therefore, we have:
$[t] \OEval{T} ([z],\{ x \rightarrow \mathrm{enc}(z,r,\mathrm{dec}(y,b))\})$ and $[t] \OEval{T} ([\mathrm{dec}(x,\mathrm{dec}(y,b))],\emptyset)$.
\end{example}

The open evaluation intuitively computes the variants of terms contained in the sequence of terms $L$ as shown in the next result 
(proof in \Cref{sec:app-proof variant}).


\begin{restatable}[Complete sets of $E$-variants]{theorem}{thvariant}
\label{th:variant}
Let $T = (>,\allowbreak\R,\Rn,\En,\Ea)$ be a rewrite theory and $E$ a non-trivial equational theory. If $T$ mimics $E$ then for all terms $t$, the set $\{ (t',\alpha) \mid [\symbeval{t}] \OEval{T} ([t'],\alpha)\}$ is a complete set of $E$-variants modulo $\Ea$.
\end{restatable}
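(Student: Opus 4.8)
The plan is to establish the two halves of the definition of ``complete set of $E$-variants modulo $\Ea$'' separately: that every pair $(t',\alpha)$ with $[\symbeval{t}] \OEval{T} ([t'],\alpha)$ is an $E$-variant of $t$ (i.e.\ $t\alpha =_E t'$), and that these pairs are complete, i.e.\ for every $\sigma$ closing $t$ there are such a pair and a substitution $\beta$ with $t'\beta =_\Ea \minOrd[>]{E}(t\sigma)$. The first half is membership in the set of $E$-variants; the second is the real content.

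For the first half I would prove, by induction on the derivation of $\OEval{T}$, the more general invariant: if $[v_1;\dots;v_k]$ is a sequence of TE-terms with $[v_1;\dots;v_k] \OEval{T} ([w_1;\dots;w_k],\gamma)$, then $\removeeval{v_i}\,\gamma =_E w_i$ for every $i$, writing $\removeeval{v}$ for $v$ with all to-evaluate symbols erased. The empty-sequence, standard-term and sequence-composition clauses are immediate, using only that $=_E$ is a congruence closed under substitution. In the $\symbeval{h}$-clause the induction gives $\removeeval{t_i}\sigma' =_E s_i$; since $\sigma_u$ is an $\Ea$-unifier, $s_i\sigma_u =_\Ea u_i\sigma_u$, and $\Ea \subseteq E$ by \ref{M:EaE}, so $\removeeval{t_i}\sigma'\sigma_u =_E u_i\sigma_u$; finally $h(u_1,\dots,u_n) =_E u$ by \ref{M:eqcomplete}, and instantiating it by $\sigma_u$ gives $\removeeval{\symbeval{h}(t_1,\dots,t_n)}\,\sigma'\sigma_u =_E u\sigma_u$. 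Specialising the invariant to $[\symbeval{t}]$ gives $t\alpha =_E t'$. (One could also derive this half from \Cref{th:equality modulo}, since $\symbeval{w} \Eval{T} w$ for every standard term $w$ by the identity rules of $\R$.)

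For completeness the new ingredient is a ``normalisation by close evaluation'' lemma, proved by induction on the size of $g$ and using \ref{M:ind1}: for every ground term $g$ there is a close evaluation $\symbeval{g} \Eval{T} g'$ with $g' =_\Ea \minOrd[>]{E}(g)$. In the step $g = f(g_1,\dots,g_n)$ I would put $m_i := \minOrd[>]{E}(g_i)$ (reached by the induction hypothesis: $\symbeval{g_i} \Eval{T} g_i'$ with $g_i' =_\Ea m_i$) and $m := \minOrd[>]{E}(f(m_1,\dots,m_n))$, note $m =_\En \minOrd[>]{E}(g)$ because $g =_E f(m_1,\dots,m_n)$, observe $\nf{T}{E}{\{m_1,\dots,m_n,m\}}$ (these are ground and each is, modulo $\En$, its own $\minOrd[>]{E}$, and $\Ea \subseteq E$ by \ref{M:EaE}, so the injective renaming in \Cref{def:normal form} is empty), and apply \ref{M:ind1} to $f(m_1,\dots,m_n) =_E m$ to get a rule $f(s_1,\dots,s_n)\to s \in \R$ and $\sigma_0$ with $m_i =_\Ea s_i\sigma_0$ and $m =_\Ea s\sigma_0$; firing this rule (the matching $g_i' =_\Ea m_i =_\Ea s_i\sigma_0$ is available) yields $\symbeval{g} \Eval{T} s\sigma_0 =_\Ea m =_\En \minOrd[>]{E}(g)$. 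To conclude, I would lift from close to open evaluation: for $\sigma$ closing $t$ one has $\symbeval{t\sigma} = \symbeval{t}$ with each leaf variable $x$ replaced by $\symbeval{x\sigma}$, so running the normalisation lemma on $\symbeval{t\sigma}$ --- choosing at each leaf the normalising evaluation $\symbeval{x\sigma} \Eval{T} w_x$, $w_x =_\Ea \minOrd[>]{E}(x\sigma)$ --- exhibits a close evaluation of $\symbeval{t}$ with leaf values $w_x$ reaching some $w =_\Ea \minOrd[>]{E}(t\sigma)$; a ``narrowing lifts rewriting'' induction on this derivation then produces $[\symbeval{t}] \OEval{T} ([t'],\alpha)$ and $\beta$ with $t'\beta =_\Ea w$ and $x\alpha\beta =_\Ea w_x$, which is exactly the completeness requirement.

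The step I expect to be the main obstacle is this last lifting, because the sequence-composition clause of \Cref{def:open evaluation} threads the substitution produced while evaluating the head of a sequence into its tail --- so the tail is no longer of the form $[\symbeval{u_2};\dots]$ but $[\symbeval{u_2}\pi;\dots]$, with $\pi$ recording already-computed symbolic normal forms. Getting the inductive statement right means carrying, along the whole argument sequence, the invariant that those pre-filled subterms represent $\minOrd[>]{E}$ of the corresponding instances; controlling the fresh renamings of the rules supplied by \ref{M:ind1} so that unions of substitutions and the ``mgu refines the matcher'' steps are legitimate; and keeping the nested congruences $=_\En \subseteq =_\Ea \subseteq =_E$ (and the fact that $\minOrd[>]{E}$ is defined only up to $=_\En$) carefully apart.
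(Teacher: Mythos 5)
Your proposal is correct and takes essentially the same route as the paper: soundness via an induction showing that erasing the to-evaluate symbols and applying the computed substitution yields the result modulo $E$ (the paper's Lemma~\ref{lem:open evaluation remove}), and completeness via a close-evaluation normalisation lemma built on \ref{M:ind1} (the paper's Lemma~\ref{lem:eval}, of which your ground-term version is a harmless specialisation since $\sigma$ is closing) followed by a lifting from close to open evaluation (the paper's Lemma~\ref{lem:open-evaluation}). The obstacle you single out at the end --- threading the substitution produced by the head of a sequence into its tail, controlling fresh renamings of rule variables and of the mgu, and keeping $=_\En \subseteq =_\Ea \subseteq =_E$ apart --- is exactly what the paper's lifting lemma handles with its auxiliary variable set $\V$ and the invariant $\dom{\alpha\beta} \cap \V = \dom{\sigma}$.
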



\begin{corollary}[Finite variant property]
\label{cor:finite variant}
Let $T = (>,\R,\Rn,\allowbreak\En,\Ea)$ be a rewrite theory and $E$ a non-trivial equational theory. If $T$ mimics $E$ then $E$ has the finite variant property modulo $\Ea$.
\end{corollary}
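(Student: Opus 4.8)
The plan is to derive this immediately from \Cref{th:variant}. Since $T$ mimics $E$, that theorem tells us that for every term $t$ the set $S_t = \{ (t',\alpha) \mid [\symbeval{t}] \OEval{T} ([t'],\alpha) \}$ is a complete set of $E$-variants of $t$ modulo $\Ea$. By \Cref{def:finite_variant}, $E$ then has the finite variant property modulo $\Ea$ as soon as every such $S_t$ is \emph{finite}, so the entire argument reduces to proving finiteness of $S_t$.

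To this end I would establish, by induction, the auxiliary statement that for every sequence $L$ of TE-terms the set $\{ (L',\sigma) \mid L \OEval{T} (L',\sigma) \}$ is finite. The two base clauses of \Cref{def:open evaluation}, namely $[] \OEval{T} ([],\emptyset)$ and $[t]\OEval{T}([t],\emptyset)$ for $t \in \T(\F,\X\cup\N)$, each contribute a single tuple. In the clause for $\symbeval{h}(t_1,\ldots,t_n)$ one recurses on the sequence $[t_1;\ldots;t_n]$, which contains one fewer TE-symbol, and then chooses among the \emph{finitely many} rules $h(u_1,\ldots,u_n)\rightarrow u$ of the finite system $\R$ and, for each, among the \emph{finitely many} substitutions in $\mgu{\Ea}{(s_1,\ldots,s_n),(u_1,\ldots,u_n)}$, finite by the very definition of a complete set of most general $\Ea$-unifiers. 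In the splitting clause $t\cdot L \OEval{T} (s\sigma'\cdot L', \sigma\sigma')$ one recurses on $[t]$ and on $L\sigma$; the crucial observation is that every substitution produced by open evaluation maps variables to terms of $\T(\F,\X\cup\N)$, because the right-hand sides of rules in $\R$ and the $\Ea$-unifiers contain no to-evaluate symbol, so $L\sigma$ contains no more TE-symbols than $L$ and a measure such as (number of TE-symbols in the sequence, length of the sequence) ordered lexicographically strictly decreases in both recursive calls. Combining the induction hypothesis with the finitely many choices of rule and $\Ea$-unifier gives finiteness of the set of reachable tuples, and instantiating $L := [\symbeval{t}]$ yields finiteness of $S_t$.

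The step I expect to require the most care is purely bookkeeping: fixing a well-founded measure that genuinely decreases along every clause of \Cref{def:open evaluation} — in particular treating the degenerate instance of the splitting clause with an empty tail, which is subsumed by the single-term clauses — and proving once and for all the "no new TE-symbol" invariant so that the measure is monotone under the substitutions threaded through the recursion. This is the kind of structural induction already implicit in \Cref{th:variant}, and once it is in place the corollary follows in one line. It is also worth recalling explicitly that the framework presupposes finite complete sets of most general $\Ea$-unifiers — exactly what makes $\OEval{T}$, and hence this finiteness argument, well behaved — which is the case for all intended instances such as $\XOR$, $\AG$, $\XORh$ and $\AGh$.
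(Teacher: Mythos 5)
Your proof is correct and follows the paper's (implicit) route: the corollary is stated as an immediate consequence of \Cref{th:variant}, which supplies completeness, and the only remaining content is the finiteness of the set $\{(t',\alpha) \mid [\symbeval{t}] \OEval{T} ([t'],\alpha)\}$, which your induction derives exactly as intended from the finiteness of $\R$ and of complete sets of most general $\Ea$-unifiers (both finite by definition in this paper). Nothing further is needed.
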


As the open evaluation allows to compute the finite complete set of variants, it also allows to compute the complete set of most general $E$-unifiers, as shown below
 (proof in \Cref{sec:app-proof variant}).


\begin{restatable}[Most general $E$-unifiers]{theorem}{thmostgeneralunifiers}
Let $T = (>, \R,\Rn,\allowbreak\En,\Ea)$ be a rewrite theory and $E$ a non-trivial equational theory. If $T$ mimics $E$ then the set $\{ \alpha\sigma_u \mid [\symbeval{t},\symbeval{s}] \OEval{T} ([t',s'],\alpha) \wedge \sigma_u \in \mgu{\Ea}{t',\allowbreak s'}\}$ is a complete set of most general $E$-unifiers of $t,s$.
\end{restatable}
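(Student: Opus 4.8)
The plan is to show that the set displayed in the statement, which I denote by $U$, is a complete set of $E$-unifiers of $t,s$; the refinement to a complete set of \emph{most general} $E$-unifiers then follows by discarding non-minimal substitutions, exactly as in the remark following the definition of $\mgu{E}{\cdot}$. Conceptually this is the classical reduction of unification to variant computation: by \Cref{th:variant} applied to the two-element sequence $[t,s]$ (whose complete set of $E$-variants modulo $\Ea$ is enumerated by the open evaluations $[\symbeval{t},\symbeval{s}] \OEval{T} ([t',s'],\alpha)$), it suffices to unify modulo the small theory $\Ea$ the two components $t'$ and $s'$ of each variant, which is precisely what $\mgu{\Ea}{t',s'}$ computes.

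For \textbf{soundness}, fix $\alpha\sigma_u \in U$ witnessed by $[\symbeval{t},\symbeval{s}] \OEval{T} ([t',s'],\alpha)$ and $\sigma_u \in \mgu{\Ea}{t',s'}$. By \Cref{th:variant}, $([t',s'],\alpha)$ is an $E$-variant of $[t,s]$, so $t\alpha =_E t'$ and $s\alpha =_E s'$; since $=_E$ is a congruence stable under substitution, $t\alpha\sigma_u =_E t'\sigma_u$ and $s\alpha\sigma_u =_E s'\sigma_u$. As $\Ea \subseteq E$ we have ${=_{\Ea}} \subseteq {=_E}$, and $\sigma_u$ being an $\Ea$-unifier of $t',s'$ gives $t'\sigma_u =_E s'\sigma_u$. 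Chaining these equalities yields $t\alpha\sigma_u =_E s\alpha\sigma_u$, so every element of $U$ is indeed an $E$-unifier of $t,s$.

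For \textbf{completeness}, let $\theta$ be an arbitrary $E$-unifier of $t,s$, so $t\theta =_E s\theta$. By \Cref{th:equality modulo} there are terms $w_1,w_2$ with $\symbeval{t\theta}\Eval{T} w_1$, $\symbeval{s\theta}\Eval{T} w_2$ and $w_1 =_{\Ea} w_2$. The core step is the lifting of close evaluation to open evaluation that underlies the proof of \Cref{th:variant}: one splits each of these evaluations at the interface between the part coming from the structure of $t$ (resp.\ $s$) and the part coming from $\theta$ — a split made possible by the identity rules $f(x_1,\ldots,x_n)\rightarrow f(x_1,\ldots,x_n)$ guaranteed by \Cref{S:std}, and using that \Cref{M:eqcomplete} makes every evaluation step sound for $=_E$ — and then lifts the $t$- and $s$-parts to obtain $[\symbeval{t},\symbeval{s}] \OEval{T} ([t',s'],\alpha)$ together with a substitution $\delta$ such that $x\theta =_E x\alpha\delta$ for all $x \in \vars{t,s}$ and $t'\delta =_{\Ea} w_1$, $s'\delta =_{\Ea} w_2$. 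Then $t'\delta =_{\Ea} s'\delta$, so $\delta$ is an $\Ea$-unifier of $t',s'$; by completeness of $\mgu{\Ea}{t',s'}$ there are $\sigma_u \in \mgu{\Ea}{t',s'}$ and $\lambda$ with $x\delta =_{\Ea} x\sigma_u\lambda$ for every $x \in \vars{t',s'}$. Since the variables occurring in $\img{\alpha}$ lie among $\vars{t',s'}$ (part of the variable bookkeeping of the open-evaluation lemmas), we obtain $x\theta =_E x\alpha\delta =_E x\alpha\sigma_u\lambda$ for every $x \in \vars{t,s}$, i.e.\ $\theta$ is an $E$-instance of $\alpha\sigma_u \in U$. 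Together with soundness, $U$ is a complete set of $E$-unifiers, and pruning yields the claimed complete set of most general $E$-unifiers.

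The soundness direction and the post-composition with $\mgu{\Ea}{\cdot}$ are routine. The \textbf{main obstacle} is the lifting step in the completeness argument, in particular carrying it out for the two-element sequence $[t,s]$ rather than for a single term: one must cover two a priori independent close evaluations — of $t\theta$ and of $s\theta$ — by a \emph{single} open evaluation, which resolves the variables shared by $t$ and $s$ incrementally (first during the evaluation of $\symbeval{t}$, then during the evaluation of $\symbeval{s}$ after applying the substitution computed so far) and must be reconciled with the fixed instantiation $\theta$. This is exactly the work already done in the proof of \Cref{th:variant} for sequences, so I would set up the argument so as to invoke that proof rather than redo it.
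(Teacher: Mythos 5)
Your proposal follows the paper's proof in substance. The soundness direction is identical: $t\alpha =_E t'$ and $s\alpha =_E s'$ come from \Cref{lem:open evaluation remove} (equivalently, from \Cref{th:variant}), and $\Ea \subseteq E$ finishes it. For completeness the paper does exactly what your last sentence proposes, namely invoke the sequence-level lifting once: given an $E$-unifier $\sigma$, it uses \Cref{lem:norm-addition} of \Cref{lem:norm} to pick a term $u$ in normal form with $t\sigma =_E u =_E s\sigma$, applies \Cref{lem:open evaluation variant} to $L=[t,s]$ and $L_s=[u,u]$ to get a single residual substitution $\beta$ with $t'\beta =_\Ea u =_\Ea s'\beta$, and factors $\beta$ through $\mgu{\Ea}{t',s'}$. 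The one step of yours I would drop is the entry through \Cref{th:equality modulo}: that theorem produces close evaluations of $\symbeval{t\theta}$ and $\symbeval{s\theta}$, TE-terms in which the symbols contributed by $\theta$ are themselves marked for evaluation, whereas the lifting lemma (\Cref{lem:open-evaluation}) expects a close evaluation of $\symbeval{t}\theta$, where the images of $\theta$ evaluate only to themselves. No lemma in the paper performs the ``split at the interface'' you describe, and it is genuinely delicate: the close evaluation is non-deterministic, so distinct occurrences of the same variable inside $\symbeval{t\theta}$ may evaluate to different terms, and the split does not canonically yield a substitution. The paper sidesteps this by first replacing the unifier with an $E$-equal one whose images are in normal form (so that \Cref{lem:eval} applies to $\symbeval{t}\sigma$ directly, inside the proof of \Cref{lem:open evaluation variant}). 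Since you ultimately defer to that sequence-level machinery anyway, your argument closes once the detour through \Cref{th:equality modulo} is replaced by this normalisation step.
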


\Cref{cor:finite variant} showed that a rewrite theory mimicking an equational theory implies that it has the finite variant property. In fact we can show that finite variant property and mimicking of equational theories are equivalent definitions when $\Ea = \En$, provided that the set of function symbols $\F$ contain some free symbols. In essence, we provide a more \emph{practical} characterisation of the finite variant property. 


\begin{restatable}{theorem}{thequivalentdeffinitevariant}
\label{th:FVP-implies-mimics}
Let $\En \subseteq \Ea \subseteq E$ be three non-trivial equational theories. Let $>$ be a $\En$-strong reduction ordering. Assume that there exist a binary function $\bin \in \F$ and a constant $\amin \in \F$ such that $\bin$ does not occur in $E$ and for all ground terms $t$, $\amin \not> t$.

If $E$ has the finite variant property modulo $\Ea$ and 
\begin{itemize}
\item either $\Ea = \En$ 
\item or for all $f/n \in \F$, the complete set $S$ of $E$-variants modulo $\Ea$ of $\bin(f(x_1,\ldots,x_n),\bin(x_1,\bin(x_2,\ldots \bin(x_{n-1},x_n))))
$ satisfies for all $(t,\theta) \in S$, $\vars{t_{|1}} \subseteq
\vars{t_{|2}}$ (where $t_{|i}$ denotes the $i^{\text{th}}$ subterm of $t$)
\end{itemize}
then there exists $T = (>,\R,\Rn,\En,\Ea)$ that mimics
$E$. 
\end{restatable}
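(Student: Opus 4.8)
The plan is to construct the required rewrite theory $T = (>,\R,\Rn,\En,\Ea)$ explicitly, keeping the given $>$, $\En$, $\Ea$ and setting $\Rn = \emptyset$. This choice immediately discharges \Cref{S:order} (the empty system is vacuously compatible with $>$, which is $\En$-strong by hypothesis) and the second part of \Cref{M:EaE} (there is no step $t \rwstep[\En]{\Rn} t'$); it is harmless because the predicate $\nf{T}{E}{\cdot}$ of \Cref{def:normal form} mentions only $>$, $\En$, $\Ea$, $E$. \Cref{S:subset} and the first part of \Cref{M:EaE} hold since $\En \subseteq \Ea \subseteq E$. So everything reduces to building a finite $\R$ satisfying \Cref{S:std}, \Cref{M:eqcomplete}, \Cref{M:ind1}.

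For each $f/n \in \F$ with $n \geq 1$, let $u_f$ be the term $\bin(f(x_1,\ldots,x_n),\bin(x_1,\bin(x_2,\ldots,\bin(x_{n-1},x_n))))$ from the statement, and for a constant $c/0$ let $u_c = \bin(c,\amin)$. By the FVP of $E$ modulo $\Ea$, fix for each $f$ a finite complete set $S_f$ of $E$-variants modulo $\Ea$ of $u_f$; in the second case of the theorem we take $S_f$ to be one satisfying the stated variable-inclusion property. Since $\bin$ occurs in none of $\En,\Ea,E$, it is a free symbol for all three congruences; hence $\minOrd[>]{E}$ of any closing instance of $u_f$ is $\bin$-headed, so every variant actually used to cover a ground instance has $t'$ of the form $\bin(v,\bin(w_1,\ldots,w_n))$ with $v =_E f(x_1\theta,\ldots,x_n\theta)$ and $w_i =_E x_i\theta$; discarding the other variants and abstracting any names appearing in a variant to fresh variables (sound because $u_f$ is name-free), we may assume every element of $S_f$ has this shape and is name-free. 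Put into $\R$: the rule $f(w_1,\ldots,w_n)\to v$ for each $(\bin(v,\bin(\vec w)),\theta)\in S_f$, the rule $c\to v$ for each $(\bin(v,w),\theta)\in S_c$, and all identity rules $f(x_1,\ldots,x_n)\to f(x_1,\ldots,x_n)$. This $\R$ is finite. For \Cref{S:std}: the identity rules are present and well-formed, and the side condition $\vars{v}\subseteq\vars{w_1,\ldots,w_n}$ on the variant-rules is exactly the variable-inclusion hypothesis in the second case (the case $\Ea=\En$ needs a separate argument, discussed below). For \Cref{M:eqcomplete}: from $u_f\theta =_E t'$ and freeness of $\bin$, $f(w_1,\ldots,w_n) =_E f(x_1\theta,\ldots,x_n\theta) =_E v$; the identity rules are trivial.

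The core is \Cref{M:ind1}. Assume $f(t_1,\ldots,t_n) =_E t$ and $\nf{T}{E}{\{t_1,\ldots,t_n,t\}}$; by \Cref{def:normal form}, fix the injective renaming $\delta$ with $\dom{\delta}=\vars{t_1,\ldots,t_n,t}$, $\img{\delta}\subseteq\N$ fresh, such that $t_i\delta =_\En \minOrd[>]{E}(t_i\delta)$ and $t\delta =_\En \minOrd[>]{E}(t\delta)$. Apply completeness of $S_f$ to the closing substitution $\sigma_0 = \{x_i \mapsto t_i\delta\}$ of $u_f$: there are $(t',\theta)=(\bin(v,\bin(w_1,\ldots,w_n)),\theta)\in S_f$ and $\alpha$ with $\minOrd[>]{E}(u_f\sigma_0) =_\Ea t'\alpha$. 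Using that $\bin$ is free and $>$ an $\En$-compatible reduction order (so that $\minOrd[>]{E}$ distributes over $\bin$), together with $f(t_1,\ldots,t_n)=_E t$ and the normal-form equalities, one computes $\minOrd[>]{E}(u_f\sigma_0) =_\En \bin(t\delta,\bin(t_1\delta,\ldots,t_n\delta))$, whence $\bin(t\delta,\bin(t_1\delta,\ldots,t_n\delta)) =_\Ea t'\alpha$; freeness of $\bin$ for $=_\Ea$ gives $t\delta =_\Ea v\alpha$ and $t_i\delta =_\Ea w_i\alpha$. Let $\rho$ be the name-bijection extending $\delta^{-1}$ by the identity; since names do not occur in $\Ea$, $\rho$ preserves $=_\Ea$, and $t\delta\rho=t$, $t_i\delta\rho=t_i$, so with $\sigma := \alpha\rho$ we get $t =_\Ea v\sigma$ and $t_i =_\Ea w_i\sigma$. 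Thus the rule $f(w_1,\ldots,w_n)\to v$ (with variables freshly renamed, $\sigma$ adjusted) witnesses \Cref{M:ind1}. The constant case is analogous and simpler, using $u_c=\bin(c,\amin)$ and minimality of $\amin$, which forces $\minOrd[>]{E}(\amin)=_\En\amin$ and hence $\minOrd[>]{E}(u_c)=_\En\bin(\minOrd[>]{E}(c),\amin)$.

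I expect the main obstacle to be the case $\Ea = \En$, where the variable-inclusion property $\vars{v}\subseteq\vars{w_1,\ldots,w_n}$ is \emph{not} a hypothesis and must be shown to hold for a suitable $S_f$. The intuition is that the component $\bin(x_1,\ldots,x_n)$ of $u_f$ passes the arguments through the free symbol $\bin$ unchanged, so in a non-redundant complete set each $w_i$ faithfully records the $i$-th normalised argument; and since $f$ is applied directly to the $x_i$, no variable can survive normalisation into the first component $v$ without surviving into some $w_i$. Making this precise requires a careful analysis of which variants can be non-redundant; that, together with the auxiliary lemma that $\minOrd[>]{E}$ distributes over the free binary symbol $\bin$ (provable from the reduction-order and $\En$-compatibility axioms plus $\bin \notin E$), are the two technical points that demand real work.
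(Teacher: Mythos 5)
Your construction is essentially the paper's: same witness term $u_f = \bin(f(x_1,\ldots,x_n),\bin(x_1,\ldots))$, same choice $\Rn=\emptyset$, same use of the freeness of $\bin$ to split a variant into an $f$-component and argument components, and the same auxiliary fact that $\minOrd[>]{E}$ commutes with the free symbol $\bin$ (which the paper proves exactly as you suggest, by contradiction from $\En$-compatibility of $>$). The transfer back through the closing renaming is also the paper's $\sigma'=\gamma\sigma^{-1}$ in different notation. Up to the minor repackaging of constants via $\bin(c,\amin)$, the two arguments coincide in every case you actually complete.

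The gap is the one you flag yourself: the case $\Ea=\En$, where \Cref{S:std} requires $\vars{v}\subseteq\vars{w_1,\ldots,w_n}$ and this is not a hypothesis. Your proposed route --- show that some \emph{suitable} complete set of variants already satisfies the inclusion, by analysing which variants are non-redundant --- is not how this is resolved, and the ``intuition'' you give is not sound in general: for theories such as $\{f(0,z)=p(x,z)\}$ a variant's first component can legitimately contain a variable absent from the argument components, because that variable's value is irrelevant modulo $E$. The missing idea is to \emph{modify the rule rather than the variant set}: for each variant, let $\sigma_{u'}$ map every variable of $\vars{v}\setminus\vars{w_1,\ldots,w_n}$ to $\amin$ and add the rule $f(w_1,\ldots,w_n)\rightarrow v\sigma_{u'}$. \Cref{M:eqcomplete} survives because $\dom{\sigma_{u'}}$ is disjoint from $\vars{w_1,\ldots,w_n}$, so $v\sigma_{u'}=_E f(\vec{w})\sigma_{u'}=f(\vec{w})$. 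For \Cref{M:ind1}, one checks $v\gamma=_\En v\sigma_{u'}\gamma$: since $\amin$ is $>$-minimal among ground terms and $>$ is a reduction order closed under contexts, $v\gamma\geq v\sigma_{u'}\gamma$; and strict inequality would contradict $v\gamma=_\Ea\minOrd[>]{E}(\cdot)$ being minimal (here $\Ea=\En$ and $>$ is $\En$-compatible). This is precisely why the hypothesis ``$\amin\not>t$ for all ground $t$'' appears in the statement; in your write-up $\amin$ is only used for the degenerate constant case, which is a sign the mechanism was not identified. Without this step the constructed $\R$ may violate \Cref{S:std} and the theorem is not established in the $\Ea=\En$ branch.
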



\section{Generating rewrite theories}
\label{sec:generation-extended-signature}

We describe in this section a semi-decision procedure to generate rewrite theories that mimic an equational theory. Contrary to previous sections, where we only assumed the existence of an algorithm to compute the complete set of most general $\Ea$-unifiers, we will also assume in this section that we have at our disposal an algorithm for computing the most general $\En$-unifiers. Our approach takes advantage of the following remarks. 

First, by relying on most general $\En$-unifiers only in the generation of rewrite theories, we greatly limit our reliance on it. 
Focusing on the automated verification of protocols use case, this means that we only need to run this semi-decision procedure once for each particular equational theory. In particular, this means that if during the protocol verification the user wants to make several small modifications to the protocol steps, but not to the equational theory, then the procedure has not to be run again. Even different protocols using the same equational theory would require the procedure to be run only once.
This observation applies to several protocol verification tools such as ProVerif~\cite{DBLP:conf/sp/BlanchetCC22},  Tamarin~\cite{DBLP:conf/cav/MeierSCB13} and Maude-NPA~\cite{DBLP:conf/birthday/EscobarMMS15}.

Also, although the generic algorithm solely relies on $\En$-most general unifiers and will typically terminate only if $E$ has the finite variant property modulo $\En$, we will show that our optimisations allow us to reduce the number of variants when focusing on FVP modulo $\Ea$.

Last, although our optimised algorithm can take as input a rewrite system $\Rn$, the algorithm does not really need that particular knowledge. Indeed, a user could start with an empty $\Rn$ and the algorithm will by itself augment $\Rn$ with appropriate rewrite rules. 
This feature is particularly useful in the context of automated verification of protocols, as users do not necessarily have the knowledge to create a decomposition of $E$ by themselves.


\subsection{Overlapping rewrite rules}

The core of the procedure consists in taking two rewrite rules $\ell_1 \rightarrow r_1$ and $\ell_2 \rightarrow r_2$ where $r_1$ and $\ell_2$ overlap and producing a new rewrite rule. 


\begin{definition}
\label{def:overlap}
Let $\En$ be an equational theory. We say that \emph{the rewrite rule $\ell_1 \rightarrow r_1$ is $\En$-overlapping on a position $p$ with the rewrite rule $\ell_2 \rightarrow r_2$} if $p \in \Pos{r_1}$, ${r_1}_{|_p}$ is not a variable and there exists $\sigma \in \mgu{\En}{{r_1}_{|_p},\ell_2}$.

The set of rewrite rules merging the two $\En$-overlapping rules on $p$, denoted $(\overlapset{\ell_1 \rightarrow r_1}{p,\En}{\ell_2 \rightarrow r_2})$, is defined as:
\[
\{\ell_1\sigma \rightarrow r_1\sigma[r_2\sigma]_p \mid \sigma \in \mgu{\En}{{r_1}_{|_p},\ell_2}\}
\]
When $p \not\in \Pos{r_1}$ or ${r_1}_{|_p} \in \X$, $(\overlapset{\ell_1 \rightarrow r_1}{p,\En}{\ell_2 \rightarrow r_2}) = \emptyset$.
\end{definition}

Overlapping rewrite rules are closely related to the classical notion of critical pairs used for example in the Knuth-Bendix completion algorithm. In particular, the critical pairs of $\ell_1 \rightarrow r_1$ and $\ell_2 \rightarrow r_2$ are the pairs $(s,t)$ such that $s \rightarrow t \in (\overlapset{r_1 \rightarrow \ell_1}{p,\En}{\ell_2 \rightarrow r_2})$. Notice that here, the overlap is between $\ell_1$ and $\ell_2$.
We denote by $(\overlapseteq{\ell_1 \rightarrow r_1}{p,\En}{\ell_2 \rightarrow r_2})$ the set of rules from $(\overlapset{\ell_1 \rightarrow r_1}{p,\En}{\ell_2 \rightarrow r_2})$ and their opposite orientation rules.


\subsection{The procedure}


\begin{figure}[ht]
\[
\begin{array}{l}
(\RNormR)\\
\R \cup \{ s \rightarrow t \} \normstep[\Rn,\En] \R \cup \{ s \rightarrow t'\} 
\hfill\qquad \text{if } t =_\En \circ \rwstep{\Rn} t'\\[2mm]
(\RNormL)\\
\R \cup \{ s \rightarrow t \} \normstep[\Rn,\En] \R \cup \{ s[s']_i \rightarrow t, t \rightarrow s[s']_i\} \\
\quad\hfill\text{if }s = f(s_1,\ldots,s_n),  i \in \{1,\ldots,n\}, s_i =_\En \circ \rwstep{\Rn} s'\\[2mm]
(\REq)\\
\R \cup \{ s \rightarrow t \} \normstep[\Rn,\En] \R \hfill \text{if }s =_\En t\\[2mm]
(\RSubsume)\\
\R \cup \{ u \rightarrow t \} \normstep[\Rn,\En] \R\\
\hfill \text{if }(u' \rightarrow t') \in \R, C[u'\sigma] \eqES{\En} u, C[t'\sigma] \eqE{\En} t,\\
\hfill t' \not\rwstepC[\En]{\Rn} \text{ and } \forall p >\varepsilon.\ u'_{|p} \not\rwstepC[\En]{\Rn}
\end{array}
\]
\caption{Normalisation rules}
\label{fig:normalisation rules}
\end{figure}

The main procedure will consist in generating the rewrite systems $\R$ and $\Rn$ that will in the end be part of the rewrite theory $(>,\R,\Rn,\En,\Ea)$. Since $\Rn$ in combination with $\En$ is used to normalise the rules, the procedure naturally relies on a subroutine that will just normalise the rewrite rules in the rewrite system $\R$. We define the subroutine by a set of normalisation rules on $\R$ displayed in \Cref{fig:normalisation rules}. The rule \RNormR takes a rule $s \rightarrow t$ from $\R$ and replaces $t$ by a term $t'$ that $t$ rewrites into,  i.e. $t =_\En \circ \rwstep{\Rn} t'$. There may be different possible $t'$, specially because $\Rn$ is not necessarily $\En$-convergent. However, it suffices to take only one of the terms $t$ rewrites into. 
The rule \RNormL similarly normalises one of the arguments of the left-hand side. However, it also produces the rule where we swap the two sides. 
Intuitively, when transforming a rule $s \rightarrow t$, we need to show that if an instance of the rule is \emph{well-ordered}, that is when $s\sigma > t\sigma$ then one of the instantiated produced rules should also be well-ordered. In the case of \RNormR, we know that $t > t'$ as $>$ is $\En$-compatible with $\Rn$. Thus, $t\sigma>t'\sigma$ which implies $s\sigma > t'\sigma$. Hence, we do not need to consider the rule $t' \rightarrow s$. In the case of \RNormL however, such reasoning does not hold as we might have $t\sigma > s[s']_i\sigma$. Therefore, we output both rules $s[s']_i \rightarrow t$ and $t \rightarrow s[s']_i$.


Note that in~\cite{DBLP:journals/jlp/BlanchetAF08}, when the equational theory can be modelled as a $\emptyset$-convergent rewrite system, their procedure has a similar normalisation rule but does not swap the rules as is done in \RNormL.
Hence, when it is known that $\Rn$ is $\En$-convergent, one could argue that we should add a specific rule where the swap does not occur. However, it is unnecessary thanks to the rule \REq. Assume for simplicity that $t$ is already normalised by \RNormR, once the rule $t \rightarrow s[s']_i$ is added to $\R$, it will be simplified once again by successive applications of the rule \RNormR. As $\Rn$ is $\En$-convergent, it will produce a rule $t \rightarrow t'$ with $t =_\En t'$ which will disappear by \REq.

Finally, the rule \RSubsume removes a rule that is already subsumed by another rule in $\R$. In the definition of the rule, $\eqES{\En}$ denotes the \emph{strict equality modulo $\En$}, that is $t \eqES{\En} s$ iff $t = f(t_1,\ldots,t_n)$, $s = f(s_1,\ldots,s_n)$ and for all $i \in \{1, \ldots,n\}$, $t_i \eqE{\En} s_i$. Intuitively, this equality modulo $\En$ does not affect the root symbol of $t$ and $s$. Additionally, the rule \RSubsume requires that the rule $s' \rightarrow t'$ that subsumes $s \rightarrow t$ should be irreducible by $\rwstepC[\En]{\Rn}$ (strictly for $s'$).

\begin{definition}
A \emph{normalisation step}, denoted $\R \normstep[\Rn,\En] \R'$, is defined when $\R'$ is the result of the application of \RNormR, \RNormL, \REq or \RSubsume on $\R$. The set $\R$ is said to be normalised when $\R \not\normstep[\Rn,\En]$. We define the function \Normalize{$\R,\Rn,\En$} that computes and returns a normalised $\R'$ such that $\R \normstep[\Rn,\En]^* \R'$.
\end{definition}

\begin{lemma}
Let $\Rn$ be a set of rewrite rules. Let $\En$ be an equational theory. Let $>$ be a $\En$-strong reduction ordering compatible with $\Rn$. For all sets of rewrite rules $\R$, the function \Normalize{$\R,\Rn,\En$} terminates.
\end{lemma}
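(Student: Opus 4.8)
\emph{Proof strategy.}
The plan is to exhibit a well-founded measure on finite sets of rewrite rules that strictly decreases under every single application of \RNormR, \RNormL, \REq or \RSubsume. Since a normalisation step changes $\R$ by definition, termination of \Normalize{$\R,\Rn,\En$} follows at once: there can be no infinite $\normstep[\Rn,\En]$-sequence. To a rule $\rho = (\ell\to r)$ I would associate the two-element multiset of terms $m(\rho) = \multiset{\ell,r}$, and to a finite rule set $\R$ the finite multiset of such multisets $\mu(\R) = \multiset{ m(\rho) \mid \rho\in\R }$. Terms are ordered by $>$, which is well founded since it is a reduction order; finite multisets of terms by the multiset extension $>^{mul}$ of $>$; and finite multisets of those by the multiset extension $(>^{mul})^{mul}$. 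As the multiset extension of a well-founded order is well founded, $(>^{mul})^{mul}$ is well founded, and every $\mu(\R)$ occurring in the computation is a finite multiset of finite multisets.

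Next I would verify the four rules in turn, reading $\R\cup\{\rho\}$ on the left of each rule as $\rho\notin\R$ (if a produced rule already lies in $\R$ it is simply not re-added, which only helps). For \REq and \RSubsume the step deletes a rule, so $\mu$ loses one element and strictly decreases. For \RNormR, $s\to t$ is replaced by $s\to t'$ where $t =_{\En} u \rwstep{\Rn} t'$; since $>$ is closed under substitutions and contexts, is compatible with $\Rn$, and is $\En$-compatible, one gets $u > t'$ and hence $t > t'$, so $\multiset{s,t} >^{mul} \multiset{s,t'}$ and $\mu$ decreases (one element replaced by a smaller one). For \RNormL, from $s_i =_{\En} v \rwstep{\Rn} s'$ the same reasoning gives $s_i > s'$, hence $s = f(s_1,\ldots,s_n) > f(s_1,\ldots,s',\ldots,s_n) = s[s']_i$ by closure under contexts, so $\multiset{s,t} >^{mul} \multiset{s[s']_i,t}$; the step removes $\multiset{s,t}$ from $\mu$ and inserts at most two copies of the strictly smaller element $\multiset{s[s']_i,t}$, which strictly decreases $\mu$ because replacing a multiset element by finitely many strictly smaller ones decreases the multiset.

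The main obstacle, and the step I would be most careful about, is \RNormL: it turns one rule into two, so measuring a rule by a single term (e.g.\ the larger of its sides) would fail, since that can make the measure grow. The crucial choice is to take the per-rule measure to be the \emph{unordered} pair $\multiset{\ell,r}$: then the two offspring $s[s']_i\to t$ and $t\to s[s']_i$ of \RNormL carry the \emph{same} measure, namely $\multiset{s[s']_i,t}$, which sits strictly below $\multiset{s,t}$, so the two-level multiset still decreases. The remaining work is routine — establishing the strict term decreases $t>t'$ and $s>s[s']_i$ from $\En$-compatibility of $>$ together with closure under substitutions and contexts, and invoking well-foundedness of iterated multiset extensions on finite multisets.
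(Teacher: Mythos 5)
Your proposal is correct and rests on the same underlying idea as the paper's proof: each replacement performed by \RNormR or \RNormL substitutes a term by one strictly smaller under $>$ (using $\En$-compatibility and compatibility with $\Rn$), and well-foundedness of $>$ then forbids an infinite normalisation sequence, while \REq and \RSubsume simply delete rules. You go further than the paper by packaging this into an explicit two-level multiset measure with the unordered pair $\multiset{\ell,r}$ as the per-rule weight; this is precisely what is needed to handle the fact that \RNormL replaces one rule by \emph{two} offspring of equal, strictly smaller weight — a branching issue the paper's one-line argument passes over in silence — so your version is the more complete one.
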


\begin{proof}
As $>$ is $\En$-compatible with $\Rn$, we have that for all terms $t,t'$, if $t =_\En \circ \rwstep{\Rn} t'$ then $t > t'$. Hence, when the rules \RNormL and \RNormR replace a rule $s \rightarrow t$, they do it by replacing one of the term (either $s$ or $t$) by a new term strictly smaller by $>$. As the order $>$ is well-founded, we cannot have an infinite sequence of rules generated by normalisation rules, and so \Normalize{$\R,\Rn,\En$} terminates.
\end{proof}

\begin{algorithm}[ht]
\Fn{\GenExtended{$E',\Rn,\En$}}{
    $\R := \Rn \cup \{ s \rightarrow t, t \rightarrow s \mid (s = t) \in E' \}$\;\label{alg:initial value}
    $\R :=$  \Normalize{$\R,\Rn,\En$}\;\label{alg:first normalisation}

    \Repeat{$\R = \R_0$}{\label{alg:loop}
        $\R' := \emptyset$\;
        $\R_0 := \R$\;
        \ForAll{$(\ell_1 \rightarrow r_1) \in \R \cup \En, (\ell_2 \rightarrow r_2) \in \R$, position $p$}{\label{alg:generation-forall}
            $\R' \mathrel{{:}{=}} \R' \cup (\overlapseteq{r_1 \rightarrow \ell_1}{p,\En}{\ell_2 \rightarrow r_2})$\;\label{alg:generation-left-right}
            $\R' \mathrel{{:}{=}} \R' \cup (\overlapseteq{\ell_1 \rightarrow r_1}{p,\En}{\ell_2 \rightarrow r_2})$\;\label{alg:generation-right-right}
            $\R' \mathrel{{:}{=}} \R' \cup (r_1 \rightarrow \ell_1)$\;\label{alg:generation-inverse}
        }
        $\R :=$  \Normalize{$\R'\cup \R,\Rn,\En$}
    }
    $\R \mathrel{{=}} \R \cup \{ f(x_1,\ldots,x_n) \rightarrow f(x_1,\ldots,x_n) \mid f/n \in \F\}$\;\label{alg:final_modification}
    \KwRet{$\R$}
}
\caption{Generic generation of rewrite theories}
\label{alg:generation}
\end{algorithm}

We can now define the main procedure displayed in \Cref{alg:generation}.
The set of rewrite rules $\R$ gathers the rules generated throughout the algorithm. It is initialised on \Cref{alg:initial value} with all the rules in $E'$, in both orientations, as well as all the rules in $\Rn$. Note that we do not consider both orientations for the rules in $\Rn$. 
Intuitively, in the proof (see \Cref{sec:overview} for an overview), we will only consider applications of rewrite rules that follow the ordering $>$, i.e. if $t \rwstep[\En]{\R} s$ by some rule $\ell \rightarrow r$ with substitution $\sigma$ then $\ell\sigma > r\sigma$ or $\ell\sigma =_\En r\sigma$. As such, since any rule $(\ell \rightarrow r) \in \Rn$ satisfies $\ell > r$, we will never consider an application of the rule $r \rightarrow \ell$. 

After a first normalisation on \Cref{alg:first normalisation}, the algorithm enters the main loop which merges $\En$-overlapping rewrite rules from $\R$, and then normalises these newly generated rewrite rules with the current rules in $\R$. The process repeats until we reach a fixpoint on $\R$. On \Cref{alg:generation-forall}, for sake of readability, we write $\R \cup \En$ for the set $\R \cup \{ \ell \rightarrow r, r \rightarrow \ell \mid (\ell = r) \in \En\}$. 
When looking at the two rewrite rules $\ell_1 \rightarrow r_1$ and $\ell_2 \rightarrow r_2$, we consider the two cases, that are when $\ell_1$ overlaps with $\ell_2$ (\Cref{alg:generation-left-right}) and when $r_1$ overlaps with $\ell_2$ (\Cref{alg:generation-right-right}). 

Notice that \Cref{alg:generation-inverse} also adds all the rules in $\R$ with opposite orientation. Indeed, similarly to the initialisation, the algorithm aims to maintain that rules with both orientations should occur in $\R$. But this may be disrupted by normalising with a non-$\En$-convergent $\Rn$. Hence, \Cref{alg:generation-inverse} ensures the invariant.
Finally, upon exiting the loop, the algorithm adds to $\R$ on \Cref{alg:final_modification} the rule $f(x_1,\ldots,x_n) \rightarrow f(x_1,\ldots,x_n)$ to satisfy \Cref{S:std} of \Cref{def:extended_signature}.

Before stating the theorem indicating the correctness of \Cref{alg:generation}, we need to introduce a final assumption on $\En$: we will require \emph{$\En$ to have finite equivalence classes}, that is for all terms $t \in \T(\F,\X \cup \N)$, the set $\{ t' \mid t =_\En t'\}$ is finite. This is a strong assumption that is satisfied, amongst others, by  AC, C (commutative) and A (associative) equational theories, hence allowing us to cover all the relevant equational theories. For instance, in our implemented prototype (see \Cref{sec:discussion}), we have focused on $\En$ being $AC$. Nevertheless, showing the correctness of the algorithm without this assumption remains open.


\begin{restatable}{theorem}{thgenerationsignature}
\label{th:generation of rewrite theory}
Let $\En, E'$ be two equational theories such that $\En$ has finite equivalence classes. Let $\Rn$ be a set of rewrite rules. Let $>$ be a $\En$-strong reduction ordering compatible with $\Rn$. Let $E = E' \cup \Rn^= \cup \En$.

If $E$ is not trivial and \GenExtended{$E',\Rn,\En$} terminates and returns $\R$ such that for all $(\ell \rightarrow r) \in \R$, $\vars{r} \subseteq \vars{l}$, then $(>,\R,\Rn,\En,\En)$ is a rewrite theory that mimics $E$.
\end{restatable}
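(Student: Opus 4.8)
The plan is to verify separately the two halves of the conclusion: that $T=(>,\R,\Rn,\En,\En)$ is a rewrite theory in the sense of \Cref{def:extended_signature}, and that it mimics $E$ in the sense of \Cref{def:mimics}. Being a rewrite theory costs almost nothing: \Cref{S:subset} holds since $\En\subseteq\En$; \Cref{S:order} is a hypothesis; and \Cref{S:std} holds because \Cref{alg:final_modification} inserts every identity rule $f(x_1,\ldots,x_n)\rightarrow f(x_1,\ldots,x_n)$, while $\vars{r}\subseteq\vars{\ell}$ for all $(\ell\rightarrow r)\in\R$ is part of the hypotheses. Condition \Cref{M:EaE} is also immediate: $\En\subseteq E$ by definition of $E=E'\cup\Rn^=\cup\En$, and $t\rwstep[\En]{\Rn}t'$ implies $t=_{\Rn^=\cup\En}t'$, hence $t=_E t'$. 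So the substance is in \Cref{M:eqcomplete} (soundness) and \Cref{M:ind1} (completeness).

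For soundness I would maintain, as an invariant of \GenExtended{$E',\Rn,\En$}, that $\ell=_E r$ for every rule $(\ell\rightarrow r)$ that ever lies in $\R$, and, in parallel, that the congruence generated by $\R^=\cup\En$ coincides with $=_E$. Both hold right after \Cref{alg:initial value}, since the inserted rules come from $E'\subseteq E$ and from $\Rn$ (and $\Rn^=\subseteq E$), while conversely every generator of $E$ is present. Each normalisation rule of \Cref{fig:normalisation rules} preserves the invariant: \RNormR and \RNormL rewrite one side of a rule by a $(=_\En\circ\rwstep{\Rn})$-step, staying in the same $=_E$-class; \REq and \RSubsume only delete a rule that, using their side conditions (for \RSubsume, $C[u'\sigma]\eqES{\En}u$ and $C[t'\sigma]\eqE{\En}t$), is already derivable from the surviving rules modulo $\En$. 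Merging two $\En$-overlapping rules $\ell_1\rightarrow r_1$, $\ell_2\rightarrow r_2$ on $p$ produces $\ell_1\sigma\rightarrow r_1\sigma[r_2\sigma]_p$ with $\sigma\in\mgu{\En}{{r_1}_{|_p},\ell_2}$; then ${r_1\sigma}_{|_p}=({r_1}_{|_p})\sigma=_\En\ell_2\sigma=_E r_2\sigma$, so $r_1\sigma=_E r_1\sigma[r_2\sigma]_p$ by congruence while $\ell_1\sigma=_E r_1\sigma$ is an instance of $\ell_1=_E r_1$; reversed and inverted rules from \Cref{alg:generation-left-right,alg:generation-inverse} follow from symmetry of $=_E$, and the identity rules of \Cref{alg:final_modification} are trivially sound. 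This gives \Cref{M:eqcomplete} and the key fact that $=_{\R^=\cup\En}$ coincides with $=_E$.

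Completeness is the heart of the proof, and I expect it to be the main obstacle. The argument is completion-theoretic. Call a step $t\rwstep[\En]{\R}s$, applying $\ell\rightarrow r$ with $\sigma$, \emph{ordered} when $\ell\sigma>r\sigma$; since $>$ is $\En$-compatible and well founded and $\En$ has finite equivalence classes, ordered $\R$-rewriting modulo $\En$ terminates --- this is exactly where the finite-equivalence-class hypothesis enters. Because the loop at \Cref{alg:loop} exits only with $\R=\R_0$, at termination $\R$ is normalised and, for every pair of rules in $\R\cup\En$ and every overlap position, each rule of the corresponding merge set is redundant: \Normalize removes it by \REq (so it is $=_\En$-trivial) or by \RSubsume (so it is subsumed by a rule of $\R$). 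By the critical-pair lemma modulo $\En$ --- the reason \Cref{alg:generation-forall} quantifies over $\R\cup\En$ and \Cref{alg:generation-left-right,alg:generation-right-right,alg:generation-inverse} retain both orientations --- ordered $\R$-rewriting modulo $\En$ is then locally confluent, hence Church--Rosser modulo $\En$ by the usual Newman-style argument; together with the coincidence of $=_{\R^=\cup\En}$ with $=_E$, this shows $u=_E v$ iff $u$ and $v$ have $=_\En$-equal ordered normal forms.

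It then remains to extract a single root rule. Take $f(t_1,\ldots,t_n)=_E t$ with $\nf{T}{E}{\{t_1,\ldots,t_n,t\}}$. Unfolding \Cref{def:normal form} (closing with the injective renaming into fresh, large names), each $t_i$ and $t$ is minimal in its $=_E$-class; since $>$ is context-closed and $=_E$ a congruence, so is every subterm of every $t_i$, hence every such subterm --- and $t$ --- is ordered-irreducible (this also re-proves \Cref{lem:norm-irreducible}). So the only ordered step available on $f(t_1,\ldots,t_n)$ is at the root, and by confluence there is one: it uses some rule $f(s_1,\ldots,s_n)\rightarrow s\in\R$ and $\sigma$ with $f(t_1,\ldots,t_n)=_\En f(s_1,\ldots,s_n)\sigma$, whence $t_i=_\En s_i\sigma$; using that $\R$ is moreover closed under merging a rule with rules reducing non-variable subterms of its right-hand side (\Cref{alg:generation-right-right}), one may take the rule whose right-hand side is already an ordered normal form, so $s\sigma=_\En t$. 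Undoing the renaming (via stability of $>$ and $\En$ under renamings) yields \Cref{M:ind1} with $\Ea=\En$. The genuinely delicate steps, which carry the bulk of the work, are: making ordered rewriting modulo $\En$ and its termination precise; proving the critical-pair lemma modulo $\En$ so that the fixpoint condition really forces local confluence (covering $\En$-self-overlaps and reversed rules); calibrating the \RSubsume side conditions to exactly what joinability requires; and handling the final root extraction, where decomposing $f(t_1,\ldots,t_n)=_\En f(s_1,\ldots,s_n)\sigma$ into the $t_i=_\En s_i\sigma$ needs care (e.g.\ when $\En=AC$).
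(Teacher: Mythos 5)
Your treatment of the rewrite-theory conditions, of \Cref{M:EaE}, and of \Cref{M:eqcomplete} (the invariant $\ell =_E r$ preserved by initialisation, by the normalisation rules, and by merging overlaps) matches the paper's. The divergence, and the problem, is in your completeness argument. The paper does \emph{not} prove \Cref{M:ind1} via local confluence plus a Newman-style argument; it proves it by a proof-transformation (proof-ordering) argument: an arbitrary derivation of $u =_E v$ is written as a chain of syntactic rewrite steps in which the $\En$-equations themselves appear as rules, and a system of peak-elimination and slope-reordering transformations is shown to terminate under a bespoke well-founded ranking on rewrite traces, whose positions are confined to the finite set $\Pset(\M)$ of positions of terms below the initial derivation. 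Your route is genuinely different, and the step you treat as routine --- ``ordered $\R$-rewriting modulo $\En$ is locally confluent, hence Church--Rosser modulo $\En$ by the usual Newman-style argument'' --- is where it breaks. For rewriting modulo a theory, termination plus local confluence of $\rwstep[\En]{\R}$ does not yield Church--Rosser modulo $\En$: you additionally need local \emph{coherence} with $\En$ (the Jouannaud--Kirchner conditions), i.e.\ joinability of peaks $u \leftarrow t =_\En t' \rightarrow v$, which your critical-pair analysis never addresses. Worse, the relation here is \emph{ordered} rewriting with unorientable rules (both orientations of most rules are in $\R$, and a rule fires only when its instance decreases or is $\En$-preserved), so the rewrite relation is not a fixed terminating ARS to which Newman applies; the standard correct tool for unfailing/ordered completion is precisely a proof ordering, which is what the paper's ranking function $\measureF$ over $\Pset(\M)$ supplies. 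Declaring the critical-pair lemma modulo $\En$ ``delicate'' does not discharge it: that lemma, the coherence peaks, and the termination of the transformation process constitute essentially the entire content of the paper's Appendix D.

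Two smaller inaccuracies. First, you locate the finite-equivalence-classes hypothesis in the wrong place: termination of strictly ordered rewriting already follows from well-foundedness of the $\En$-compatible order $>$; the hypothesis is actually needed (i) to make $\Pset(\M)$ finite, so that the ordering $\lessRwLbl{\Pset(\M)}$ on rewrite labels is well founded (\Cref{lem:finite-equivalence-classes}), and (ii) to ensure that left- and right-hand sides of $\En$-equations are not variables in the overlap lemmas. Second, in the root-extraction step you assert that confluence forces an ordered root step on $f(t_1,\ldots,t_n)$ and that one may choose a rule whose right-hand side is already normal; the paper instead obtains a valley whose slopes are \emph{ordered by decreasing position} and argues that minimality of the $t_i$ and of $t$ collapses the left slope to at most one non-$\En$ step, at the root. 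Your version of this step is salvageable only once the Church--Rosser-modulo claim is actually established, so it inherits the gap above.
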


Remark that when $\Rn = \emptyset$, the normalisation rules do not affect the rewrite system $\R$ given as input, that is $\Normalize{$\R,\emptyset,\En$} = \R$. Hence, for the algorithm to effectively work, it is preferable to provide a rewrite system $\Rn$ as large as possible (e.g. one that is $\En$-convergent with $E$). However, to avoid for users the requirement to manually provide this rewrite system, we present in \Cref{sec:optimisations} some optimisations that allow the prototype to start with $\Rn = \emptyset$ and to gradually augment it during the execution of the procedure. 


\medskip

\subsubsection*{Comparison with the procedures in \cite{DBLP:journals/jlp/BlanchetAF08}} 
Our algorithm is a direct generalization of the two algorithms presented in \cite{DBLP:journals/jlp/BlanchetAF08}: when $E$ is oriented as a convergent rewrite system $\Rn$, Algorithm~1 of~\cite{DBLP:journals/jlp/BlanchetAF08} actually computes \GenExtended{$\emptyset,\Rn,\emptyset$}.
When $E$ is a linear equational theory, Algorithm~2 of~\cite{DBLP:journals/jlp/BlanchetAF08} actually computes \GenExtended{$E,\emptyset,\emptyset$}. A direct consequence of \Cref{th:generation of rewrite theory} is that Algorithm~2 presented in~\cite{DBLP:journals/jlp/BlanchetAF08} for linear equational theories was in fact sound for any equational theory.


\subsection{Overview of the proof of Theorem~\ref{th:generation of rewrite theory}}
\label{sec:overview}


The complete proof can be found in \Cref{sec:app-proof-main-theorem}.
We place ourselves within the hypotheses of \Cref{th:generation of rewrite theory}, which are: $\Ea = \En$, and $>$ is a $\En$-strong reduction order compatible with $\Rn$, and \GenExtended{$E',\Rn,\En$} terminates and returns $\R$ such that for all $(\ell \rightarrow r) \in \R$, $\vars{r} \subseteq \vars{\ell}$. Let us denote $E = E' \cup \Rn^= \cup \En$ and $T = (>,\R,\Rn,\En,\En)$. 

\subsubsection{\texorpdfstring{$T$}{T} is a rewrite theory}

Showing that $T$ is a rewrite theory is a simple matter as \Cref{S:subset,S:order} are given as assumptions. Moreover, \Cref{S:std} is guaranteed by \Cref{alg:final_modification} of \Cref{alg:generation} and by our assumption that for all $(\ell \rightarrow r) \in \R$, $\vars{r} \subseteq \vars{\ell}$.

\subsubsection{Towards \texorpdfstring{$T$}{T} mimics \texorpdfstring{$E$}{E}} Amongst the three properties required to show that $T$ mimics $E$, only the last one, i.e. \Cref{M:ind1}, is difficult. \Cref{M:EaE} is directly obtained, since $\En = \Ea$ and $\En \cup \Rn^= \cup E' = E$. The proof of \Cref{M:eqcomplete} is mostly given by the following lemma.


\begin{lemma}
\label{lem:overlapset_complete}
Let $\ell_1 \rightarrow r_1$ and $\ell_2 \rightarrow r_2$ be two rewrite rules such that $\ell_1 =_E r_1$ and $\ell_2 =_E r_2$. For all positions $p$, for all $\ell_3 \rightarrow r_3 \in (\overlapset{\ell_1 \rightarrow r_1}{p,\En}{\ell_2 \rightarrow r_2})$, $\ell_3 =_E r_3$.
\end{lemma}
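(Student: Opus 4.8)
The plan is to unfold the definition of the overlap set and then compute modulo $E$. Recall that $\ell_3 \rightarrow r_3 \in (\overlapset{\ell_1 \rightarrow r_1}{p,\En}{\ell_2 \rightarrow r_2})$ means $p \in \Pos{r_1}$, ${r_1}_{|_p}$ is not a variable, and there is $\sigma \in \mgu{\En}{{r_1}_{|_p},\ell_2}$ with $\ell_3 = \ell_1\sigma$ and $r_3 = r_1\sigma[r_2\sigma]_p$. First I would observe that since $=_E$ is a congruence stable under substitution, from $\ell_1 =_E r_1$ we get $\ell_1\sigma =_E r_1\sigma$, i.e. $\ell_3 =_E r_1\sigma$. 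So it suffices to show $r_1\sigma =_E r_1\sigma[r_2\sigma]_p$, which by congruence (replacing a subterm at position $p$) follows from $({r_1}_{|_p})\sigma =_E r_2\sigma$, i.e. $(r_1\sigma)_{|_p} =_E r_2\sigma$.

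Next I would use the two facts available about $\sigma$. Since $\sigma$ is an $\En$-unifier of ${r_1}_{|_p}$ and $\ell_2$, we have $({r_1}_{|_p})\sigma =_\En \ell_2\sigma$, and since $\En \subseteq E$ (because $E = E' \cup \Rn^= \cup \En$), this gives $({r_1}_{|_p})\sigma =_E \ell_2\sigma$. Also, from $\ell_2 =_E r_2$ and stability under substitution, $\ell_2\sigma =_E r_2\sigma$. Chaining these, $({r_1}_{|_p})\sigma =_E \ell_2\sigma =_E r_2\sigma$, which is exactly what we needed. Combining with the first paragraph, $\ell_3 = \ell_1\sigma =_E r_1\sigma =_E r_1\sigma[r_2\sigma]_p = r_3$, as desired.

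This argument is essentially a chase through the definitions, so there is no real obstacle — the only point requiring a moment's care is the bookkeeping that $\En \subseteq E$, which holds by the definition $E = E' \cup \Rn^= \cup \En$, and that $=_E$ being a congruence closed under substitution justifies both the global step $\ell_1\sigma =_E r_1\sigma$ and the local replacement step at position $p$. No case analysis on the shape of the rules is needed, and the hypothesis that ${r_1}_{|_p}$ is not a variable is not even used (it is only relevant for the overlap set being computed, not for this congruence reasoning).
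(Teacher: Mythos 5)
Your proof is correct and is essentially identical to the paper's own argument: unfold the definition of the overlap set, use stability of $=_E$ under substitution on both rules, use that $\sigma$ is an $\En$-unifier together with $\En \subseteq E$ to get ${r_1}_{|p}\sigma =_E r_2\sigma$, and conclude by congruence at position $p$. Nothing further is needed.
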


\begin{proof}
If $\ell_3 \rightarrow r_3 \in (\overlapset{\ell_1 \rightarrow r_1}{p,\En}{\ell_2 \rightarrow r_2})$ then $p \in \Pos{r_1}$ and there exists $\sigma \in \mgu{\En}{{r_1}_{|p},\ell_2}$ such that $\ell_3 = \ell_1\sigma$ and $r_3 = r_1\sigma[r_2\sigma]_p$. $\ell_2 =_E r_2$ implies $\ell_2\sigma =_E r_2\sigma$. Similarly, $\ell_1\sigma =_E r_1\sigma$. Since $\sigma$ is a $\En$-unifier of ${r_1}_{|p}$ and $\ell_2$, and since $\En \subseteq E$, we have ${r_1}_{|p}\sigma =_E \ell_2\sigma =_E r_2\sigma$. This implies that $r_1 \sigma = r_1\sigma[{r_1}_{|p}\sigma]_p =_E r_1\sigma[r_2\sigma]_p$.
Hence $\ell_3 = \ell_1\sigma =_E r_1\sigma =_E r_1\sigma[r_2\sigma]_p = r_3$.
\end{proof}

Notice that all rules $\ell \rightarrow r$ in the initial value $\R$ on \Cref{alg:initial value} of \Cref{alg:generation} satisfy $\ell =_E r$. By applying \Cref{lem:overlapset_complete} and noticing that the normalisation rules \RNormL and \RNormR preserve this invariant, since $\Rn^= \cup \En \subseteq E$, we obtain that $\R$ satisfies \Cref{M:eqcomplete}.

\subsubsection{A mountainous landscape of equality modulo $E$} 
\label{sec:mountainous landscape}

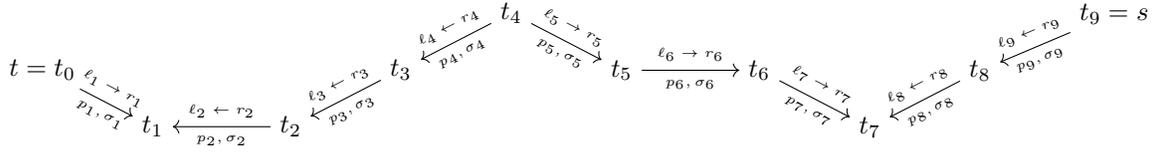
\begin{figure*}
\begin{center}
    \begin{tikzpicture}[
        term/.style={anchor=mid,fill=white}
        ]
        \def\length{1cm}

        \node[term] (T0) {$t = t_0$};
        \node[term,below right=0.5*\length and 1.2*\length of T0.mid] (T1) {$t_1$};
        \node[term,right=1.3cm of T1] (T2) {$t_2$};
        \node[term,above right=0.5*\length and 1.2*\length of T2.mid] (T3) {$t_3$};
        \node[term,above right=0.5*\length and 1.2*\length of T3.mid] (T4) {$t_4$};
        \node[term,below right=0.5*\length and 1.2*\length of T4.mid] (T5) {$t_5$};
        \node[term,right=1.3cm of T5] (T6) {$t_6$};
        \node[term,below right=0.5*\length and 1.2*\length of T6.mid] (T7) {$t_7$};
        \node[term,above right=0.5*\length and 1.2*\length of T7.mid] (T8) {$t_8$};
        \node[term,above right=0.5*\length and 1.2*\length of T8.mid] (T9) {$t_9 = s$};

        \tikzrightrwstep{T0}{p_1}{\sigma_1}{\ell_1}{r_1}{T1}
        \tikzleftrwstep{T1}{p_2}{\sigma_2}{r_2}{\ell_2}{T2}
        \tikzleftrwstep{T2}{p_3}{\sigma_3}{r_3}{\ell_3}{T3}
        \tikzleftrwstep{T3}{p_4}{\sigma_4}{r_4}{\ell_4}{T4}
        \tikzrightrwstep{T4}{p_5}{\sigma_5}{\ell_5}{r_5}{T5}
        \tikzrightrwstep{T5}{p_6}{\sigma_6}{\ell_6}{r_6}{T6}
        \tikzrightrwstep{T6}{p_7}{\sigma_7}{\ell_7}{r_7}{T7}
        \tikzleftrwstep{T7}{p_8}{\sigma_8}{r_8}{\ell_8}{T8}
        \tikzleftrwstep{T8}{p_9}{\sigma_9}{r_9}{\ell_9}{T9}
    \end{tikzpicture}
\end{center}
\caption{Mountainous landscape of $t =_E s$}
\label{fig:mountain}
\end{figure*}

Consider two terms $t$ and $s$ such that $t =_E s$. The definition of $=_E$ is given by being the least congruence such that $u\sigma =_E v\sigma$ for all equations $u = v \in E$. Another way of viewing this definition is that there exists a finite sequence $t = t_0 \rwstep{\R_E} t_1 \rwstep{\R_E} \ldots\rwstep{\R_E} t_{n-1} \rwstep{\R_E} t_n = s$ where $\R_E$ are the rules $\ell \rightarrow r$ such that $(\ell = r)$ or $(r = \ell)$ is in $E$. 

Let us assume for the moment that all $t_0, \ldots, t_n$ are ground. 
If $p_i$, $\sigma_i$ and $\ell_i \rightarrow r_i$ are respectively the position, the substitution and the rewrite rule used in the rewrite step $t_{i-1} \rwstep{\R_E} t_i$ then, as $>$ is a $\En$-total, we know that either $\ell_i\sigma_i > r_i\sigma_i$ or $\ell_i\sigma_i =_\En r_i\sigma_i$ or $\ell_i\sigma_i < r_i\sigma_i$. The main idea behind the proof is to only consider rewrite steps that \emph{follow the order $>$}. Formally, we write $t \rightrwstep{p}{\sigma}{\ell}{r} s$ when $t_{|p} = \ell\sigma$ and $s = t[r\sigma]_p$ and ($\ell\sigma > r\sigma$ or $\ell\sigma =_\En r\sigma$). Similarly, we write $s \leftrwstep{p}{\sigma}{\ell}{r} t$ when $t \rightrwstep{p}{\sigma}{\ell}{r} s$. 

Of course, when a rule $\ell \rightarrow r$ is not already ordered by $>$, i.e. $\ell > r$, there may be some substitutions $\sigma$ for which $\ell\sigma > r\sigma$ and some substitutions $\sigma'$ for which $\ell\sigma' < r\sigma'$. This explains why in the initial set $\R$ defined in \Cref{alg:initial value} of \Cref{alg:generation}, when $\ell = r \in E'$, both $\ell \rightarrow r$ and $r \rightarrow \ell$ are in $\R$, as well as why only $\Rn \subseteq \R$ and not the rules in $\Rn$ with opposite orientation.

Using this ordered rewrite step, we can graphically represent the equality modulo $E$, $t =_E s$ as a mountainous landscape with peaks, plateaus and valleys (see \Cref{fig:mountain}) where each increase or decrease of altitude is due to an ordered rewrite step from a rule in $\R$.
The first part of the proof intuitively consists in transforming the mountain into a single valley. To do so, we will apply successive transformations that will replace local peaks into local valleys, until no peak remains.

\subsubsection{Transforming peaks into valleys}
\label{sec:peaks-into-valleys}

In this section, we present a subset of the transformations needed to reshape the mountainous landscape. In particular, we first start by looking at peaks of the form $u \leftrwstep{p}{\sigma}{\ell}{r} t \rightrwstep{p'}{\sigma'}{\ell'}{r'} v$. 




\paragraph{Peak with parallel positions.} Assume that $p \para p'$, which means that $p$ and $p'$ are not prefix of each other. We know that $\ell\sigma = t_{|p}$ and $\ell'\sigma' = t_{|p'}$. Since $p \para p'$, $t$ is in fact of the form $t =C[\ell\sigma,\ell'\sigma']$ with $C[\_,\_]$ a term context. Hence, taking $t' = C[r\sigma,r'\sigma']$, we have $u \rightrwstep{p'}{\sigma'}{\ell'}{r'} t'$ and $t' \leftrwstep{p}{\sigma}{\ell}{r} v$. Graphically, we thus applied the following transformation:

\begin{center}
    \begin{tikzpicture}[term/.style={anchor=mid,fill=white}]
        \def\length{1cm}
        \node[term] (T0) {$u$};
        \node[term,above right=0.5*\length and 1.2*\length of T0.mid] (T1) {$t$};
        \node[term,below right=0.5*\length and 1.2*\length of T1.mid] (T2) {$v$};

        \tikzleftrwstep{T0}{p}{\sigma}{\ell}{r}{T1}
        \tikzrightrwstep{T1}{p'}{\sigma'}{\ell'}{r'}{T2}

        \node[right=0.5cm of T2.mid] (Arrow) {$\Rightarrow$};

        \node[term,right=0.5cm of Arrow.mid] (T0') {$u$};
        \node[term,above right=0.5*\length and 1.2*\length of T0'.mid] (T1') {$t$};
        \node[term,below right=0.5*\length and 1.2*\length of T1'.mid] (T2') {$v$};
        \node[term,below right=0.5*\length and 1.2*\length of T0'.mid] (T1'') {$t'$};

        \draw[<-,dashed] (T0') edge (T1');
        \draw[->,dashed] (T1') edge (T2');

        \tikzrightrwstep{T0'}{p'}{\sigma'}{\ell'}{r'}{T1''}
        \tikzleftrwstep{T1''}{p}{\sigma}{\ell}{r}{T2'}
    \end{tikzpicture}
\end{center}

Since $>$ is an $\En$-compatible reduction order, $\ell\sigma > r\sigma$ implies $t > u$ and $v > t'$. Similarly, $\ell'\sigma'$ implies $t > v$ and $u > t'$. Therefore $t > t'$. Notice that the peak, whose highest altitude was represented by $t$, becomes a local valley whose highest altitude is either $u$ or $v$.


\paragraph{Peak with overlapping positions.} 
\label{sec:peak-with-overlapping-positions}

When the positions $p$ and $p'$ are not parallel, we cannot apply the previous transformation. However, we can rely on the rules obtained by merging $\En$-overlapping rules. Assume that $p' = p \cdot q$, i.e. $p$ is a prefix of $p'$, and $q \in \Pos{\ell}$ and $\ell_{|q} \not\in \X$. In such a case, $\ell_{|q}\sigma = \ell'\sigma'$. W.l.o.g., we assume that distinct rules in the mountainous landscape have distinct variables. We can therefore define $\gamma = \sigma \cup \sigma'$ yielding $\ell_{|q}\gamma = \ell'\gamma$ and so $\ell_{|q}$ and $\ell'$ being unifiable. 
Hence, not only there exist $\alpha \in \mgu{\En}{\ell_{|q},\ell'}$ and $\theta$ such that $\gamma =_\En (\alpha\theta)_{|\dom{\gamma}}$; but  the rule $r \rightarrow \ell$ is also $\En$-overlapping on $q$ with $\ell' \rightarrow r'$. Hence, we can find $(s \rightarrow w) \in (\overlapset{r \rightarrow \ell}{q}{\ell' \rightarrow r'})$ such that $s\theta =_\En r\sigma$ and $w\theta =_\En \ell\sigma[r'\sigma']_q$.

Note that we cannot deduce how $s\theta$ and $w\theta$ are ordered, i.e. whether $s\theta > w\theta$ or $s\theta =_\En w\theta$ or $w\theta > s\theta$. We thus consider two cases when transforming the peak. When $\En = \emptyset$, the two transformations are graphically represented below.

\begin{center}
    \begin{tikzpicture}[term/.style={anchor=mid,fill=white}]
        \def\length{1cm}
        \node[term] (T0) {$u$};
        \node[term,above right=0.5*\length and 1.2*\length of T0.mid] (T1) {$t$};
        \node[term,below right=0.8*\length and 1.2*\length of T1.mid] (T2) {$v$};

        \tikzleftrwstep{T0}{p}{\sigma}{\ell}{r}{T1}
        \tikzrightrwstep{T1}{p'}{\sigma'}{\ell'}{r'}{T2}
        
        \node[term,right=3.5cm of T0.mid] (Arrow) {$\Rightarrow$};
        \node[draw,rectangle,above= 0.2cm of Arrow] (Label) {\tiny $s\theta > w\theta$ or $s\theta =_\En w\theta$};

        \node[term,right=5cm of T0.mid] (T0') {$u$};
        \node[term,above right=0.5*\length and 1.2*\length of T0'.mid] (T1') {$t$};
        \node[term,below right=0.8\length and 1.2*\length of T1'.mid] (T2') {$v$};

        \draw[<-,dashed] (T0') edge (T1');
        \draw[->,dashed] (T1') edge (T2');
        \tikzrightrwstep{T0'}{p}{\theta}{s}{w}{T2'}
    \end{tikzpicture}
\end{center}

\begin{center}
    \begin{tikzpicture}[term/.style={anchor=mid,fill=white}]
        \def\length{1cm}
        \node[term] (T0) {$u$};
        \node[term,above right=0.8*\length and 1.2*\length of T0.mid] (T1) {$t$};
        \node[term,below right=0.5*\length and 1.2*\length of T1.mid] (T2) {$v$};

        \tikzleftrwstep{T0}{p}{\sigma}{\ell}{r}{T1}
        \tikzrightrwstep{T1}{p'}{\sigma'}{\ell'}{r'}{T2}
        
        \node[term,above right=0.2*\length and 3.5cm of T0.mid] (Arrow) {$\Rightarrow$};
        \node[draw,rectangle,above= 0.2cm of Arrow] (Label) {\tiny $w\theta > s\theta$};

        \node[term,right=5cm of T0.mid] (T0') {$u$};
        \node[term,above right=0.8*\length and 1.2*\length of T0'.mid] (T1') {$t$};
        \node[term,below right=0.5*\length and 1.2*\length of T1'.mid] (T2') {$v$};

        \draw[<-,dashed] (T0') edge (T1');
        \draw[->,dashed] (T1') edge (T2');
        \tikzleftrwstep{T0'}{p}{\theta}{w}{s}{T2'}
    \end{tikzpicture}
\end{center}
These two cases explain why in \Cref{alg:generation-forall} of \Cref{alg:generation}, we augment $\R'$ with $\overlapseteq{r \rightarrow \ell}{q}{\ell' \rightarrow r'}$.

When $\En \neq \emptyset$, we cannot apply exactly the same transformation since, in the mountainous landscape, all rewrite steps are syntactic and not modulo $\En$. However, $s\theta =_\En r\sigma$ implies that there exists a sequence of rewrite steps from $s\theta$ to $r\sigma$ using only rewrite rules in $\En$. To discuss about such sequences more easily, 
we write $u \Leftrightrwstep{\tr} v$ when $\tr$ is a \emph{rewrite trace}, \emph{i.e.} a sequence of rewrite step arguments $\leftrightrwlabel[\sim_1]{p_1}{\sigma_1}{\ell_1}{r_1} \ldots \leftrightrwlabel[\sim_n]{p_n}{\sigma_n}{\ell_n}{r_n}$ with ${\sim_1},\ldots,{\sim_n} \in \{\leftarrow,\rightarrow\}$ and when there exist terms $t_0,\ldots, t_n$ such that $t_0 = u$, $t_n = v$ and for all $i \in \{1, \ldots, n\}$,
\[
t_{i-1} \rightrwstep{p_i}{\sigma_i}{\ell_i}{r_i} t_i \text{ when }{\sim_i} = {\rightarrow} \text{ and } t_{i-1} \leftrwstep{p_i}{\sigma_i}{r_i}{\ell_i} t_i \text{ otherwise.}
\]
We call each $\leftrightrwlabel[\sim_i]{p_i}{\sigma_i}{\ell_i}{r_i}$ a \emph{rewrite label}.
When $\tr$ contains only right (resp. left) oriented rewrite rules, we say that it is a \emph{right (resp. left) rewrite trace}, and we denote $u \Rightrwstep{\tr} v$ (resp. $u \Leftrwstep{\tr} v$).

Coming back to the peak of our landscape, $s\theta =_\En r\sigma$ and $w\theta =_\En \ell\sigma[r'\sigma']_q$ imply $u =_\En u[s\theta]_p$ and $v =_\En v[w\theta]_p$, which in turn imply that there exist two rewrite traces $\tr_L$ and $\tr_R$ with rules only in $\En$ such that $u \Rightrwstep{\tr_L} u[s\theta]_p$ and $v[w\theta]_p \Leftrwstep{\tr_R} v$. We can therefore amend our transformations to add these rewrite traces. For example, when $w\theta > s\theta$, the transformation can be graphically represented as follows:
\begin{center}
    \begin{tikzpicture}[term/.style={anchor=mid,fill=white}]
        \def\length{1cm}
        \node[term] (T0) {$u$};
        \node[term,above right=1.8*\length and 1.8*\length of T0.mid] (T1) {$t$};
        \node[term,below right=0.5*\length and 1.8*\length of T1.mid] (T2) {$v$};

        \tikzleftrwstep{T0}{p}{\sigma}{\ell}{r}{T1}
        \tikzrightrwstep{T1}{p'}{\sigma'}{\ell'}{r'}{T2}
        
        \node[term,above right=0.6cm and 3.8cm of T0.mid] (Arrow) {$\Rightarrow$};

        \node[term,right=3.95cm of T0.mid] (T0') {$u$};
        \node[term,above right=1.8*\length and 1.8*\length of T0'.mid] (T1') {$t$};
        \node[term,below right=0.5*\length and 1.8*\length of T1'.mid] (T2') {$v$};

        \node[draw,rectangle,above right= 1.5cm and 0cm of T0'] (Label) {\tiny $w\theta > s\theta$};

        \node[term,right=0.9*\length of T0'.mid] (T3') {$u[s\theta]_p$};
        \node[term,left=0.9*\length of T2'.mid] (T4') {$v[w\theta]_p$};

        \draw[<-,dashed] (T0') edge (T1');
        \draw[->,dashed] (T1') edge (T2');
        \tikzleftrwstep{T3'}{p}{\theta}{w}{s}{T4'}
        \tikzRightrwstep{T0'}{\tr_L}{T3'}
        \tikzLeftrwstep{T4'}{\tr_R}{T2'}
    \end{tikzpicture}
\end{center}
Although the \emph{altitude} of the mountainous landscape decreases, the presence of $\tr_L$ and $\tr_R$ may increase its \emph{length}, i.e. the number of rewrite steps. This will be taken into account when showing that repeated applications of transformations necessarily terminate.

For brevity, we omit here the other transformations used to remove peaks, for instance when $p' = p \cdot q$ and $q \not\in \Pos{\ell}$ or $\ell_{|q} \in \X$. 
They can however be found in \Cref{sec:app-proof-main-theorem}.


\subsubsection{Ordering slopes by decreasing position}
\label{sec:ordering slopes}

In addition to removing peaks of the mountainous landscape, we also order the rewrite steps on a slope by decreasing position. Intuitively, our aim is to ensure that if the rewrite steps $u \rightrwstep{p}{\sigma}{\ell}{r} t \Rightrwstep{\tr} w \rightrwstep{p'}{\sigma'}{\ell'}{r'} v$ are part of our mountainous landscape then either $p \para p'$ or $p' < p$. In other words, if $u \Rightrwstep{\tr} v$ then any rule that affects the root symbol of $u$ should be the last rule  applied in $\tr$. 
Once again, we can achieve this by transforming the landscape. 

Consider for example the slope $u \rightrwstep{p}{\sigma}{\ell}{r} t \rightrwstep{p'}{\sigma'}{\ell'}{r'} v$ where $p' = p \cdot q$, i.e. $p$ is a prefix of $p'$, and $q \in \Pos{\ell}$ and $\ell_{|q} \not\in \X$ with $\ell'\sigma' > r'\sigma'$. These two rewrite steps are not ordered by decreasing position. 
Similarly to our transformation that removes peaks with overlapping positions, there exist $(s \rightarrow w) \in (\overlapset{\ell \rightarrow r}{q}{\ell'\rightarrow r'})$ and a substitution $\theta$ such that $s\theta =_\En \ell\sigma$ and $w\theta =_\En r\sigma[r'\sigma']_q$. 
In this case, one can show that we necessarily have $s\theta > w\theta$ as $>$ is a $\En$-strong reduction ordering and $\ell'\sigma' > r'\sigma'$. Once again, $s\theta =_\En \ell\sigma$ and $w\theta =_\En r\sigma[r'\sigma']_q$ imply that there exist two rewrite traces $\tr_L$ and $\tr_R$ with rules only in $\En$ such that $u \Rightrwstep{\tr_L} u[s\theta]_p$ and $v[w\theta]_p \Rightrwstep{\tr_R} v$. The transformation is graphically represented below.

\begin{center}
    \begin{tikzpicture}[term/.style={anchor=mid,fill=white}]
        \def\length{1cm}
        \node[term] (T0) {$u$};
        \node[term,below right=0.5*\length and 1.7*\length of T0.mid] (T1) {$t$};
        \node[term,below right=0.5*\length and 1.7*\length of T1.mid] (T2) {$v$};

        \tikzrightrwstep{T0}{p}{\sigma}{\ell}{r}{T1}
        \tikzrightrwstep{T1}{p'}{\sigma'}{\ell'}{r'}{T2}
        
        \node[term,above=0.6cm of T2.mid] (Arrow) {$\Rightarrow$};

        \node[term,right=4cm of T0.mid] (T0') {$u$};
        \node[term,below right=0.5*\length and 1.7*\length of T0'.mid] (T1') {};
        \node[term,below right=0.5*\length and 1.7*\length of T1'.mid] (T2') {$v$};

        \node[term,right=0.7*\length of T0'.mid] (T3') {$u[s\theta]_p$};
        \node[term,left=0.7*\length of T2'.mid] (T4') {$v[w\theta]_p$};

        \draw[-,dashed] (T0') edge (T1');
        \draw[->,dashed] (T1') edge (T2');
        \tikzrightrwstep{T3'}{p}{\theta}{s}{w}{T4'}
        \tikzRightrwstep{T0'}{\tr_L}{T3'}
        \tikzRightrwstep{T4'}{\tr_R}{T2'}
    \end{tikzpicture}
\end{center}

When $\En = \emptyset$, as $u = u[s\theta]_p$ and $v[w\theta]_p = v$, the sequences $\tr_L$ and $\tr_R$ would be empty hence the number of wrongly ordered rewrite steps would decrease. However, when $\En \neq \emptyset$, $\tr_L$ and $\tr_R$ may introduce wrongly ordered rewrite steps. Therefore, we relax the notion of rewrite steps ordered by decreasing position by only looking at cases where 
$\ell'\sigma' > r'\sigma'$, thus excluding rules in $\En$. 


\begin{definition}
A right rewrite trace $\tr$ is \emph{ordered by decreasing position} when for all sub-traces of $\tr$ of the form $\rightrwlabel{p}{\sigma}{\ell}{r} \tr' \rightrwlabel{p'}{\sigma'}{\ell'}{r'}$, if $\ell'\sigma' > r'\sigma'$ then $p \para p'$ or $p' < p$. Similarly, a left rewrite trace $\tr$ is ordered by decreasing position when for all sub-traces of $\tr$ of the form $\leftrwlabel{p}{\sigma}{\ell}{r} \tr' \leftrwlabel{p'}{\sigma'}{\ell'}{r'}$, if $\ell\sigma > r\sigma$ then $p \para p'$ or $p < p'$.
\end{definition}

With this new definition, as $\tr_R$ only contains rules from $\En$, the rewrite trace $\rightrwlabel{p}{\theta}{s}{w} \tr_R$ is naturally ordered by decreasing position. To handle $\tr_L$, one can notice that in the equality $s\theta =_\En \ell\sigma$, we in fact have $s = \ell\alpha$ with $(\alpha\theta)_{|\dom{\sigma}} =_\En \sigma $. Thus, after showing that $\ell$ cannot be a variable as $\En$ has finite equivalence classes and $>$ has the subterm property on ground terms, we thus obtain $\ell\sigma \Rightrwstep{\tr'_L} s\theta$ for some $\tr'_L$ where the positions in $\tr'_L$ are all different from the root position $\varepsilon$, i.e. they do not affect the root of $\ell$. Therefore, we deduce that there exists $u \Rightrwstep{\tr_L} u[s\theta]_p$ where $p$ is a strict prefix of all positions of $\tr_L$.

The proof of \Cref{th:generation of rewrite theory} and in particular the proof of \Cref{M:ind1} from \Cref{def:mimics} is completed by showing that when no more landscape transformation rule is applicable, the rewrite trace is a valley whose slopes are ordered by decreasing position. In such a case, as $f(t_1,\ldots,t_n) =_E t$, $\M = \{t_1,\ldots,t_n,t\}$ and $\nf{T}{E}{\M}$ holds, we can show that the ordered slopes is in fact only composed of a single rule, that is the one used to prove \Cref{M:ind1}.


\section{Optimisations}
\label{sec:optimisations}

The procedure presented above has two main flaws: it may not terminate
and it may produce rewrite rules where variables of the right hand
side are not all included in the left hand side, the latter thus possibly violating the requirement from \Cref{S:std} of
\Cref{def:extended_signature}. Although we will never be able to
guarantee termination (otherwise all equational theories would satisfy
the finite variant property, which is not the case), we suggest in
this section some optimisations that will help the procedure
terminate more often. Moreover, we will also propose an additional
optimisation that will ensure that all rewrite rules generated have
variables in their right hand side occurring in their left-hand side.


\subsection{Dealing with right-hand side variables}

Recall that $>$ is a $\En$-strong reduction order and $\En$ has finite equivalence classes. Without restriction, the smallest ground term by $>$ could be either a name or a constant. It cannot be any other term as $>$ satisfies the subterm property on ground term, i.e. any strict subterm is strictly smaller than the term itself. To apply our optimisation, we will assume that the smallest ground term by $>$, denoted $\amin$, is a constant that does not occur in $\En$. In such a way, $\amin$ will be able to appear within our rewrite rules and not be affected by $\En$. In practice, we can always augment $\F$ with a new constant disjoint from $\F$, and define a new reduction from $>$ where this constant is minimal. 

The main idea of the rule comes from the following simple observation. Assume that $s =_E t$ and $x \in \vars{t} \setminus \vars{s}$. Thus, for all terms $u_1,u_2$, denoting $\sigma_1 = \{ x \mapsto u_1 \}$ and $\sigma_2 = \{ x \mapsto u_2 \}$, we have $t\sigma_1 =_E s =_E t\sigma_2$. In other words, the value of $x$ does not really matter. Thus, when $s \rightarrow t$ occurs in $\R$, we can replace $x$ by the minimal term $\amin$. In order to achieve correctness, we also introduce the rewrite rule $t \rightarrow t\{x\mapsto \amin\}$. Therefore, we augment the set of normalisation rules with the following rule \RVar.

\[
\begin{array}{l}
    (\RVar)\\
    \R \cup \{ s \rightarrow t \}  \normstep[\Rn,\En] \R \cup \{ s \rightarrow t\rho, t \rightarrow t\rho\} \\
    \quad\hfill\text{if }\rho : V \rightarrow \{\amin\} \text{ and }V = (\vars{t}\setminus \vars{s}) \neq \emptyset
\end{array}
\]

For example, this optimisation allows our prototype to generate a rewrite theory for the equational theory $\{g(x,y) = f(x,z);  h(x) = a; f(a,z) = p(x,z)\}$.


\subsection{Checking the order of rules}
\label{sec:order of rules}

So far, we assumed the existence of an $\En$-strong reduction order
but it is only used in the proof and not in the algorithm
itself. However, if we have an effective way of testing whether $t >
s$ then we can remove any rule $s \rightarrow t$ from $\R$
such that $t > s$ since they will never be used in a rewrite trace between two terms. Indeed, 
$u \rightrwstep{p}{\sigma}{s}{t} v$ and $u \leftrwstep{p}{\sigma}{s}{t} v$ requires that either $s\sigma =_\En t\sigma$ or $s\sigma > t\sigma$; and $t > s$ implies $t\sigma > s\sigma$ for all $\sigma$. In a similar fashion, as $>$ satisfies the subterm property on ground terms, $s$ cannot be a strict subterm of $t$ as it would imply $t\sigma > s\sigma$. We can therefore augment the set of normalisation rules with the following rule \ROrd.

\[
\begin{array}{l}
    (\ROrd)\\
    \R \cup \{ s \rightarrow t \}  \normstep[\Rn,\En] \R \quad\text{if $t > s$ or $s \in \sst{t}$}
\end{array}
\]

On the other hand, when a rule $s \rightarrow t$ already satisfies $s > t$, we also have the opportunity to add it to $\Rn$ as the only requirement on $\Rn$ is to be $\En$-terminating. When it occurs, we restart the algorithm by re-initialising $\R$ to its initial value with the augmented $\Rn$. This optimisation allows the prototype to terminate on the equational theory $\{\mathrm{exp}(\mathrm{g},\mathrm{one}) = g; \mathrm{exp}(\mathrm{exp}(\mathrm{g},x),y) = \mathrm{exp}(\mathrm{exp}(\mathrm{g},y),x)\}$.


\subsection{Subsumed rules modulo \texorpdfstring{$\Ea$}{Ea}}

\Cref{th:generation of rewrite theory} generates a rewrite theory mimicking $E$, of the form $(>,\R,\Rn,\En,\En)$, where the
equational theory $\En$ used for normalisation is also used for the
computation of most general unifiers. As mentioned at the beginning of
the paper, we aim for rewrite theories of the form
$(>,\R,\Rn,\En,\Ea)$. Of course, when $\En \subseteq \Ea$, it is easy to see that if $(>,\R,\Rn,\En,\En)$ mimics $E$ then so does $(>,\allowbreak\R,\Rn,\En,\Ea)$. However, many of the rules in $\R$ would become superfluous: typically, when two rules are equal modulo $\Ea$, only one of them is needed. We therefore consider a function that will remove these superfluous rules (\Cref{alg:superfluous}).

\begin{algorithm}[ht]
\Fn{\Cleanup{$\R,\Ea$}}{
    \lWhile{$\R = \R' \cup \{ s \rightarrow r, s' \rightarrow r'\}$ and $\exists \alpha$ s.t. $s\alpha \eqES{\Ea} s'$ and $r\alpha\eqE{\Ea} r'$}{\label{line:superfluous-while}
        $\R := \R' \cup \{ s \rightarrow r\}$
    }
    \KwRet{$\R$}
}
\caption{Removing superfluous rewrite rules}
\label{alg:superfluous}
\end{algorithm}

Correctness of this function is given below
(proof in \Cref{sec:app-optimisation}).


\begin{restatable}{lemma}{lemcleanup}
\label{lem:cleanup}
If $(>,\R,\Rn,\En,\Ea)$ is a rewrite theory mimicking an equational theory $E$ and $\R' = \Cleanup{$\R,\Ea$}$ then $(>,\R',\Rn,\En,\allowbreak\Ea)$ is a rewrite theory mimicking $E$.
\end{restatable}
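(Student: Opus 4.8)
The plan is to check the conditions defining ``$(>,\R',\Rn,\En,\Ea)$ mimics $E$'' one at a time and observe that all but one are inherited from the corresponding condition for $(>,\R,\Rn,\En,\Ea)$. The components $>$, $\Rn$, $\En$, $\Ea$ are unchanged, and $\R' \subseteq \R$ since \Cleanup only removes rules; hence \Cref{S:subset} and \Cref{S:order} of \Cref{def:extended_signature} hold verbatim, the variable condition of \Cref{S:std} is inherited, and \Cleanup does not remove an identity rule $f(x_1,\ldots,x_n) \rightarrow f(x_1,\ldots,x_n)$ (a rule that could subsume it in the sense of \Cref{line:superfluous-while} is, up to $\eqES{\Ea}$, that rule itself). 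For \Cref{def:mimics}, \Cref{M:EaE} constrains only $\En$, $\Rn$ and $E$, hence is unaffected, and \Cref{M:eqcomplete} constrains the individual rules of the system, hence is inherited from $\R' \subseteq \R$. So the real work is to re-establish \Cref{M:ind1} for $\R'$.

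The key observation is that \Cleanup preserves $\Ea$-redundancy. For rules $\rho_1 = s_1 \rightarrow t_1$ and $\rho_2 = s_2 \rightarrow t_2$, say $\rho_1$ is \emph{$\Ea$-subsumed by} $\rho_2$, written $\rho_1 \preceq \rho_2$, if there is a substitution $\alpha$ with $s_2\alpha \eqES{\Ea} s_1$ and $t_2\alpha \eqE{\Ea} t_1$ --- exactly the situation in which \Cref{line:superfluous-while} deletes $\rho_1$ in the presence of $\rho_2$. I would first verify that $\preceq$ is reflexive ($\alpha = \mathrm{id}$) and transitive: from $\rho_1 \preceq \rho_2$ via $\alpha_1$ and $\rho_2 \preceq \rho_3$ via $\alpha_2$ one obtains $\rho_1 \preceq \rho_3$ via $\alpha_2\alpha_1$, using that $\eqES{\Ea}$ is transitive, stable under substitution, and forces equal root symbols and arities, and that $=_{\Ea}$ is a congruence closed under substitution. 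Since each step of \Cleanup strictly decreases the number of rules, the loop terminates, and the property ``every rule of $\R$ is $\preceq$ some rule of the current set'' is a loop invariant: it holds at the start by reflexivity, and it is preserved when a rule $\rho_1 \preceq \rho_2$ is deleted, because any rule that was $\preceq \rho_1$ is $\preceq \rho_2$ by transitivity and $\rho_2$ is still present. Thus, for every $\rho \in \R$ there is $\rho' \in \R'$ with $\rho \preceq \rho'$.

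To re-establish \Cref{M:ind1} for $\R'$, suppose $f(u_1,\ldots,u_n) =_E u$ with $\nf{T}{E}{\{u_1,\ldots,u_n,u\}}$. Applying \Cref{M:ind1} to $(>,\R,\Rn,\En,\Ea)$ (which mimics $E$ by hypothesis) gives a substitution $\sigma$ and a freshly renamed rule $f(s_1,\ldots,s_n) \rightarrow s \in \R$ with $u =_{\Ea} s\sigma$ and $u_i =_{\Ea} s_i\sigma$ for all $i$. By the previous paragraph this rule is $\Ea$-subsumed by a rule of $\R'$ which, since $\eqES{\Ea}$ matches root symbols, has the form $f(s_1',\ldots,s_n') \rightarrow s'$ (after a fresh renaming away from $\{u_1,\ldots,u_n,u\}$), with an accompanying substitution $\beta$ such that $s_i'\beta =_{\Ea} s_i$ for all $i$ (the strict equality on left-hand sides in the definition of $\preceq$) and $s'\beta =_{\Ea} s$. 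Setting $\sigma' = \beta\sigma$ and using that $=_{\Ea}$ is closed under substitution, $s_i'\sigma' = s_i'\beta\sigma =_{\Ea} s_i\sigma =_{\Ea} u_i$ and $s'\sigma' =_{\Ea} s\sigma =_{\Ea} u$, which is precisely \Cref{M:ind1} for $\R'$. Hence $(>,\R',\Rn,\En,\Ea)$ mimics $E$.

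The step I expect to be the main obstacle is the ``key observation'': showing that the subsumption relation of \Cref{line:superfluous-while} is genuinely transitive, so that an arbitrary chain of deletions performed by \Cleanup still leaves, for each original rule, a single surviving rule that subsumes it, and doing the bookkeeping cleanly given the asymmetry between the strict equality $\eqES{\Ea}$ used on left-hand sides and the ordinary equality $\eqE{\Ea}$ used on right-hand sides, together with the fresh-renaming convention attached to \Cref{M:ind1}. Everything else follows directly from the hypothesis that $(>,\R,\Rn,\En,\Ea)$ mimics $E$.
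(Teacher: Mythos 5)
Your treatment of \Cref{M:ind1} is correct and is essentially the paper's own proof: the paper inducts on the iterations of \Cleanup, re-establishing \Cref{M:ind1} after each single deletion by composing the subsumption substitution with the substitution supplied by \Cref{M:ind1}, whereas you factor the same content into reflexivity/transitivity of the subsumption relation plus one application of \Cref{M:ind1} at the end; the key computation $s_i'\beta\sigma =_{\Ea} s_i\sigma =_{\Ea} u_i$ is identical in both.

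One caveat. Your parenthetical claim that \Cleanup cannot delete an identity rule $f(x_1,\ldots,x_n)\rightarrow f(x_1,\ldots,x_n)$ is not justified, and is false in general: if $\R$ also contains, say, $f(x_1,x_2)\rightarrow f(x_2,x_1)$ with $f$ commutative in $\Ea$, then that rule satisfies the condition of \Cref{line:superfluous-while} with $\alpha=\mathrm{id}$ against the identity rule, so \Cleanup may keep it and discard the identity rule, in which case \Cref{S:std} of \Cref{def:extended_signature} fails syntactically (it demands the literal rule $\ell\rightarrow\ell$). A rule that subsumes the identity rule need not be the identity rule up to $\eqES{\Ea}$; it only needs an $\Ea$-equal right-hand side. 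To be fair, the paper's own proof ignores \Cref{S:subset}--\Cref{S:std} entirely, so this is a gap you share with (and at least noticed before) the authors; it is easily repaired by re-adding the identity rules after \Cleanup, or by observing that the surviving $\Ea$-equal rule plays the same semantic role, but as written your justification does not establish \Cref{S:std}.
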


\section{Experimentation Results and Discussion}
\label{sec:discussion}

\newcommand{\pow}{\,\text{\^{}}\,}

\begin{table*}[h!]
    \scriptsize
\begin{center}
\begin{tabular}{|@{\,}c@{\,}|@{\,}c@{\,}|c|c|c|c|c|}
\hline
Acronym & Equations & time & \# rules & $|\R|$ & Conv & $\Rn$ \\
\hline
\hline

ACI & AC($+$) $\cup$ $\{ x + x = x \}$ &  5s  & 54k & 6 & yes & no \\
\hline

ACN & AC$(+)$ $\cup$ $\{ x + x = 0 \}$ &  35s & 399k & 14 & yes & no \\
\hline

$\XOR$ & AC($+$) $\cup$ $\{ x + 0 = x, x + x = 0\} $ & 1s & 38k & 8 & yes & no \\
\hline

$\XOR$ed & $\XOR \cup \{ \mathrm{d}(\mathrm{e}(x + k_2,k_1),k_1,k_2) = x \}$ & 3s & 56k & 17 & yes & no \\
\hline

$\mathrm{H}(+_{\text{in}},+_{\text{out}})$  & $\{h(x +_{\text{in}} y) = h(x) +_{\text{out}} h(y)  \}$ & 1s & 10 & 5 & yes & no \\
\hline
$\XOR h_{\neq}$ & $\XOR \cup \mathrm{H}(+,+_{\text{out}})$ & 2s & 40k & 17 & yes & no \\
    \hline

$\AG$ & AC$(*)$ $\cup$ $\{ x * 1 = x, x * x^{-1} = 1\}$ & 7min 26s & 530k & 52 & yes & yes \\
\hline
DH& $\AG$ $\cup$ $\{ x \pow 1 = x, (x \pow y) \pow z = x \pow (y*z)\}$  & 17min & 828k & 99 & yes & yes \\
\hline
\multirow{2}{*}{Bilinear Pairing} & DH $\cup\ \{ (x \cdot y) \cdot z = x \cdot (y * z), x \cdot 1 = x,$ & \multirow{2}{*}{1h 14min} & \multirow{2}{*}{1402k} &
\multirow{2}{*}{194} & \multirow{2}{*}{no} & \multirow{2}{*}{yes} \\ 
&  $\mathrm{em}(x \cdot y,z) = \mathrm{em}(x,z) * y, \mathrm{em}(z,x \cdot y) = \mathrm{em}(z,x) * y\}$ & & & & &\\
\hline
OldDHWeak & $\{ \mathrm{g}_w \pow x = x, 1 \pow x = 1, (\mathrm{g} \pow y) \pow z = (\mathrm{g} \pow z) \pow y  \}$ & 1s & 32 & 10 & no & no\\

\hline    

$\AG h_{\neq}$& $\AG \cup H(*,*_{\text{out}})$ & 16min 42s & 815k & 101 & yes & yes \\
\hline

\multirow{2}{*}{EG-Mixnet~\cite{usenix-mixnet}} & AC$(*) \cup \{ (x \pow y) \pow z = x \pow (y*z),
\mathrm{dec}(\mathrm{enc}(m, x, x \pow y, r), x, y) = m, $ & \multirow{2}{*}{1s} & \multirow{2}{*}{276} &
\multirow{2}{*}{15} & \multirow{2}{*}{yes} & \multirow{2}{*}{no} \\ 
&  $\mathrm{check}(\mathrm{sign}(m, x, s), x, x \pow s) = m,
\mathrm{get}(\mathrm{sign}(m, x, s)) = m\}$ & & & & &\\

\hline
EG-Renc~\cite{usenix-mixnet} & EG-Mixnet $\cup$ AC$(+) \ \cup\{ renc(enc(m,x,x \pow y,r),r',x,x \pow y) = enc(m,x,x \pow y,r + r') \}$  & 1s & 428 & 19 & yes & no \\ 
\hline
Enc$h_{\neq}$ & AC$(+) \cup \{ \mathrm{enc}(x,z) +_{\text{out}} \mathrm{enc}(y,z) = \mathrm{enc}(x+y,z), 
\mathrm{dec}(\mathrm{enc}(x,y),y) = x \}$ & 1s & 24 & 8 & yes & no\\
\hline
AEnc$h_{\neq}$ & AC$(+) \cup \{ \mathrm{enc}(x,\mathrm{pk}(z)) +_{\text{out}} \mathrm{enc}(y,\mathrm{pk}(z)) = \mathrm{enc}(x+y,\mathrm{pk}(z)), 
\mathrm{dec}(\mathrm{enc}(x,\mathrm{pk}(y)),y) = x \}$ & 1s & 24 & 7 & yes & no\\

\hline
\multirow{2}{*}{Blind Signature} & $\{ \mathrm{unblind}(\mathrm{blind}(x,y),y) = x, 
\mathrm{get}(\mathrm{sign}(x,y)) = x,$&\multirow{2}{*}{1s}&\multirow{2}{*}{72}&\multirow{2}{*}{11}& \multirow{2}{*}{yes}&\multirow{2}{*}{no}\\
&  $\mathrm{unblind}(\mathrm{sign}(\mathrm{blind}(x,y),z),y) = \mathrm{sign}(x,z),
\mathrm{verify}(\mathrm{sign}(x,y),\mathrm{vk}(y)) = \mathrm{ok}\}$ & & & & &\\
\hline
Toy example 1 & $\{ g(x,y) = f(x,z), h(x) = 0, f(0,z) = p(x,z)\}$ & 1s & 218 & 14 & no & no \\
\hline
Toy example 2 & $\{g(x,y) = f(x,x,z),h(x,x) = 0,f(0,x,z) = p(x,z)\}$ & 1s & 159 & 12 & no & no \\
\hline
Toy example 3 & $\{\mathrm{dh}(x,\mathrm{pk}(y)) = \mathrm{dh}(y,\mathrm{pk}(x)), \mathrm{dh}(x,\mathrm{invalid}) = \mathrm{invalid}\}$ & 1s & 21 & 5 & no & no \\
\hline
\end{tabular}
\end{center}

\caption{Extract of equational theories handled by our prototype.}
\label{tab:theories}
\end{table*}

We developed a prototype implementing our semi-decision procedure in
OCaml and used Maude version 3.4~\cite{maude} as backend. Our source code and examples are available in \cite{prototypeFVP}. The prototype currently natively only support matching and unification modulo AC, meaning that given an input equational theory $E$, and an optional rewrite system $\Rn$, the prototype will  execute \GenExtended{$E,\Rn, AC$}, returning a rewrite system $\R$. 
In \Cref{tab:theories}, we provide an extract of our experimental results on different equational theories, indicating the equations we consider, the computation time on a MacBook Pro M2 8-core with 24GB of memory, the number of rewrite rules generated by the algorithm (\# rules), the size of $\R$ ($|\mathcal{R}|$) and whether our prototype was able to determine that $\R$ is convergent (Conv). Finally, we also indicate whether  a rewrite system $\Rn$ was initially provided to the prototype ($\Rn$). We explain below how our prototype check convergence.

In \Cref{tab:theories}, we denote by AC$(+)$ the equations for associativity and commutativity of the symbol $+$. In~\cite{DBLP:conf/rta/Comon-LundhD05}, it was shown that adding to $\XOR$ an  homomorphic symbol $h$ with the equation $h(x+y) = h(x) + h(y)$ yields an equational theory that does not have the FVP modulo AC. However, we retrieve the FVP modulo AC when considering different operators $+_{\text{in}}$ and $+_{\text{out}}$ with the equation $\{ h(x +_{\text{in}} y) = h(x) +_{\text{out}} h(y)\}$. In the case of $\XOR$ for example, the equations $\XOR \cup H(+,+_{\text{out}})$ ensure that $+_{\text{out}}$ is also AC, nilpotent with a unit element but only over the outputs of the $h$ function (not over all terms). The same technique applies to Abelian groups with an homomorphic symbol and (a)symmetric homomorphic encryption/decryption scheme. 

We also consider more complex equations used in the literature. For instance, we consider a model of ElGamal asymmetric encryption used to model Exponentiation Mix-Nets in~\cite{usenix-mixnet}. We also consider the Bilinear Pairing equations used in Tamarin, which is basically Diffie Hellman (DH) augmented with a bilinear map $\mathrm{em}$ and a scalar multiplier $\cdot$. We relied on the generic definition where the function $\mathrm{em}$ is not commutative, corresponding to the case where the groups used on the left and right arguments of $\mathrm{em}$ may be different. Interestingly, 
when adding the commutative property to $\mathrm{em}$, our algorithm seems \emph{stuck} in the computation of an AC-unification in Maude. However, when considering $\mathrm{em}$ to be both associative and commutative, our algorithm concludes and generates the same number of rewrite rules as the non-commutative case. 
This can be explained by the fact that our prototype currently only implements natively AC unification and not commutative-only unification. Therefore the algorithm needs to take care of the commutative property which seems to lead to some very costly unification. In a future version of our prototype with commutative unification implemented, we expect the bilinear pairing with a commutative $\mathrm{em}$ to conclude without problem.

Additional examples, including some toys examples, can be found in \cite{prototypeFVP}.

We now discuss some additional observations on our procedure and its implementation.


\subsubsection{Detecting convergent equational theories}

Upon closer examination, our algorithm shares several similarities with the Knuth-Bendix algorithm used to show that an equational theory is convergent. The generation of $(\overlapseteq{r_1 \rightarrow \ell_1}{p,\En}{\ell_2 \rightarrow r_2})$ and some of our normalisation rules intuitively correspond to the rules in the Knuth-Bendix algorithm. As such, it is hardly surprising that our algorithm also allows us to determine whether an equational theory can be represented by a rewrite system $\R$ convergent modulo $\En$. In particular, if our algorithms with optimisations returns the set $\R$ then it suffices to check whether (almost) all rules in $\R$ are ordered by $>$. 
To be more specific, removing from $\R$ the rules of the form $\ell \rightarrow \ell$ added on \Cref{alg:final_modification} yields a rewrite system $\R'$. It then suffices to check that $>$ is compatible with $\R'$ to show that $\R'$ is $\En$-convergent for $E$. 

Termination is directly guaranteed. Confluence is given by  \Cref{th:equality modulo}. Indeed, denoting $T = (>,\allowbreak\R,\Rn,\En,\En)$, with a simple induction one can show that $\symbeval{t} \Eval{T} t'$ implies that $t \rwsteps[\En]{\R'} t'$. Hence, the $\En$-confluence of $\R'$ is given by  \Cref{th:equality modulo}, since $t =_E s$ implies that there exist $t',s'$ such that $\symbeval{t} \Eval{T} t'$, $\symbeval{s} \Eval{T} s'$, and $t' =_\En s'$. This leads to an interesting observation:


\begin{lemma}
If $E$ has the finite variant property modulo $\En$, then there exists a finite rewrite system $\R$ that is $\En$-confluent for $E$.
\end{lemma}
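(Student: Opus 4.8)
The plan is to read off the system directly from the rewrite theory produced by \Cref{th:FVP-implies-mimics}. First I would apply that theorem in its first case, $\Ea = \En$: under the paper's standing hypotheses (in particular that $E$ is non-trivial, that $>$ is a $\En$-strong reduction order, and that $\F$ contains a free binary symbol $\bin$ and a minimal constant $\amin$ — both of which can always be added to the signature without changing whether $E$ has the FVP modulo $\En$, since they are free and do not occur in $E$), the assumption that $E$ has the finite variant property modulo $\En$ yields a rewrite theory $T = (>,\R,\Rn,\En,\En)$ that mimics $E$. A rewrite system is finite by definition, so $\R$ is finite, and $\R$ is my candidate; optionally one may drop the reflexive rules $f(x_1,\ldots,x_n) \rightarrow f(x_1,\ldots,x_n)$ forced by \Cref{S:std}, since they contribute only trivial equations to $\R^=$ and act as no-ops for $\rwstep[\En]{\R}$.

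The only genuinely new ingredient — already sketched in the paragraph above the statement — is a lemma simulating close evaluation by rewriting: \emph{for every term $t$ and every $t'$, $\symbeval{t} \Eval{T} t'$ implies $t \rwsteps[\En]{\R} t'$.} I would prove this by induction on the derivation of $\symbeval{t} \Eval{T} t'$. The base case is immediate ($t' = t$). In the inductive case, $t = f(t_1,\ldots,t_n)$ and $t' = r\sigma$ for some $f(\ell_1,\ldots,\ell_n) \rightarrow r \in \R$ with $\symbeval{t_i} \Eval{T} s_i$ and $s_i =_\En \ell_i\sigma$ for all $i$; the induction hypothesis gives $t_i \rwsteps[\En]{\R} s_i$, closure of $\rwstep[\En]{\R}$ under contexts gives $f(t_1,\ldots,t_n) \rwsteps[\En]{\R} f(s_1,\ldots,s_n)$, and since $f(s_1,\ldots,s_n) =_\En f(\ell_1,\ldots,\ell_n)\sigma$ a single step at the root with $f(\ell_1,\ldots,\ell_n) \rightarrow r$ reaches $r\sigma = t'$.

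With this lemma available, I would conclude using \Cref{M:EaE}, \Cref{M:eqcomplete} and \Cref{th:equality modulo}. First, $=_{(\R^=)\cup\En}$ coincides with $=_E$: for $\subseteq$, every rule $\ell \rightarrow r \in \R$ satisfies $\ell =_E r$ by \Cref{M:eqcomplete} and $\En = \Ea \subseteq E$ by \Cref{M:EaE}, so $(\R^=)\cup\En \subseteq {=_E}$ and the congruence it generates is contained in $=_E$; for $\supseteq$, if $s =_E t$ then \Cref{th:equality modulo} supplies $s',t'$ with $\symbeval{s}\Eval{T} s'$, $\symbeval{t}\Eval{T} t'$ and $s' =_\En t'$, hence the lemma gives $s \rwsteps[\En]{\R} s'$ and $t \rwsteps[\En]{\R} t'$, and since each $\rwstep[\En]{\R}$-step relates two $=_{(\R^=)\cup\En}$-equal terms we get $s =_{(\R^=)\cup\En} t$. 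Finally, $\En$-confluence falls out of the same decomposition: for $s =_{(\R^=)\cup\En} t$, i.e.\ $s =_E t$, the pair $(s',t')$ just obtained already witnesses $s \rwsteps[\En]{\R} s'$, $t \rwsteps[\En]{\R} t'$, $s' =_\En t'$, which is precisely the $\En$-confluence diagram. Thus $\R$ is $\En$-confluent for $E$; contrary to the convergent case, no compatibility of $>$ with $\R$ is claimed, so no $\En$-termination is asserted.

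I do not anticipate a hard obstacle, since the substance is carried by \Cref{th:FVP-implies-mimics} and \Cref{th:equality modulo}. The two points needing care are: ensuring the side conditions of \Cref{th:FVP-implies-mimics} hold, which may require a harmless extension of $\F$ by fresh symbols $\bin$ and $\amin$ together with a short argument that this preserves the FVP; and the routine but easy-to-slip bookkeeping that the reflexive rules of \Cref{S:std} and the $=_\En$-matching inside $\rwstep[\En]{\R}$-steps do not disturb the identification of $=_{(\R^=)\cup\En}$ with $=_E$.
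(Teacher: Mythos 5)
Your proof matches the paper's intended argument: obtain a mimicking rewrite theory from \Cref{th:FVP-implies-mimics} in the case $\Ea = \En$, show by induction that $\symbeval{t} \Eval{T} t'$ implies $t \rwsteps[\En]{\R} t'$, and read off $\En$-confluence for $E$ from \Cref{th:equality modulo} --- exactly the chain sketched in the paragraph preceding the lemma. Your additional bookkeeping (identifying $=_{(\R^=)\cup\En}$ with $=_E$, the harmless extension of $\F$ by $\bin$ and $\amin$, and the treatment of the reflexive rules) is correct and only makes explicit what the paper leaves implicit.
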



\subsubsection{Existence of an AC-strong reduction order compatible with a rewrite system}

Our procedure relies at minimum on the existence of an $\En$-strong reduction order which, as previously mentioned, can be quite challenging to prove.
In our prototype, we rely on the AC-compatible and AC-total reduction order of \cite{DBLP:conf/rta/Rubio99}. 
It is fully syntactic and RPO based, meaning that it is efficient and allows us to apply the optimisation described in \Cref{sec:order of rules}. Notice that for this order to be AC-strong, it additionally requires to be stable by renaming. Being RPO based, this can easily be achieved by encoding names into ground terms $\mathsf{N} = \{ \mathsf{a}, \mathsf{sc}(\mathsf{a}),\allowbreak \mathsf{sc}(\mathsf{sc}(\mathsf{a})), \ldots\}$.


Although our experiments showed that using $>_{AC}$ works well in practice, it can be enhanced by providing a rewrite system to the algorithm instead of letting the algorithm try to create one on its own. This is for example the case with the Abelian Group ($\mathcal{AG}$) equational theory. In \cite{DBLP:conf/rta/Comon-LundhD05}, $\AG$ was shown to satisfy the FVP modulo AC using an AC-convergent rewrite system first proposed by Lankford~\cite{hullot1980catalogue}.
Denoting this rewrite system $\R_{\mathcal{AG}}$, we call our prototype with $\R_{\mathcal{AG}}$ given as input. However, this is only correct if we can show the existence of an AC-strong reduction order compatible with $\R_{\mathcal{AG}}$. It is well known that from an $E$-terminating rewrite system, one can build an $E$-compatible reduction order, but obtaining totality is not always possible. 
For example, consider the rewrite system $\R = \{ f(x,x) \rightarrow k ,  k \rightarrow f(k_1,k_2)\}$ with $k,k_1,k_2$  constants. $\R$ is convergent but there exists no reduction order $\emptyset$-total and $\emptyset$-compatible with $\R$. Indeed, totality implies $f(k_1,k_2) > f(a,a)$ for some minimal ground term $a$; and compatibility with $\R$ entails that $f(a,a) > k > f(k_1,k_2)$, leading to a contradiction.

Nevertheless, we show the existence of an AC-strong reduction order compatible with $\R_{\mathcal{AG}}$ by composing $>_{AC}$ with the order that was used to show termination of $\R_{\mathcal{AG}}$~\cite{hullot1980catalogue}. 
The composition of orders is given by the following result 
(proof in \Cref{sec:app-order}).

\begin{restatable}{lemma}{lembuildingEstrongorder}
\label{lem:building E-strong reduction order}
Let $E$ be an equational theory. Let $\equiv$ be an equivalence relation on terms closed by application of contexts, substitutions and renaming, and such that $u =_E t$ implies $u \equiv t$. Let $>_1$ be an $E$-strong reduction order. Let $\R$ be a set of rewrite rules and $>$ be a reduction order stable by renaming such that:
\begin{itemize}
\item if $s \equiv u > v \equiv t$ then $s > t$ (\emph{$\equiv$-compatible});
\item for all ground terms $u,v$, either $u > v$ or $v > u$ or $u \equiv v$ (\emph{$\equiv$-total});
\item for all $(\ell \rightarrow r) \in \R$, $\ell > r$;
\item for all $a,b \in \N$, $a > b$ implies $a >_1 b$.
\end{itemize}
Then there exists an $E$-strong reduction order $>_2$ compatible with $\R$.
\end{restatable}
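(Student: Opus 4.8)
The plan is to define $>_2$ as a lexicographic refinement of $>$ by $>_1$: put $s >_2 t$ iff $s > t$, or else $s \equiv t$ and $s >_1 t$. The two disjuncts are mutually exclusive, since $\equiv$-compatibility of $>$ together with asymmetry of $>$ forces $s \equiv t$ to imply $s \not> t$ (were $s > t$ with $s \equiv t$, instantiating $\equiv$-compatibility as $t \equiv s > t \equiv s$ would give $t > s$, contradicting asymmetry). Compatibility with $\R$ is immediate because $\ell > r$, hence $\ell >_2 r$, for every rule. Irreflexivity is clear; transitivity is the four-case check of $s >_2 t >_2 u$, handled by transitivity of $>$, of $>_1$, of $\equiv$, and by $\equiv$-compatibility of $>$ to merge a $>$-step with an adjacent $\equiv$-step in the two mixed cases. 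The rewrite-order closure properties transfer disjunct by disjunct: on the $>$-branch from $>$ being a rewrite order, on the $\equiv$-and-$>_1$ branch from $\equiv$ and $>_1$ being closed under contexts and substitutions.

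For well-foundedness I would observe that $\equiv$-compatibility makes $>$ descend to a well-defined strict order $\succ$ on $\equiv$-classes, given by $[s] \succ [t]$ iff $s > t$, and that $\succ$ inherits well-foundedness from $>$. Given a supposed infinite descending chain $s_0 >_2 s_1 >_2 \cdots$, every step either strictly decreases the class ($s_i > s_{i+1}$) or keeps it fixed ($s_i \equiv s_{i+1}$), so $[s_0] \succeq [s_1] \succeq \cdots$ is a weakly $\succ$-descending sequence; since $\succ$ is well-founded such a sequence admits only finitely many strict descents, and beyond that point every step is an $\equiv$-step, hence a genuine $>_1$-step, contradicting well-foundedness of $>_1$. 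Then $E$-compatibility and $E$-totality are routine: $=_E$ refines $\equiv$, so $\equiv$-compatibility of $>$ and $E$-compatibility of $>_1$ cover the two disjuncts of $E$-compatibility, while for $E$-totality on ground $s \neq_E t$ one applies $\equiv$-totality of $>$, falling back inside an $\equiv$-class on $E$-totality of $>_1$.

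The delicate part is stability by renaming, and this is where the last hypothesis ($a > b$ implies $a >_1 b$ for names) is used. The key sublemma is: a renaming $\rho$ preserves $>_2$ on $\dom{\rho}$ if and only if it preserves both $>$ and $>_1$ on $\dom{\rho}$. The ``if'' direction is immediate since $>_2$ is built from $>$, $>_1$ and $\equiv$, and any renaming preserves $\equiv$ (closure of $\equiv$ under renaming, using $\rho^{-1}$ for the converse). For ``only if'', assume $\rho$ preserves $>_2$; then so does $\rho^{-1}$. If $a > b$ for names in $\dom{\rho}$ then $a >_2 b$, so $a\rho >_2 b\rho$, and this cannot hold via the $\equiv$-branch, for $a\rho \equiv b\rho$ would give $a \equiv b$ by applying $\rho^{-1}$ and closure of $\equiv$, contradicting $a > b$; hence $a\rho > b\rho$, and the converse follows by running the same argument with $\rho^{-1}$, so $\rho$ preserves $>$. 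For $>_1$: for names $a,b$ the alternatives $a > b$, $b > a$, $a \equiv b$ are exhaustive and exclusive, so if $a >_1 b$ then the case $b > a$ is impossible, the case $a > b$ gives $a\rho > b\rho$ hence $a\rho >_1 b\rho$ by the name hypothesis, and the case $a \equiv b$ gives $a >_2 b$, hence $a\rho >_2 b\rho$ with $a\rho \equiv b\rho$, forcing $a\rho >_1 b\rho$; the converse again uses $\rho^{-1}$. With the sublemma in hand, stability of $>_2$ by renaming reduces to stability of $>$ and of $>_1$ by renaming together with closure of $\equiv$ under renaming, applied branch by branch. I expect this renaming sublemma to be the main obstacle, as it is the only place the somewhat technical fourth hypothesis intervenes and it requires careful passage back and forth through $\rho^{-1}$; everything else is bookkeeping over the two disjuncts.
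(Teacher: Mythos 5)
Your proof is correct and follows the same architecture as the paper's: build $>_2$ as a lexicographic combination of $>$ (between $\equiv$-classes) and $>_1$ (within them), verify the rewrite-order and well-foundedness properties branch by branch, and reduce stability by renaming to the sublemma that a renaming preserves $>_2$ iff it preserves both $>$ and $>_1$, with the fourth hypothesis doing exactly the work you identify. The one genuine difference is your definition of the second branch: the paper puts $u >'_2 v$ when $u \equiv v$ and $u =_E C[s]$, $v =_E C[t]$ for \emph{ground} $s >_1 t$, and then takes the transitive closure, whereas you compare directly by $u >_1 v$ inside an $\equiv$-class. Your relation contains the paper's (the ground-context condition implies $u >_1 v$ via closure under contexts and $E$-compatibility of $>_1$), is already transitive so no closure step is needed, and all the required properties still go through since $E$-totality of $>_1$ is only invoked on ground terms; so this is a legitimate simplification rather than a gap. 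Your well-foundedness argument (weak descent on $\equiv$-classes, finitely many strict $>$-descents, then an infinite $>_1$-chain) is also a slightly cleaner packaging of the same alternation argument the paper gives.
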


\begin{example}
Consider the reduction order $>_\#$ from \Cref{ex:reduction order count} and the order $>_{AC}$ from \Cref{ex:order AC}. Define $\R = \{ s \rightarrow t \mid s >_\# t \}$ and $\equiv$ the smallest equivalence relation such that $s \equiv t$ implies either $s =_{AC} t$ or there exist $a,b$ ground terms and a term context $C[\_]$ such that $s = C[a]$ and $C[b] = t$ and $\#(a) = \#(b)$. Applying \Cref{lem:building E-strong reduction order} with $>$ being $>_\#$ and $>_1$ being $>_{AC}$, we deduce the existence of an $AC$-strong reduction order $>_2$ compatible with $\R$. More precisely, the proof of \Cref{lem:building E-strong reduction order} defines $>_2$ as the transitive closure of $>'_2$ where $s >'_2 t$ iff:
\begin{itemize}
\item either $s >_\# t$
\item or $s \equiv t$ and there exists $u,v$ ground terms and a term context $C[\_]$ such that $s =_{AC} C[u]$, and $t =_{AC} C[v]$, and $u >_{AC} v$
\end{itemize}
In our prototype, we have also implemented the reduction order $>_2$. Note that increasing the number of reduction orders handled by our prototype also increases its chance to generate a rewrite theory mimicking the target equational theory. Indeed, changing the reduction order impacts the termination of the algorithm. It is however difficult to estimate beforehand which order is more suited to a given equational theory. 
\end{example}


\subsubsection{Limitations and future work}

Although the optimised algorithm allows to transform more equational theories than \Cref{alg:generation}, both are still confined to the realm of theories having the FVP modulo $\En$. The function \Cleanup{$\R,\Ea$} takes $\Ea$ into account by removing superfluous rules but does not change the fact that $\R$ was built with $\En$, i.e. the equational theory has the FVP modulo $\En$. This is a limitation of our algorithm. Indeed, even if an equational theory $E$ does not have the FVP modulo $E$ (e.g. $\XOR$ with homomorphic symbol), $E$ trivially has the FVP modulo $E$ (itself), and so we could expect other interesting equational theories that include $E$ to have the FVP modulo $E$.
We conjecture that modifying the normalisation rule \REq and \RSubsume to consider $=_\Ea$ instead of $=_\En$ should be a step in the right direction, possibly with the hypothesis that there exists a rewrite system $\R_{\mathcal{A}}$ that is $\En$-convergent for $\Ea$. However, the proof we present in this paper cannot be as simply adapted. Indeed, by going from $\En$ to $\Ea$, we are losing very critical properties ($>$ is not $\Ea$-compatible and $\Ea$ does not have finite equivalence classes) for the termination of our transformations in the proof. 
As future work, we plan to address these challenges.

Our prototype showed that for some examples relying on AC-unification, the computation time could go from several minutes up to more than one hour. Although our prototype was developed as a proof of concept, we used the tool Maude~\cite{maude} to efficiently compute the most general unifiers modulo AC. However, most generated rules are duplicates of other generated rules. For example, applying our algorithm on ACN generates around 400k rules despite finally keeping only 14. We plan to identify new invariants in the proof of \Cref{th:generation of rewrite theory} that would reduce the number of generated rules, hence increasing our algorithm's efficiency. 

Finally, the genericity of our prototype makes it an ideal candidate to be integrated to cryptographic verifiers such as Tamarin and ProVerif.  We plan to first tackle its integration to ProVerif as the rewrite theory introduced in this paper is a generalisation of the framework used in ProVerif. Nevertheless, for both tools, it may require a major overhaul of the theory behind them. But this is, in our opinion, anyway the next logical step for automatic verifiers to handle more cryptographic primitives.

\bibliographystyle{IEEEtran}
\bibliography{references}

\begin{thebibliography}{10}
\providecommand{\url}[1]{#1}
\csname url@samestyle\endcsname
\providecommand{\newblock}{\relax}
\providecommand{\bibinfo}[2]{#2}
\providecommand{\BIBentrySTDinterwordspacing}{\spaceskip=0pt\relax}
\providecommand{\BIBentryALTinterwordstretchfactor}{4}
\providecommand{\BIBentryALTinterwordspacing}{\spaceskip=\fontdimen2\font plus
\BIBentryALTinterwordstretchfactor\fontdimen3\font minus \fontdimen4\font\relax}
\providecommand{\BIBforeignlanguage}[2]{{%
\expandafter\ifx\csname l@#1\endcsname\relax
\typeout{** WARNING: IEEEtran.bst: No hyphenation pattern has been}%
\typeout{** loaded for the language `#1'. Using the pattern for}%
\typeout{** the default language instead.}%
\else
\language=\csname l@#1\endcsname
\fi
#2}}
\providecommand{\BIBdecl}{\relax}
\BIBdecl

\bibitem{DBLP:conf/sp/BhargavanBK17}
K.~Bhargavan, B.~Blanchet, and N.~Kobeissi, ``Verified models and reference implementations for the {TLS} 1.3 standard candidate,'' in \emph{2017 {IEEE} Symposium on Security and Privacy, {SP} 2017, May 22-26, 2017}.\hskip 1em plus 0.5em minus 0.4em\relax San Jose, CA, USA: {IEEE} Computer Society, 2017, pp. 483--502.

\bibitem{DBLP:conf/sp/CremersHSM16}
C.~Cremers, M.~Horvat, S.~Scott, and T.~van~der Merwe, ``Automated analysis and verification of {TLS} 1.3: 0-rtt, resumption and delayed authentication,'' in \emph{{IEEE} Symposium on Security and Privacy, {SP} 2016, May 22-26, 2016}.\hskip 1em plus 0.5em minus 0.4em\relax San Jose, CA, USA: {IEEE} Computer Society, 2016, pp. 470--485.

\bibitem{DBLP:conf/ccs/CremersHHSM17}
C.~Cremers, M.~Horvat, J.~Hoyland, S.~Scott, and T.~van~der Merwe, ``A comprehensive symbolic analysis of {TLS} 1.3,'' in \emph{Proceedings of the 2017 {ACM} {SIGSAC} Conference on Computer and Communications Security, {CCS} 2017, October 30 - November 03, 2017}, B.~Thuraisingham, D.~Evans, T.~Malkin, and D.~Xu, Eds.\hskip 1em plus 0.5em minus 0.4em\relax Dallas, TX, USA: {ACM}, 2017, pp. 1773--1788.

\bibitem{DBLP:conf/ccs/BhargavanCW22}
K.~Bhargavan, V.~Cheval, and C.~A. Wood, ``A symbolic analysis of privacy for {TLS} 1.3 with encrypted client hello,'' in \emph{Proceedings of the 2022 {ACM} {SIGSAC} Conference on Computer and Communications Security, {CCS} 2022, November 7-11, 2022}, H.~Yin, A.~Stavrou, C.~Cremers, and E.~Shi, Eds.\hskip 1em plus 0.5em minus 0.4em\relax Los Angeles, CA, USA: {ACM}, 2022, pp. 365--379.

\bibitem{DBLP:conf/eurosp/KobeissiBB17}
N.~Kobeissi, K.~Bhargavan, and B.~Blanchet, ``Automated verification for secure messaging protocols and their implementations: {A} symbolic and computational approach,'' in \emph{2017 {IEEE} European Symposium on Security and Privacy, EuroS{\&}P 2017, April 26-28, 2017}.\hskip 1em plus 0.5em minus 0.4em\relax Paris, France: {IEEE}, 2017, pp. 435--450.

\bibitem{DBLP:conf/ccs/BasinDHRSS18}
D.~A. Basin, J.~Dreier, L.~Hirschi, S.~Radomirovic, R.~Sasse, and V.~Stettler, ``A formal analysis of 5g authentication,'' in \emph{Proceedings of the 2018 {ACM} {SIGSAC} Conference on Computer and Communications Security, {CCS} 2018, October 15-19, 2018}, D.~Lie, M.~Mannan, M.~Backes, and X.~Wang, Eds.\hskip 1em plus 0.5em minus 0.4em\relax Toronto, ON, Canada: {ACM}, 2018, pp. 1383--1396.

\bibitem{DBLP:conf/eurosp/KobeissiNB19}
N.~Kobeissi, G.~Nicolas, and K.~Bhargavan, ``Noise explorer: Fully automated modeling and verification for arbitrary noise protocols,'' in \emph{{IEEE} European Symposium on Security and Privacy, EuroS{\&}P 2019, June 17-19, 2019}.\hskip 1em plus 0.5em minus 0.4em\relax Stockholm, Sweden: {IEEE}, 2019, pp. 356--370.

\bibitem{DBLP:conf/sp/BasinST21}
\BIBentryALTinterwordspacing
D.~A. Basin, R.~Sasse, and J.~Toro{-}Pozo, ``The {EMV} standard: Break, fix, verify,'' in \emph{42nd {IEEE} Symposium on Security and Privacy, {SP} 2021, 24-27 May 2021}.\hskip 1em plus 0.5em minus 0.4em\relax San Francisco, CA, USA: {IEEE}, 2021, pp. 1766--1781. [Online]. Available: \url{https://doi.org/10.1109/SP40001.2021.00037}
\BIBentrySTDinterwordspacing

\bibitem{DBLP:conf/acsac/GazdagGGHL21}
S.~Gazdag, S.~Grundner{-}Culemann, T.~Guggemos, T.~Heider, and D.~Loebenberger, ``A formal analysis of ikev2's post-quantum extension,'' in \emph{{ACSAC} '21: Annual Computer Security Applications Conference, December 6 - 10, 2021}.\hskip 1em plus 0.5em minus 0.4em\relax Virtual Event, USA: {ACM}, 2021, pp. 91--105.

\bibitem{manual-proverif}
B.~Blanchet, B.~Smyth, V.~Cheval, and M.~Sylvestre, \emph{Automatic Cryptographic Protocol Verifier, User Manual and Tutorial}, available at \url{https://prosecco.gforge.inria.fr/personal/bblanche/proverif/manual.pdf}, 2020.

\bibitem{DBLP:conf/sp/BlanchetCC22}
B.~Blanchet, V.~Cheval, and V.~Cortier, ``Proverif with lemmas, induction, fast subsumption, and much more,'' in \emph{43rd {IEEE} Symposium on Security and Privacy, {SP} 2022, May 22-26, 2022}.\hskip 1em plus 0.5em minus 0.4em\relax San Francisco, CA, USA: {IEEE}, 2022, pp. 69--86.

\bibitem{manual-tamarin}
D.~Basin, C.~Cremers, J.~Dreier, S.~Meier, R.~Sasse, and B.~Schmidt, \emph{Tamarin prover manual}, available at \url{https://tamarin-prover.github.io/}, 2019.

\bibitem{DBLP:journals/ieeesp/BasinCDS22}
D.~A. Basin, C.~Cremers, J.~Dreier, and R.~Sasse, ``Tamarin: Verification of large-scale, real-world, cryptographic protocols,'' \emph{{IEEE} Secur. Priv.}, vol.~20, no.~3, pp. 24--32, 2022.

\bibitem{manual-maude-npa}
S.~Escovar, C.~Meadows, and J.~Meseguer, \emph{Maude-NPA manual, Version 3.1}, available at \url{https://maude.cs.illinois.edu/w/images/9/90/Maude-NPA_manual_v3_1.pdf}, 2017.

\bibitem{DBLP:journals/tcs/EscobarMM06}
S.~Escobar, C.~Meadows, and J.~Meseguer, ``A rewriting-based inference system for the {NRL} protocol analyzer and its meta-logical properties,'' \emph{Theor. Comput. Sci.}, vol. 367, no. 1-2, pp. 162--202, 2006.

\bibitem{DBLP:conf/sp/ChevalKR18}
V.~Cheval, S.~Kremer, and I.~Rakotonirina, ``{DEEPSEC:} deciding equivalence properties in security protocols theory and practice,'' in \emph{2018 {IEEE} Symposium on Security and Privacy, {SP} 2018, Proceedings, 21-23 May 2018}.\hskip 1em plus 0.5em minus 0.4em\relax San Francisco, California, {USA}: {IEEE} Computer Society, 2018, pp. 529--546.

\bibitem{DBLP:conf/cav/ChevalKR18}
------, ``The {DEEPSEC} prover,'' in \emph{Computer Aided Verification - 30th International Conference, {CAV} 2018, Held as Part of the Federated Logic Conference, FloC 2018, July 14-17, 2018, Proceedings, Part {II}}, ser. Lecture Notes in Computer Science, H.~Chockler and G.~Weissenbacher, Eds., vol. 10982.\hskip 1em plus 0.5em minus 0.4em\relax Oxford, UK: Springer, 2018, pp. 28--36.

\bibitem{DBLP:journals/tocl/ChadhaCCK16}
R.~Chadha, V.~Cheval, {\c{S}}.~Ciob{\^{a}}c{\u{a}}, and S.~Kremer, ``Automated verification of equivalence properties of cryptographic protocols,'' \emph{{ACM} Trans. Comput. Log.}, vol.~17, no.~4, p.~23, 2016.

\bibitem{DBLP:conf/csfw/BaeldeDGK17}
D.~Baelde, S.~Delaune, I.~Gazeau, and S.~Kremer, ``Symbolic verification of privacy-type properties for security protocols with {XOR},'' in \emph{30th {IEEE} Computer Security Foundations Symposium, {CSF} 2017, CA, USA, August 21-25, 2017}.\hskip 1em plus 0.5em minus 0.4em\relax Santa Barbara: {IEEE} Computer Society, 2017, pp. 234--248.

\bibitem{DBLP:books/el/RV01/BaaderS01}
F.~Baader and W.~Snyder, ``Unification theory,'' in \emph{Handbook of Automated Reasoning (in 2 volumes)}, J.~A. Robinson and A.~Voronkov, Eds.\hskip 1em plus 0.5em minus 0.4em\relax Elsevier and {MIT} Press, 2001, pp. 445--532.

\bibitem{DBLP:conf/ccl/BoudedC94}
A.~Boudet and E.~Contejean, ``"syntactic" ac-unification,'' in \emph{Constraints in Computational Logics, First International Conference, CCL'94, September 7-9, 1994}, ser. Lecture Notes in Computer Science, J.~Jouannaud, Ed., vol. 845.\hskip 1em plus 0.5em minus 0.4em\relax Munich, Germany: Springer, 1994, pp. 136--151.

\bibitem{KIRCHNER1989171}
C.~Kirchner, ``6 - from unification in combination of equational theories to a new ac-unification algorithm,'' in \emph{Rewriting Techniques}, H.~Aït-Kaci and M.~Nivat, Eds.\hskip 1em plus 0.5em minus 0.4em\relax Academic Press, 1989, pp. 171--210.

\bibitem{DBLP:conf/fscd/Ayala-RinconFSS22}
M.~Ayala{-}Rinc{\'{o}}n, M.~Fern{\'{a}}ndez, G.~F. Silva, and D.~N. Sobrinho, ``A certified algorithm for ac-unification,'' in \emph{7th International Conference on Formal Structures for Computation and Deduction, {FSCD} 2022, August 2-5, 2022}, ser. LIPIcs, A.~P. Felty, Ed., vol. 228.\hskip 1em plus 0.5em minus 0.4em\relax Haifa, Israel: Schloss Dagstuhl - Leibniz-Zentrum f{\"{u}}r Informatik, 2022, pp. 8:1--8:21.

\bibitem{DBLP:journals/jsc/BaaderS96}
\BIBentryALTinterwordspacing
F.~Baader and K.~U. Schulz, ``Unification in the union of disjoint equational theories: Combining decision procedures,'' \emph{J. Symb. Comput.}, vol.~21, no.~2, pp. 211--243, 1996. [Online]. Available: \url{https://doi.org/10.1006/jsco.1996.0009}
\BIBentrySTDinterwordspacing

\bibitem{DBLP:conf/rta/RetyKKL85}
P.~R{\'{e}}ty, C.~Kirchner, H.~Kirchner, and P.~Lescanne, ``{NARROWER:} {A} new algorithm for unification and its application to logic programming,'' in \emph{Rewriting Techniques and Applications, First International Conference, RTA-85, May 20-22, 1985, Proceedings}, ser. Lecture Notes in Computer Science, J.~Jouannaud, Ed., vol. 202.\hskip 1em plus 0.5em minus 0.4em\relax Dijon, France: Springer, 1985, pp. 141--157.

\bibitem{DBLP:journals/jsc/NuttRS89}
W.~Nutt, P.~R{\'{e}}ty, and G.~Smolka, ``Basic narrowing revisited,'' \emph{J. Symb. Comput.}, vol.~7, no. 3/4, pp. 295--317, 1989.

\bibitem{DBLP:conf/rta/Comon-LundhD05}
H.~Comon{-}Lundh and S.~Delaune, ``The finite variant property: How to get rid of some algebraic properties,'' in \emph{Term Rewriting and Applications, 16th International Conference, {RTA} 2005, April 19-21, 2005, Proceedings}, ser. Lecture Notes in Computer Science, J.~Giesl, Ed., vol. 3467.\hskip 1em plus 0.5em minus 0.4em\relax Nara, Japan: Springer, 2005, pp. 294--307.

\bibitem{DBLP:journals/jar/Boudet93}
\BIBentryALTinterwordspacing
A.~Boudet, ``Competing for the ac-unification race,'' \emph{J. Autom. Reason.}, vol.~11, no.~2, pp. 185--212, 1993. [Online]. Available: \url{https://doi.org/10.1007/BF00881905}
\BIBentrySTDinterwordspacing

\bibitem{DBLP:journals/jsc/Fages87}
F.~Fages, ``Associative-commutative unification,'' \emph{J. Symb. Comput.}, vol.~3, no.~3, pp. 257--275, 1987.

\bibitem{DBLP:journals/jacm/Stickel81}
\BIBentryALTinterwordspacing
M.~E. Stickel, ``A unification algorithm for associative-commutative functions,'' \emph{J. {ACM}}, vol.~28, no.~3, pp. 423--434, 1981. [Online]. Available: \url{https://doi.org/10.1145/322261.322262}
\BIBentrySTDinterwordspacing

\bibitem{DBLP:conf/lics/BoudetCD90}
A.~Boudet, E.~Contejean, and H.~Devie, ``A new {AC} unification algorithm with an algorithm for solving systems of diophantine equations,'' in \emph{Proceedings of the Fifth Annual Symposium on Logic in Computer Science {(LICS} '90), June 4-7, 1990}.\hskip 1em plus 0.5em minus 0.4em\relax Philadelphia, Pennsylvania, USA: {IEEE} Computer Society, 1990, pp. 289--299.

\bibitem{10.1007/3-540-44881-0_13}
D.~Kapur, P.~Narendran, and L.~Wang, ``An e-unification algorithm for analyzing protocols that use modular exponentiation,'' in \emph{Rewriting Techniques and Applications}, R.~Nieuwenhuis, Ed.\hskip 1em plus 0.5em minus 0.4em\relax Berlin, Heidelberg: Springer Berlin Heidelberg, 2003, pp. 165--179.

\bibitem{MR0749246}
D.~Lankford, G.~Butler, and B.~Brady, ``Abelian group unification algorithms for elementary terms,'' in \emph{Automated theorem proving (1983)}, ser. Contemp. Math.\hskip 1em plus 0.5em minus 0.4em\relax {D}enver, {C}ol.: Amer. Math. Soc., Providence, RI, 1984, vol.~29, pp. 193--199.

\bibitem{DBLP:journals/jar/AnantharamanNR04}
S.~Anantharaman, P.~Narendran, and M.~Rusinowitch, ``Unification modulo \emph{ACUI} plus distributivity axioms,'' \emph{J. Autom. Reason.}, vol.~33, no.~1, pp. 1--28, 2004.

\bibitem{DBLP:conf/cade/LiuL11}
Z.~Liu and C.~Lynch, ``Efficient general unification for {XOR} with homomorphism,'' in \emph{Automated Deduction - {CADE-23} - 23rd International Conference on Automated Deduction, July 31 - August 5, 2011. Proceedings}, ser. Lecture Notes in Computer Science, N.~S. Bj{\o}rner and V.~Sofronie{-}Stokkermans, Eds., vol. 6803.\hskip 1em plus 0.5em minus 0.4em\relax Wroclaw, Poland: Springer, 2011, pp. 407--421.

\bibitem{DBLP:journals/iandc/LiuL14}
------, ``Efficient general agh-unification,'' \emph{Inf. Comput.}, vol. 238, pp. 128--156, 2014.

\bibitem{DBLP:conf/wrla/Meseguer20}
J.~Meseguer, ``Variants in the infinitary unification wonderland,'' in \emph{Rewriting Logic and Its Applications - 13th International Workshop, {WRLA} 2020, October 20-22, 2020, Revised Selected Papers}, ser. Lecture Notes in Computer Science, S.~Escobar and N.~Mart{\'{\i}}{-}Oliet, Eds., vol. 12328.\hskip 1em plus 0.5em minus 0.4em\relax Virtual Event: Springer, 2020, pp. 75--95.

\bibitem{DBLP:conf/rta/EscobarMS08}
S.~Escobar, J.~Meseguer, and R.~Sasse, ``Effectively checking the finite variant property,'' in \emph{Rewriting Techniques and Applications, 19th International Conference, {RTA} 2008, July 15-17, 2008, Proceedings}, ser. Lecture Notes in Computer Science, A.~Voronkov, Ed., vol. 5117.\hskip 1em plus 0.5em minus 0.4em\relax Hagenberg, Austria: Springer, 2008, pp. 79--93.

\bibitem{cholewa2014variants}
\BIBentryALTinterwordspacing
A.~Cholewa, J.~Meseguer, and S.~Escobar, ``Variants of variants and the finite variant property,'' Technical report, Department of Computer Science, University of Illinois at Urbana-Champaign, Tech. Rep., 2014. [Online]. Available: \url{https://core.ac.uk/reader/19530230}
\BIBentrySTDinterwordspacing

\bibitem{DBLP:journals/jlp/BlanchetAF08}
B.~Blanchet, M.~Abadi, and C.~Fournet, ``Automated verification of selected equivalences for security protocols,'' \emph{J. Log. Algebraic Methods Program.}, vol.~75, no.~1, pp. 3--51, 2008.

\bibitem{hullot1980catalogue}
J.-M. Hullot, ``A catalogue of canonical term rewriting systems,'' \emph{Report CSL-113, SRI International}, no. ADA087641, 1980.

\bibitem{prototypeFVP}
V.~Cheval, ``Source code and examples for the prototype generating rewrite theory and checking finite variant property,'' \url{https://github.com/VincentCheval/fvpgen}, 2024.

\bibitem{DBLP:conf/icalp/JouannaudKK83}
J.~Jouannaud, C.~Kirchner, and H.~Kirchner, ``Incremental construction of unification algorithms in equational theories,'' in \emph{Automata, Languages and Programming, 10th Colloquium, July 18-22, 1983, Proceedings}, ser. Lecture Notes in Computer Science, J.~D{\'{\i}}az, Ed., vol. 154.\hskip 1em plus 0.5em minus 0.4em\relax Barcelona, Spain: Springer, 1983, pp. 361--373.

\bibitem{DBLP:conf/lics/Baader97}
F.~Baader, ``Combination of compatible reduction orderings that are total on ground terms,'' in \emph{Proceedings, 12th Annual {IEEE} Symposium on Logic in Computer Science, June 29 - July 2, 1997}.\hskip 1em plus 0.5em minus 0.4em\relax Warsaw, Poland: {IEEE} Computer Society, 1997, pp. 2--13.

\bibitem{DBLP:journals/ipl/JouannaudL82}
\BIBentryALTinterwordspacing
J.~Jouannaud and P.~Lescanne, ``On multiset orderings,'' \emph{Inf. Process. Lett.}, vol.~15, no.~2, pp. 57--63, 1982. [Online]. Available: \url{https://doi.org/10.1016/0020-0190(82)90107-7}
\BIBentrySTDinterwordspacing

\bibitem{DBLP:journals/jar/Lescanne90}
\BIBentryALTinterwordspacing
P.~Lescanne, ``On the recursive decomposition ordering with lexicographical status and other related orderings,'' \emph{J. Autom. Reason.}, vol.~6, no.~1, pp. 39--49, 1990. [Online]. Available: \url{https://doi.org/10.1007/BF00302640}
\BIBentrySTDinterwordspacing

\bibitem{DBLP:conf/rta/RubioN93}
A.~Rubio and R.~Nieuwenhuis, ``A precedence-based total ac-compatible ordering,'' in \emph{Rewriting Techniques and Applications, 5th International Conference, RTA-93, June 16-18, 1993, Proceedings}, ser. Lecture Notes in Computer Science, C.~Kirchner, Ed., vol. 690.\hskip 1em plus 0.5em minus 0.4em\relax Montreal, Canada: Springer, 1993, pp. 374--388.

\bibitem{DBLP:conf/rta/Rubio99}
\BIBentryALTinterwordspacing
A.~Rubio, ``A fully syntactic {AC-RPO},'' in \emph{Rewriting Techniques and Applications, 10th International Conference, RTA-99, July 2-4, 1999, Proceedings}, ser. Lecture Notes in Computer Science, P.~Narendran and M.~Rusinowitch, Eds., vol. 1631.\hskip 1em plus 0.5em minus 0.4em\relax Trento, Italy: Springer, 1999, pp. 133--147. [Online]. Available: \url{https://doi.org/10.1007/3-540-48685-2\_11}
\BIBentrySTDinterwordspacing

\bibitem{DBLP:journals/tcs/RubioN95}
A.~Rubio and R.~Nieuwenhuis, ``A total ac-compatible ordering based on {RPO},'' \emph{Theor. Comput. Sci.}, vol. 142, no.~2, pp. 209--227, 1995.

\bibitem{DBLP:journals/tplp/0002WHM16}
A.~Yamada, S.~Winkler, N.~Hirokawa, and A.~Middeldorp, ``{AC-KBO} revisited,'' \emph{Theory Pract. Log. Program.}, vol.~16, no.~2, pp. 163--188, 2016.

\bibitem{DBLP:conf/fscd/0001L21}
D.~Kim and C.~Lynch, ``An rpo-based ordering modulo permutation equations and its applications to rewrite systems,'' in \emph{6th International Conference on Formal Structures for Computation and Deduction, {FSCD} 2021, July 17-24, 2021}, ser. LIPIcs, N.~Kobayashi, Ed., vol. 195.\hskip 1em plus 0.5em minus 0.4em\relax Buenos Aires, Argentina (Virtual Conference): Schloss Dagstuhl - Leibniz-Zentrum f{\"{u}}r Informatik, 2021, pp. 19:1--19:17.

\bibitem{DBLP:journals/entcs/EscobarMS09}
S.~Escobar, J.~Meseguer, and R.~Sasse, ``Variant narrowing and equational unification,'' in \emph{Proceedings of the Seventh International Workshop on Rewriting Logic and its Applications, {WRLA} 2008, March 29-30, 2008}, ser. Electronic Notes in Theoretical Computer Science, G.~Rosu, Ed., vol. 238, no.~3.\hskip 1em plus 0.5em minus 0.4em\relax Budapest, Hungary: Elsevier, 2008, pp. 103--119.

\bibitem{10.1007/3-540-44881-0_3}
S.~Eker, ``Associative-commutative rewriting on large terms,'' in \emph{Rewriting Techniques and Applications}, R.~Nieuwenhuis, Ed.\hskip 1em plus 0.5em minus 0.4em\relax Berlin, Heidelberg: Springer Berlin Heidelberg, 2003, pp. 14--29.

\bibitem{BENANAV1987203}
\BIBentryALTinterwordspacing
D.~Benanav, D.~Kapur, and P.~Narendran, ``Complexity of matching problems,'' \emph{Journal of Symbolic Computation}, vol.~3, no.~1, pp. 203--216, 1987. [Online]. Available: \url{https://www.sciencedirect.com/science/article/pii/S0747717187800275}
\BIBentrySTDinterwordspacing

\bibitem{DBLP:conf/cav/MeierSCB13}
S.~Meier, B.~Schmidt, C.~Cremers, and D.~A. Basin, ``The {TAMARIN} prover for the symbolic analysis of security protocols,'' in \emph{Computer Aided Verification - 25th International Conference, {CAV} 2013, July 13-19, 2013. Proceedings}, ser. Lecture Notes in Computer Science, N.~Sharygina and H.~Veith, Eds., vol. 8044.\hskip 1em plus 0.5em minus 0.4em\relax Saint Petersburg, Russia: Springer, 2013, pp. 696--701.

\bibitem{DBLP:conf/birthday/EscobarMMS15}
S.~Escobar, C.~Meadows, J.~Meseguer, and S.~Santiago, ``Symbolic protocol analysis with disequality constraints modulo equational theories,'' in \emph{Programming Languages with Applications to Biology and Security - Essays Dedicated to Pierpaolo Degano on the Occasion of His 65th Birthday}, ser. Lecture Notes in Computer Science, C.~Bodei, G.~Ferrari, and C.~Priami, Eds., vol. 9465.\hskip 1em plus 0.5em minus 0.4em\relax Pisa, Italy: Springer, 2015, pp. 238--261.

\bibitem{usenix-mixnet}
J.~Dreier, P.~Lafourcade, and D.~Mahmoud, ``Shaken, not stirred - automated discovery of subtle attacks on protocols using mix-nets,'' in \emph{The 33rd USENIX Security Symposium (Usenix) 2024}, 2024.

\bibitem{maude}
M.~Chavel, F.~Duran, S.~Escobar, S.~Eker, P.~Lincoln, N.~Mart{\'{i}}-Oliet, J.~Meseguer, P.~C. {\"{O}}lveczky, R.~Rubio, and C.~L. Talcott, ``The maude system,'' \url{https://maude.cs.illinois.edu/w/index.php/The\_Maude\_System}, 2014.

\end{thebibliography}

\newpage 

\onecolumn

\appendix


\subsection{Proofs of Section~\ref{sec:prelim}}
\label{sec:app-section}


\begin{restatable}{lemma}{lemminandrenaming}
\label{lem:min and renaming}
Let $E \subseteq E'$ be two equational theories. Let $>$ be a $E$-strong reduction order. Let $k \in \N$ be the smallest name by $>$. For all $t\in \T(\F,\N)$, for all renamings $\rho$ preserving $>$, if
\begin{itemize}
\item $\names{t} \subseteq \dom{\rho}$
\item $k \in \img{\rho} \Leftrightarrow k \in \dom{\rho}$
\end{itemize}
then $\minOrd[>]{E'}(t\rho) =_E \minOrd[>]{E'}(t)\rho$.
\end{restatable}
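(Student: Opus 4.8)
The plan is to isolate two auxiliary facts and then conclude by sandwiching $\minOrd[>]{E'}(t\rho)$ between quantities obtained by applying $\rho$ and a suitable inverse of $\rho$, all reasoning taking place in the total well-order on $\T(\F,\N)/_{=_E}$ induced by $>$.

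The first fact is that $\names{\minOrd[>]{E'}(u)}\subseteq\names{u}\cup\{k\}$ for every term $u$ (for any representative of the minimal class). I would prove this by minimality: if such a representative contained a name $a\neq k$ with $a\notin\names{u}$, then replacing every occurrence of $a$ by $k$ gives a ground term that is still $=_{E'}$-equal to $u$ (the equations of $E'$ are name-free, so $=_{E'}$ is preserved under a uniform replacement of a name by a term) and strictly $>$-smaller (since $>$ is a reduction order, hence closed under contexts, and $a>k$ as $k$ is the $>$-minimal name); by $E$-compatibility this term lies in a strictly smaller $=_E$-class, contradicting the definition of $\minOrd[>]{E'}(u)$. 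The second fact is that we may harmlessly replace $\rho$ by the renaming $\hat\rho$ extending $\rho$ so as to also fix $k$: using the hypothesis $k\in\img{\rho}\Leftrightarrow k\in\dom{\rho}$ one checks that either $k\in\dom{\rho}$ and then already $\rho(k)=k$ (since $\rho$ preserves $>$ and $k$ is minimal, $\rho(k)$ is the $>$-minimum of $\img{\rho}$, which $k\in\img{\rho}$ forces to equal $k$), or $k\notin\dom{\rho}$ and then also $k\notin\img{\rho}$, so that $\hat\rho:=\rho\cup\{k\mapsto k\}$ is a bona fide bijective renaming; in both cases $\hat\rho$, and hence $\hat\rho^{-1}$, preserves $>$. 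Moreover $t\rho=t\hat\rho$ and, by the first fact, $\minOrd[>]{E'}(t)\rho=\minOrd[>]{E'}(t)\hat\rho$ (the two agree on $\names{t}\cup\{k\}\supseteq\names{\minOrd[>]{E'}(t)}$), so it suffices to prove the claim for $\hat\rho$; from there on all relevant name sets lie inside $\dom{\hat\rho}$ or $\img{\hat\rho}$, on which $\hat\rho\hat\rho^{-1}$ and $\hat\rho^{-1}\hat\rho$ act as the identity.

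For the sandwich I first record the easy inequality: since $\minOrd[>]{E'}(v)=_{E'}v$, for any $>$-preserving renaming $\sigma$ with $\names{v}\subseteq\dom{\sigma}$ we have $\minOrd[>]{E'}(v)\sigma=_{E'}v\sigma$ with $\minOrd[>]{E'}(v)\sigma$ ground, hence $\minOrd[>]{E'}(v\sigma)\leq\minOrd[>]{E'}(v)\sigma$ in $\T(\F,\N)/_{=_E}$. Taking $(\sigma,v)=(\hat\rho,t)$ yields $\minOrd[>]{E'}(t\hat\rho)\leq\minOrd[>]{E'}(t)\hat\rho$, and taking $(\sigma,v)=(\hat\rho^{-1},t\hat\rho)$ yields $\minOrd[>]{E'}(t)=\minOrd[>]{E'}(t\hat\rho\hat\rho^{-1})\leq\minOrd[>]{E'}(t\hat\rho)\hat\rho^{-1}$. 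I then push the first inequality through $\hat\rho^{-1}$ using stability by renaming — legitimate since, by the first fact, both $\names{\minOrd[>]{E'}(t\hat\rho)}$ and $\names{\minOrd[>]{E'}(t)\hat\rho}$ lie in $\img{\hat\rho}=\dom{\hat\rho^{-1}}$ — obtaining $\minOrd[>]{E'}(t\hat\rho)\hat\rho^{-1}\leq\minOrd[>]{E'}(t)\hat\rho\hat\rho^{-1}=\minOrd[>]{E'}(t)$. Combined with the other inequality, antisymmetry of the total order on $\T(\F,\N)/_{=_E}$ gives $\minOrd[>]{E'}(t\hat\rho)\hat\rho^{-1}=_E\minOrd[>]{E'}(t)$; applying $\hat\rho$ and using $\hat\rho^{-1}\hat\rho=\mathrm{id}$ on $\names{\minOrd[>]{E'}(t\hat\rho)}\subseteq\img{\hat\rho}$ yields $\minOrd[>]{E'}(t\hat\rho)=_E\minOrd[>]{E'}(t)\hat\rho$, which by the second fact is exactly $\minOrd[>]{E'}(t\rho)=_E\minOrd[>]{E'}(t)\rho$.

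The main obstacle is entirely in the name-bookkeeping: one must ensure that every map applied to a term is a genuine bijective renaming preserving $>$ (so that stability by renaming and the closure of $=_{E'}$ under substitution are applicable) and that $\minOrd[>]{E'}$ never silently introduces a name outside $\img{\hat\rho}$. The first fact and the hypothesis on $k$ are precisely what rule this out, and arranging $\hat\rho^{-1}$ to be well-defined on a domain covering every name that can occur in $\minOrd[>]{E'}(t\hat\rho)$ is the one genuinely delicate point.
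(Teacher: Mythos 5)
Your proof is correct and follows essentially the same route as the paper's: the same auxiliary fact that $\minOrd[>]{E'}(u)$ can only introduce the name $k$, the same extension of $\rho$ to a renaming fixing $k$, and the same transport of the order through the inverse renaming via stability by renaming. The only difference is presentational: you conclude by antisymmetry from a two-sided sandwich of inequalities, whereas the paper assumes the conclusion fails, uses $E$-totality to get a strict inequality, and derives a contradiction with minimality.
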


\begin{proof}
We first prove a small property on $\minOrd[>]{E'}(\cdot)$: for all terms $u$, for all $a \in \names{\minOrd[>]{E'}(u)} \setminus \names{u}$, $a = k$. Indeed, by definition, $u =_{E'} \minOrd[>]{E'}(u)$. Thus, if $a \in \names{\minOrd[>]{E'}(u)} \setminus \names{u}$, then we have that for all $b \in \N$, $\minOrd[>]{E'}(u)\{ a \mapsto b\} =_{E'} u$. By contradiction, assume $k \neq a$. As $k$ is the smallest name by $>$, we deduce that $a > k$. Since $>$ is a reduction order, we deduce that $\minOrd[>]{E'}(u) > \minOrd[>]{E'}(u)\{ a \mapsto k\}$. This is in contradiction with the definition of $\minOrd[>]{E'}(u)$ considering that $\minOrd[>]{E'}(u) =_{E'} \minOrd[>]{E'}(u)\{ a \mapsto k\}$.

We now proceed with the proof of the main result. Assume by contradiction that $\minOrd[>]{E'}(t\rho) \neq_E \minOrd[>]{E'}(t)\rho$. Therefore, $\minOrd[>]{E'}(t)\rho > \minOrd[>]{E'}(t\rho)$. By hypothesis, $\names{t} \subseteq \dom{\rho}$. If $k \not\in \dom{\rho}$ then we define $\rho' = \rho \{ k \mapsto k\}$ else $\rho' = \rho$. Hence, we deduce that $\rho'$ is a renaming preserving $>$ such that $k\rho' = k$ and $k \in \dom{\rho'}$. As such, $\rho^{-1}$ is also a renaming preserving $>$. 
From our small property, we know that $\names{\minOrd[>]{E'}(t)} \subseteq \dom{\rho'}$ and $\minOrd[>]{E'}(t)\rho = \minOrd[>]{E'}(t)\rho'$. Similarly, $t\rho = t\rho'$ and $\names{\minOrd[>]{E'}(t\rho)} \subseteq \img{\rho'}$. 

Therefore, we have $\names{\minOrd[>]{E'}(t\rho),\minOrd[>]{E'}(t)\rho} \subseteq \dom{\rho^{'-1}}$. As $>$ is stable by renaming, we have $\minOrd[>]{E'}(t) > \minOrd[>]{E'}(t\rho)\rho^{'-1}$. Recall that $t =_{E'} \minOrd[>]{E'}(t)$, meaning that $t\rho =_{E'} \minOrd[>]{E'}(t)\rho$ and so  $\minOrd[>]{E'}(t\rho) =_{E'} \minOrd[>]{E'}(t)\rho$ by definition of $\minOrd[>]{E'}(t\rho)$. This yields $\minOrd[>]{E'}(t\rho)\rho^{'-1} =_{E'} \minOrd[>]{E'}(t)$, which is a contradiction with the definition of $\minOrd[>]{E'}(t)$ and with $\minOrd[>]{E'}(t) > \minOrd[>]{E'}(t\rho)\rho^{'-1}$.
\end{proof}

\subsection{Proof of Section~\ref{sec:extended-signature}}
\label{sec:app-extended-signature}

\lemnormalisation*

\begin{proof}
Let $k$ be the smallest name by $>$. By definition, there exists an injective substitution $\sigma: \vars{\M} \rightarrow \N$ such that:
\begin{itemize}
\item $\forall a \in \img{\sigma}$, $a \neq k$ and $\forall b \in \names{\M}$, $a > b$
\item $\forall t \in \M$, $t\sigma =_\En \minOrd[>]{E}(t\sigma)$.
\end{itemize}
Before tackling the four properties, we start by proving the following property: for all $t \in \st{\M}$, $t\sigma =_\En \minOrd[E]{>}(t\sigma)$. Assume by contradiction that there exist $t \in \M$ and $p \in \Pos{t}$ such that $t_{|p}\sigma \neq_\En \minOrd[E]{>}(t_{|p}\sigma)$. By definition of $\minOrd[E]{>}(t_{|p}\sigma)$, $t_{|p}\sigma =_E \minOrd[E]{>}(t_{|p}\sigma)$. Since $>$ is $\En$-total, we deduce that $t_{|p}\sigma > \minOrd[E]{>}(t_{|p}\sigma)$. Moreover, $>$ is also an $\En$-compatible reduction order, thus we deduce that $t\sigma > t\sigma[\minOrd[E]{>}(t_{|p}\sigma)]_p$ and $t\sigma =_E t\sigma[\minOrd[E]{>}(t_{|p}\sigma)]_p$. It would contradict the fact that $t\sigma =_\En \minOrd[E]{>}(t\sigma)$.

Let us start with the proof of \Cref{lem:norm-irreducible}.
Assume that $t$ is reducible by $\rwstep[\En]{\Rn}$. Hence there exist $(\ell \rightarrow r) \in \Rn$, a substitution $\alpha$ and $p \in \Pos{t}$ such that $t_{|p} =_\En \ell\alpha$ and $t \rwstep[\En]{\Rn} t[r\alpha]_p$. As $\En$ is regular and as $\vars{r} \subseteq \vars{\ell}$ thanks to \Cref{S:std}, we deduce that $\vars{r\alpha} \subseteq \vars{t}$.

Moreover, $t_{|p} =_\En \ell\alpha$ implies $t_{|p}\sigma =_\En \ell\alpha\sigma$, meaning that $t\sigma \rwstep[\En]{\Rn} t\sigma[r\alpha\sigma]_p$. By hypothesis, we deduce that $t\sigma =_E t\sigma[r\alpha\sigma]_p$. Since $t\sigma$ is ground and $\vars{r\alpha} \subseteq \vars{t}$, $r\alpha\sigma$ and $t\sigma[r\alpha\sigma]_p$ are also ground. However, the ordering $>$ is $\En$-compatible with $\Rn$. Hence $t\sigma > t\sigma[r\alpha\sigma]_p$, which is a contradiction with $t\sigma =_\En \minOrd[>]{E}(t\sigma)$.

For the proof of \Cref{lem:norm-eq}, our initial property gives us $s\sigma =_\En \minOrd[>]{E}(s\sigma)$ and $t\sigma =_\En \minOrd[>]{E}(t\sigma)$. Moreover, $s =_E t$ implies $s\sigma =_E t\sigma$. Therefore, by definition of $\minOrd[>]{E}(\cdot)$,we have  $\minOrd[>]{E}(s\sigma) = \minOrd[>]{E}(t\sigma)$ and then $s\sigma =_\En t\sigma$. It entails $s =_\En t$ as $\sigma$ is injective and $\names{s,t} \cap \img{\sigma} = \emptyset$.

For the proof of \Cref{lem:norm-var}, we start by showing that for all names $a$, $\minOrd[>]{E}(a) =_\En a$. Let us denote $t = \minOrd[>]{E}(a)$ and assume that $t \neq_\En a$, and then $t < a$. By definition, $t =_E a$. Assume first that $a \in \st{t}$. Since $>$ is a reduction order, $a$ cannot be a strict subterm of $t$ (otherwise, if $t=C[a]$, we would have the infinite sequence of distinct terms $a > C[a] > C[C[a]] > C[C[C[a]]] > \ldots$, which is a contradiction with $>$ being well-founded). Thus $a \in \st{t}$ implies $t = a$, which is a contradiction with $t \neq_\En a$. Assume now that $a \not\in \st{t}$. As equality modulo $E$ is stable by replacement of names by arbitrary terms, we deduce that for all terms $t_1,t_2$, we obtain $t_1 =_E t =_E t_2$ (by replacing $a$ once by $t_1$ and once by $t_2$). This contradicts the fact that $E$ is a non-trivial equational theory.

We can now complete the proof of \Cref{lem:norm-var} as follows. If $x \not\in \dom{\sigma}$, we can define $\sigma' = \sigma\{ x \mapsto a\}$ for some $a \in \N$ such that $a$ is different from $k$ and greater than all names in $\img{\sigma}$ and in $\names{\M}$. Since $\N$ is infinite, it is always possible to do so. If $x \in \dom{\sigma}$, we define $\sigma' = \sigma$. Hence, $\M\sigma = \M\sigma'$ and $x\sigma'\in \N$. As we proved that for all names $a$, $\minOrd[>]{E}(a) =_\En a$, we obtain $\minOrd[>]{E}(x\sigma') =_\En x\sigma'$,  which allows us to conclude.

Finally, let us focus on the proof of \Cref{lem:norm-addition}. We first want to avoid the case where $\img{\sigma} \cap \names{t} \neq \emptyset$. We thus take a fresh renaming $\rho$ preserving $>$, that renames the names in $\img{\sigma}$. As $\N$ is infinite, we can even take the new names to be all greater than the one in $\sigma$, i.e. $\dom{\rho} = \img{\sigma}$ and for all $a \in \img{\rho}$, $a \neq k$ and for all $b \in \img{\sigma} \cup \names{t}$, $a > b$. Let us define $\rho' = \rho \{ a \mapsto a \mid a \in \names{\M,t}\}$. Hence, as $\rho$ is preserving $>$ and all names in $\img{\rho}$ are greater than the ones in $t$ and $\M$, we deduce that $\rho'$ is also preserving $>$, $\names{t\sigma,\M\sigma} \subseteq \dom{\rho'}$ and if $k \in \img{\rho'} \cup \dom{\rho'}$ then $k\rho' = k$. 
Hence, by \Cref{lem:min and renaming}, we deduce that for all $v \in \M$, $v\sigma\rho' =_\En \minOrd[>]{E}(v\sigma)\rho' =_\En \minOrd[>]{E}(v\sigma\rho')$. Finally, let us take the substitution $\alpha$ such that for all $x \in \dom{\sigma}$, $x\alpha = x\sigma\rho'$ and for all $x \in \vars{t} \setminus \dom{\sigma}$, $x\alpha$ is a fresh name (we take it greater than the ones in $\rho'$). We know that $t\alpha =_E \minOrd[>]{E}(t\alpha)$. Hence, by taking $s$ the term $\minOrd[>]{E}(t\alpha)$ where we replace all occurrences of names $a$ by $a\alpha^{-1}$, we obtain that $s =_E t$, and $s\alpha = \minOrd[>]{E}(t\alpha) = \minOrd[>]{E}(s\alpha)$ (since $\minOrd[>]{E}(\minOrd[>]{E}(u)) = \minOrd[>]{E}(u)$ for all ground terms $u$). As we already showed that for all $v \in \M$, $v\alpha =_\En \minOrd[>]{E}(v\alpha)$, we conclude.
\end{proof}

\subsection{Proofs of Section~\ref{sec:extended-signature}}
\label{sec:app-proof variant}

In this section, all lemmas consider an equational theory $E$ and a rewrite theory $T = (>,\R,\Rn,\En,\Ea)$ mimicking $E$.

\subsubsection{Equality modulo \texorpdfstring{$E$}{E}}

\begin{lemma}
\label{lem:eval}
Let $\sigma$ be a substitution and $t,t'$ two terms. If $t' =_E t\sigma$ and $\nf{T}{E}{\{ t' \} \cup \{ x\sigma \mid x \in \vars{t}\}}$ then $\symbeval{t}\sigma \Eval{T} t''$ and $t'' =_\Ea t'$ for some $t''$.
\end{lemma}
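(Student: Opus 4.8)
The plan is a structural induction on the term $t$, following the inductive definition of the close evaluation $\Eval{T}$ (\Cref{def:evaluation}). Throughout, write $\M = \{t'\}\cup\{x\sigma\mid x\in\vars{t}\}$, so that $\nf{T}{E}{\M}$ is the standing hypothesis; note that $T$ mimicking $E$ makes all the hypotheses of \Cref{lem:norm} available (via \Cref{M:EaE}), and that I will freely use the elementary observation that $\nf{T}{E}{\cdot}$ is closed under taking subsets (just restrict the witnessing injective substitution of \Cref{def:normal form} to the variables of the subset). \textbf{Base case: $t$ is a leaf (a variable or a name).} Then $\symbeval{t}\sigma = t\sigma$ contains no to-evaluate symbol, so by the first clause of \Cref{def:evaluation} we get $t\sigma \Eval{T} t\sigma$ and I set $t'' = t\sigma$; it then suffices to show $t\sigma =_\En t'$ (whence $t\sigma =_\Ea t'$). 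If $t = x$ is a variable, then $t\sigma = x\sigma \in \M$, so $x\sigma$ and $t'$ both lie in $\st{\M}$; since $x\sigma =_E t'$, \Cref{lem:norm-eq} yields $x\sigma =_\En t'$. If $t = a$ is a name, then $\M = \{t'\}$ with $t' =_E a$, and a short argument using non-triviality of $E$ — precisely the one inside the proof of \Cref{lem:norm-var}, where $\minOrd[>]{E}(a) =_\En a$ is established — forces $t'$ to be ground with $t' =_\En \minOrd[>]{E}(t') =_\En \minOrd[>]{E}(a) =_\En a$.

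\textbf{Inductive case: $t = f(t_1,\ldots,t_n)$ with $f\in\F$ ($n\geq 0$).} Iterating \Cref{lem:norm-addition} starting from $\M$, I pick terms $t_1',\ldots,t_n'$ with $t_i' =_E t_i\sigma$ and $\nf{T}{E}{\M\cup\{t_1',\ldots,t_n'\}}$. Since $\vars{t_i}\subseteq\vars{t}$, the set $\{t_i'\}\cup\{x\sigma\mid x\in\vars{t_i}\}$ is a subset of $\M\cup\{t_1',\ldots,t_n'\}$, hence in normal form, so the induction hypothesis for $t_i$ applies and gives $\symbeval{t_i}\sigma \Eval{T} t_i''$ with $t_i'' =_\Ea t_i'$.

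It remains to assemble these into one evaluation of $\symbeval{t}\sigma = \symbeval{f}(\symbeval{t_1}\sigma,\ldots,\symbeval{t_n}\sigma)$. From $t_i' =_E t_i\sigma$ we obtain $f(t_1',\ldots,t_n') =_E (f(t_1,\ldots,t_n))\sigma = t\sigma =_E t'$, and $\{t_1',\ldots,t_n',t'\}$ is again a subset of $\M\cup\{t_1',\ldots,t_n'\}$, so $\nf{T}{E}{\{t_1',\ldots,t_n',t'\}}$. Now \Cref{M:ind1} of \Cref{def:mimics} — the key use of the assumption that $T$ mimics $E$ — supplies a substitution $\rho$ and a rule $f(s_1,\ldots,s_n)\rightarrow s$ in $\R$ (with freshly renamed variables) such that $t' =_\Ea s\rho$ and $t_i' =_\Ea s_i\rho$ for all $i$; combined with $t_i'' =_\Ea t_i'$ this gives $t_i'' =_\Ea s_i\rho$. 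Feeding the rule $f(s_1,\ldots,s_n)\rightarrow s$, the substitution $\rho$, and the evaluations $\symbeval{t_i}\sigma\Eval{T} t_i''$ into the second clause of \Cref{def:evaluation} yields
\[
\symbeval{t}\sigma = \symbeval{f}(\symbeval{t_1}\sigma,\ldots,\symbeval{t_n}\sigma) \Eval{T} s\rho ,
\]
and $s\rho =_\Ea t'$, so $t'' := s\rho$ works. (For $n = 0$ the same goes through, with \Cref{M:ind1} applied to the constant $f$ and no appeal to the induction hypothesis.)

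\textbf{Main obstacle.} No individual step is hard; the care lies entirely in the normal-form bookkeeping. The representatives $t_1',\ldots,t_n'$ must be chosen \emph{jointly compatible}, i.e.\ so that $\M\cup\{t_1',\ldots,t_n'\}$ is a single normal-form set, which is exactly what iterating \Cref{lem:norm-addition} (together with subset-closure of $\nf{T}{E}{\cdot}$) achieves; and one must then check that every set fed to the induction hypothesis or to \Cref{M:ind1} is a subset of this one. The only other subtlety is the name subcase of the base step, where one reuses the fact, established inside the proof of \Cref{lem:norm}, that names are already in normal form in a non-trivial theory.
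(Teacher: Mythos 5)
Your proof is correct and follows essentially the same route as the paper's: structural induction on $t$, with the variable case settled by \Cref{lem:norm-eq} and $\En\subseteq\Ea$, and the inductive case assembled from \Cref{lem:norm-addition}, the induction hypothesis, \Cref{M:ind1}, and the second clause of \Cref{def:evaluation}. Your extra bookkeeping (subset-closure of $\nf{T}{E}{\cdot}$, joint compatibility of the $t_i'$, and the explicit name subcase) only makes explicit steps the paper leaves implicit.
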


\begin{proof}
The proof is by induction on $t$.

\paragraph{Case $t = x$:} By definition $\symbeval{x} = x$. Moreover, we have $x\sigma =_E t'$. Since $\nf{T}{E}{\{x\sigma,t'\}}$, we deduce that $x\sigma =_\En t'$. By \Cref{S:subset} of \Cref{def:extended_signature}, we know that $\En \subseteq \Ea$ hence the result holds with $t'' = x\sigma$.

\paragraph{Case $t = f(t_1,\ldots,t_n)$:} We have $t' =_E f(t_1\sigma,\ldots,t_n\sigma)$. Let us denote $\M = \{ t' \} \cup \{ x\sigma \in \vars{t}\}$. By \Cref{lem:norm-addition} of \Cref{lem:norm}, there exists $t'_1,\ldots,t'_n$ such that $\nf{T}{E}{\M \cup \{ t'_1,\ldots, t'_n\}}$. By \Cref{M:ind1} of \Cref{def:mimics}, there exist a substitution $\sigma'$ and $f(\ell_1,\ldots,\ell_n) \rightarrow r$ in $\R$ such that $t' =_\Ea r\sigma'$ and for all $i \in \{1,\ldots,n\}$, $\ell_i\sigma' =_\Ea t'_i$. By applying our inductive hypothesis on $t_1,\ldots,t_n$, we deduce that there exist $t''_1,\ldots,t''_n$ such that for all $i \in \{1,\ldots,n\}$, $\symbeval{t_i}\sigma \Eval{T} t''_i$ and $t''_i =_\Ea t'_i =_\Ea \ell_i\sigma'$. By \Cref{def:evaluation}, we deduce that $\symbeval{f}(\symbeval{t_1},\ldots,\symbeval{t_n}) \Eval{T} r\sigma'$. Since we already showed that $r\sigma' =_\Ea t'$, we conclude.
\end{proof}


\begin{lemma}
\label{lem:eval_complete}
Let $t$ be a TE-term. If $t \Eval{T} s$ then $\removeeval{t} =_E s$.
\end{lemma}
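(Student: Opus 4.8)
The plan is to prove the statement by structural induction on the to-evaluate term $t$, which mirrors an induction on the derivation of $t \Eval{T} s$ according to the two clauses of \Cref{def:evaluation}. In the base case $t$ carries no to-evaluate symbol, so only the first clause of \Cref{def:evaluation} can have been used; then $s = t$ and $\removeeval{t} = t = s$, whence $\removeeval{t} =_E s$ by reflexivity of $=_E$.

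For the inductive step, $t = \symbeval{f}(t_1,\ldots,t_n)$ and the derivation of $t \Eval{T} s$ must come from the second clause: there are a rule $f(\ell_1,\ldots,\ell_n) \rightarrow r \in \R$, sub-derivations $t_i \Eval{T} s_i$, and a substitution $\sigma$ with $s_i =_\Ea \ell_i\sigma$ for every $i$, and $s = r\sigma$. By the induction hypothesis applied to each sub-derivation, $\removeeval{t_i} =_E s_i$. Since $\removeeval{\cdot}$ commutes with function application and turns $\symbeval{f}$ into $f$, we have $\removeeval{t} = f(\removeeval{t_1},\ldots,\removeeval{t_n})$, so congruence of $=_E$ yields $\removeeval{t} =_E f(s_1,\ldots,s_n)$. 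Now I invoke the mimicking hypotheses: \Cref{M:EaE} of \Cref{def:mimics} gives $\Ea \subseteq E$, so from $s_i =_\Ea \ell_i\sigma$ we get $s_i =_E \ell_i\sigma$, and congruence gives $f(s_1,\ldots,s_n) =_E f(\ell_1,\ldots,\ell_n)\sigma$; and \Cref{M:eqcomplete} of \Cref{def:mimics} gives $f(\ell_1,\ldots,\ell_n) =_E r$, hence $f(\ell_1,\ldots,\ell_n)\sigma =_E r\sigma = s$. Chaining these $E$-equalities gives $\removeeval{t} =_E s$, which closes the induction.

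I do not expect a genuine obstacle here: this is a routine structural induction whose only delicate points are bookkeeping — pushing $\removeeval{\cdot}$ through the top symbol and correctly citing the two clauses \Cref{M:EaE} (for $\Ea \subseteq E$) and \Cref{M:eqcomplete} (for the rule-induced $E$-equality) of \Cref{def:mimics}. The genuinely harder companion statement is the converse direction, soundness of close evaluation for recovering $E$-equal terms, which is handled by \Cref{lem:eval} rather than by this lemma.
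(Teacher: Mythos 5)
Your proof is correct and follows essentially the same route as the paper's: a structural induction on the TE-term, with the base case handled by reflexivity and the inductive case combining the induction hypothesis with \Cref{M:EaE} (to upgrade $=_\Ea$ to $=_E$) and \Cref{M:eqcomplete} (for $f(\ell_1,\ldots,\ell_n) =_E r$), then chaining by congruence. No discrepancies worth noting.
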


\begin{proof}
The proof is by induction on $t$.

\paragraph{Case $t \in \T(\F,\X)$:} In such a case $\removeeval{t} = t \Eval{T} t$ and so the result holds.

\paragraph{Case $t = \symbeval{f}(t_1,\ldots,t_n)$:} By definition, $t \Eval{T} s$ implies that there exists $f(\ell_1,\ldots,\ell_n) \rightarrow r \in \R$ and a substitution $\sigma$ such that $s = r\sigma$ and for all $i \in \{1, \ldots, n\}$, $t_i \Eval{T} s_i$ and $s_i =_\Ea \ell_i\sigma$. By \Cref{M:EaE}, we know that $\Ea \subseteq E$. Hence $s_i =_E \ell_i\sigma$ for all $i \in \{1, \ldots,n\}$. Moreover, by \Cref{M:eqcomplete} of \Cref{def:mimics}, we know that $f(\ell_1,\ldots,\ell_n) =_E r$. Thus $f(s_1,\ldots,s_n) =_E f(\ell_1\sigma,\ldots,\ell_n\sigma) =_E r\sigma$. Applying our inductive hypothesis on $t_1,\ldots,t_n$ gives us $\removeeval{t_i} =_E s_i$ for all $i \in \{1,\ldots,n\}$. This allows us to deduce that $\removeeval{t} = f(\removeeval{t_1},\ldots,\removeeval{t_n}) =_E r\sigma = s$.
\end{proof}


\thequalitymodulo*

\begin{proof}
Assume that $t =_E s$. By \Cref{lem:norm-addition} of \Cref{lem:norm}, we know that there exists $u$ such that $u =_E t$ and $\nf{T}{E}{\{ u\}}$ (by taking $\M = \emptyset$). Hence, by \Cref{lem:norm-var} of \Cref{lem:norm}, we deduce that $\nf{T}{E}{\{ u\} \cup \vars{t,s}}$. Applying \Cref{lem:eval}, we obtain that there exist $t',s'$ such that $\symbeval{t} \Eval{T} t'$ and $\symbeval{s} \Eval{T} s'$ and $t' =_\Ea u =_\Ea s'$. 

Assume now that there exist $t',s'$ such that $\symbeval{t} \Eval{T} t'$, $\symbeval{s} \Eval{T} s'$ and $t' =_\Ea s'$. We know from \Cref{M:EaE} that $\Ea \subseteq E$ thus $t' =_E s'$. Moreover, by \Cref{lem:eval_complete}, $t =_E t'$ and $s =_E s'$.
\end{proof}


\subsubsection{Properties on open evaluation}

For the next proofs, we will define the size of an evaluation term $t$, denoted $\sizeEval{t}$, as the number of function symbols from $\symbeval{\F}$ in $t$. We also naturally extend the definition of close evaluation to sequence of ET-term, as follows: $[t_1,\ldots,t_n] \Eval{T} [s_1,\ldots,s_n]$ if $t_i \Eval{T} s_i$ for all $i \in \{1, \ldots,n\}$. Given two sequences of terms $L,L'$, we also write $L =_E L'$ when $L = [t_1,\ldots,t_n]$, $L' = [s_1,\ldots,s_n]$ and for all $i \in \{1, \ldots,n\}$, $t_i =_E s_i$.


\begin{lemma}
\label{lem:open evaluation variables}
Let $L$ be a sequence of TE-terms. If $L \OEval{T} (L_t,\sigma)$ then $\dom{\sigma} \cap \vars{L_t} = \emptyset$.
\end{lemma}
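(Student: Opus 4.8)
The plan is to prove the statement — together with a strengthening that makes the induction carry through — by induction on the derivation of $L \OEval{T} (L_t,\sigma)$, following the four clauses of \Cref{def:open evaluation}. The two base clauses, $[] \OEval{T} ([],\emptyset)$ and $[t] \OEval{T} ([t],\emptyset)$ with $t \in \T(\F,\X\cup\N)$, are immediate: in both cases $\sigma = \emptyset$, so $\dom{\sigma} = \emptyset$ and the intersection is trivially empty.

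For the clause $[\symbeval{h}(t_1,\ldots,t_n)] \OEval{T} ([u\sigma_u],\sigma'\sigma_u)$, arising from $[t_1;\ldots;t_n] \OEval{T} ([s_1;\ldots;s_n],\sigma')$, a rule $h(u_1,\ldots,u_n) \to u \in \R$ and $\sigma_u \in \mgu{\Ea}{(s_1,\ldots,s_n),(u_1,\ldots,u_n)}$, I would combine four ingredients. First, the induction hypothesis gives $\dom{\sigma'} \cap \vars{s_1,\ldots,s_n} = \emptyset$. Second, the rule is used with freshly renamed variables, so $\vars{u_1,\ldots,u_n,u}$ is disjoint from $\vars{s_1,\ldots,s_n}$, from $\dom{\sigma'}$ and from $\vars{\img{\sigma'}}$. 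Third, the rewrite-theory condition \Cref{S:std} forces $\vars{u} \subseteq \vars{u_1,\ldots,u_n}$. Fourth, a most general $\Ea$-unifier $\sigma_u$ may be taken idempotent, so that $\dom{\sigma_u} \cap \vars{\img{\sigma_u}} = \emptyset$ and $\dom{\sigma_u} \cup \vars{\img{\sigma_u}} \subseteq \vars{s_1,\ldots,s_n,u_1,\ldots,u_n}$ up to variables fresh everywhere. Then, since $\dom{\sigma'\sigma_u} \subseteq \dom{\sigma'} \cup \dom{\sigma_u}$ and every variable of $u\sigma_u$ is either a variable of $u$ outside $\dom{\sigma_u}$ (a residual rule variable, hence outside $\dom{\sigma'}$ by freshness and outside $\dom{\sigma_u}$ by construction) or a variable occurring in $\img{\sigma_u}$ (outside $\dom{\sigma_u}$ by idempotency and outside $\dom{\sigma'}$ by the induction hypothesis and freshness), one obtains $\dom{\sigma'\sigma_u} \cap \vars{u\sigma_u} = \emptyset$.

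For the clause $t\cdot L \OEval{T} (s\sigma'\cdot L',\sigma\sigma')$, arising from $[t] \OEval{T} ([s],\sigma)$ and $L\sigma \OEval{T} (L',\sigma')$, the plan is to apply the induction hypothesis to both premises, yielding $\dom{\sigma}\cap\vars{s} = \emptyset$ and $\dom{\sigma'}\cap\vars{L'} = \emptyset$, and then to use $\dom{\sigma\sigma'} \subseteq \dom{\sigma}\cup\dom{\sigma'}$. The delicate cross-terms are $\dom{\sigma}$ against $\vars{L'}$, and both domains against the variables introduced by applying $\sigma'$ to $s$; these are controlled once one knows that evaluating $L\sigma$ only produces variables already occurring in $L\sigma$ or freshly introduced, and that $\sigma$ is idempotent. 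This is exactly why I would carry along, as part of the inductive invariant, both that $\sigma$ is idempotent ($\dom{\sigma}\cap\vars{\img{\sigma}} = \emptyset$) and that $\vars{L_t}\cup\vars{\img{\sigma}}$ is contained in $\vars{L}$ enlarged only by fresh variables. The hard part will be pinning down precisely this strengthened invariant — one that is genuinely true and is preserved by all four clauses, in particular surviving the composition $\sigma\sigma'$ and the simultaneous introduction of fresh rule variables and fresh unification variables; once it is in place, each clause closes by routine variable bookkeeping.
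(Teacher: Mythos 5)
Your proposal follows essentially the same route as the paper: induction over the evaluation derivation, with the $\symbeval{h}$ case closed by combining the induction hypothesis, the freshness of the rule's variables, $\vars{u}\subseteq\vars{u_1,\ldots,u_n}$ from \Cref{S:std}, and the acyclicity/idempotency of the most general $\Ea$-unifier, which is exactly the paper's argument. The only divergence is the cons case $t\cdot L$, which the paper's proof silently omits: you are right that it needs the strengthened invariant you sketch (idempotency of the accumulated substitution together with containment of the output variables in the input variables up to fresh ones), and that invariant does hold and is preserved by every clause --- it is essentially the bookkeeping condition $\dom{\alpha\beta}\cap\V=\dom{\sigma}$ that the paper only makes explicit later, in \Cref{lem:open-evaluation}.
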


\begin{proof}
We prove the result by induction on $\multiset{\sizeEval{t} \mid t \in L}$.

\paragraph{Case $L = []$:} In such a case, $L_t = []$ and $\sigma = \emptyset$ hence the result trivially holds.

\paragraph{Case $L = [t]$ with $t \in T(\F,\X \cup \N)$:} In such a case, $L_t = L$ and $\sigma = \emptyset$. Hence the result holds.

\paragraph{Case $L = [\symbeval{f}(t_1,\ldots,t_n)]$:} Let us denote $L' = [t_1,\ldots,t_n]$. We know that there exist  $f(\ell_1,\ldots,\ell_n) \rightarrow r \in \R$ and $L'_t$, $\sigma',\sigma_u$ such that, denoting $L_\ell = [\ell_1,\ldots,\ell_n]$, $L' \OEval{T} (L'_t,\sigma')$, and $\sigma_u \in \mgu{\Ea}{L'_t,L_\ell}$, and $L_t = [r\sigma_u]$ and $\sigma = \sigma'\sigma_u$. 

Let $x \in \dom{\sigma} \cap \vars{L_t}$. Hence either $x \in \dom{\sigma_u}$ or $x \in \dom{\sigma'}$. In the former case, since $\sigma_u$ is acyclic then $x \in \dom{\sigma_u}$ implies that $x \not\in \vars{r\sigma_u}$ which is in contradiction with $x\in \vars{L_t}$. Hence $x \in \dom{\sigma'}$. By our inductive hypothesis, it implies that $x \not\in \vars{L'_t}$. Since the variables of $f(\ell_1,\ldots,\ell_n)\rightarrow r \in \R$ can always be renaming and since the variables of a most general unifier $\sigma_u \in \mgu{\Ea}{L'_t,L_\ell}$ can also be renamed, we have that $x \not\in \vars{\sigma_u}$ and $x\not\in L_\ell$. Hence $x\not\in \vars{L'_t\sigma_u}$ and $x \not\in L_\ell\sigma_u$. But $\vars{r\sigma_u} \subseteq \vars{L_\ell\sigma_u}$. Hence $x \not\in \vars{L_t}$ which is a contradiction with $x \in \dom{\sigma} \cap \vars{L_t}$. This conclude the proof that $\dom{\sigma} \cap \vars{L_t} = \emptyset$.
\end{proof}


\begin{lemma}
\label{lem:open evaluation remove}
For all sequences of TE-terms $L$, if $L \OEval{T} (L_t,\sigma)$ then $\removeeval{L}\sigma =_E L_t$.
\end{lemma}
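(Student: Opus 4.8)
The plan is to prove the statement by induction on the structure of the derivation of $L \OEval{T} (L_t,\sigma)$ (equivalently, on the multiset $\multiset{\sizeEval{t} \mid t \in L}$, exactly as in the proof of \Cref{lem:open evaluation variables}). First I would record one elementary observation: $\removeeval{\cdot}$ commutes with substitution application, i.e. for every TE-term $m$ and every substitution $\theta$, $\removeeval{m\theta} = \removeeval{m}\theta$. This holds by a straightforward structural induction on $m$: if $m \in \T(\F,\X \cup \N)$ then $m\theta \in \T(\F,\X \cup \N)$ carries no TE-symbol and the claim is immediate, and if $m = \symbeval{f}(m_1,\ldots,m_k)$ then $m\theta = \symbeval{f}(m_1\theta,\ldots,m_k\theta)$ and we conclude by the induction hypothesis on the $m_i$; the point is simply that $\theta$ never introduces a TE-symbol. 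I extend $\removeeval{\cdot}$ to sequences componentwise, and I use freely that $=_E$, being a congruence stable under instantiation, satisfies $u =_E v \Rightarrow u\theta =_E v\theta$.

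For the two axiom cases ($L = []$, or $L = [t]$ with $t \in \T(\F,\X \cup \N)$) we have $L_t = L$, $\sigma = \emptyset$ and $\removeeval{L} = L$ since no TE-symbol occurs, so $\removeeval{L}\sigma = L =_E L_t$. For the concatenation rule, write $L = t \cdot M$ with premises $[t] \OEval{T} ([s],\sigma_t)$ and $M\sigma_t \OEval{T} (M_t,\sigma_M)$, so that $L_t = s\sigma_M \cdot M_t$ and $\sigma = \sigma_t\sigma_M$. The induction hypothesis on $[t]$ gives $\removeeval{t}\sigma_t =_E s$, hence $\removeeval{t}\sigma = \removeeval{t}\sigma_t\sigma_M =_E s\sigma_M$. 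The induction hypothesis on $M\sigma_t$ gives $\removeeval{M\sigma_t}\sigma_M =_E M_t$, and by the commutation observation $\removeeval{M\sigma_t} = \removeeval{M}\sigma_t$, so $\removeeval{M}\sigma =_E M_t$. Concatenating the two, $\removeeval{L}\sigma = \removeeval{t}\sigma \cdot \removeeval{M}\sigma =_E s\sigma_M \cdot M_t = L_t$.

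The interesting case is the evaluation rule: $L = [\symbeval{h}(t_1,\ldots,t_n)]$ with premise $[t_1;\ldots;t_n] \OEval{T} ([s_1;\ldots;s_n],\sigma')$, a rule $h(u_1,\ldots,u_n) \rightarrow u \in \R$, a unifier $\sigma_u \in \mgu{\Ea}{(s_1,\ldots,s_n),(u_1,\ldots,u_n)}$, and $L_t = [u\sigma_u]$, $\sigma = \sigma'\sigma_u$. The induction hypothesis yields $\removeeval{t_i}\sigma' =_E s_i$ for each $i$, hence $\removeeval{t_i}\sigma = \removeeval{t_i}\sigma'\sigma_u =_E s_i\sigma_u$. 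Since $\sigma_u$ is an $\Ea$-unifier, $s_i\sigma_u =_\Ea u_i\sigma_u$, and $\Ea \subseteq E$ by \Cref{M:EaE}, so $\removeeval{t_i}\sigma =_E u_i\sigma_u$. By the commutation observation $\removeeval{L} = [h(\removeeval{t_1},\ldots,\removeeval{t_n})]$, whence $\removeeval{L}\sigma = [h(\removeeval{t_1}\sigma,\ldots,\removeeval{t_n}\sigma)] =_E [h(u_1,\ldots,u_n)\sigma_u]$. Finally \Cref{M:eqcomplete} gives $h(u_1,\ldots,u_n) =_E u$, hence $h(u_1,\ldots,u_n)\sigma_u =_E u\sigma_u$, and therefore $\removeeval{L}\sigma =_E [u\sigma_u] = L_t$.

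There is no real obstacle here; the argument is a routine structural induction that mirrors the proof of \Cref{lem:open evaluation variables}. The only points needing mild care are the commutation of $\removeeval{\cdot}$ with substitution (which is why I factor it out as a preliminary observation), the bookkeeping of the composed substitutions $\sigma'\sigma_u$ and $\sigma_t\sigma_M$, and the usual freshness convention on the variables of the rule used in the evaluation rule — though, in contrast with \Cref{M:ind1}, freshness is not actually needed for this lemma, since we only appeal to $s_i\sigma_u =_\Ea u_i\sigma_u$, to $\Ea \subseteq E$, and to the stability of $=_E$ under instantiation.
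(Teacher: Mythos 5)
Your proof is correct and follows essentially the same route as the paper's: induction on the multiset of TE-sizes of the terms in $L$, with the base, concatenation and evaluation cases handled via \Cref{M:EaE}, \Cref{M:eqcomplete}, stability of $=_E$ under instantiation, and the commutation of $\removeeval{\cdot}$ with substitution. The only (cosmetic) difference is that by concluding directly with $\removeeval{L}\sigma =_E [u\sigma_u] = L_t$ rather than detouring through $r\sigma$, you rightly avoid the freshness-of-rule-variables bookkeeping that the paper's version invokes in the evaluation case.
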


\begin{proof}
We prove the result by induction on $\multiset{\sizeEval{t} \mid t \in L}$.

\paragraph{Case $L = []$:} In such a case, $L_t = []$ hence the result trivially holds.

\paragraph{Case $L = [t]$ with $t \in T(\F,\X \cup \N)$:} In such a case, $\removeeval{L_t} = L_t = L$ and $\sigma = \emptyset$. Hence the result holds.

\paragraph{Case $L = [\symbeval{f}(t_1,\ldots,t_n)]$:} Let us denote $L' = [t_1,\ldots,t_n]$. We know that there exist  $f(\ell_1,\ldots,\ell_n) \rightarrow r \in \R$ and $L'_t$, $\sigma',\sigma_u$ such that, denoting $L_\ell = [\ell_1,\ldots,\ell_n]$, $L' \OEval{T} (L'_t,\sigma')$, and $\sigma_u \in \mgu{\Ea}{L'_t,L_\ell}$, and $L_t = [r\sigma_u]$ and $\sigma = \sigma'\sigma_u$. By inductive hypothesis on $L'$, we know that $\removeeval{L'}\sigma' =_E L'_t$. Hence, $\removeeval{L}\sigma'\sigma_u =_E L'_t\sigma_u =_E L_\ell\sigma_u$ as $\Ea \subseteq E$ by \Cref{M:EaE}. Note that the variables of $\ell_1,\ldots,\ell_n,r$ are fresh, i.e. $\vars{\ell_1,\ldots,\ell_n,r} \cap \vars{\sigma} = \emptyset$. Hence $L_\ell\sigma\sigma_u = L_\ell\sigma_u$ and $r\sigma\sigma_u = r\sigma_u$. This allow us to deduce that $f(t_1,\ldots,t_n)\sigma =_E f(\ell_1,\ldots,\ell_n)\sigma$. By \Cref{M:eqcomplete}, $f(\ell_1,\ldots,\ell_n) =_E r$ and so $f(\ell_1,\ldots,\ell_n)\sigma = r\sigma$ which allows us to conclude.

\paragraph{Case $L = t \cdot L'$:} In such a case, we have $[t] \OEval{T} ([s],\sigma')$ and $L'\sigma' \OEval{T} (L'_t,\sigma'')$ with $L_t = s\sigma'' \cdot L'_t$ and $\sigma = \sigma'\sigma''$. Note that $\removeeval{L'\sigma'} = \removeeval{L'}\sigma'$. By inductive hypothesis on both $[t]$ and $L'\sigma'$, we deduce that $\removeeval{t}\sigma' =_E s$ and $\removeeval{L'}\sigma'\sigma'' =_E L'_t$. Hence $(\removeeval{t} \cdot \removeeval{L'}) =_E s\sigma'' \cdot L'_t$ and so $\removeeval{L}\sigma =_E L_t$.
\end{proof}


\begin{lemma}
\label{lem:evaluation Ea}
Let $t$ be a TE-term and two substitutions $\sigma_1,\sigma_2$ such that $t\sigma_1 =_\Ea t\sigma_2$. For all terms $s_1$, if $t\sigma_1 \Eval{T} s_1$ then there exists a term $s_2$ such that $t\sigma_2 \Eval{T} s_2$ and $s_1 =_\Ea s_2$.
\end{lemma}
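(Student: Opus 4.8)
The plan is to argue by structural induction on the TE-term $t$ (equivalently, by induction on $\sizeEval{t}$), following the two clauses that define close evaluation in \Cref{def:evaluation}. Since substitutions map variables of $\X$ to terms of $\T(\F,\X\cup\N)$, the term $t\sigma_j$ is again a TE-term, and a substitution cannot erase the to-evaluate symbols of $t$ (in a TE-term these sit above every $\F$-symbol, hence above every variable, so they are never in the scope of a replaced variable). Consequently $t\sigma_1$ lies in $\T(\F,\X\cup\N)$ exactly when $t$ does, exactly when $t\sigma_2$ does, so the two substituted terms are "of the same shape" with respect to the case split of \Cref{def:evaluation}, and the same clause governs the derivation of $t\sigma_1 \Eval{T} s_1$ and any evaluation of $t\sigma_2$.

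In the base case $t \in \T(\F,\X\cup\N)$, both $t\sigma_1$ and $t\sigma_2$ contain no to-evaluate symbol, so $t\sigma_1 \Eval{T} s_1$ forces $s_1 = t\sigma_1$ and $t\sigma_2 \Eval{T} t\sigma_2$; taking $s_2 := t\sigma_2$ gives $s_1 =_\Ea s_2$ from the hypothesis $t\sigma_1 =_\Ea t\sigma_2$. In the inductive case $t = \symbeval{f}(t_1,\dots,t_n)$ we have $t\sigma_j = \symbeval{f}(t_1\sigma_j,\dots,t_n\sigma_j)$, and the derivation of $t\sigma_1 \Eval{T} s_1$ must use the second clause: there are a rule $f(\ell_1,\dots,\ell_n) \to r \in \R$ and a substitution $\rho$ with $t_i\sigma_1 \Eval{T} u_i$, $u_i =_\Ea \ell_i\rho$ for all $i$, and $s_1 = r\rho$. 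I will argue (see next paragraph) that $t\sigma_1 =_\Ea t\sigma_2$ decomposes into $t_i\sigma_1 =_\Ea t_i\sigma_2$ for each $i$; granting this, the induction hypothesis applied to each (strictly smaller) $t_i$ with the pair $(\sigma_1,\sigma_2)$ produces terms $u_i'$ with $t_i\sigma_2 \Eval{T} u_i'$ and $u_i' =_\Ea u_i =_\Ea \ell_i\rho$. Hence the very same rule and substitution $\rho$ apply to $t\sigma_2$, giving $t\sigma_2 \Eval{T} r\rho$, so $s_2 := r\rho = s_1$ works.

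The only step that is not bookkeeping is the componentwise decomposition of $t\sigma_1 =_\Ea t\sigma_2$, i.e. the fact that each $\symbeval{f} \in \symbeval{\F}$ behaves as a free symbol for $=_\Ea$ because it does not occur in $\Ea$. I would justify it by inspecting a chain of equational replacement steps witnessing $\symbeval{f}(t_1\sigma_1,\dots,t_n\sigma_1) =_\Ea \symbeval{f}(t_1\sigma_2,\dots,t_n\sigma_2)$: no step can be applied at the root, since both sides of every equation of $\Ea$ are headed by a symbol of $\F$ (they cannot be bare variables — an equation $x = r$ with $x \notin \vars{r}$ would make $\Ea$, hence $E$, trivial), whereas the root here is $\symbeval{f} \notin \F$; therefore every step takes place strictly inside some argument, and projecting the chain onto the $i$-th argument exhibits $t_i\sigma_1 =_\Ea t_i\sigma_2$. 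This sub-fact is self-contained and is already used implicitly in \Cref{lem:eval} and \Cref{lem:eval_complete}, so it could also be isolated once as an auxiliary lemma. I expect this free-symbol argument — together with the small care needed to rule out degenerate equations in $\Ea$ — to be the only delicate point, the remainder being a routine induction.
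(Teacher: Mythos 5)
Your proof is correct and takes essentially the same route as the paper's: a structural induction on $t$, with the base case immediate and the inductive case re-using the very same rule $f(\ell_1,\ldots,\ell_n)\rightarrow r$ and substitution for $t\sigma_2$ after applying the induction hypothesis to each argument $t_i$. The paper in fact applies the induction hypothesis without justifying the componentwise decomposition $t_i\sigma_1 =_{\Ea} t_i\sigma_2$ that you carefully isolate; the only residual quibble with your free-symbol argument is that non-triviality only excludes equations $x = r$ with $x \notin \vars{r}$, so a root step via an equation $x = C[x]$ with $C$ a non-empty context is not ruled out and would need a separate (easy but not vacuous) argument, e.g.\ by projecting the chain onto the unique $\symbeval{\F}$-rooted subterm at each step.
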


\begin{proof}
The proof is by induction on $t$.

\paragraph{Case $t \in \T(\F,\X \cup \N)$:} In that case, $t\sigma_1 \Eval{T} t\sigma_1$ and $t\sigma_2 \Eval{T} t\sigma_2$. Thus the result holds.

\paragraph{Case $t = \symbeval{f}(t_1,\ldots,t_n)$:} In such a case, by \Cref{def:evaluation}, there exists $f(\ell_1,\ldots,\ell_n) \rightarrow r \in \R$ and $s_1,\ldots,s_n$ terms and a substitutions $\sigma$ such that for all $i \in \{1,\ldots,n\}$, $t_i\sigma_1 \Eval{T} s_i$ and $s_i =_\Ea \ell_i\sigma$. By inductive hypothesis on $t_1,\ldots,t_n$, there exist terms $s'_1,\ldots,s'_n$ such that for all $i \in \{1,\ldots,n\}$, $t_i\sigma_2 \Eval{T} s'_i$ and $s_i = \Ea s'_i$ and so $s'_i = \Ea \ell_i\sigma$. This leads to $t\sigma_2 \Eval{T} r\sigma$.
\end{proof}


\begin{lemma}
\label{lem:open-evaluation}
Let $L$ be a sequence of TE-terms. For all substitutions $\sigma$, for all sets of variables $\V$, if $L\sigma \Eval{T} L_s$ and $\dom{\sigma} \cup \vars{L,L_s} \subseteq \V$ then there exist a sequence of term $L_t$, two substitutions $\alpha,\beta$ such that $L \OEval{T} (L_t,\alpha)$, and $L_t\beta =_\Ea L_s$ and $\sigma =_\Ea (\alpha\beta)_{|\dom{\sigma}}$ and $\dom{\alpha\beta} \cap \V = \dom{\sigma}$.
\end{lemma}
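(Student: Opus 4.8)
The plan is to prove the statement by induction on the multiset $\multiset{\sizeEval{t} \mid t \in L}$, ordered by the multiset extension of the usual order on naturals, following the recursive structure of $\OEval{T}$ exactly as in the proofs of \Cref{lem:open evaluation variables} and \Cref{lem:open evaluation remove}. The two base cases are immediate: if $L = []$ then necessarily $L_s = []$ and we take $L_t = []$, $\alpha = \emptyset$, $\beta = \sigma$; if $L = [t]$ with $t$ a plain term (no $\symbeval{\F}$-symbol), then the only evaluation is $t\sigma \Eval{T} t\sigma$, so $L_s = [t\sigma]$ and we again take $L_t = [t]$, $\alpha = \emptyset$, $\beta = \sigma$. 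In both cases the four conclusions hold using only $\dom{\sigma} \subseteq \V$.

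For the inductive case $L = [\symbeval{f}(t_1,\ldots,t_n)]$ I would first unfold $L\sigma \Eval{T} L_s$: by \Cref{def:evaluation} there are a rule $f(\ell_1,\ldots,\ell_n)\rightarrow r \in \R$ (whose variables we rename fresh for $\V$, $\sigma$ and the $t_i$), a substitution $\delta$ and terms $w_1,\ldots,w_n$ with $t_i\sigma \Eval{T} w_i$ and $w_i =_\Ea \ell_i\delta$ for each $i$, and $L_s = [r\delta]$. Applying the induction hypothesis to $[t_1,\ldots,t_n]$, $\sigma$ and a set $\V'$ extending $\V$ with $\vars{w_1,\ldots,w_n}$, $\dom\delta \cup \img\delta$ and the (fresh) rule variables gives $[s_1,\ldots,s_n]$, $\alpha'$, $\beta'$ with $[t_1,\ldots,t_n]\OEval{T}([s_1,\ldots,s_n],\alpha')$, $s_i\beta' =_\Ea w_i$, $\sigma =_\Ea (\alpha'\beta')_{|\dom\sigma}$ and $\dom{\alpha'\beta'} \cap \V' = \dom\sigma$. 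Because the rule variables are fresh, $\beta'\cup\delta$ is a well-defined $\Ea$-unifier of $(s_1,\ldots,s_n)$ and $(\ell_1,\ldots,\ell_n)$, since $s_i(\beta'\cup\delta) =_\Ea s_i\beta' =_\Ea w_i =_\Ea \ell_i\delta =_\Ea \ell_i(\beta'\cup\delta)$; hence by completeness of $\mgu{\Ea}{\cdot}$ there are $\sigma_u \in \mgu{\Ea}{(s_1,\ldots,s_n),(\ell_1,\ldots,\ell_n)}$ and $\theta$ with $\beta'\cup\delta =_\Ea \sigma_u\theta$ on $\vars{s_1,\ldots,s_n,\ell_1,\ldots,\ell_n}$. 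The open-evaluation rule of \Cref{def:open evaluation} then yields $L \OEval{T} ([r\sigma_u],\alpha'\sigma_u)$, and I would set $L_t = [r\sigma_u]$, $\alpha = \alpha'\sigma_u$ and $\beta$ to be $\theta$ suitably restricted. Here $L_t\beta =_\Ea L_s$ follows from $\vars{r}\subseteq\vars{\ell_1,\ldots,\ell_n}$ (\Cref{S:std}) and freshness, giving $r\sigma_u\theta =_\Ea r(\beta'\cup\delta) = r\delta$; the generality condition $\sigma =_\Ea (\alpha\beta)_{|\dom\sigma}$ and the domain condition $\dom{\alpha\beta}\cap\V = \dom\sigma$ follow from the corresponding properties of $(\alpha',\beta')$, together with \Cref{lem:open evaluation variables} (used to locate the variables of the $s_i$ and of $\img{\alpha'}$), the freshness of the rule variables, and the fact that $\sigma_u$ is most general.

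The remaining case $L = t\cdot L'$ with $L'$ non-empty is handled by two nested applications of the induction hypothesis. Decompose $L\sigma \Eval{T} L_s$ as $t\sigma \Eval{T} s$ and $L'\sigma \Eval{T} L'_s$ with $L_s = s\cdot L'_s$. Apply the induction hypothesis to $[t]$, $\sigma$, $\V$ to get $[\hat s]$, $\alpha_1$, $\beta_1$; from $\sigma =_\Ea (\alpha_1\beta_1)_{|\dom\sigma}$, $\dom{\alpha_1\beta_1}\cap\V = \dom\sigma$ and $\vars{L'}\subseteq\V$ one obtains $L'\sigma =_\Ea L'\alpha_1\beta_1$, so \Cref{lem:evaluation Ea} transports $L'\sigma \Eval{T} L'_s$ to $L'\alpha_1\beta_1 \Eval{T} L''_s$ with $L''_s =_\Ea L'_s$. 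A second use of the induction hypothesis on $L'\alpha_1$, $\beta_1$ and a suitable $\V''$ gives $L'_t$, $\alpha_2$, $\beta_2$, and the open-evaluation rule produces $L \OEval{T} (\hat s\alpha_2 \cdot L'_t, \alpha_1\alpha_2)$; the conclusions follow by composing the two instances and again invoking \Cref{lem:open evaluation variables}. The main obstacle throughout is not the conceptual structure but the bookkeeping of substitution domains: keeping $\sigma =_\Ea (\alpha\beta)_{|\dom\sigma}$ and, above all, $\dom{\alpha\beta}\cap\V = \dom\sigma$ invariant requires choosing each auxiliary $\V'$, $\V''$ large enough to absorb every freshly introduced variable (rule variables and mgu variables) and systematically using \Cref{lem:open evaluation variables} to confine the variables produced by an open evaluation to the resulting terms, so that they are automatically fresh for $\V$.
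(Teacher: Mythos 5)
Your proposal is correct and follows essentially the same route as the paper's proof: the same induction on $\multiset{\sizeEval{t} \mid t \in L}$ with the same four cases, the same construction of a combined $\Ea$-unifier of $(s_1,\ldots,s_n)$ and $(\ell_1,\ldots,\ell_n)$ (the paper writes it as $\gamma = \beta'\theta$ under a freshness assumption rather than $\beta'\cup\delta$, which is the same thing), the same appeal to completeness of $\mgu{\Ea}{\cdot}$ and to \Cref{S:std}, and the same two-stage use of the induction hypothesis together with \Cref{lem:evaluation Ea} in the cons case. The only cosmetic difference is that you enlarge the auxiliary variable sets slightly more explicitly and lean on \Cref{lem:open evaluation variables} where the paper uses w.l.o.g.\ renaming arguments; the domain bookkeeping you flag as the main obstacle is indeed where the paper spends most of its effort.
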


\begin{proof}
We prove this result by induction on $\multiset{\sizeEval{t} \mid t \in L}$.

\paragraph{Case $L = []$:} In such case, $[] \Eval{T} []$ and $[] \OEval{T} ([],\emptyset)$. So the result holds directly with $L_t = L$, $\alpha = \emptyset$ and $\beta = \sigma$.

\paragraph{Case $L = [t]$ with $t \in \T(\F,\X \cup \N)$:} In such a case, $[t\sigma] \Eval{T} [t\sigma]$ and $[t] \OEval{T} ([t],\emptyset)$. Once again the result directly holds with $L_t = L$, $\alpha = \emptyset$ and $\beta = \sigma$.

\paragraph{Case $L = [\symbeval{f}(t_1,\ldots,t_n)]$:} Let us denote $t = \symbeval{f}(t_1,\ldots,t_n)$. By definition, $L\sigma \Eval{T} L_s$ implies that there exist a term $s$ such that $L_s = [s]$ and $t\sigma \Eval{T} s$. Hence there exist $f(\ell_1,\ldots,\ell_n) \rightarrow r \in \R$ and a substitution $\theta$ and terms $s_1,\ldots,s_n$ such that $r\theta = s$ and for all $i \in \{1,\ldots,n\}$, $\ell_i\theta =_\Ea s_i$ and $t_i\sigma \Eval{T} s_i$. 
Let $L' = [t_1,\ldots,t_n]$ and $L'_s = [s_1,\ldots,s_n]$. By inductive hypothesis on $L'$ with $\sigma$ and $\V' = \V \cup \vars{s_1,\ldots,s_n}$, there exists a sequence of terms $L'_t$ and two substitutions $\alpha',\beta'$ such that $L' \OEval{T} (L'_t,\alpha')$, and $L'_t\beta' =_\Ea L'_s$, $\sigma =_\Ea (\alpha'\beta')_{|\dom{\sigma}}$ and $\dom{\alpha'\beta'} \cap \V' = \dom{\sigma}$. 
Since the variables of $\ell_1,\ldots,\ell_m,r$ can always be renamed, we can w.l.o.g. assume that $\dom{\theta} \cap \vars{\alpha'\beta'} = \emptyset$ and $\dom{\theta} \cap \V' = \emptyset$. 

Since $L'_t\beta' =_\Ea L'_s =_\Ea [\ell_1\theta,\ldots,\ell_n\theta]$, we can define $L_\ell = [\ell_1,\ldots,\ell_n]$  and $\gamma = \beta'\theta$ to have $L'_t\beta' = L'_t\gamma$ and $L_\ell\theta = L_\ell\gamma$ and $r\gamma = r\theta$. Hence $\gamma$ is a $\Ea$-unifier of $L'_t$ and $L_\ell$. Hence, there exist $\sigma_u \in \mgu{\Ea}{L'_t,L_\ell}$ and $\theta_u$ such that for all $x \in \vars{L'_t,L_\ell}$, $x\gamma =_\Ea x\sigma_u\theta_u$. Note that using renaming of variables in $\sigma_u$ introduced by the computation of most general unifier, we can always have $\gamma = (\sigma_u\theta_u)_{\dom{\gamma}}$ and the variables of $\dom{\sigma_u\theta_u} \setminus \dom{\gamma}$ are all fresh.
Hence by \Cref{def:open evaluation}, $L \OEval{T} ([r\sigma_u],\alpha'\sigma_u)$. Let us define $\beta = \theta_u$ and $L_t = [r\sigma_u]$. By \Cref{S:std}, $\vars{r} \subseteq \vars{L_\ell}$, hence $r\sigma_u\theta_u = r\gamma$. It implies that $L_t\beta = [r\sigma_u\theta_u] =_\Ea [r\gamma] = [r\theta] = [s]
 = L_s$. Moreover, for all $x\in \dom{\sigma}$, $x\sigma =_\Ea x\alpha'\beta'$. 
As $\dom{\theta} \cap \vars{\alpha'\beta'} = \emptyset$, $x\alpha'\beta' = x\alpha'\beta'\theta = x\alpha'\gamma =_\Ea x\alpha'\sigma_u\theta_u$.

Hence, defining $\alpha = \alpha'\sigma_u$, we obtain $x\sigma =_\Ea x\alpha\beta$. Take $x \in \dom{\alpha\beta} \cap \V$. Note that $\dom{\alpha\beta} = \dom{\alpha'\sigma_u\theta_u}$. But $\dom{\sigma_u\theta_u}$ is $\dom{\gamma}$ with some fresh variables introduced in the computation of the most general unifier. However, $\dom{\gamma} = \dom{\beta'} \cup \dom{\theta}$ and we already know that $\dom{\theta} \cap \V' = \emptyset$. Since $\V \subseteq \V'$, we deduce that $\dom{\theta} \cap \V = \emptyset$. Therefore, $\dom{\alpha\beta} \cap \V = (\dom{\alpha'} \cup \dom{\beta'}) \cap \V = \dom{\sigma}$.

 \paragraph{Case $L = t \cdot L'$:} By definition $L\sigma \Eval{T} L_s$ implies that $t\sigma \Eval{T} s$ and $L'\sigma \Eval{T} L'_s$ and $L_s = s \cdot L'_s$ for some term $s$ and sequence of terms $L'_s$. By inductive hypothesis, there exists a term $t'$ and two substitutions $\alpha',\beta'$ such that $[t] \OEval{T} ([t'],\alpha')$ and $t'\beta' =_\Ea s$ and $\sigma =_\Ea (\alpha'\beta')_{|\dom{\sigma}}$ and $\dom{\alpha'\beta'} \cap \V = \dom{\sigma}$.

 Since $\vars{L} \subseteq \V$, we deduce that $L'\sigma =_\Ea L'\alpha'\beta'$. By \Cref{lem:evaluation Ea}, we deduce that there exists a sequence of terms $L''_s$ such that $L'\alpha'\beta' \Eval{T} L''_s$ and $L''_s =_\Ea L'_s$. By applying our inductive hypothesis on $(L'\alpha')$ and $\beta'$ and $\V' = \V \cup \vars{L''_s} \cup \dom{\alpha'\beta'} \cup \vars{t'}$, we obtain that there exists a sequence of terms $L''_t$, two substitutions $\alpha'',\beta''$ such that $L'\alpha' \OEval{T} (L''_t,\alpha'')$ and $L''_t\beta'' =_\Ea L''_s$ and $\beta' =_\Ea (\alpha''\beta'')_{|\dom{\beta'}}$ and $\dom{\alpha''\beta''} \cap \V' = \dom{\beta'}$.

 By \Cref{def:open evaluation}, $L \OEval{T} (t'\alpha'' \cdot L''_t,\alpha'\alpha'')$. Let us denote $\alpha = \alpha'\alpha''$ and $\beta = \beta''$ and $L_t = t'\alpha'' \cdot L''_t$. Note that $\vars{t'} \subseteq \V'$. Hence $t'\alpha''\beta'' =_\Ea t'\beta' =_\Ea s$. Moreover, $L''_t\beta'' =_\Ea L''_s =_\Ea L'_s$. Hence, $L_t\beta =_\Ea L_s$. Let us show $\sigma =_\Ea (\alpha\beta)_{\dom{\sigma}}$. Let $x\in \dom{\sigma}$. We know that $x\sigma =_\Ea x\alpha'\beta'$. As $\beta' =_\Ea (\alpha''\beta'')_{|\dom{\beta'}}$, we obtain $x\alpha'\beta' =_\Ea x\alpha'\alpha''\beta'' = x\alpha\beta$. It also implies that $\dom{\sigma} \subseteq \dom{\alpha\beta}$. As $\dom{\sigma} \subseteq \V$, we have $\dom{\sigma} \subseteq \dom{\alpha\beta} \cap \V$. Take $x \in \dom{\alpha\beta} \cap \V$. Hence $x \in \dom{\alpha'\alpha''\beta''} = \dom{\alpha'} \cup \dom{\alpha''\beta''}$. If $x \in \dom{\alpha''\beta''}$ then, since $\dom{\alpha''\beta''} \cap \V' = \dom{\beta'}$ and $\V \subseteq \V'$, we deduce that $x \in \dom{\beta'}$. Therefore $x \in \dom{\alpha'\alpha''\beta''}$ implies $x \in \dom{\alpha'\beta'}$. As we proved that $\dom{\alpha'\beta'} \cap \V = \dom{\sigma}$, we conclude that $x\in \dom{\sigma}$ and so $\dom{\alpha\beta} \cap \V = \dom{\sigma}$.
\end{proof}


\begin{lemma}
\label{lem:open evaluation variant}
For all sequences of terms $L,L_s$, for all substitutions $\sigma$, if $L\sigma =_E L_s$ and $\nf{T}{E}{L_s}$ then there exist $\symbeval{L} \OEval{T} (L_t,\alpha)$ and a substitution $\beta$ such that $L_t\beta =_\Ea L_s$ and $\sigma =_E (\alpha\beta)_{\dom{\sigma}}$ and $\vars{L} \cap \dom{\alpha\beta} \subseteq \dom{\sigma}$.
\end{lemma}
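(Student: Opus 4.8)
The plan is to reduce the statement to two results already proved: \Cref{lem:eval}, which converts an equality $t' =_E t\sigma$ into a close evaluation $\symbeval{t}\sigma \Eval{T} t''$ with $t'' =_\Ea t'$ as soon as $\{t'\}\cup\{x\sigma \mid x\in\vars{t}\}$ is in normal form, and \Cref{lem:open-evaluation}, which lifts a close evaluation of a sequence of TE-terms to an open one. The only real work is to bridge the gap between what we are given here, namely only $\nf{T}{E}{L_s}$, and what \Cref{lem:eval} needs, which also requires the pertinent part of the image of the substitution to be in normal form jointly with the target term. Throughout, I will use the elementary observation, immediate from \Cref{def:normal form} by restricting the witnessing injective substitution, that every subset of a set in normal form is again in normal form.

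First I would normalise the substitution. Write $L = [u_1,\dots,u_m]$ and $L_s = [v_1,\dots,v_m]$, so that $u_i\sigma =_E v_i$ for each $i$. Starting from $L_s$, which is in normal form by hypothesis, and invoking \Cref{lem:norm-addition} once for every variable of $\vars{L}\cap\dom{\sigma}$ and \Cref{lem:norm-var} once for every variable of $\vars{L}\setminus\dom{\sigma}$, I build a substitution $\sigma'$ with $\dom{\sigma'}\subseteq\dom{\sigma}$ that agrees with $\sigma$ modulo $E$ (keeping $\sigma$ untouched on $\dom{\sigma}\setminus\vars{L}$) and for which $\nf{T}{E}{L_s\cup\{x\sigma' \mid x\in\vars{L}\}}$ holds. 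For each $i$, the set $\{v_i\}\cup\{x\sigma' \mid x\in\vars{u_i}\}$ is a subset of that set, hence in normal form, and $v_i =_E u_i\sigma =_E u_i\sigma'$; so \Cref{lem:eval} yields $\symbeval{u_i}\sigma' \Eval{T} w_i$ with $w_i =_\Ea v_i$ (note that $\symbeval{u_i}\sigma'$ is a TE-term, since $\sigma'$ introduces only standard symbols, which are all below the to-evaluate ones). Setting $L'_s = [w_1,\dots,w_m]$ and using the componentwise extension of $\Eval{T}$ to sequences, this reads $\symbeval{L}\sigma' \Eval{T} L'_s$ with $L'_s =_\Ea L_s$.

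Next I would apply \Cref{lem:open-evaluation} to the sequence of TE-terms $\symbeval{L}$, the substitution $\sigma'$, and $\V := \dom{\sigma'}\cup\vars{L}\cup\vars{L_s}\cup\vars{L'_s}$, obtaining $L_t, \alpha, \beta$ with $\symbeval{L}\OEval{T}(L_t,\alpha)$, $L_t\beta =_\Ea L'_s$, $\sigma' =_\Ea (\alpha\beta)_{\dom{\sigma'}}$ and $\dom{\alpha\beta}\cap\V = \dom{\sigma'}$. Since $L_t\beta =_\Ea L'_s =_\Ea L_s$, the first required conclusion holds. For the second, using $\Ea\subseteq E$: if $x\in\dom{\sigma}\cap\dom{\sigma'}$ then $x\sigma =_E x\sigma' =_E x\alpha\beta$; and if $x\in\dom{\sigma}\setminus\dom{\sigma'}$ then, by construction of $\sigma'$, $x\in\vars{L}\cap\dom{\sigma}$ with $x\sigma =_E x$, and since $x\in\V$ while $\dom{\alpha\beta}\cap\V=\dom{\sigma'}$ we get $x\notin\dom{\alpha\beta}$, hence $x\alpha\beta = x =_E x\sigma$; thus $\sigma =_E (\alpha\beta)_{\dom{\sigma}}$. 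Finally, $\vars{L}\subseteq\V$ gives $\vars{L}\cap\dom{\alpha\beta}\subseteq\dom{\alpha\beta}\cap\V=\dom{\sigma'}\subseteq\dom{\sigma}$, which is the last conclusion.

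The one point requiring genuine care — and the main obstacle — is the construction of $\sigma'$ together with the attendant domain bookkeeping: one must normalise only the part of $\sigma$ that actually feeds into the components of $L$, leave the rest of $\sigma$ alone so that the final equality is valid on all of $\dom{\sigma}$ and not merely on $\vars{L}$, and then keep precise track of how $\dom{\sigma'}$ sits inside $\dom{\sigma}$ in order to match the bounds that \Cref{lem:open-evaluation} returns. Everything else is a direct chaining of the two cited evaluation lemmas with the inclusion $\Ea\subseteq E$.
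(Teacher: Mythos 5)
Your proof is correct and follows essentially the same route as the paper's: normalise the relevant part of $\sigma$ via \Cref{lem:norm-addition} and \Cref{lem:norm-var}, apply \Cref{lem:eval} componentwise to get a close evaluation $\symbeval{L}\sigma' \Eval{T} L'_s$ with $L'_s =_\Ea L_s$, lift it with \Cref{lem:open-evaluation}, and conclude using $\Ea \subseteq E$. Your domain bookkeeping (normalising only the part of $\sigma$ feeding into $L$ and case-splitting on $\dom{\sigma}\setminus\dom{\sigma'}$) is if anything slightly more careful than the paper's, which tacitly identifies $\dom{\sigma'}$ with $\dom{\sigma}$.
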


\begin{proof}
By \Cref{lem:norm-addition} of \Cref{lem:norm}, there exists a substitution $\sigma'$ such that $\nf{T}{E}{L_s \cup \img{\sigma'}}$ and $\sigma =_E \sigma'$. Moreover, by \Cref{lem:norm-var} of \Cref{lem:norm}, we deduce that $\nf{T}{E}{L_s \cup \img{\sigma'} \cup \{ x\sigma' \mid x \in \vars{L}\}}$. By \Cref{lem:eval}, $\symbeval{L}\sigma' \Eval{T} L'_s$ and $L'_s =_\Ea L_s$ for some $L'_s$. 
Take $\V = \dom{\sigma'} \cup \vars{L,L'_s}$. By \Cref{lem:open-evaluation}, there exists a sequence of terms $L_t$, two substitutions $\alpha,\beta$ such that $\symbeval{L} \OEval{T} (L_t,\alpha)$ and $L'_s =_\Ea L_t\beta$ and $\sigma' =_\Ea (\alpha\beta)_{|\dom{\sigma'}}$ and $\dom{\alpha\beta} \cap \V = \dom{\sigma'}$. Since $\Ea \subseteq E$ by \Cref{M:EaE}, we conclude that $L_t\beta =_\Ea L_s$ and $\sigma =_E \sigma' =_\Ea (\alpha\beta)_{|\dom{\sigma'}} = (\alpha\beta)_{|\dom{\sigma}}$. Since $\vars{L} \subseteq \V$, we deduce from $\dom{\alpha\beta} \cap \V = \dom{\sigma'}$ that $\dom{\alpha\beta} \cap \vars{L} \subseteq \dom{\sigma'}$.
\end{proof}


\subsubsection{Finite variant property and most general \texorpdfstring{$E$}{E}-unifiers}


\thvariant*

\begin{proof}
Let $t$ be a term. Let us define $\mathcal{S} = \{ t' \mid [\symbeval{t}] \OEval{T} (t',\alpha)\}$. We can show that $\mathcal{S}$ is a complete set of $E$-variants modulo $\Ea$ of $t$. Let $t' \in \mathcal{S}$. Hence $[\symbeval{t}] \OEval{T} (t',\alpha)$. By \Cref{lem:open evaluation remove}, $t\alpha =_E t'$ meaning that $t'$ is a $E$-variant of $t$. Hence $\mathcal{S}$ is a set of $E$-variant of $t$. 

Let us show the second part of the theorem. Let $\sigma$ be a substitution closing for $t$. Let $s = \minOrd[>]{E}(t\sigma)$. Since we have $\minOrd[>]{E}(t\sigma) = \minOrd[>]{E}(\minOrd[>]{E}(t\sigma)) $, we have by definition that $\nf{T}{E}{\{s\}}$ and $s =_E t\sigma$ (using the empty substitution since $t\sigma$ is ground).
By \Cref{lem:open evaluation variant}, we deduce that there exists $[\symbeval{t}] \OEval{T} ([t'],\alpha)$ and a substitution $\beta$ such that $t'\beta =_\Ea s$. Hence, $t' \in \mathcal{S}$ and $t'\beta =_\Ea s = \minOrd[>]{E}(t\sigma)$.
\end{proof}


\thmostgeneralunifiers*

\begin{proof}
Let $\sigma$ be a $E$-unifier of $t$ and $s$, i.e. $t\sigma =_E s\sigma$. By \Cref{lem:norm-addition} of \Cref{lem:norm}, there exists a term $u$ such that $\nf{T}{E}{\{u\}}$ and $t\sigma =_E u =_E s\sigma$. Hence defining $L_s = [u,u]$, we have $\nf{T}{E}{L_s}$. By \Cref{lem:open evaluation variant}, there exists $[\symbeval{t},\symbeval{s}] \OEval{T} ([t',s'],\alpha)$ and a substitution $\beta$ such that $[t',s']\beta =_\Ea L_s$ and $\sigma =_E (\alpha\beta)_{\dom{\sigma}}$ and $\vars{t,s} \cap \dom{\alpha\beta} \subseteq \dom{\sigma}$. Thus, $t'\beta =_\Ea u =_\Ea s'\beta$. Since $\beta$ is a $\Ea$-unifier of $t'$ and $s'$, there exists $\sigma_u \in \mgu{\Ea}{t',s'}$ and $\theta$ such that for all $\beta =_\Ea (\sigma_u\theta)_{\dom{\beta}}$ and the variables of $\dom{\sigma_u\theta}\setminus \dom{\beta}$ are fresh variables created in the computation of the most general unifier. Thus, $\sigma =_E (\alpha\sigma_u\theta)_{\dom{\sigma}}$. 
By denoting $\sigma' = \alpha\sigma_u$ and $\theta' = \theta$, we thus have $\sigma' \in \mathcal{S}$ such that $\sigma =_E (\sigma'\theta')_{\dom{\sigma}}$.

It remains to show that all substitutions in $\mathcal{S}$ are $E$-unifier of $t,s$.
By \Cref{lem:open evaluation remove}, $t\alpha =_E t'$ and $s\alpha =_E s'$. Since $\sigma_u \in \mgu{E}{t',s'}$, we have $t'\sigma_u =_E s'\sigma_u$ and so $t\alpha\sigma_u =_E s\alpha\sigma_u$ which means that $\alpha\sigma_u$ is a $E$-unifier of $s,t$.
\end{proof}


\thequivalentdeffinitevariant*

\begin{proof}
Let us define $\Rn = \emptyset$. Hence \Cref{S:subset,S:order} of \Cref{def:extended_signature} and \Cref{M:EaE} of \Cref{def:mimics} directly hold.
Let us build $\R$ as follows:
For all $f \in \F$ of arity $n$, we build the term $u_f = \bin(f(x_1,\ldots,x_n),\bin(x_1,\bin(x_2,\ldots \bin(x_{n-1},x_n))))$. Let us denote by $p$ the position of $f(x_1,\ldots,x_n)$ in $u_f$, i.e. $1$. and $p_1,\ldots,p_n$ the positions of $x_1,\ldots,x_n$ in $u_f$ respectively, i.e. $2\cdot 1$, $2 \cdot 2 \cdot 1$, \ldots

Since $(E,\Ea)$ has the finite variant property then there exists a finite set of $E$-variant $S_f$ such that for all $\sigma$ closing for $u$, there exist a substitution $\gamma$ and $u' \in S_f$ such that $u'\gamma =_\Ea \minOrd[>]{E}(u\sigma)$. 

For each $u' \in S_f$, since $u'$ is a $E$-variant of $u$, we know that there exists $\theta$ such that $u' =_E u\theta$. Recall that $\bin$ is a free binary function. Hence, the positions $p, p_1,\ldots,p_n$ are also positions of $u'$ and $u'_{|p} =_E u_{|p}\theta$ and for all $i \in \{1, \ldots, n\}$, $u'_{|p_i} =_E u_{|p_i}\theta$. Moreover, let us define by $\sigma_{u'}$ the substitution from $\vars{u'_{|p}} \setminus \vars{u'_{|p_1}, \ldots, u'_{|p_n}}$ to $\{ \amin\}$.  
Thus, for each $u' \in S_f$, we define the following rewrite rule that is added to $\R$:
\[
f(u'_{|p_1}, \ldots, u'_{|p_n}) \rightarrow u'_{|p}\sigma_{u'}
\]
Finally, for all $f \in \F$ of arity $n$, we also add to $\R$ the rewrite rule $f(x_1,\ldots,x_n) \rightarrow f(x_1,\ldots,x_n)$. By construction, the property \Cref{S:std} of \Cref{def:extended_signature} is satisfied, implying that $(>,\R,\emptyset,\En,\Ea)$ is a rewrite theory. Notice that when $\En \neq \Ea$, our hypothesis ensures that $\sigma_{u'}$ is empty. 

To prove \Cref{M:eqcomplete} of \Cref{def:mimics}, recall that by definition of $u$, we have $u_{|p} = f(u_{|p_1},\ldots,u_{|p_n})$ and so $u'_{|p} =_E u_{|p}\theta = f(u_{|p_1},\ldots,u_{|p_n})\theta$. As for all $i \in \{1, \ldots, n\}$, $u'_{|p_i} =_E u_{|p_i}\theta$, we conclude that $u'_{|p} =_E f(u'_{|p_1},\ldots,u'_{|p_n})$. Recall that $\dom{\sigma_{u'}} = \vars{u'_{|p}} \setminus \vars{u'_{|p_1}, \ldots, u'_{|p_n}}$. Therefore, $f(u'_{|p_1},\ldots,u'_{|p_n}) = f(u'_{|p_1},\ldots,u'_{|p_n})\sigma_{u'} =_E u'_{|p}\sigma_{u'}$.

It remains to prove \Cref{M:ind1}. Take $f(t_1,\ldots,t_n) =_E t$ with $\nf{T}{E}{\M}$ and $\M = \{ t_1,\ldots,t_n,t\}$. Hence, by \Cref{def:normal form}, there exists $\sigma$ an injective substitution with $\dom{\sigma} = \vars{\M}$, $\img{\sigma} \subseteq \N$, $\names{\M} \cap \img{\sigma} = \emptyset$ and for all $s \in \M$, $s\sigma =_\En \minOrd[>]{E}(s\sigma)$. Take $\alpha = \{x_1 \mapsto t_1, \ldots, x_n \mapsto t_n\}$. Since $\alpha\sigma$ is closing for $u$, there exists a substitution $\gamma$ and $u' \in S_f$ such that $u'\gamma =_\Ea \minOrd[>]{E}(u\alpha\sigma)$. 

If we denote $w = \minOrd[>]{E}(u\alpha\sigma)$ then once again as $\bin$ is a free binary symbol, we know that $p,p_1,\ldots,p_n$ are positions of $w$. Moreover, we can show that $w_{|p} =_\En \minOrd[>]{E}(u_{|p}\alpha\sigma)$ and for all $i \in \{1, \ldots, n\}$, $w_{|p_i} =_\En \minOrd[>]{E}(u_{|p_i}\alpha\sigma)$. Indeed, assume by contradiction that this is not the case, w.l.o.g. $w_{|p} \neq_{\En} \minOrd[>]{E}(u_{|p}\alpha\sigma)$. Hence, $\minOrd[>]{E}(u_{|p}\alpha\sigma) < w_{|p}$ (since $w_{|p} =_E \minOrd[>]{E}(u_{|p}\alpha\sigma)$). However, $>$ is $\En$-compatible which implies that $w[\minOrd[>]{E}(u_{|p}\alpha\sigma)]_{p} < w$. This leads to a contradiction with $w = \minOrd[>]{E}(u\alpha\sigma)$.

To summarize, we have proved that $u'_{|p}\gamma =_\Ea \minOrd[>]{E}(u_{|p}\alpha\sigma)$ and for all $i \in \{1, \ldots, n\}$, $u'_{|p_i}\gamma =_\Ea \minOrd[>]{E}(u_{|p_i}\alpha\sigma)$. Recall that $\nf{T}{E}{\M}$ implies that $t\sigma =_\En \minOrd[>]{E}(t\sigma) =_\En \minOrd[>]{E}(f(t_1,\ldots,t_n)\sigma) =_\En \minOrd[>]{E}(u_{|p}\alpha\sigma)$ and for all $i \in \{1,\ldots,n\}$, $t_i\sigma =_\En \minOrd[>]{E}(t_i\sigma) = \minOrd[>]{E}(u_{|p_i}\alpha\sigma)$. Therefore, we obtain $u'_{|p}\gamma =_\Ea t\sigma$ and for all $i \in \{1, \ldots, n\}$, $u'_{|p_i}\gamma =_\Ea t_i\sigma$.

When $\Ea \neq \En$, we know that $\sigma_{u'}$ is the identity and so the rule $f(u'_{|p_1}, \ldots, u'_{|p_n}) \rightarrow u'_{|p}$ is in the set $\R$. As $\sigma$ is injective with $\names{\M} \cap \img{\sigma} = \emptyset$, we conclude by taking the substitution $\sigma' = \gamma\sigma^{-1}$.

We now look at the case $\En = \Ea$. Recall that we already proved that $u'_{|p} =_E u'_{|p}\sigma_{u'}$. By definition of $\sigma_{u'}$, we know that either $u'_{|p}\gamma =_\En u'_{|p}\sigma_{u'}\gamma$ or $u'_{|p}\gamma > u'_{|p}\sigma_{u'}\gamma$. As $>$ is $\En$-compatible, the latter case would imply that $\minOrd[>]{E}(u_{|p}\alpha\sigma) > u'_{|p}\sigma_{u'}\gamma$ which is a contradiction with the definition of $\minOrd[>]{E}(u_{|p}\alpha\sigma)$. Therefore, $u'_{|p}\gamma =_\En u'_{|p}\sigma_{u'}\gamma$. Once again, as $\sigma$ is injective with $\names{\M} \cap \img{\sigma} = \emptyset$, we conclude by taking the substitution $\sigma' = \gamma\sigma^{-1}$ and the rule $f(u'_{|p_1}, \ldots, u'_{|p_n}) \rightarrow u'_{|p}\sigma_{u'}$ in $\R$.
\end{proof}

\subsection{Proofs of Theorem \ref{th:generation of rewrite theory}}
\label{sec:app-proof-main-theorem}

We already introduced many notions in \Cref{sec:overview}. We need to introduce a few more in order to write the proof of \Cref{th:generation of rewrite theory}.
In this section, we will consider an equational theory $\En$ and an $\En$-strong reduction ordering $>$. Given two terms $u,v$, we will denote by $u \geq v$ when either $u > v$ or $u =_\En v$. In further sections, we will also refer to $\En$ as the rewrite set $\{ \ell \rightarrow r, r \rightarrow \ell \mid (\ell = r) \in \En\}$ when discussing from where a rewrite trace is from.


\paragraph{Rewrite steps}

In the proof, we will consider two types of rewrite steps: syntactic and modulo $\En$. We already introduced the notation $t \rightrwstep{p}{\sigma}{\ell}{r} s$ and $t \leftrwstep{p}{\sigma}{\ell}{r} s$. We augment this notation with an equation theory: We write $t \rightrwstep[E]{p}{\sigma}{\ell}{r} s$ when $p \in \Pos{t}$, $t_{|p} =_E \ell\sigma$ and $s = t[r\sigma]_p$ and $s' =_{E_2} r\sigma$ and $\ell\sigma \geq r\sigma$. Similarly, we write $s \leftrwstep[E]{p}{\sigma}{\ell}{r} t$ when $t \rightrwstep[E]{p}{\sigma}{\ell}{r} s$. In practice, $E$ will either be $\En$ or the emptyset $\emptyset$. When $E = \En$, it will correspond to the rewrite of $t$ by $s$ modulo $\En$. When $E = \emptyset$, the equality becomes syntactic and we omit it, hence yielding $t \rightrwstep{p}{\sigma}{\ell}{r} s$. 

Recall that compare to standard rewriting, we impose an order between the instantiated terms of the rule, \emph{i.e.}, $\ell\sigma \geq r\sigma$. Notice that when $\ell\sigma =_\En r\sigma$, we have $t \rightrwstep[\En]{p}{\sigma}{\ell}{r} s$ iff $t \leftrwstep[\En]{p}{\sigma}{r}{\ell} s$.


\paragraph{Rewrite labels and rewrite traces} In Paragraph~\ref{sec:peak-with-overlapping-positions}, we defined the notion on rewrite labels and rewrite traces. As we extended rewrite steps with an equational theory $E$, we also need to update the definition of rewrite labels by just adding the equational theory $E$ in the tuple: 
A rewrite label is a tuple $\omega = \leftrightrwlabel{p}{\sigma}{\ell}{r}[E]$ with ${\sim} \in \{ \rightarrow, \leftarrow \}$ corresponding to the argument of a rewrite step. The

One of the advantages of introducing rewrite traces is that we can define simple operations that will ease the reading of proofs: Given a position $p$ and a substitution $\sigma$, we define recursively $p \cdot \tr$, $\tr\sigma$ and $\reverse{\tr}$ as follows.
\begin{itemize}
\item $p \cdot \varepsilon = \varepsilon$ and $p \cdot \leftrightrwlabel{q}{\alpha}{\ell}{r}[E]\tr = \leftrightrwlabel{p \cdot q}{\sigma}{\ell}{r}[E] (p\cdot \tr)$
\item $\varepsilon\sigma = \varepsilon$ and $(\leftrightrwlabel{q}{\alpha}{\ell}{r}[E]\tr)\sigma = \leftrightrwlabel{q}{\alpha\sigma}{\ell}{r}[E] \tr\sigma$
\item $\reverse{\varepsilon} = \varepsilon$ and $\reverse{(\rightrwlabel{q}{\alpha}{\ell}{r}[E] \tr)} = \reverse{tr}\leftrwlabel{q}{\alpha}{\ell}{r}[E]$ and $\reverse{(\leftrwlabel{q}{\alpha}{\ell}{r}[E] \tr)} = \reverse{tr}\rightrwlabel{q}{\alpha}{\ell}{r}[E]$
\end{itemize}

We can provide some basic and useful properties:


\begin{lemma}
\label{lem:operation-properties}
Let $u \Leftrightrwstep{\tr} v$. For all substitutions $\sigma$, for all term contexts $C[\_]$ with $C[\_]_{|p} = \_$, we have
$u\sigma \Leftrightrwstep{\tr\sigma} v\sigma$, and $C[u] \Leftrightrwstep{p\cdot \tr} C[v]$, and $v \Leftrightrwstep{\reverse{\tr}} u$. Moreover, if $u \Rightrwstep{\tr} v$ (resp. $u \Leftrwstep{\tr} v$) then $v \Leftrwstep{\reverse{\tr}} u$ (resp. $v \Rightrwstep{\tr} u$).
\end{lemma}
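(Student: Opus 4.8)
The plan is to argue by induction on the length of the rewrite trace $\tr$. In the base case $\tr = \varepsilon$ we have $u = v$ and $\tr\sigma = p \cdot \tr = \reverse{\tr} = \varepsilon$, so all four assertions hold trivially (with $C[u] = C[v]$, $u\sigma = v\sigma$, and $\reverse{\tr} = \varepsilon$ being at once an empty right and an empty left trace). For the inductive step I would write $\tr = \omega\tr'$ with $\omega$ a single rewrite label, fix an intermediate term $w$ with $u \Leftrightrwstep{\omega} w$ and $w \Leftrightrwstep{\tr'} v$, and reduce each claimed property to a one-step version for $\omega$ glued with the induction hypothesis applied to $\tr'$. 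The gluing relies only on the recursive identities $(\omega\tr')\sigma = (\omega\sigma)(\tr'\sigma)$, $p \cdot (\omega\tr') = (p\cdot\omega)(p\cdot\tr')$ and $\reverse{(\omega\tr')} = \reverse{\tr'}\,\reverse{\omega}$, which are immediate from the definitions of these three operations on traces.

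For the one-step facts, assume the step underlying $\omega$ is $u \rightrwstep[E]{q}{\alpha}{\ell}{r} w$ (the left-oriented case is symmetric). Stability under a substitution $\sigma$: since $=_E$ is closed under substitution and positions are unaffected by substitution, $(u\sigma)_{|q} = u_{|q}\sigma =_E \ell\alpha\sigma$; also $w\sigma = u\sigma[r\alpha\sigma]_q$; and the ordering side-condition $\ell\alpha \geq r\alpha$ carries over to $\ell\alpha\sigma \geq r\alpha\sigma$. Hence $u\sigma \rightrwstep[E]{q}{\alpha\sigma}{\ell}{r} w\sigma$. Stability under a context $C[\_]$ with $C[\_]_{|p} = \_$: here $(C[u])_{|p\cdot q} = u_{|q} =_E \ell\alpha$ and $C[w] = (C[u])[r\alpha]_{p\cdot q}$, while the side-condition is unaffected, so $C[u] \rightrwstep[E]{p\cdot q}{\alpha}{\ell}{r} C[w]$. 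For the reversal, $u \rightrwstep[E]{q}{\alpha}{\ell}{r} w$ is, by the very definition of the left-oriented rewrite step, the same statement as $w \leftrwstep[E]{q}{\alpha}{\ell}{r} u$, i.e. $w \Leftrightrwstep{\reverse{\omega}} u$. Chaining each of these with the matching instance of the induction hypothesis on $\tr'$ yields the three stability properties for $\tr$. The ``moreover'' part follows at once: if every label of $\tr$ is right-oriented then, by the recursive definition of trace reversal, every label of $\reverse{\tr}$ is left-oriented, so the already-established $v \Leftrightrwstep{\reverse{\tr}} u$ is in fact $v \Leftrwstep{\reverse{\tr}} u$; symmetrically, if $\tr$ is a left trace then $v \Rightrwstep{\reverse{\tr}} u$.

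I do not anticipate a genuine obstacle here: the whole argument is a routine structural induction in which each sub-case merely unfolds a definition. The single subtle point is that the ordering side-condition must survive substitution, i.e. $\ell\alpha \geq r\alpha$ implies $\ell\alpha\sigma \geq r\alpha\sigma$; this uses that $>$ is closed under substitutions (being a reduction order) together with $=_\En$ being a congruence, both available since $>$ is the $\En$-strong reduction ordering fixed at the start of this section.
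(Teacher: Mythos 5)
Your proof is correct and is exactly the argument the paper has in mind: the paper's own proof of this lemma is the one-line ``Directly follows from the definitions,'' and your structural induction on $\tr$, reducing each property to its one-step version via the recursive identities for $\tr\sigma$, $p\cdot\tr$ and $\reverse{\tr}$, is just the honest unfolding of that remark. The one point worth flagging is cosmetic: in the final ``resp.'' clause the lemma as printed says $v \Rightrwstep{\tr} u$ where it surely means $v \Rightrwstep{\reverse{\tr}} u$, and your reading is the correct one.
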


\begin{proof}
Directly follows from the definitions.
\end{proof}


\paragraph{Ranking function} 
The transformations presented in \Cref{sec:peaks-into-valleys} that transform peaks into valleys intuitively reduce the altitude of the mountainous landscape, while sometimes increasing the length of the landscape. However, the transformations presented in \Cref{sec:ordering slopes} not only do not decrease the altitude but also increase the length of the landscape, possibly adding terms at the highest altitude. To prove termination of applications of landscape transformations, we rely on an insight found the previous section: we showed that $p$ was a strict prefix of all positions of rule applications in $\tr_L$. For the rewrite trace $\tr_R$, we only had the guarantee that the positions of rule applications were prefixes of $p$ (not necessarily strict), but the terms rewritten by the rule were intuitively at a strictly lower altitude. 
Therefore, to show termination of our landscape transformations, we will compare both the positions at which a rule is applied, as well as the terms rewritten. 


\begin{definition}
We define the \emph{ranking function on rewrite labels}, denoted $\measureF$, by $\measure{\omega} = (p,r\sigma,\ell\sigma)$ when $\omega$ is a right rewrite label $\rightrwlabel{p}{\sigma}{\ell}{r}$ or a left rewrite label $\leftrwlabel{p}{\sigma}{\ell}{r}$.

We extend the ranking function to rewrite traces by defining $\measure{\tr} = \multiset{ \measure{\omega} \mid \omega \in \tr}$, that is the multiset of the ranking values of all rewrite labels in $\tr$.
\end{definition}


To compare two instances of the ranking function, we will use the lexicographic order where terms are ordered by $<$ the $\En$-strong reduction ordering, and where positions are compared through the opposite of the natural prefix order, that is $p_1$ will be ``strictly smaller'' than $p_2$ when $p_2$ is a strict prefix of $p_1$. Intuitively, the closer to $\varepsilon$ a position is, the more rewriting is possible on a term. However, this order on positions is not well founded as there is an infinite sequence of ``strictly smaller'' positions than $p_1$. We therefore restrict ourselves to positions contained in a finite set of positions $\Pset$, that is for $p_1$ and $p_2$ to be ordered, they both need to be part of $\Pset$. As $\Pset$ is finite, the order, that we denote $\lessRwLbl{\Pset}$, becomes well-founded.

The sets of position $\Pset$ will be generated from all the terms in the initial rewrite trace that transformed an equality modulo $E$ into a mountainous landscape, as shown in \Cref{sec:mountainous landscape}. In particular, if we gather all terms in this initial landscape into a set $\mathcal{S}$, we can show that all terms in any transformed landscape are necessarily smaller (w.r.t. $<$ the $\En$-strong order) than a term in $\mathcal{S}$. As $<$ is well-founded and $\En$ has finite equivalence classes, the set of terms smaller than terms in $\mathcal{S}$ is therefore finite. Hence, we generate $\Pset$ as the set of all positions in all terms smaller than terms in $\mathcal{S}$. 

By abuse of notation, given two rewrite traces $\tr,\tr'$, we also denote $\measure{\tr'} \lessRwLbl{\Pset} \measure{\tr}$ when using the multiset order induced by the order $\lessRwLbl{\Pset}$ on instances of the ranking function on rewrite labels. We show that for any landscape transformation converting a rewrite trace $\tr$ into a new rewrite trace $\tr'$, we have $\measure{\tr'} \lessRwLbl{\Pset} \measure{\tr}$, ensuring that the application of landscape transformations necessarily terminates.

\begin{remark*}
The termination of applications of landscape transformations does not imply that \GenExtended{} terminates.
\end{remark*}

The rewrite labels with equation theory does not impact their value w.r.t. the ranking function, that is $\measure{\leftrightrwlabel{p}{\sigma}{\ell}{r}[E]} = \measure{\leftrightrwlabel{p}{\sigma}{\ell}{r}}$. Only the inverse operation on rewrite trace preserves the ranking function:


\begin{lemma}
\label{lem:reverse-measure}
For all rewrite trace $\tr$, $\measure{\reverse{\tr}} = \measure{\tr}$.
\end{lemma}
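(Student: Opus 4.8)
The statement is a bookkeeping fact about how the ranking function $\measureF$ interacts with the reversal operation on rewrite traces, and the plan is to prove it by induction on the length of $\tr$. In the base case $\tr = \varepsilon$ we have $\reverse{\varepsilon} = \varepsilon$, so both $\measure{\reverse{\tr}}$ and $\measure{\tr}$ are the empty multiset and equality is immediate.

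For the inductive step, write $\tr = \omega \cdot \tr'$ for a rewrite label $\omega$ and a strictly shorter trace $\tr'$. Unfolding the recursive definition of $\reverse{(\cdot)}$ gives $\reverse{\tr} = \reverse{\tr'} \cdot \omega'$, where $\omega'$ is obtained from $\omega$ by flipping its orientation between $\rightarrow$ and $\leftarrow$ while keeping the position, the substitution, the rule $\ell \rightarrow r$, and the equational-theory annotation $[E]$ unchanged (so e.g. if $\omega = \rightrwlabel{p}{\sigma}{\ell}{r}$ then $\omega' = \leftrwlabel{p}{\sigma}{\ell}{r}$, and symmetrically). The key point is that $\measureF$ is defined identically on a right rewrite label $\rightrwlabel{p}{\sigma}{\ell}{r}$ and on the left rewrite label $\leftrwlabel{p}{\sigma}{\ell}{r}$, both being sent to the triple $(p, r\sigma, \ell\sigma)$, and it is moreover blind to the annotation $[E]$ (as recorded just before the statement). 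Hence $\measure{\omega'} = \measure{\omega}$. Since $\measure{\tr}$ is the multiset $\multiset{\measure{\eta} \mid \eta \in \tr}$ and multiset union $\uplus$ is commutative, we conclude $\measure{\reverse{\tr}} = \measure{\reverse{\tr'}} \uplus \multiset{\measure{\omega'}} = \measure{\tr'} \uplus \multiset{\measure{\omega}} = \measure{\tr}$, where the middle equality uses the induction hypothesis on $\tr'$.

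There is no real obstacle here: the lemma follows by unwinding the definitions of $\measureF$ and $\reverse{(\cdot)}$. The only thing that needs to be checked — and it is entirely routine — is the symmetry of $\measureF$ under flipping a label's orientation, which is visible directly from its definition.
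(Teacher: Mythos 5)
Your proof is correct and follows essentially the same route as the paper's (one-line) proof: induction on the trace together with the observation that $\measureF$ assigns the same triple $(p,r\sigma,\ell\sigma)$ to a label and to its orientation-flipped counterpart, so reversal only permutes the multiset. Nothing further is needed.
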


\begin{proof}
By induction on $\tr$ and by noticing that $\measure{\rightrwlabel{q}{\alpha}{\ell}{r}[E]} = \measure{\leftrwlabel{q}{\alpha}{\ell}{r}[E]}$.
\end{proof}

Let $\Pset$ be a set of positions. To compare the measurements, we denote by $\lessRwLbl{\Pset}$ the strict ordering on rewrite label measurement w.r.t. $\Pset$, defined as follows: $(p_1,u_1,v_1) \lessRwLbl{\Pset} (p_2,u_2,v_2)$ when $p_1,p_2 \in \Pset$, and either
\begin{itemize}
\item $p_2$ is strict prefix of $p_1$, or
\item $p_1 = p_2$ and either $u_1 < u_2$ or ($u_1 =_\En u_2$ and $v_1 < v_2$)
\end{itemize}
We also define equality between measurement as $(p_1,u_1,v_1) \eqRwLbl{\Pset} (p_2,u_2,v_2)$ when $p_1 = p_2 \in \Pset$, $u_1 =_\En u_2$ and $v_1 =_\En v_2$. We naturally define $\leqRwLbl{\Pset}$ as ${\lessRwLbl{\Pset}} \cup {\eqRwLbl{\Pset}}$.

Notice that in the definition of $\lessRwLbl{\Pset}$, the position $p_1$ is ``smaller'' than $p_2$ when $p_2$ is a strict prefix of $p_1$, which is opposite to the natural prefix ordering. Intuitively, the closer to $\varepsilon$ a position is, the more rewriting is possible on a term. However, as such ordering is not well-founded, we additionally require the compared positions to be part of the set $\Pset$. When $\Pset$ is finite, we retrieve the well-foundedness of the ordering, as shown below.

\begin{lemma}
\label{lem:well-founded}
For all finite sets of positions $\Pset$, the ordering $\lessRwLbl{\Pset}$ is well-founded
\end{lemma}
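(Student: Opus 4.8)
The plan is to recognise $\lessRwLbl{\Pset}$ as, up to reorganising its three coordinates, a lexicographic product of well-founded orders, and then to run the standard termination argument (or simply cite that a finite lexicographic product of well-founded orders is well-founded). First I would dispense with the position coordinate. The relation ``$q'$ is a strict prefix of $q$'' is irreflexive and transitive, hence a strict partial order, and so is its converse; restricted to the \emph{finite} set $\Pset$ it is well-founded, since an infinite descending chain in a finite set would have to repeat an element, contradicting irreflexivity and transitivity. Consequently, in a would-be infinite $\lessRwLbl{\Pset}$-descending chain $(p_i,u_i,v_i)_{i\ge 0}$, each step has either $p_i=p_{i+1}$ or $p_i$ a strict prefix of $p_{i+1}$, and the latter can occur only finitely often; so past some index $N$ the position is constant, $p_i=p$ for all $i\ge N$.

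Next I would exploit that $<$ is the converse of an $\En$-strong reduction order. Being a reduction order, $>$ is well-founded; being $\En$-compatible and irreflexive, it satisfies $u=_\En u'\Rightarrow\lnot(u<u')$ (otherwise $\En$-compatibility together with transitivity would give $u<u$), so $<$ descends to a well-defined, still well-founded strict order on the quotient $\T(\F,\X\cup\N)/{=_\En}$. For $i\ge N$ every step of the chain satisfies either $u_{i+1}<u_i$, or $u_{i+1}=_\En u_i$ and $v_{i+1}<v_i$. Collapsing each maximal run of $=_\En$-steps by $\En$-compatibility, if the $u$-coordinate decreased strictly infinitely often we would obtain an infinite $>$-chain on (representatives of) the $u_i$, contradicting well-foundedness of $>$; hence from some $M\ge N$ the term $u_i$ is $=_\En$-constant, after which $v_{i+1}<v_i$ at every step, again an infinite $>$-chain, contradiction. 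This rules out any infinite $\lessRwLbl{\Pset}$-descending chain.

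There is no real obstacle here; the two points deserving care are (i) that the converse prefix order is well-founded \emph{only} because $\Pset$ is assumed finite (on the set of all positions it is not), so the finiteness hypothesis genuinely has to be used, and (ii) that one type-$u$-decrease step followed by any number of $=_\En$-steps composes into a bona fide $<$-decrease of the $u$-coordinate — exactly the role of $\En$-compatibility — which is what prevents an infinite chain from ``decreasing $u$ a little at a time forever''. Alternatively, the whole argument can be packaged abstractly: $\lessRwLbl{\Pset}$ embeds into the lexicographic product of $(\Pset,\text{converse of strict prefix})$, $(\T(\F,\X\cup\N)/{=_\En},<)$ and $(\T(\F,\X\cup\N)/{=_\En},<)$, all three well-founded, whence $\lessRwLbl{\Pset}$ is well-founded by the classical lemma on lexicographic products.
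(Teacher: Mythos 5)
Your proof is correct and follows essentially the same route as the paper: view $\lessRwLbl{\Pset}$ as a lexicographic ordering whose position component is well-founded only because $\Pset$ is finite, and whose term components are well-founded because $>$ is a reduction order. Your write-up is in fact more careful than the paper's, since you explicitly verify that $<$ descends to a well-founded order on $\T(\F,\X\cup\N)/{=_\En}$ (needed because ties in the $u$- and $v$-coordinates are taken modulo $=_\En$), a point the paper leaves implicit.
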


\begin{proof}
The ordering $\lessRwLbl{\Pset}$ is in fact a lexicographic ordering on $(p,u,v)$. To compare the terms $u$ and $v$, we use the ordering $<$ that is a $\En$-strong reduction order, meaning that it is well founded. To compare the position $p$, we use the negation of the prefix ordering, i.e. \emph{$p_1$ is strictly smaller than $p_2$} when $p_2$ is a strict prefix of $p_1$. This is not well-founded in itself but as we require that both $p_1$ and $p_2$ are positions in $\Pset$, which is finite and a parameter of the ordering, we retrieve the well-foundedness of the ordering. 
\end{proof}

By abuse of notation, given two rewrite traces $\tr,\tr'$, we also denote $\measure{\tr} \lessRwLbl{\Pset} \measure{\tr'}$ when using the multiset ordering induced by $\lessRwLbl{\Pset}$ on ordering labels. 

In additional to the main ranking function $\measure{\cdot}$ on rewrite traces, we consider a second ranking function defined on a rewrite traces and the terms it rewrites, that is $u \Leftrightrwstep{\tr} v$, denoted $\measureTerm{u,\tr,v}$, as the multisets of the terms $\multiset{t_0,\ldots,t_n}$ where 
\[
u = t_0 \Leftrightrwstep{\omega_1} t_1 \Leftrightrwstep{\omega_2} \ldots \Leftrightrwstep{\omega_n} t_n = v     
\]
with $\tr = \omega_1\ldots\omega_n$ and the $\omega_i$s are rewrite labels. Two instances of this ranking function are compared using the multiset order induced by the $\En$-strong reduction order $>$.


\paragraph{Orderly rewriting of terms} 

The sets of position $\Pset$ used in the ordering above will be generated from all the terms in the initial rewrite trace, by relying on our $\En$-strong reduction order $>$ (and it's associated non-strict $\geq$ relation). Given a set of terms $\M$, we say that \emph{$\M$ is greater than a term $t$} (or \emph{$t$ smaller than $\M$}) when there exists $s \in \M$ such that $s \geq t$. Similarly, we say that $\M$ is greater than $u \Leftrightrwstep{\tr} v$ when for all $\tr'$ of $\tr$, if $u \rwLeftrightStep{\tr'} w$ then $\M$ is greater than $w$.

Denoting $\Pset(\M)$ the set of positions defined as $\{ p \in \Pos{t} \mid s \in \M, s \geq t \}$, we obtain the following property.


\begin{lemma}
\label{lem:finite-equivalence-classes}
If $\En$ has finite equivalence classes then for all finite set of ground terms $\M$, the set $\Pset(\M)$ is finite.
\end{lemma}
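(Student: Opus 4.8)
The plan is to reduce \Cref{lem:finite-equivalence-classes} from a statement about positions to a statement about terms, and then to bound depths. Since $\M$ is finite and every term $t$ has a finite set of positions $\Pos{t}$, it suffices to show that for each $s \in \M$ the set of \emph{witness terms} $W_s = \{\, t \in \T(\F,\N) \mid s \geq t \,\}$ is finite: we then have $\Pset(\M) = \bigcup_{s \in \M} \bigcup_{t \in W_s} \Pos{t}$, a finite union of finite sets. So the whole lemma comes down to the claim that, for a fixed ground term $s$, only finitely many ground terms are $\geq$-below $s$.

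To prove that claim, I would split $W_s = \{\, t \mid t =_\En s \,\} \cup \{\, t \mid s > t \,\}$ (the two parts are disjoint since $>$ is strict and $\En$-compatible). The first set is finite by the standing hypothesis that $\En$ has finite equivalence classes. For the second set I would argue by contradiction via König's lemma: because $\F$ is a finite ranked alphabet, if $a$ is its maximal arity then each $\Pos{t}$ is a prefix-closed finite subset of $\{1,\dots,a\}^{*}$, and $\bigcup_{t \in W_s}\Pos{t}$ is again prefix-closed and $a$-branching; were it infinite, König's lemma would yield an infinite branch, i.e. ground terms $t_n$ with $s > t_n$ whose position sets contain positions of length $\geq n$, hence of depth $\geq n$. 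I would then rule this out using well-foundedness of $>$ together with the subterm property of $>$ on ground terms (which holds because $>$ is a well-founded reduction order that is $\En$-total on ground terms, as already exploited in the proof of \Cref{lem:norm}): every proper subterm of $t_n$ is strictly $>$-below $t_n$, so $s$ would dominate an unbounded family of ever-deeper terms; using $\En$-totality to linearly order them and finiteness of $\En$-equivalence classes to pass to the quotient, one extracts a configuration incompatible with well-foundedness of $>$.

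The hard part will be exactly this last step — establishing that the ground terms lying $\geq$-below a fixed $s$ have bounded depth. Well-foundedness of $>$ alone is not transparently enough (a well-order may have infinite proper initial segments), so the bound has to come from the interaction of well-foundedness with the subterm property and with the fact that, in the setting where $\Pset(\cdot)$ is used, all relevant terms range over the finitely many names occurring in $\M$ (plus the minimal name), hence over a finite signature and a finite set of names. The main technical effort is the bookkeeping that keeps the $=_\En$-steps separate from the strict $>$-steps, so that finiteness of $\En$-classes absorbs the former while well-foundedness handles the latter. Once the depth bound is in hand, $\bigcup_{t \in W_s}\Pos{t}$ is a finitely branching, prefix-closed tree of bounded height, hence finite, and \Cref{lem:finite-equivalence-classes} follows by taking the finite union over $s \in \M$.
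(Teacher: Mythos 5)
Your reduction is the same as the paper's: finiteness of $\Pset(\M)$ follows once you know that only finitely many ground terms $t$ satisfy $s \geq t$ for some $s \in \M$, with the $=_\En$ part of $\geq$ absorbed by the hypothesis that $\En$ has finite equivalence classes. The paper's proof is exactly this, compressed into two sentences: it asserts that well-foundedness of $>$ (together with $\M$ being finite and ground) makes the number of $\En$-equivalence classes of ground terms smaller than $\M$ finite, and then finiteness of each class and of each $\Pos{t}$ finishes the job.

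The problem is that your proposal never actually establishes that central assertion, and the argument you sketch for it cannot work as stated. You correctly observe that a well-founded total order may have infinite proper initial segments, so well-foundedness of $>$ alone does not bound $\{t \mid s > t\}$. But the contradiction you then aim for --- K\"onig's lemma producing terms $t_n$ with $s > t_n$ of unbounded depth, then ``a configuration incompatible with well-foundedness'' --- does not materialize: the subterm property gives, for each fixed $t_n$, only a finite descending chain from $t_n$ down to its leaves, and $\En$-totality merely arranges the classes of the $t_n$ into a well-order below $s$, which may perfectly well have order type $\omega$; none of this yields an infinite strictly descending chain, so well-foundedness is never contradicted. Concretely, for an RPO with $f >_\F g$ one has $f(a) > g^n(a)$ for every $n$, so a single ground term can dominate ground terms of unbounded depth while all the properties you invoke (reduction order, subterm property, $\En$-totality, finite $\En$-classes) hold; whatever makes the lemma true must therefore use something beyond them. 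The appeal to ``the finitely many names occurring in $\M$'' is likewise unsupported, since $\Pset(\M)$ quantifies over all ground $t$ with $s \geq t$, not only those built from $\names{\M}$. So the decisive step is a genuine gap --- one you flag yourself --- and the proposal is not a proof.
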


\begin{proof}
Since $\M$ is ground and finite and since $>$ is well founded then the number of $\En$-equivalence classes of ground terms smaller than $\M$ is finite. As $\En$ having finite equivalence classes implies, we directly conclude. 
\end{proof}

\paragraph{Saturated sets of rewrite rules}

Our generic procedure and its optimisations basically keep merging $\En$-overlapping rewrite rules until a fixpoint is reached, that is until the set of rewrite rules is saturated. We define below the properties satisfied by such saturated set.


\begin{definition}
\label{def:saturated}
Let $\R,\Rn$ be two sets of rewrites rule. We say that $\R$ is \emph{saturated w.r.t. $\Rn$} when:
\begin{itemize}
\item $\R \cup \Rn \normstep[\Rn,\En]^* \R$
\item for all $(\ell_1 \rightarrow r_1), (\ell_2 \rightarrow r_2) \in \R$, for all positions $p$, $(\overlapseteq{r_1 \rightarrow \ell_1}{p}{\ell_2 \rightarrow r_2}) \cup \R \cup \Rn \normstep[\Rn,\En]^* \R$
\item for all $(\ell_1 \rightarrow r_1) \in \R \cup \En$, for all $(\ell_2 \rightarrow r_2) \in \R$, for all positions $p$, $(\overlapset{\ell_1 \rightarrow r_1}{p}{\ell_2 \rightarrow r_2}) \cup \R \cup \Rn \normstep[\Rn,\En]^* \R$
\item for all $(\ell \rightarrow r) \in \R$, $\{ r \rightarrow \ell\} \cup \R \cup \Rn \normstep[\Rn,\En]^* \R$
\end{itemize}
\end{definition}


\subsubsection{Normalisation rules}

Recall that $>$ is a $\En$-strong reduction order compatible with $\Rn$. Before we state stating the correctness of our normalisation rules, we show that if $u > v$ then $\vars{v} \subseteq \vars{u}$. Indeed, assume that $u$ contains a variable $x$ that is not a variable of $v$. Thus $v > u = C[x,\ldots,x]$. As $>$ is stable by application of substitutions, we deduce that $v > C[v,\ldots,v] > C[C[v,\ldots,v],\ldots,C[v,\ldots,v]] > \ldots$ leading to a contradiction with $>$ being well founded.

\begin{lemma}[Rule \RNormR]
\label{lem:normalisation-normR}
Let $u \rightrwstep{p}{\sigma}{\ell}{r} v$ be a ground rewrite step and $\tr_0 = \rightrwlabel{p}{\sigma}{\ell}{r}$. Let $\M$ a set of terms greater than $u \Leftrightrwstep{\tr_0} v$.
If $r =_\En t \rightrwstep{p'}{\sigma'}{\ell'}{r'} s$ and $\ell' > r'$ then there exists a syntactic rewrite trace $\tr_R$ from $\En$ such that:
\[
    u \rightrwstep{p}{\sigma}{\ell}{s} u[s\sigma]_p \leftrwstep{p\cdot p'}{\sigma'\sigma}{\ell'}{r'} u[t\sigma]_p \Leftrwstep{p\cdot \tr_R\sigma}  v
\]
and denoting $\tr_1 = \rightrwlabel{p}{\sigma}{\ell}{s} \leftrwlabel{p\cdot p'}{\sigma'\sigma}{\ell'}{r'} (p\cdot \tr_R\sigma)$, we deduce that $\tr_1$ is a ground syntactic rewrite trace and:
\begin{enumerate}[label=\textbf{P\arabic*}]
    \item $u \rwLeftrightStep{\tr_1} v$, and \label{item:normalisation-normR-Leftright}
    \item $\M$ is greater than $u \rwLeftrightStep{\tr_1} v$, and \label{item:normalisation-normR-derive}
    \item for all $\omega \in \tr_1$, $\measure{\omega} \lessRwLbl{\Pset(\M)} (p,r\sigma,\ell\sigma)$ or ($\measure{\omega} \eqRwLbl{\Pset(\M)} (p,r\sigma,\ell\sigma)$ and $\omega\in p\cdot\tr_R\sigma$).\label{item:normalisation-normR-measure}
    \end{enumerate}
\end{lemma}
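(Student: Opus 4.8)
The plan is to produce $\tr_R$ explicitly and then check, in turn, the three conclusions: that $\tr_1$ is a legitimate ground syntactic rewrite trace, that $\M$ dominates every term visited along it, and that every rewrite label of $\tr_1$ satisfies the ranking bound. The trace $\tr_R$ comes directly from $r =_\En t$: fix a finite $\En$-equality chain from $t$ to $r$; since every equation $\ell'' = r'' \in \En$ yields $\ell''\beta =_\En r''\beta$ for all $\beta$, each one-step rewrite in that chain can be recast as a \emph{left} rewrite label whose order side condition holds via $=_\En$, so concatenating gives a left rewrite trace $\tr_R$ from $\En$ with $t \Leftrwstep{\tr_R} r$. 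Two applications of \Cref{lem:operation-properties} (instantiate by $\sigma$, then insert into the context $u[\_]_p$) then give $u[t\sigma]_p \Leftrwstep{p\cdot\tr_R\sigma} u[r\sigma]_p = v$.

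The single ordering fact that drives the whole argument is $\ell\sigma > s\sigma$: indeed $\ell' > r'$ gives $\ell'\sigma' > r'\sigma'$ by stability under substitution, hence $t > s$ since $>$ is an $\En$-compatible reduction order and $s = t[r'\sigma']_{p'}$, hence $t\sigma > s\sigma$; as $r =_\En t$ and $\En$ is closed under substitution, $r\sigma =_\En t\sigma$, so $r\sigma > s\sigma$ by $\En$-compatibility; and $\ell\sigma \geq r\sigma$ because $\tr_0$ is a valid ordered rewrite step, so $\ell\sigma > s\sigma$. This validates the step $u \rightrwstep{p}{\sigma}{\ell}{s} u[s\sigma]_p$, and the step $u[s\sigma]_p \leftrwstep{p\cdot p'}{\sigma'\sigma}{\ell'}{r'} u[t\sigma]_p$ is the routine identity $(u[t\sigma]_p)_{|p\cdot p'} = (t_{|p'})\sigma = \ell'\sigma'\sigma$, together with $(u[t\sigma]_p)[r'\sigma'\sigma]_{p\cdot p'} = u[s\sigma]_p$ (distributing $\sigma$ and using $s = t[r'\sigma']_{p'}$) and the order condition $\ell'\sigma'\sigma > r'\sigma'\sigma$. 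For groundness of all intermediate terms: $u$ and $v$ are ground; $\En$ is regular because it admits the $\En$-compatible reduction order $>$, so $\vars{t} = \vars{r}$, and $\ell' > r'$ forces $\vars{r'} \subseteq \vars{\ell'}$, whence $\vars{s} \subseteq \vars{r}$; since $r\sigma$ is a ground subterm of $v = u[r\sigma]_p$, $t\sigma$ and $s\sigma$ are ground, and the $\En$-steps of $p\cdot\tr_R\sigma$ preserve groundness by regularity. This settles the first conclusion.

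For the second conclusion, fix $s_u, s_v \in \M$ with $s_u \geq u$ and $s_v \geq v$, which exist since $\M$ is greater than $u \Leftrightrwstep{\tr_0} v$. Along $\tr_1$ the terms visited are $u$, then $u[s\sigma]_p$, which is $< u$ since $s\sigma < \ell\sigma = u_{|p}$, and then $u[t\sigma]_p$ and every later term, each of which equals $v$ modulo $\En$ (because $t\sigma =_\En r\sigma$ and every term visited by the $\En$-trace $p\cdot\tr_R\sigma$ is $\En$-equivalent to the others); $\En$-compatibility of $\geq$ makes $s_u$ dominate the first two and $s_v$ the rest. In particular $p$, $p\cdot p'$ and every $p\cdot q$ with $q$ a position of $\tr_R$ lie in $\Pset(\M)$, since each is a position of a term visited by $\tr_1$ and dominated by a member of $\M$. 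For the third conclusion I inspect each label of $\tr_1$. The label $\rightrwlabel{p}{\sigma}{\ell}{s}$ has ranking $(p, s\sigma, \ell\sigma)$, and $s\sigma < r\sigma$ gives $\lessRwLbl{\Pset(\M)} (p, r\sigma, \ell\sigma)$. For $\leftrwlabel{p\cdot p'}{\sigma'\sigma}{\ell'}{r'}$: if $p' \neq \varepsilon$ then $p$ is a strict prefix of $p\cdot p'$, giving $\lessRwLbl{\Pset(\M)}$; if $p' = \varepsilon$ then $t = \ell'\sigma'$ and $s = r'\sigma'$, so the ranking is $(p, s\sigma, t\sigma)$ and once more $s\sigma < r\sigma$ gives $\lessRwLbl{\Pset(\M)}$. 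For an $\En$-label $\leftrwlabel{p\cdot q}{\beta\sigma}{\ell''}{r''}$ of $p\cdot\tr_R\sigma$: if $q \neq \varepsilon$ then $p$ is a strict prefix of $p\cdot q$ and we are done; if $q = \varepsilon$ then $\ell''\beta$ and $r''\beta$ are among the terms visited by the chain $t \Leftrwstep{\tr_R} r$, hence both $=_\En r$, so $\ell''\beta\sigma =_\En r\sigma =_\En r''\beta\sigma$; thus the first ranking components agree modulo $\En$, and comparing the second ones, either $\ell\sigma > r\sigma$, in which case $\ell''\beta\sigma =_\En r\sigma < \ell\sigma$ and the ranking is $\lessRwLbl{\Pset(\M)} (p, r\sigma, \ell\sigma)$, or $\ell\sigma =_\En r\sigma$, in which case the ranking is $\eqRwLbl{\Pset(\M)} (p, r\sigma, \ell\sigma)$ and, as this label belongs to $p\cdot\tr_R\sigma$, the second alternative of the third conclusion holds.

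The main obstacle is precisely this last case distinction: the $\En$-labels of $\tr_R$ acting at the root do not decrease the position component, so one has to compare the rewritten terms and split according to whether the original step $\tr_0$ was strictly ordered ($\ell\sigma > r\sigma$) or only equal modulo $\En$ ($\ell\sigma =_\En r\sigma$, the case in which the ``$\omega \in p\cdot\tr_R\sigma$'' clause of the conclusion is indispensable). Lining this dichotomy up with the two alternatives of the statement, together with the degenerate position $p' = \varepsilon$ and the fact (easy but required) that $\En$ is regular, which the groundness bookkeeping silently uses, are the points that need care; the remainder is a direct application of the $\En$-strong-order properties recalled in \Cref{sec:prelim}.
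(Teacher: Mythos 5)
Your proposal is correct and follows essentially the same route as the paper's proof: build $\tr_R$ from the $\En$-equality chain $t =_\En r$, derive the key inequality $\ell\sigma > s\sigma$ from $\ell' > r'$ via substitution/context closure and $\En$-compatibility, and establish \ref{item:normalisation-normR-measure} by the same case analysis on whether the label's position strictly extends $p$, with the root-position $\En$-labels handled by the dichotomy $\ell\sigma > r\sigma$ versus $\ell\sigma =_\En r\sigma$. The extra bookkeeping you supply (groundness via regularity of $\En$, and the explicit check that all positions lie in $\Pset(\M)$) is left implicit in the paper but is consistent with it.
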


\begin{proof}
By definition, $u \rightrwstep{p}{\sigma}{\ell}{r} v$ implies $v = u[r\sigma]_p$, and $u_{|p} = \ell\sigma$ and $\ell\sigma \geq r\sigma$. 
By definition $t \rightrwstep{p'}{\sigma'}{\ell'}{r'} s$, we deduce that $t_{|p'} = \ell'\sigma'$ and $s = t[r'\sigma']_{p'}$. Since $r =_\En t$, there exists $\tr_R$ a syntactic rewrite trace from $\En$ such that $t \Leftrwstep{\tr_R} r$. Thus, $t\sigma \Leftrwstep{\tr_R\sigma} r\sigma$, implying $u[t\sigma]_p \Leftrwstep{p\cdot\tr_R\sigma} v$. Since $s = t[r'\sigma']_{p'}$ and $t_{|p'} = \ell\sigma'$, we have $u[s\sigma]_p \leftrwstep{p\cdot p'}{\sigma'\sigma}{\ell'}{r'} u[t\sigma]_p$. Recall that $(\ell' \rightarrow r') \in \Rn$ meaning that $\vars{r'} \subseteq \vars{\ell'}$.

Recall that $\ell' > r'$. Hence, $t = t[\ell'\sigma']_{p'} > t[r'\sigma']_{p'} = s$. As $>$ is $\En$-compatible and $r =_\En t$, we deduce $r > s$. As $\ell\sigma \geq r\sigma$, we deduce that $\ell\sigma > s\sigma$. Thus, $u \rightrwstep{p}{\sigma}{\ell}{s} u[s\sigma]_p$ and so \Cref{item:normalisation-normR-Leftright} holds. By construction of $\tr_1$, as $\M$ is greater then $u \Leftrightrwstep{\tr_0} v$, we directly obtain that $\M$ is greater than $u \Leftrightrwstep{\tr_1} v$, i.e. \Cref{item:normalisation-normR-derive}.

Finally, we already showed that $r\sigma > s\sigma$ thus $(p,s\sigma,\ell\sigma) \lessRwLbl{\Pset(\M)} (p,r\sigma,\ell\sigma)$. Additionally, if $p' \neq \varepsilon$ then we trivially have that $(p\cdot p',r'\sigma'\sigma,\ell'\sigma'\sigma) \lessRwLbl{\Pset(\M)} (p,r\sigma,\ell\sigma)$. Otherwise, $r =_\En \ell'\sigma'$ and $s = r'\sigma'$. Hence $r\sigma =_\En \ell'\sigma'\sigma > r'\sigma'\sigma$ and so $(p,r'\sigma'\sigma,\ell'\sigma'\sigma) \lessRwLbl{\Pset(\M)} (p,r\sigma,\ell\sigma)$. It remains to consider $\omega \in p\cdot \tr_R\sigma$, that is $\omega = \leftrwlabel{p''}{\alpha\sigma}{a}{b}$ with $p \leq p''$: Once again, if $p'' \neq p$ then we directly have that $\measure{\omega} \lessRwLbl{\Pset(\M)} (p,r\sigma,\ell\sigma)$. Otherwise, it implies that $a\alpha\sigma =_\En b\alpha\sigma =_\En r\sigma$. Therefore since $\ell\sigma \geq r\sigma$, we deduce that $\measure{\omega} \leqRwLbl{\Pset(\M)} (p,r\sigma,\ell\sigma)$ which concludes the proof of \Cref{item:normalisation-normR-measure}.

We can represent graphically the transformation as follows:
\begin{center}
    \begin{tikzpicture}[
        term/.style={}
        ]
        \def\length{1.5cm}
        \node[anchor=mid] (L) {};
        \node[right=15cm of L.mid,anchor=mid] (R) {};
        \draw[dashed] (L) -- (R);

        \node[draw,rectangle,above left= 1cm and 1.2cm of L,anchor=west] (Label) {when $\ell\sigma > r\sigma$};

        \node[fill=white,right=4cm of L.mid] (Label) {$\xrightarrow{\mathit{transformation}}$};

        \node[fill=white,term,above right=\length and \length of L.mid,anchor=mid] (T0) {$u$};
        \node[fill=white,term,below right=\length and \length of T0.mid,anchor=mid] (T1) {$v$};

        \tikzrightrwstep{T0}{p}{\sigma}{\ell}{r}{T1}

        \node[term,fill=white,right=5cm of T0.mid,anchor=mid] (T0') {$u$};
        \node[term,fill=white,below right= 2*\length and 2*\length of T0'.mid,anchor=mid] (T1') {$u[s\sigma]$};
        \node[term,fill=white,above right=\length and \length of T1'.mid,anchor=mid] (T2') {$u[t\sigma]_p$};
        \node[term,fill=white,right= 1.2*\length of T2'.mid,anchor=mid] (T3') {$v$};

        \tikzrightrwstep{T0'}{p}{\sigma}{\ell}{s}{T1'}
        \tikzleftrwstep{T1'}{p\cdot p'}{\sigma'}{\ell'}{r'}{T2'}
        \tikzLeftrwstep{T2'}{p \cdot \tr_R\sigma}{T3'}
    \end{tikzpicture}
\end{center}

\begin{center}
    \begin{tikzpicture}[
        term/.style={}
        ]
        \def\length{1.5cm}
        \node[anchor=mid] (L) {};
        \node[right=15cm of L.mid,anchor=mid] (R) {};
        \draw[dashed] (L) -- (R);

        \node[draw,rectangle,above left= 1cm and 1.2cm of L,anchor=west] (Label) {when $\ell\sigma =_\En r\sigma$};

        \node[fill=white,right=4cm of L.mid] (Label) {$\xrightarrow{\mathit{transformation}}$};

        \node[fill=white,term,right=\length of L.mid,anchor=mid] (T0) {$u$};
        \node[fill=white,term,right=\length of T0.mid,anchor=mid] (T1) {$v$};

        \tikzrightrwstep{T0}{p}{\sigma}{\ell}{r}{T1}

        \node[term,fill=white,right=5cm of T0.mid,anchor=mid] (T0') {$u$};
        \node[term,fill=white,below right= \length and \length of T0'.mid,anchor=mid] (T1') {$u[s\sigma]$};
        \node[term,fill=white,above right=\length and \length of T1'.mid,anchor=mid] (T2') {$u[t\sigma]_p$};
        \node[term,fill=white,right= 1.2*\length of T2'.mid,anchor=mid] (T3') {$v$};

        \tikzrightrwstep{T0'}{p}{\sigma}{\ell}{s}{T1'}
        \tikzleftrwstep{T1'}{p\cdot p'}{\sigma'}{\ell'}{r'}{T2'}
        \tikzLeftrwstep{T2'}{p \cdot \tr_R\sigma}{T3'}
    \end{tikzpicture}
\end{center}
\end{proof}

\Cref{lem:normalisation-normR} focuses only a right rewrite step $u \rightrwstep{p}{\sigma}{\ell}{r} v$ but we can easily have similar result on a left rewrite step by using the properties of the inverse operator in \Cref{lem:reverse-measure,lem:operation-properties}.


\begin{lemma}[Rule \RNormL]
\label{lem:normalisation-normL}
Let $u \rightrwstep{p}{\sigma}{\ell}{r} v$ be a ground rewrite step and $\tr_0 = \rightrwlabel{p}{\sigma}{\ell}{r}$. Let $\M$ a set of terms greater than $u \Leftrightrwstep{\tr_0} v$.
If $\ell = f(\ell_1,\ldots,\ell_n)$ and $\ell_i =_\En t \rightrwstep{p'}{\sigma'}{\ell'}{r'} s$ and $\ell' > r'$ and $i \in \{1, \ldots, n\}$ then there exists a ground syntactic rewrite trace $\tr_1$ from $\En \cup \{r \rightarrow \ell[s]_i, \ell[s]_i \rightarrow r, \ell' \rightarrow r'\}$ such that:
\begin{enumerate}[label=\textbf{P\arabic*}]
\item $u \Leftrightrwstep{\tr_1} v$, and \label{item:normalisation-normL-Leftright}
\item $\M$ greater than $u \Leftrightrwstep{\tr_1} v$, and \label{item:normalisation-normL-derive}
\item for all $\omega \in \tr_1$, $\measure{\omega} \lessRwLbl{\Pset(\M)} (p,r\sigma,\ell\sigma)$.\label{item:normalisation-normL-measure}
\end{enumerate}
\end{lemma}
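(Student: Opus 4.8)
The plan is to follow the proof of \Cref{lem:normalisation-normR} step by step, adapting it to the case where \RNormL normalises an argument $\ell_i$ of the left-hand side and keeps the produced rule in both orientations. First I would unfold the hypotheses: $u \rightrwstep{p}{\sigma}{\ell}{r} v$ gives $u_{|p}=\ell\sigma$, $v=u[r\sigma]_p$ and $\ell\sigma \geq r\sigma$, while $t \rightrwstep{p'}{\sigma'}{\ell'}{r'} s$ with $\ell'>r'$ gives $t_{|p'}=\ell'\sigma'$, $s=t[r'\sigma']_{p'}$ and $t>s$. From $\ell_i =_\En t > s$ and $\En$-compatibility of $>$ I obtain $\ell_i > s$, and hence, writing $\ell[s]_i = f(\ell_1,\dots,\ell_{i-1},s,\ell_{i+1},\dots,\ell_n)$, closure of $>$ under contexts and substitutions yields $\ell\sigma > \ell[s]_i\sigma$. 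I would also record, using that $u>v$ implies $\vars{v}\subseteq\vars{u}$ (shown just before \Cref{lem:normalisation-normR}) together with regularity of $\En$, that $\vars{s}\subseteq\vars{t}=\vars{\ell_i}\subseteq\vars{\ell}$, so $t\sigma$, $s\sigma$ and all terms appearing below are ground.

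Next I would exhibit $\tr_1$ as a concatenation of three blocks. \emph{Block 1:} since $\ell_i =_\En t$ there is a syntactic $\En$-trace $\ell_i \Leftrightrwstep{\tr_i} t$; instantiating by $\sigma$ and lifting to position $p\cdot i$ inside $u$ (via \Cref{lem:operation-properties}) gives $u \Leftrightrwstep{(p\cdot i)\cdot\tr_i\sigma} u[\ell[t]_i\sigma]_p$. \emph{Block 2:} instantiating $t \rightrwstep{p'}{\sigma'}{\ell'}{r'} s$ by $\sigma$ --- the ordering side condition holding because $\ell'>r'$ --- and lifting it yields $u[\ell[t]_i\sigma]_p \rightrwstep{p\cdot i\cdot p'}{\sigma'\sigma}{\ell'}{r'} u[\ell[s]_i\sigma]_p$. \emph{Block 3:} to reach $v=u[r\sigma]_p$ I use the produced rule in the orientation compatible with $>$: if $\ell[s]_i\sigma \geq r\sigma$, the step is $u[\ell[s]_i\sigma]_p \rightrwstep{p}{\sigma}{\ell[s]_i}{r} v$; otherwise $r\sigma > \ell[s]_i\sigma$ and the step is $u[\ell[s]_i\sigma]_p \leftrwstep{p}{\sigma}{r}{\ell[s]_i} v$. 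This two-case choice is precisely why \RNormL outputs both $\ell[s]_i \to r$ and $r \to \ell[s]_i$, and it is the only genuine case split in the proof. The resulting $\tr_1$ is a ground syntactic trace over $\En \cup \{r\to\ell[s]_i,\ \ell[s]_i\to r,\ \ell'\to r'\}$ from $u$ to $v$, which gives \ref{item:normalisation-normL-Leftright}.

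For \ref{item:normalisation-normL-derive} I would check that $\M$ is greater than every intermediate term: all terms along Block 1, including $u[\ell[t]_i\sigma]_p$, are $=_\En u$ because an $\En$-trace stays inside one $=_\En$-class, and $u[\ell[s]_i\sigma]_p < u$ because $\ell\sigma > \ell[s]_i\sigma$; combining with the hypothesis that $\M$ is greater than $u$ and than $v$, and with $\En$-compatibility of $>$, gives the claim. For \ref{item:normalisation-normL-measure} I would compare ranks: every label of $\tr_1$ other than the Block 3 label sits at a position having $p$ as a strict prefix (those of $\tr_i\sigma$ lie below $p\cdot i$, and the Block 2 label is at $p\cdot i\cdot p'$), so each is $\lessRwLbl{\Pset(\M)}(p,r\sigma,\ell\sigma)$; the Block 3 label is at position $p$ with rank $(p,r\sigma,\ell[s]_i\sigma)$ and $\ell[s]_i\sigma<\ell\sigma$ in the first case, or rank $(p,\ell[s]_i\sigma,r\sigma)$ and $\ell[s]_i\sigma<r\sigma$ in the second, hence $\lessRwLbl{\Pset(\M)}(p,r\sigma,\ell\sigma)$ in both; and all positions involved belong to $\Pset(\M)$ by \ref{item:normalisation-normL-derive}. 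The main obstacle is one of care rather than idea: tracking positions and substitutions correctly through the lifting at $p\cdot i$, and checking that the Block 3 rank comparison is strict in \emph{both} branches of the case split. Since this reuses exactly the ingredients already developed for \Cref{lem:normalisation-normR}, no new technique is required.
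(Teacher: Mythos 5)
Your proposal is correct and follows essentially the same route as the paper's proof: the same three-block decomposition (an $\En$-trace at $p\cdot i$, the $\ell'\to r'$ step at $p\cdot i\cdot p'$, then the produced rule at $p$ in whichever orientation respects $>$), the same key inequality $\ell\sigma>\ell[s]_i\sigma$ derived from $\ell'>r'$ and $\En$-compatibility, and the same rank comparisons for \ref{item:normalisation-normL-measure}. The only differences are cosmetic (you place the $\ell[s]_i\sigma=_\En r\sigma$ subcase with the right-oriented step rather than the left-oriented one, and you spell out groundness and the \ref{item:normalisation-normL-derive} check in more detail than the paper does).
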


\begin{proof}
Let us denote $w = \ell[s]_i$. Since $\ell_{|i} =_\En t$, there exists $\tr_L$ a syntactic rewrite trace from $\En$ such that $\ell_{|i} \Rightrwstep{\tr_t} t$. By definition of $t \rightrwstep{p'}{\sigma'}{\ell'}{r'} s$, we have $t_{|p'} = \ell'\sigma'$ and $s = t[r'\sigma']_{p'}$.
Moreover, $u \rightrwstep{p}{\sigma}{\ell}{r} v$ implies $u_{|p} = \ell\sigma$ and $v = u[r\sigma]_p$. Therefore, $u \Rightrwstep{p \cdot i \cdot \tr_L\sigma} u[\ell\sigma[t\sigma]_i]_p  \rightrwstep{p\cdot i\cdot p'}{\sigma'\sigma}{\ell'}{r'} u[\ell\sigma[t\sigma[r'\sigma'\sigma]_{p'}]_i]_p = u[\ell\sigma[s\sigma]_i]_p = u[w\sigma]_p$.

Let us define $\omega_M = \rightrwlabel{p}{\sigma}{w}{r}$ when $w\sigma > r\sigma$ and $\omega_M = \leftrwlabel{p}{\sigma}{r}{w}$ otherwise. In both cases, we deduce that $u[w\sigma]_p \Leftrightrwstep{\omega_M} u[r\sigma]_p = v$. Defining $\tr_1 = (p\cdot i \cdot \tr_L\sigma) \rightrwlabel{p\cdot i \cdot p'}{\sigma'\sigma}{\ell'}{r'} \omega_M$, we deduce that $u \rwLeftrightStep{\tr_1} v$ proving \Cref{item:normalisation-normL-Leftright}. 
Since $\M$ is greater than $u \Leftrightrwstep{\tr_0} v$, we obtain by construction of $\tr_1$ that $\M$ is greater than $u \Leftrightrwstep{\tr_1} v$ thus proving \Cref{item:normalisation-normL-derive}.

We now focus on \Cref{item:normalisation-normL-measure}: Since $\ell' > r'$, we deduce that $\ell'\sigma' > r'\sigma'$ and so $\ell[t]_i > \ell[s]_i$. As $>$ is $\En$-compatible and $t =_\En \ell_i$, we deduce that $\ell > w$, implying $\ell\sigma > w\sigma$.
If $w\sigma > r\sigma$ then we obtain $\measure{\omega_M} = (p,r\sigma,w\sigma) \lessRwLbl{\Pset(\M)} (p,r\sigma,\ell\sigma)$ else if $r\sigma > w\sigma$ then $\measure{\omega_M} = (p,w\sigma,r\sigma) \lessRwLbl{\Pset(\M)} (p,r\sigma,\ell\sigma)$, else $r\sigma =_\En w\sigma$ but $\ell\sigma > w\sigma$ hence $\measure{\omega_M} = (p,w\sigma,r\sigma) \lessRwLbl{\Pset(\M)} (p,r\sigma,\ell\sigma)$. 

Consider now $\omega = \rightrwlabel{p\cdot i \cdot p'}{\sigma'\sigma}{\ell'}{r'}$, we have $\measure{\omega} = (p\cdot i \cdot p',r'\sigma'\sigma,\ell'\sigma'\sigma)$. As $p$ is a strict prefix of $p\cdot i \cdot p'$, we directly have $\measure{\omega} \lessRwLbl{\Pset(\M)} (p,r\sigma,\ell\sigma)$. For the same reason, we deduce that for all $\omega \in p\cdot i \cdot \tr_L\sigma$, $\measure{\omega} < (p,r\sigma,\ell\sigma)$.

We can represent graphically the transformation as follows:
\begin{center}
    \begin{tikzpicture}[
        term/.style={}
        ]
        \def\length{1.5cm}
        \node[anchor=mid] (L) {};
        \node[right=15cm of L.mid,anchor=mid] (R) {};
        \draw[dashed] (L) -- (R);

        \node[draw,rectangle,above left= 1cm and 1.2cm of L,anchor=west,text width=2.1cm] (Label) {when $\ell\sigma > r\sigma$ and $w\sigma < r\sigma$};

        \node[fill=white,right=4.45cm of L.mid] (Label) {$\xrightarrow{\mathit{transformation}}$};

        \node[fill=white,term,above right=\length and 1.5cm of L.mid,anchor=mid] (T0) {$u$};
        \node[fill=white,term,below right=\length and \length of T0.mid,anchor=mid] (T1) {$v$};

        \tikzrightrwstep{T0}{p}{\sigma}{\ell}{r}{T1}

        \node[term,fill=white,right=5.5cm of T0.mid,anchor=mid] (T0') {$u$};
        \node[term,fill=white,right= 2*\length of T0'.mid,anchor=mid] (T1') {$u[\ell\sigma[t\sigma]_i]_p$};
        \node[term,fill=white,below right=2*\length and 2*\length of T1'.mid,anchor=mid] (T2') {$u[w\sigma]_p$};
        \node[term,fill=white,above right= \length and \length of T2'.mid,anchor=mid] (T3') {$v$};

        \tikzRightrwstep{T0'}{p \cdot i \cdot \tr_L\sigma}{T1'}
        \draw[->] (T1') edge node[auto,sloped] {\tiny$\ell' \rightarrow r'$} node[auto,sloped,below] {\tiny$p\cdot i\cdot p', \sigma'\sigma$} (T2');
        \draw[<-] (T2') edge node[auto,sloped] {\tiny$w \leftarrow r$} node[auto,sloped,below] {\tiny$p, \sigma$} (T3');
    \end{tikzpicture}
\end{center}
\begin{center}
    \begin{tikzpicture}[
        term/.style={}
        ]
        \def\length{1.5cm}
        \node[anchor=mid] (L) {};
        \node[right=15cm of L.mid,anchor=mid] (R) {};
        \draw[dashed] (L) -- (R);

        \node[draw,rectangle,above left= 1cm and 1.2cm of L,anchor=west,text width=2.1cm] (Label) {when $\ell\sigma > r\sigma$ and $w\sigma =_\En r\sigma$};

        \node[fill=white,right=4.5cm of L.mid] (Label) {$\xrightarrow{\mathit{transformation}}$};

        \node[fill=white,term,above right=\length and 1.5cm of L.mid,anchor=mid] (T0) {$u$};
        \node[fill=white,term,below right=\length and \length of T0.mid,anchor=mid] (T1) {$v$};

        \draw[->] (T0) edge node[auto,sloped] {\tiny$\ell \rightarrow r$} node[auto,sloped,below] {\tiny$p, \sigma$} (T1);

        \node[term,fill=white,right=5.5cm of T0.mid,anchor=mid] (T0') {$u$};
        \node[term,fill=white,right= 2*\length of T0'.mid,anchor=mid] (T1') {$u[\ell\sigma[t\sigma]_i]_p$};
        \node[term,fill=white,below right=\length and \length of T1'.mid,anchor=mid] (T2') {$u[w\sigma]_p$};
        \node[term,fill=white,right= 1.5*\length of T2'.mid,anchor=mid] (T3') {$v$};

        \draw (T0') edge[double equal sign distance,-Implies] node[auto,sloped] {\tiny$p \cdot i \cdot \tr_L\sigma$} (T1');
        \draw[->] (T1') edge node[auto,sloped] {\tiny$\ell' \rightarrow r'$} node[auto,sloped,below] {\tiny$p\cdot i \cdot p', \sigma'\sigma$} (T2');
        \draw[<-] (T2') edge node[auto,sloped] {\tiny$w \leftarrow r$} node[auto,sloped,below] {\tiny$p, \sigma$} (T3');
    \end{tikzpicture}
\end{center}
\begin{center}
    \begin{tikzpicture}[
        term/.style={}
        ]
        \def\length{1.5cm}
        \node[anchor=mid] (L) {};
        \node[right=15cm of L.mid,anchor=mid] (R) {};
        \draw[dashed] (L) -- (R);

        \node[draw,rectangle,above left= 1cm and 1.2cm of L,anchor=west,text width=2.1cm] (Label) {when $\ell\sigma > r\sigma$ and $w\sigma > r\sigma$};

        \node[fill=white,right=4.5cm of L.mid] (Label) {$\xrightarrow{\mathit{transformation}}$};

        \node[fill=white,term,above right=\length and 1.5cm of L.mid,anchor=mid] (T0) {$u$};
        \node[fill=white,term,below right=2*\length and 2*\length of T0.mid,anchor=mid] (T1) {$v$};

        \draw[->] (T0) edge node[auto,sloped] {\tiny$\ell \rightarrow r$} node[auto,sloped,below] {\tiny$p, \sigma$} (T1);

        \node[term,fill=white,right=5.5cm of T0.mid,anchor=mid] (T0') {$u$};
        \node[term,fill=white,right= 2*\length of T0'.mid,anchor=mid] (T1') {$u[\ell\sigma[t\sigma]_i]_p$};
        \node[term,fill=white,below right=\length and \length of T1'.mid,anchor=mid] (T2') {$u[w\sigma]_p$};
        \node[term,fill=white,below right= \length and \length of T2'.mid,anchor=mid] (T3') {$v$};

        \draw (T0') edge[double equal sign distance,-Implies] node[auto,sloped] {\tiny$p \cdot i \cdot \tr_L\sigma$} (T1');
        \draw[->] (T1') edge node[auto,sloped] {\tiny$\ell' \rightarrow r'$} node[auto,sloped,below] {\tiny$p\cdot i\cdot p', \sigma'\sigma$} (T2');
        \draw[->] (T2') edge node[auto,sloped] {\tiny$w \rightarrow r$} node[auto,sloped,below] {\tiny$p, \sigma$} (T3');
    \end{tikzpicture}
\end{center}
\begin{center}
    \begin{tikzpicture}[
        term/.style={}
        ]
        \def\length{1.5cm}
        \node[anchor=mid] (L) {};
        \node[right=15cm of L.mid,anchor=mid] (R) {};
        \draw[dashed] (L) -- (R);

        \node[draw,rectangle,above left= 1cm and 1.2cm of L,anchor=west,text width=2.3cm] (Label) {when $\ell\sigma =_\En r\sigma$};

        \node[fill=white,right=4.5cm of L.mid] (Label) {$\xrightarrow{\mathit{transformation}}$};

        \node[fill=white,term,right=1.5cm of L.mid,anchor=mid] (T0) {$u$};
        \node[fill=white,term,right=\length of T0.mid,anchor=mid] (T1) {$v$};

        \draw[->] (T0) edge node[auto,sloped] {\tiny$\ell \rightarrow r$} node[auto,sloped,below] {\tiny$p, \sigma$} (T1);

        \node[term,fill=white,right=5.5cm of T0.mid,anchor=mid] (T0') {$u$};
        \node[term,fill=white,right= 2*\length of T0'.mid,anchor=mid] (T1') {$u[\ell\sigma[t\sigma]_i]_p$};
        \node[term,fill=white,below right=\length and \length of T1'.mid,anchor=mid] (T2') {$u[w\sigma]_p$};
        \node[term,fill=white,above right= \length and \length of T2'.mid,anchor=mid] (T3') {$v$};

        \draw (T0') edge[double equal sign distance,-Implies] node[auto,sloped] {\tiny$p \cdot i \cdot \tr_L\sigma$} (T1');
        \draw[->] (T1') edge node[auto,sloped] {\tiny$\ell' \rightarrow r'$} node[auto,sloped,below] {\tiny$p\cdot i\cdot p', \sigma'\sigma$} (T2');
        \draw[<-] (T2') edge node[auto,sloped] {\tiny$w \leftarrow r$} node[auto,sloped,below] {\tiny$p, \sigma$} (T3');
    \end{tikzpicture}
\end{center}
\end{proof}


\begin{lemma}[Rule \RSubsume]
\label{lem:normalisation-subsume}
Let $u \rightrwstep{p}{\sigma}{\ell}{r} v$ be a ground rewrite step and $\tr_0 = \rightrwlabel{p}{\sigma}{\ell}{r}$. Let $\M$ a set of terms greater than $u \Leftrightrwstep{\tr_0} v$.
If there exist $\ell' \rightarrow r'$, a term context $C[\_]$ and a substitution $\sigma'$ such that $C[\ell'\sigma'] \eqES{\En} \ell$ and $C[r'\sigma'] \eqE{\En} r$ then there exists $\tr_1$ a ground syntactic rewrite trace from $\En \cup \{ \ell' \rightarrow r' \}$ such that:
\begin{enumerate}[label=\textbf{P\arabic*}]
\item $u \Leftrightrwstep{\tr_1} v$, and \label{item:normalisation-subsume-Leftright}
\item $\M$ is greater than $u \Leftrightrwstep{\tr_1} v$, and \label{item:normalisation-subsume-derive}
\item for all $\omega \in \tr_1$, $\measure{\omega} \leqRwLbl{\Pset(\M)} (p,r\sigma,\ell\sigma)$.\label{item:normalisation-subsume-measure}
\end{enumerate}
\end{lemma}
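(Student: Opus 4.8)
The plan is to imitate the proofs of \Cref{lem:normalisation-normR} and \Cref{lem:normalisation-normL}, replacing the single label $\tr_0$ by a trace that first moves the redex into the shape of the subsuming rule modulo $\En$, then performs one ordered step with $\ell'\rightarrow r'$, then comes back down to $v$ by $\En$-steps. Concretely, write $q$ for the position of the hole in $C[\_]$. Since both strict equality and $=_\En$ are stable under substitution, the hypotheses give $\ell\sigma \eqES{\En} (C\sigma)[\ell'\sigma'\sigma]$ and $(C\sigma)[r'\sigma'\sigma] =_\En r\sigma$, where $C\sigma$ is the context obtained by instantiating the non-hole part of $C$. From $u_{|p} = \ell\sigma$ one obtains an $\En$-trace $\tr_L$ with $u \Leftrightrwstep{p\cdot\tr_L} u[(C\sigma)[\ell'\sigma'\sigma]]_p$; crucially, because $\eqES{\En}$ keeps the root symbol fixed, $\tr_L$ rewrites only \emph{strictly} below the root, so every label of $p\cdot\tr_L$ acts at a strict extension of $p$. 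Next take the middle label $\omega_m = \rightrwlabel{p\cdot q}{\sigma'\sigma}{\ell'}{r'}$, and from $(C\sigma)[r'\sigma'\sigma] =_\En r\sigma$ obtain an $\En$-trace $\tr_R$ with $u[(C\sigma)[r'\sigma'\sigma]]_p \Leftrightrwstep{p\cdot\tr_R} v$. Then $\tr_1 = (p\cdot\tr_L)\,\omega_m\,(p\cdot\tr_R)$ is a ground syntactic rewrite trace from $\En\cup\{\ell'\rightarrow r'\}$ (all terms involved are ground since $u$ and $v$ are), which gives property~\ref{item:normalisation-subsume-Leftright}.

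The step I expect to be the crux is verifying that $\omega_m$ really is an ordered rewrite label, i.e. $\ell'\sigma'\sigma \geq r'\sigma'\sigma$. I would argue by $\En$-totality on ground terms: otherwise $r'\sigma'\sigma > \ell'\sigma'\sigma$, and then monotonicity of the reduction order gives $(C\sigma)[r'\sigma'\sigma] > (C\sigma)[\ell'\sigma'\sigma]$, whence by $\En$-compatibility, together with $(C\sigma)[r'\sigma'\sigma] =_\En r\sigma$ and $(C\sigma)[\ell'\sigma'\sigma] =_\En \ell\sigma$, we get $r\sigma > \ell\sigma$ — contradicting $\ell\sigma \geq r\sigma$, which holds because $u \rightrwstep{p}{\sigma}{\ell}{r} v$ is ordered. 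Hence $\ell'\sigma'\sigma > r'\sigma'\sigma$ or $\ell'\sigma'\sigma =_\En r'\sigma'\sigma$; in both cases $\omega_m$ is a right label, and the second possibility additionally forces $\ell\sigma =_\En r\sigma$.

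Property~\ref{item:normalisation-subsume-derive} follows by observing that every intermediate term along $\tr_1$ is in the $=_\En$-class of $u$ (for the prefixes ending within or at the end of $p\cdot\tr_L$) or in that of $v$ (for the prefixes starting at $u[(C\sigma)[r'\sigma'\sigma]]_p$), since each $\En$-step stays inside a fixed $=_\En$-class and $(C\sigma)[\ell'\sigma'\sigma] =_\En \ell\sigma$, $(C\sigma)[r'\sigma'\sigma] =_\En r\sigma$; as $\M$ is greater than both $u$ and $v$ and $\geq$ is closed under $=_\En$ ($\En$-compatibility), $\M$ is greater than all of them, and in particular every position touched by $\tr_1$ lies in $\Pset(\M)$. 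For property~\ref{item:normalisation-subsume-measure}, the labels of $p\cdot\tr_L$ act at strict extensions of $p$, hence have rank $\lessRwLbl{\Pset(\M)}(p,r\sigma,\ell\sigma)$; for a label of $p\cdot\tr_R$ or for $\omega_m$ we split on its position: if it strictly extends $p$ its rank is again $\lessRwLbl{\Pset(\M)}$, and if it equals $p$ then the subterm being rewritten is $=_\En r\sigma$ (for $\tr_R$, because all terms there are $=_\En r\sigma$; for $\omega_m$, equality of the position with $p$ forces $C=\_$, so $\ell'\sigma'\sigma =_\En \ell\sigma$ and $r'\sigma'\sigma =_\En r\sigma$), so using $\ell\sigma \geq r\sigma$ the rank is $\leqRwLbl{\Pset(\M)}(p,r\sigma,\ell\sigma)$. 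Besides the ordering argument for $\omega_m$, the one delicate point is the bookkeeping that \emph{strict} equality modulo $\En$ on the left-hand side — rather than plain $=_\En$ — is precisely what confines $\tr_L$ below $p$ and thus makes its labels' rank drop strictly; on the right side, where only $=_\En$ is available, the weaker bound $\leqRwLbl{\Pset(\M)}$ (not $\lessRwLbl{\Pset(\M)}$) in property~\ref{item:normalisation-subsume-measure} cannot be improved.
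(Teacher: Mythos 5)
Your proposal is correct and follows essentially the same route as the paper's proof: build $\tr_1 = (p\cdot\tr_L\sigma)\,\omega_m\,(p\cdot\tr_R\sigma)$, use the strictness of $\eqES{\En}$ to confine $\tr_L$ strictly below $p$, establish $\ell'\sigma'\sigma \geq r'\sigma'\sigma$ by the same contradiction via monotonicity and $\En$-compatibility against $\ell\sigma \geq r\sigma$, and conclude the measure bounds by the same case split on whether a label's position strictly extends $p$. The only difference is presentational (you spell out the $=_\En$-class argument for property~\ref{item:normalisation-subsume-derive} slightly more explicitly than the paper does).
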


\begin{proof}
By definition of $C[\ell'\sigma'] \eqES{\En} \ell$, there exists $\tr_L$ a syntactic rewrite trace from $\En$ such that $\ell \rwRightStep{\tr_L} C[\ell'\sigma']$ and for all $\omega \in \tr_L$, $\pos(\omega) \neq \varepsilon$. As $u \rightrwstep{p}{\sigma}{\ell}{r} v$ implies $u_{|p} = \ell\sigma$ and $v = u[r\sigma]_p$, we deduce that $u \Rightrwstep{p \cdot \tr_L\sigma} u[C\sigma[\ell'\sigma'\sigma]]_p$.
Similarly, $C[r'\sigma'] \eqE{\En} r$ implies that there exists $\tr_R$ a syntactic rewrite trace from $\En$ such that $C[r'\sigma'] \Rightrwstep{\tr_R} r$. Hence $u[C\sigma[r'\sigma'\sigma]]_p \rwRightStep{p \cdot \tr_R\sigma} v$.

Recall that $u \rightrwstep{p}{\sigma}{\ell}{r} v$ implies $\ell\sigma \geq r\sigma$. By contradiction, assume that $r'\sigma'\sigma > \ell'\sigma'\sigma$. Since $>$ is a $\En$-compatible reduction ordering, we deduce that $r\sigma =_\En C\sigma[r'\sigma'\sigma] > C\sigma[\ell'\sigma'\sigma] =_\En \ell\sigma$ and so $r\sigma > \ell\sigma$, which is a contradiction with $\ell\sigma \geq r\sigma$. Therefore $r'\sigma'\sigma \leq \ell'\sigma'\sigma$ and so we obtain $C\sigma[\ell'\sigma'\sigma] \rightrwstep{p'}{\sigma'\sigma}{\ell'}{r'} C\sigma[r'\sigma'\sigma]$ with $C[\_]_{|p'} = \_$ which entails $u[C\sigma[\ell'\sigma'\sigma]]_p \rightrwstep{p\cdot p'}{\sigma'\sigma}{\ell'}{r'} u[C\sigma[r'\sigma'\sigma]]_p$. Defining $\tr_1 = (p\cdot \tr_L\sigma) \rightrwlabel{p\cdot p'}{\sigma'\sigma}{\ell'}{r'} (p\cdot \tr_R\sigma)$, we deduce that $u \Leftrightrwstep{\tr_1} v$, yielding \Cref{item:normalisation-subsume-Leftright}. 
As $\M$ is greater than $u \Leftrightrwstep{\tr_0} v$, we also directly obtain by construction of $\tr_1$ that $\M$ is greater than $u \Leftrightrwstep{\tr_1} v$, hence proving \Cref{item:normalisation-subsume-derive}.

We now focus on \Cref{item:normalisation-subsume-measure}. We know that for all $\omega \in \tr_L$, $\pos(\omega) \neq \varepsilon$. Hence for all $\omega \in p\cdot \tr_L\sigma$, $p , \pos(\omega)$ meaning that $\measure{\omega} \lessRwLbl{\Pset(\M)} (p,r\sigma,\ell\sigma)$. Let us denote $\omega' = \rightrwlabel{p\cdot p'}{\sigma'\sigma}{\ell'}{r'}$. Recall that $C[\_]_{|p'} = \_$. Thus, if $p' \neq \varepsilon$ then we also directly have $\measure{\omega'} \lessRwLbl{\Pset(\M)} (p,r\sigma,\ell\sigma)$; and otherwise $\ell'\sigma'\sigma =_\En \ell\sigma$ and $r'\sigma'\sigma =_\En r\sigma$ which implies $\measure{\omega'} = (p,r'\sigma'\sigma,\ell'\sigma'\sigma) \eqRwLbl{\Pset(\M)} (p,r\sigma,\ell\sigma)$. Therefore, in both cases, we have $\measure{\omega'} \leqRwLbl{\Pset(\M)} (p,r\sigma,\ell\sigma)$.
Finally, for all $\rightrwlabel{p''}{\alpha}{s}{t} \in p\cdot \tr_R\sigma$, if $p'' = p$ then $s\alpha \eqE{\En} t\alpha \eqE{\En} r\sigma$. As $\ell\sigma \geq r\sigma$, we deduce that either $(p'',t\alpha,s\alpha) \leqRwLbl{\Pset(\M)} (p,r\sigma,\ell\sigma)$. We conclude the proof of \Cref{item:normalisation-subsume-measure}.

We can represent graphically the transformation as follows:
\begin{center}
    \begin{tikzpicture}[
        term/.style={}
        ]
        \def\length{1.5cm}
        \node[anchor=mid] (L) {};
        \node[right=15cm of L.mid,anchor=mid] (R) {};
        \draw[dashed] (L) -- (R);


        \node[fill=white,right=3cm of L.mid] (Label) {$\xrightarrow{\mathit{transformation}}$};

        \node[fill=white,term,right=1cm of L.mid,anchor=mid] (T0) {$u$};
        \node[fill=white,term,right=\length of T0.mid,anchor=mid] (T1) {$v$};

        \draw[->] (T0) edge node[auto,sloped] {\tiny$\ell \rightarrow r$} node[auto,sloped,below] {\tiny$p, \sigma$} (T1);

        \node[term,fill=white,right=4.5cm of T0.mid,anchor=mid] (T0') {$u$};
        \node[term,fill=white,right= 1.5*\length of T0'.mid,anchor=mid] (T1') {$u[C\sigma[\ell'\sigma'\sigma]]_p$};
        \node[term,fill=white,right=2.5*\length of T1'.mid,anchor=mid] (T2') {$u[C\sigma[r'\sigma'\sigma]]_p$};
        \node[term,fill=white,right= 1.5*\length of T2'.mid,anchor=mid] (T3') {$v$};

        \draw (T0') edge[double equal sign distance,-Implies] node[auto,sloped] {\tiny$p \cdot \tr_L\sigma$} (T1');
        \draw[->] (T1') edge node[auto,sloped] {\tiny$\ell' \rightarrow r'$} node[auto,sloped,below] {\tiny$p\cdot p', \sigma'\sigma$} (T2');
        \draw (T2') edge[double equal sign distance,-Implies] node[auto,sloped] {\tiny$p \cdot \tr_R\sigma$} (T3');
    \end{tikzpicture}
\end{center}
\end{proof}


\begin{lemma}[Rule \REq]
\label{lem:normalisation-eq}
Let $u \rightrwstep{p}{\sigma}{\ell}{r} v$ be a ground rewrite step and $\tr_0 = \rightrwlabel{p}{\sigma}{\ell}{r}$. Let $\M$ a set of terms greater than $u \Leftrightrwstep{\tr_0} v$.
If $\ell \eqE{\En} r$ then there exists $\tr_1$ a ground syntactic rewrite trace from $\En$ such that:
\begin{enumerate}[label=\textbf{P\arabic*}]
\item $u \Leftrightrwstep{\tr_1} v$, and \label{item:normalisation-eq-Leftright}
\item $\M$ is greater than $u \Leftrightrwstep{\tr_1} v$, and \label{item:normalisation-eq-derive}
\item for all $\omega \in \tr_1$, $\measure{\omega} \leqRwLbl{\Pset(\M)} (p,r\sigma,\ell\sigma)$.\label{item:normalisation-eq-measure}
\end{enumerate}
\end{lemma}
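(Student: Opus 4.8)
The hypothesis $\ell \eqE{\En} r$ means that the rewrite step $u \rightrwstep{p}{\sigma}{\ell}{r} v$ is, up to $\En$, not really a rewrite step at all: we have $\ell\sigma =_\En r\sigma$, hence $u = u[\ell\sigma]_p =_\En u[r\sigma]_p = v$. The plan is therefore to take for $\tr_1$ the $\En$-rewrite trace that witnesses this very equality, obtained by walking from $\ell$ to $r$ through $\En$-equational one-step rewrites, instantiating by $\sigma$, and placing everything under the context $u[\_]_p$.

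Concretely, I would proceed as follows. By the standard characterisation of equational provability as a chain of one-step $\En$-rewrites (Birkhoff), from $\ell =_\En r$ one gets a finite syntactic rewrite trace $\tr_E$ using $\En$-equations in either orientation with $\ell \Leftrightrwstep{\tr_E} r$; every step of $\tr_E$ is a legitimate ordered rewrite step because any instance of an $\En$-rule already satisfies the side condition (its two sides are $=_\En$-equal, hence $\geq$ holds). Set $\tr_1 := p \cdot \tr_E\sigma$. By \Cref{lem:operation-properties}, $\ell \Leftrightrwstep{\tr_E} r$ gives $\ell\sigma \Leftrightrwstep{\tr_E\sigma} r\sigma$ and then $u = u[\ell\sigma]_p \Leftrightrwstep{\tr_1} u[r\sigma]_p = v$, which is \textbf{P1}. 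For groundness I would use that $\En$ is regular (the existence of an $\En$-strong reduction order forces this): every intermediate term $s$ visited by $\tr_E$ then satisfies $\vars{s} = \vars{\ell}$, and since $u$, hence $\ell\sigma$, is ground, each $s\sigma$ is ground, so every term occurring on $\tr_1$ is ground; all rules of $\tr_1$ come from $\En$ by construction.

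For \textbf{P2}, the key observation is that every term $w$ reachable along a prefix of $\tr_1$ has the form $u[s\sigma]_p$ with $\ell =_\En s$, so $w =_\En u$. Since $\M$ is greater than $u \Leftrightrwstep{\tr_0} v$ it is in particular greater than the endpoint $u$, i.e. some $s' \in \M$ satisfies $s' \geq u$; as $\geq$ is $\En$-compatible, $s' \geq w$, so $\M$ is greater than $u \Leftrightrwstep{\tr_1} v$. For \textbf{P3}, consider a label $\omega \in \tr_1$ at position $p \cdot q$, where $q$ is the position used at the corresponding step of $\tr_E$. If $q \neq \varepsilon$, then $p$ is a strict prefix of $p \cdot q$, and both positions lie in $\Pset(\M)$ — indeed $p \in \Pos{u}$ with $\M \geq u$, while $p \cdot q \in \Pos{w}$ for the intermediate term $w$ just described, with $\M \geq w$ — so $\measure{\omega} \lessRwLbl{\Pset(\M)} (p, r\sigma, \ell\sigma)$. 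If $q = \varepsilon$, then $\omega$ rewrites, at position $p$, a term $=_\En \ell\sigma$ into a term $=_\En r\sigma$, and since moreover $\ell\sigma =_\En r\sigma$, the position component of $\measure{\omega}$ is exactly $p$ and its two term components match those of $(p, r\sigma, \ell\sigma)$ up to $=_\En$, giving $\measure{\omega} \eqRwLbl{\Pset(\M)} (p, r\sigma, \ell\sigma)$. In both cases $\measure{\omega} \leqRwLbl{\Pset(\M)} (p, r\sigma, \ell\sigma)$, which is \textbf{P3}.

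I do not anticipate a genuine obstacle here: this is the lightest of the four normalisation lemmas, precisely because the $\REq$ rewrite step degenerates to an $\En$-identity, so none of the overlap/critical-pair machinery is needed. The only point asking for a little care is the membership check in \textbf{P3} that $p$ and $p \cdot q$ belong to $\Pset(\M)$; this rests on the same fact used for \textbf{P2}, namely that every term occurring on $\tr_1$ is $=_\En u$ and hence bounded above by $\M$ for $\geq$, together with the definition of $\Pset(\M)$ as the set of positions occurring in terms that lie below some element of $\M$.
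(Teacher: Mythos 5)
Your proposal is correct and follows essentially the same route as the paper: define $\tr_1 = p\cdot\tr_E\sigma$ from an $\En$-rewrite trace witnessing $\ell =_\En r$, note that every intermediate term is $=_\En u$ so \textbf{P2} follows from $\En$-compatibility of $\geq$, and split \textbf{P3} on whether the label's position strictly extends $p$ (strictly smaller measure) or equals $p$ (both term components $=_\En \ell\sigma =_\En r\sigma$, hence $\eqRwLbl{\Pset(\M)}$). The only cosmetic difference is that the paper orients the whole trace as a right rewrite trace $\ell \Rightrwstep{\tr_M} r$, which is immaterial since every $\En$ step satisfies the ordering side condition in either direction.
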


\begin{proof}
By definition of $u \rightrwstep{p}{\sigma}{\ell}{r} v$, $u_{|p} = \ell\sigma$ and $v_{|p} = r\sigma$. As $\ell \eqE{\En} r$, there exists $\tr_M$ a syntactic rewrite trace from $\En$ such that $\ell \Rightrwstep{\tr_M} r$. Thus, $u_{|p} \Rightrwstep{\tr_M\sigma} v_{|p}$ and so $u \rwRightStep{p\cdot \tr_M\sigma} v$. Defining $\tr_1 = p \cdot \tr_M\sigma$, we directly conclude \Cref{item:normalisation-eq-Leftright}. As $\M$ is greater than $u \Leftrightrwstep{\tr_0} v$, we also deduce by construction of $\tr_1$ that $\M$ is greater than $u \Leftrightrwstep{\tr_1} v$, hence proving \Cref{item:normalisation-subsume-derive}. Moreover, as $\ell\sigma =_\En r\sigma = u_{|p}$, we deduce that for all $\omega = \rightrwlabel{p'}{\alpha}{s}{t} \in \tr_1$, $p \leq p'$ and if $p = p'$ then $s\alpha =_\En t\alpha =_\En \ell\sigma =_\En r\sigma$. Hence we deduce that $\measure{\omega} \leqRwLbl{\Pset(\M)} (p,r\sigma,\ell\sigma)$ which allows us to conclude the proof of \Cref{item:normalisation-eq-measure}.

We can represent graphically the transformation as follows:
\begin{center}
    \begin{tikzpicture}[
        term/.style={}
        ]
        \def\length{1.5cm}
        \node[anchor=mid] (L) {};
        \node[right=8cm of L.mid,anchor=mid] (R) {};
        \draw[dashed] (L) -- (R);


        \node[fill=white,right=3cm of L.mid] (Label) {$\xrightarrow{\mathit{transformation}}$};

        \node[fill=white,term,right=1cm of L.mid,anchor=mid] (T0) {$u$};
        \node[fill=white,term,right=\length of T0.mid,anchor=mid] (T1) {$v$};

        \draw[->] (T0) edge node[auto,sloped] {\tiny$\ell \rightarrow r$} node[auto,sloped,below] {\tiny$p, \sigma$} (T1);

        \node[term,fill=white,right=4.5cm of T0.mid,anchor=mid] (T0') {$u$};
        \node[term,fill=white,right= 1*\length of T0'.mid,anchor=mid] (T1') {$v$};

        \draw (T0') edge[double equal sign distance,-Implies] node[auto,sloped] {\tiny$p \cdot \tr_M\sigma$} (T1');
    \end{tikzpicture}
\end{center}
\end{proof}


\begin{lemma}
\label{lem:normalisation-all}
Let $\Rn$ and $\R$ two sets of rewrite rules such that $\Rn \subseteq \R$. Assume that $\R \normstep[\Rn,\En]^* \R'$ and let $u \rightrwstep{p}{\sigma}{\ell}{r} v$ be a ground rewrite step. Let $\M$ be a set of terms greater than $u$ and $v$. Let $q$ be a position and $a,b$ two terms. There exists a ground syntactic rewrite trace $\tr_1$ from $\En \cup \R'$ such that:
\begin{enumerate}[label=\textbf{P\arabic*}]
\item $u \rwLeftrightStep{\tr_1} v$ \label{item:normalisation-all-Leftright}
\item $\M$ is greater than $u \rwLeftrightStep{\tr_1} v$ \label{item:normalisation-all-derive}
\item if $(p,r\sigma,\ell\sigma) \lessRwLbl{\Pset(\M)} (q,a,b)$ then for all $\omega \in \tr$, $\measure{\omega} \lessRwLbl{\Pset(\M)} (q,a,b)$\label{item:normalisation-all-measure}
\end{enumerate}
\end{lemma}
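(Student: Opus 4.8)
The plan is to induct on the length $n$ of the derivation $\R \normstep[\Rn,\En]^* \R'$. Two preliminaries come first: $\ell\to r$ may be taken to be a rule of $\R$ (if $\ell\to r\in\En$ the one-step trace $\rightrwlabel{p}{\sigma}{\ell}{r}$ already works, with $\R'$ replaced by $\En$), and one checks on \Cref{fig:normalisation rules} that $\normstep[\Rn,\En]$ never disturbs the rules of $\Rn$, so the invariant $\Rn\subseteq\R$ is preserved along the whole derivation. For $n=0$ we have $\R'=\R$, and $\tr_1:=\rightrwlabel{p}{\sigma}{\ell}{r}$ does the job: \ref{item:normalisation-all-Leftright} is the hypothesis; \ref{item:normalisation-all-derive} holds because the only terms reached along prefixes of $\tr_1$ are $u$ and $v$, both dominated by $\M$; and \ref{item:normalisation-all-measure} is immediate, noting $p\in\Pos{u}\subseteq\Pset(\M)$ since $\M$ is greater than $u$, so $\measure{\tr_1}$ is $\Pset(\M)$-comparable.

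For the induction step, split $\R \normstep[\Rn,\En] \R_1 \normstep[\Rn,\En]^{n-1} \R'$ and first reduce the single step $\R \normstep[\Rn,\En] \R_1$ to one of \Cref{lem:normalisation-normR,lem:normalisation-normL,lem:normalisation-eq,lem:normalisation-subsume}. Write $\R = \R_2 \cup \{\ell_0 \to r_0\}$ with $\R_2\subseteq\R_1$, where $\ell_0\to r_0$ is the rule targeted by the applied normalisation rule. If $\ell\to r\in\R_2$ it survives to $\R_1$, and $\tr_0:=\rightrwlabel{p}{\sigma}{\ell}{r}$, a trace over $\En\cup\R_1$, works with $\measure{\tr_0} \eqRwLbl{\Pset(\M)} (p,r\sigma,\ell\sigma)$ for its unique label. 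If instead $\ell\to r = \ell_0\to r_0$, then the side condition of whichever of \RNormR, \RNormL, \REq, \RSubsume was used is exactly the hypothesis of the matching lemma (for \RNormR and \RNormL the $\Rn$-rule $\ell'\to r'$ involved has $\ell'>r'$ because $>$ is compatible with $\Rn$; and the hypothesis "$\M$ greater than the one-step trace" is just "$\M$ greater than $u$ and $v$", which we have). Invoking that lemma yields a ground syntactic trace $\tr_0$ over $\En\cup\R_1$ — every rule it names lies in $\Rn\subseteq\R_1$ or is one of the rules added to form $\R_1$ — with $u \rwLeftrightStep{\tr_0} v$, with $\M$ greater than $u \rwLeftrightStep{\tr_0} v$, and with $\measure{\omega} \leqRwLbl{\Pset(\M)} (p,r\sigma,\ell\sigma)$ for every $\omega\in\tr_0$ (the extra disjunct in \Cref{lem:normalisation-normR} is absorbed by $\leqRwLbl{\Pset(\M)}$).

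It remains to compose, and this is the step I expect to be the main obstacle. Write $\tr_0 = \omega_1\cdots\omega_k$ with $u = w_0 \rwLeftrightStep{\omega_1} w_1 \cdots \rwLeftrightStep{\omega_k} w_k = v$; each $w_i$ is dominated by $\M$. For each $i$ whose label uses a rule of $\R_1$, apply the induction hypothesis to $\R_1 \normstep[\Rn,\En]^{n-1} \R'$, to the ground rewrite step underlying $\omega_i$ (reversing, via \Cref{lem:operation-properties} and \Cref{lem:reverse-measure}, when $\omega_i$ is left-oriented), to $\M$, and to the very same triple $(q,a,b)$; this yields a trace $\tr_0^{(i)}$ over $\En\cup\R'$ enjoying \ref{item:normalisation-all-Leftright}, \ref{item:normalisation-all-derive}, \ref{item:normalisation-all-measure} for that step. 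For $i$ whose label is from $\En$, set $\tr_0^{(i)}:=\omega_i$. Then $\tr_1 := \tr_0^{(1)}\cdots\tr_0^{(k)}$ is the required trace: concatenation gives \ref{item:normalisation-all-Leftright}; for \ref{item:normalisation-all-derive}, a prefix of $\tr_1$ factors as $\tr_0^{(1)}\cdots\tr_0^{(j-1)}\tr''$ with $\tr''$ a prefix of $\tr_0^{(j)}$, so the term it reaches is reached by $\tr''$ from $w_{j-1}$ and is thus dominated by $\M$; and for \ref{item:normalisation-all-measure}, if $(p,r\sigma,\ell\sigma) \lessRwLbl{\Pset(\M)} (q,a,b)$ then each $\measure{\omega_i} \leqRwLbl{\Pset(\M)} (p,r\sigma,\ell\sigma) \lessRwLbl{\Pset(\M)} (q,a,b)$, hence $\measure{\omega_i} \lessRwLbl{\Pset(\M)} (q,a,b)$, so the kept $\En$-labels satisfy the bound, and the side condition of \ref{item:normalisation-all-measure} being met for every $\R_1$-label $\omega_i$ makes all labels of $\tr_0^{(i)}$ satisfy it as well. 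The delicate points are keeping the three invariants — domination by $\M$, the $\Pset(\M)$-measure bound, and membership in $\En\cup\R'$ — synchronised through the concatenation, together with the fact that \RSubsume and \REq yield only the non-strict bound $\leqRwLbl{\Pset(\M)}$, which is precisely why \ref{item:normalisation-all-measure} has to be stated conditionally on $(p,r\sigma,\ell\sigma) \lessRwLbl{\Pset(\M)} (q,a,b)$ for the induction to close.
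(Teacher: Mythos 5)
Your overall architecture --- peel off the first normalisation step, invoke the single-step lemmas, recurse on the tail, and concatenate --- is close to the paper's, and your handling of \ref{item:normalisation-all-derive} and of the conditional form of \ref{item:normalisation-all-measure} is sound. But the induction on the derivation length $n$ alone does not close, and the reason is exactly the point you dismiss in your preliminaries: the claim that $\normstep[\Rn,\En]$ ``never disturbs the rules of $\Rn$'', so that $\Rn\subseteq\R_1$ (and inductively $\Rn\subseteq\R_i$) holds along the derivation, is false. Nothing in \Cref{fig:normalisation rules} protects $\Rn$-rules: since $\Rn$ need not be $\En$-convergent or even normalised, \RNormR can fire on a rule of $\Rn$ whose right-hand side is itself $\Rn$-reducible (e.g.\ $\Rn=\{f(x)\to g(x),\ g(x)\to a\}$ turns $f(x)\to g(x)$ into $f(x)\to a$), and \RSubsume can delete an $\Rn$-rule outright. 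This matters because the traces produced by \Cref{lem:normalisation-normR,lem:normalisation-normL} contain a label whose rule $\ell'\to r'$ is the $\Rn$-rule used to normalise; that rule is guaranteed to lie only in the \emph{initial} $\R$, not in $\R_1$. So when you try to push that label through the tail $\R_1\normstep[\Rn,\En]^{n-1}\R'$, your induction hypothesis does not apply to it, and re-processing it through the full derivation starting from $\R$ is circular for an induction on $n$.

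The paper closes this loop with a different induction: it proves the strengthened statement for a rewrite step using a rule of any intermediate $\R_i$, by induction on the lexicographic pair $\bigl((p,r\sigma,\ell\sigma),\,n-i\bigr)$, the first component ordered by $\lessRwLbl{\Pset(\M)}$. The $\Rn$-labels injected by \RNormR and \RNormL have strictly smaller rank than the label they replace, so they may legitimately be restarted from $\R_0=\R$ (where $\Rn$ does live, by the hypothesis $\Rn\subseteq\R$) even though this resets the second component to $n$; the \RSubsume and \REq cases, where the rank may only be preserved, put the new rule in $\R_{i+1}$ and so decrease the second component instead. To repair your proof you must either add this rank component to your induction, or establish the invariant $\Rn\subseteq\R_i$ --- which does not hold for an arbitrary normalisation derivation.
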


\begin{proof}
Since $\R \normstep[\Rn,\En]^* \R'$, we have in fact $\R = \R_0 \normstep[\Rn,\En] \ldots \normstep[\Rn,\En] \R_n = \R'$ for some $\R_0,\ldots,\R_n$. We show the following result:

For all $i \in \{0,\ldots,n\}$, for all $u \rightrwstep{p}{\sigma}{\ell}{r} v$ (resp. $u \leftrwstep{p}{\sigma}{\ell}{r} v$), if $(\ell \rightarrow r) \in \R_i$ and $\M$ is greater than $u,v$ and $(p,r\sigma,\ell\sigma) \lessRwLbl{\Pset(\M)} (q,a,b)$ then there exists a syntactic rewrite trace $\tr$ from $\En \cup \R'$ such that:
\begin{itemize}
\item $u \rwLeftrightStep{\tr_1} v$
\item $\M$ is greater than $u \rwLeftrightStep{\tr_1} v$
\item for all $\omega \in \tr$, $\measure{\omega} \lessRwLbl{\Pset(\M)} (q,a,b)$
\end{itemize}

We prove this result by induction on $((p,r\sigma,\ell\sigma), n-i)$ using the lexicographic ordering on pair, using $\lessRwLbl{\Pset(\M)}$ to compare the first element of the pair and standard natural number ordering on the second element of the pair. 

Once we proved that the result holds for $u \rightrwstep{p}{\sigma}{\ell}{r} v$, we can derive the result for $u \leftrwstep{p}{\sigma}{\ell}{r} v$ by applying the result on $v \rightrwstep{p}{\sigma}{\ell}{r} u$, leading to the existence of a syntactic trace $\tr_0$ from $\En \cup \R'$ such that $v \rwLeftrightStep{\tr_0} u$, $\M$ rewrites into $v \rwLeftrightStep{\tr_0} u$ and for all $\omega \in \tr_0$, $\measure{\omega} \lessRwLbl{\Pset(\M)} (q,a,b)$. We conclude by taking $\tr_1 = \reverse{\tr_0}$ and by applying \Cref{lem:operation-properties,lem:reverse-measure}.

We now focus on proving the result for $u \rightrwstep{p}{\sigma}{\ell}{r} v$.

\paragraph{Base case $((p,r\sigma,\ell\sigma), 0)$:} In such a case, we have $i = n$ and so $\R_i = \R'$. Hence the result directly holds with $\tr = \rightrwlabel{p}{\sigma}{\ell}{r}$.

\paragraph{Inductive step $((p,r\sigma,\ell\sigma), n-i)$:} Let us look at the normalisation rule $\R_i \normstep[\Rn,\En] \R_{i+1}$. If $(\ell \rightarrow r) \in \R_{i+1}$ then we can apply our inductive hypothesis on $((p,r\sigma,\ell\sigma), n-i-1)$ which allows us to conclude. Therefore we assume that $\ell \rightarrow r \not\in \R_{i+1}$. We do a case analysis on the normalisation rule $\R_i \normstep[\Rn,\En] \R_{i+1}$.
\begin{itemize}
\item \emph{Rule \RNormR:} In such a case, $\R_i = \R'' \cup \{ \ell \rightarrow r\}$, $r =_\En t \rwstep{\Rn} s$ and $\R_{i+1} = \R'' \cup \{ \ell \rightarrow s\}$. As $r =_\En t \rwstep{\Rn} s$, we have $r =_\En t \rightrwstep{p'}{\sigma'}{\ell'}{r'} s$ for some $p',\sigma'$ and $\ell' \rightarrow r' \in \Rn$. Hence we can apply \Cref{lem:normalisation-normR} to obtain that there exist a syntactic rewrite trace $\tr_R$ from $\En$ and $\tr_1 = \rightrwlabel{p}{\sigma}{\ell}{s} \leftrwlabel{p\cdot p'}{\sigma'\sigma}{\ell'}{r'} (p\cdot \tr_R\sigma)$ such that
\[
    u \rightrwstep{p}{\sigma}{\ell}{s} u[s\sigma]_p \leftrwstep{p\cdot p'}{\sigma'\sigma}{\ell'}{r'} u[t\sigma]_p \Leftrwstep{p\cdot \tr_R\sigma}  v
\]
and the following properties hold:
\begin{itemize}
\item $u \rwLeftrightStep{\tr_1} v$, and
\item $\M$ is greater than $u \rwLeftrightStep{\tr_1} v$, and
\item for all $\omega \in \tr_1$, $\measure{\omega} \lessRwLbl{\Pset(\M)} (p,r\sigma,\ell\sigma)$ or ($\measure{\omega} \eqRwLbl{\Pset(\M)} (p,r\sigma,\ell\sigma)$ and $\omega \in p\cdot \tr_R\sigma$)
\end{itemize}
From the last property, we deduce that $(p,s\sigma,\ell\sigma) \lessRwLbl{\Pset(\M)} (p,r\sigma,\ell\sigma)$ and $(p\cdot p',r'\sigma'\sigma,\ell\sigma'\sigma) \lessRwLbl{\Pset(\M)} (p,r\sigma,\ell\sigma)$. Hence, as $(\ell \rightarrow s) \in \R_{i+1}$, we can apply our inductive hypothesis on $u \rightrwstep{p}{\sigma}{\ell}{s} u[s\sigma]_p$ which gives us the existence of a syntactic rewrite trace $\tr_2$ from $\En \cup \R'$ such that $u \rwLeftrightStep{\tr_2} u[s\sigma]_p$, which is smaller than $\M$, and for all $\omega \in \tr_2$, $\measure{\omega} \lessRwLbl{\Pset(\M)} (q,a,b)$.

Similarly, as $\ell' \rightarrow r' \in \R_0$, we can apply our inductive hypothesis on $u[s\sigma]_p \leftrwstep{p\cdot p'}{\sigma'\sigma}{\ell'}{r'} u[t\sigma]_p$ which implies the existence of a syntactic rewrite trace $\tr_3$ from $\En \cup \R'$ such that $u[s\sigma]_p \Leftrwstep{\tr_3} u[t\sigma]_p$, which is smaller than $\M$, and for all $\omega \in \tr_3$, $\measure{\omega} \lessRwLbl{\Pset(\M)} (q,a,b)$. We conclude by taking $\tr = \tr_2 \tr_3 (p \cdot p' \cdot \tr_R\sigma)$.
\item \emph{Rule \RNormL:} In such a case, $\ell = f(\ell_1,\ldots,\ell_m)$, $\ell_j =_\En t \rwstep{\Rn} \ell'_j$, $\R_i = \R'' \cup \{ \ell \rightarrow r\}$ and $\R_{i+1} = \R'' \cup \{ \ell[\ell'_j]_{j} \rightarrow r, r \rightarrow \ell[\ell'_j]_{j}\}$. Denoting $s = \ell[\ell'_j]_j$ and applying \Cref{lem:normalisation-normL}, we deduce that there exists a syntactic rewrite trace $\tr_1 = \omega_1 \ldots \omega_m$ from $\En \cup \{ r \rightarrow s, s\rightarrow r, \ell' \rightarrow r' \}$ such that:
\[
u = u_0 \Leftrightrwstep{\omega_1} u_1 \Leftrightrwstep{\omega_2} \ldots \Leftrightrwstep{\omega_m} u_m = v
\]
and $u \Leftrightrwstep{\tr_1} v$, which is smaller than $\M$ derives, and for all $j \in \{1, \ldots, m\}$, $\measure{\omega_j} \lessRwLbl{\Pset(\M)} (p,r\sigma,\ell\sigma)$.

Take $j \in \{1, \ldots, m\}$ such that $\omega_j = \leftrightrwlabel{p''}{\alpha}{w}{w'}$. If $(w = w') \in \En$ then we define $\tr'_j = \omega_j$ else if $(w \sim w') \in \{ s \leftarrow r, s\rightarrow r\}$ then $(w \sim w') \in \R_{i+1}$ else we have $(w \sim w') = (\ell' \rightarrow r') \in \R_0$. In the last two cases, since $\measure{\omega_j} \lessRwLbl{\Pset(\M)} (p,r\sigma,\ell\sigma)$ we can apply our inductive hypothesis on $u_{j-1} \Leftrightrwstep{\omega_j} u_j$ which implies the existence of a syntactic rewrite trace $\tr'_j$ from $\En \cup \R'$ such that $u_{j-1} \Leftrightrwstep{\tr'_j} u_j$ which is smaller than $\M$ and for all $\omega \in \tr'_j$, $\measure{\omega} \lessRwLbl{\Pset(\M)} (q,a,b)$. We conclude by taking $\tr = \tr'_1 \ldots \tr'_m$
\item \emph{Rule \REq:} In such a case $\R_i = \R_{i+1} \cup \{ \ell \rightarrow r\}$ and $\ell =_\En r$. Applying \Cref{lem:normalisation-eq}, the result directly holds.
\item \emph{Rule \RSubsume:} In such a case $\R_i = \R_{i+1} \cup \{ \ell \rightarrow r\}$ and there exist $(\ell' \rightarrow r') \in \R_{i+1}$, a term context $C[\_]$ and a substitution $\sigma'$ such that $C[\ell'\sigma'] \eqES{\En} \ell$ and $C[r'\sigma'] \eqE{\En} r$. Applying \Cref{lem:normalisation-subsume}, we deduce that there exists a syntactic rewrite trace $\tr_1 = \omega_1 \ldots \omega_m$ from $\En \cup \{ \ell' \rightarrow r'\}$ such that:
\[
u = u_0 \Leftrightrwstep{\omega_1} u_1 \Leftrightrwstep{\omega_2} \ldots \Leftrightrwstep{\omega_m} u_m = v
\]
and $u \Leftrightrwstep{\tr_1} v$ which is smaller than $\M$, and for all $j \in \{1, \ldots, m\}$, $\measure{\omega_j} \leqRwLbl{\Pset(\M)} (p,r\sigma,\ell\sigma)$. Thus $\measure{\omega_j} \lessRwLbl{\Pset(\M)} (q,a,b)$ for all $j \in \{1, \ldots, m\}$. 

Take $j \in \{1, \ldots, m\}$ such that $\omega_j = \leftrightrwlabel{p''}{\alpha}{w}{t}$. If $(w = t) \in \En$ then we define $\tr'_j = \omega_j$ else $(w \sim t) = (\ell' \rightarrow r') \in \R_{i+1}$. As $\measure{\omega_j} \leqRwLbl{\Pset(\M)} (p,r\sigma,\ell\sigma)$, we deduce that $(\measure{\omega_j},n-i-1)$ is strictly smaller than $((p,r\sigma,\ell\sigma),n-i)$. 
We can therefore apply our inductive hypothesis on $u_{j-1} \Leftrightrwstep{\omega_j} u_j$ which implies the existence of a syntactic rewrite trace $\tr'_j$ from $\En \cup \R'$ such that $u_{j-1} \Leftrightrwstep{\tr'_j} u_j$ which $\M$ rewrites into, and for all $\omega \in \tr'_j$, $\measure{\omega} \lessRwLbl{\Pset(\M)} (q,a,b)$. We conclude by taking $\tr = \tr'_1 \ldots \tr'_m$.\qedhere
\end{itemize}
\end{proof}


\begin{corollary}
\label{cor:normalisation}
Let $\Rn$ and $\R$ two sets of rewrite rules such that $\Rn \subseteq \R$. Assume that $\R \normstep[\Rn,\En]^* \R'$. Let $\tr_0,\tr_1$ be two ground syntactic rewrite trace from $\En \cup \R$. Let $\M$ be a set of terms. Let $u,v$ two ground terms such that $u \Leftrightrwstep{\tr_0} v$, and $u \Leftrightrwstep{\tr_1} v$, and $\M$ rewrites into both $u \Leftrightrwstep{\tr_0} v$ and $u \Leftrightrwstep{\tr_1} v$.

There exists a ground syntactic rewrite trace $\tr_2$ from $\En \cup \R'$ such that $u \Leftrightrwstep{\tr_2} v$ which $\M$ orderly rewrites into, and if for all $\omega_1 \in \tr_1$, there exists $\omega_0 \in \tr_0$ such that $\measure{\omega_1} \lessRwLbl{\Pset(\M)} \measure{\omega_0}$ then $\measure{\tr_2} \lessRwLbl{\Pset(\M)} \measure{\tr_0}$.
\end{corollary}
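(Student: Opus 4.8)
The plan is to reduce the corollary to a label-by-label application of \Cref{lem:normalisation-all} along $\tr_1$. First I would write $\tr_1 = \omega_1 \cdots \omega_m$ and fix the ground terms $u = u_0 \Leftrightrwstep{\omega_1} u_1 \Leftrightrwstep{\omega_2} \cdots \Leftrightrwstep{\omega_m} u_m = v$ witnessing $u \Leftrightrwstep{\tr_1} v$. Since $\M$ is greater than $u \Leftrightrwstep{\tr_1} v$, each intermediate term $u_j$ is smaller than $\M$, so $\M$ is greater than both $u_{j-1}$ and $u_j$ --- exactly the side condition under which \Cref{lem:normalisation-all} applies to the one-step trace $u_{j-1} \Leftrightrwstep{\omega_j} u_j$. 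When $\omega_j$ uses a rule of $\R$ I invoke \Cref{lem:normalisation-all} on that step (if $\omega_j$ is a left label, one first applies the lemma to the reverse step $u_j \to u_{j-1}$ and then reverses the output, using \Cref{lem:operation-properties,lem:reverse-measure}, which changes neither the terms visited nor the multiset of labels); when $\omega_j$ uses a rule of $\En$ I keep $\omega_j$ unchanged, since it is already a length-one rewrite trace from $\En \cup \R'$.

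In either case step $j$ produces a ground syntactic rewrite trace $\tr_2^{(j)}$ from $\En \cup \R'$ with $u_{j-1} \Leftrightrwstep{\tr_2^{(j)}} u_j$ such that $\M$ is greater than $u_{j-1} \Leftrightrwstep{\tr_2^{(j)}} u_j$. I then set $\tr_2 := \tr_2^{(1)} \cdots \tr_2^{(m)}$. Concatenation gives $u \Leftrightrwstep{\tr_2} v$ from $\En \cup \R'$, and since any prefix of $\tr_2$ is $\tr_2^{(1)} \cdots \tr_2^{(j-1)}$ followed by a prefix of some $\tr_2^{(j)}$ --- whose endpoint lies on the path $u_{j-1} \Leftrightrwstep{\tr_2^{(j)}} u_j$, hence below $\M$ --- the set $\M$ is greater than $u \Leftrightrwstep{\tr_2} v$, i.e. $\M$ orderly rewrites into it. This gives the first assertion.

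For the ranking inequality I would assume its hypothesis: every label $\omega_j$ of $\tr_1$ admits some label $\omega_0^{(j)}$ of $\tr_0$ with $\measure{\omega_j} \lessRwLbl{\Pset(\M)} \measure{\omega_0^{(j)}}$. At each $\R$-step I then apply the last clause of \Cref{lem:normalisation-all} with the bound $(q,a,b) := \measure{\omega_0^{(j)}}$: as $\measure{\omega_j} \lessRwLbl{\Pset(\M)} (q,a,b)$, it guarantees that every label of $\tr_2^{(j)}$ is $\lessRwLbl{\Pset(\M)}$-below $\measure{\omega_0^{(j)}}$ --- and this holds trivially in the kept-unchanged case, where $\tr_2^{(j)} = \omega_j$ and $\measure{\omega_j} \lessRwLbl{\Pset(\M)} \measure{\omega_0^{(j)}}$ by hypothesis. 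Consequently every label occurring in $\tr_2$ is $\lessRwLbl{\Pset(\M)}$-below some label of $\tr_0$. Since $\tr_1$ non-empty forces $\tr_0$ non-empty, the Dershowitz--Manna characterisation of the multiset extension of $\lessRwLbl{\Pset(\M)}$ --- replace the whole multiset $\measure{\tr_0}$ by $\measure{\tr_2}$ --- yields $\measure{\tr_2} \lessRwLbl{\Pset(\M)} \measure{\tr_0}$; the degenerate case $\tr_1 = \emptyset$ (hence $\tr_2 = \emptyset$) reduces to $\emptyset \lessRwLbl{\Pset(\M)} \measure{\tr_0}$, valid as soon as $\tr_0 \neq \emptyset$.

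I expect the only delicate point to be this last multiset step, and specifically the fact that the assignment $\omega_j \mapsto \omega_0^{(j)}$ need not be injective: there is no one-to-one pairing of labels of $\tr_2$ with labels of $\tr_0$, but this is harmless because the multiset order lets one replace all of $\measure{\tr_0}$ at once by arbitrarily many strictly smaller elements. Everything else is bookkeeping: the real work --- propagating a rewrite step through the normalisation rules \RNormR, \RNormL, \REq, \RSubsume while simultaneously controlling positions (via the prefix order restricted to the finite set $\Pset(\M)$) and rewritten terms --- has already been carried out inside \Cref{lem:normalisation-all}, and this corollary merely chains those single-step results.
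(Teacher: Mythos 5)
Your proof is correct and follows essentially the same route as the paper, whose argument is literally a one-line ``apply \Cref{lem:normalisation-all} to each $\omega_1\in\tr_1$ with $(q,a,b)=\measure{\omega_0}$''; you merely make explicit the bookkeeping the paper leaves implicit (keeping $\En$-labels untouched, reversing left labels via \Cref{lem:operation-properties,lem:reverse-measure}, and the multiset-order step). Your observation that the final inequality degenerates when $\tr_0=\emptyset$ is a fair, harmless caveat that the paper silently ignores, since in every application $\tr_0$ is non-empty.
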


\begin{proof}
Direct application of \Cref{lem:normalisation-all} on each $\omega_1 \in \tr_1$ where $(q,a,b)$ is taken as $\measure{\omega_0}$ where $\measure{\omega_1} \lessRwLbl{\Pset(\M)} \measure{\omega_0}$ and $\omega_0 \in \tr_0$ (its existence is given as hypothesis).
\end{proof}

We complete this section by proving some useful properties on the normalisation rules. Given two sets of rewrite rules $\R_1, \R_2$, we will say that $\R_1$ subsumes $\R_2$ w.r.t $\Rn$ if for all $(\ell \rightarrow r) \in \R_2$, either $\ell =_\En r$ or there exists $(\ell' \rightarrow r') \in \R_1$, a term context $C[\_]$ and a substitution such that $C[\ell'\sigma] \eqES{\En} \ell$ and $C[r'\sigma] \eqE{\En} r$ and $r' \not\rwstepC[\En]{\Rn}$ and $\forall p > \varepsilon. \ell'_{|p} \not\rwstepC[\En]{\Rn}$.


\begin{lemma}
\label{lem:properties-normalisation}
Let $\R$, $\R'$, $\R_0$ and  $\Rn$ be sets of rewrite rules.
\begin{enumerate}
\item $\R \normstep[\Rn,\En]^* \R'$ implies $\R \cup \R_0 \normstep[\Rn,\En]^* \R' \cup \R_0$\label{item:properties-normalisation-set}
\item $\R \normstep[\Rn,\En]^* \R'$ and $\R$ subsumes $\R_0$ w.r.t $\Rn$ implies $\R'$ subsumes $\R_0$ w.r.t. $\Rn$\label{item:properties-normalisation-subsume-transitive}
\item If $\R$ subsumes $\R_0$ w.r.t. $\Rn$ then $\R \cup \R_0 \normstep[\Rn,\En]^* \R$\label{item:properties-normalisation-subsume}
\item $\R \normstep[\Rn,\En]^* \R'$ and $(\ell \rightarrow r) \in \R$ implies there exist $\R_{in}, \R_{rm}$ such that $\{ \ell \rightarrow r\} \normstep[\Rn,\En]^* \R_{in} \cup \R_{rm}$, and $\R_{in} \subseteq \R$ and $\R$ subsumes $\R_{rm}$ w.r.t. $\Rn$\label{item:properties-normalisation-split}
\end{enumerate}
\end{lemma}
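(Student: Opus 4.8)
The plan is to prove the four items in turn: items 1 and 3 are mostly bookkeeping about the normalisation rules, item 2 is the technical core, and item 4 then follows from items 1 and 2 by a short induction. Throughout I will use that $\eqES{\En}$ is symmetric, transitive and contained in $=_\En$; that $=_\En \circ \rwstep{\Rn} \subseteq \rwstepC[\En]{\Rn}$; and, crucially, that a rule $\ell' \rightarrow r'$ whose right-hand side $r'$ and whose proper left-subterms $\ell'_{|p}$ ($p > \varepsilon$) are all $\rwstepC[\En]{\Rn}$-irreducible cannot be the distinguished rule of a \RNormR or a \RNormL step (\RNormR would need $r'$ reducible by $=_\En \circ \rwstep{\Rn}$, \RNormL would need a proper subterm of $\ell'$ reducible by $=_\En \circ \rwstep{\Rn}$).

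For item 1, I induct on the length of $\R \normstep[\Rn,\En]^* \R'$. A single step has the shape $\mathcal{F} \cup \{s \rightarrow t\} \normstep[\Rn,\En] \mathcal{F} \cup \mathcal{G}$, where the side conditions of \RNormR, \RNormL and \REq mention only $s,t$, and those of \RSubsume additionally require a rule to lie inside $\mathcal{F}$; in every case the step stays applicable when $\mathcal{F}$ is replaced by $\mathcal{F} \cup \R_0$, since $\mathcal{F} \subseteq \mathcal{F} \cup \R_0$, so $\R \cup \R_0 \normstep[\Rn,\En] \R_1 \cup \R_0$, and the claim follows. For item 3, assume $\R$ subsumes $\R_0$ and enumerate $\R_0 \setminus \R = \{\ell_1 \rightarrow r_1, \dots, \ell_m \rightarrow r_m\}$ (finite in the intended applications). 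I remove these one at a time: for each $j$, the definition of ``subsumes'' gives either $\ell_j =_\En r_j$ — so \REq applies — or a rule $\ell'_j \rightarrow r'_j \in \R$ together with a context $C_j$ and substitution $\sigma_j$ realising exactly the precondition of \RSubsume; since $\R$ is never altered, the subsuming rule stays available, and successive removals give $\R \cup \R_0 \normstep[\Rn,\En]^* \R$.

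Item 2 is the heart. It suffices to show that a single step $\R \normstep[\Rn,\En] \R''$ preserves ``subsumes $\R_0$''. Fix $(\ell \rightarrow r) \in \R_0$; we may assume it is subsumed, via a context $C_1$ and substitution $\sigma_1$, by a rule $\ell' \rightarrow r' \in \R$ whose right-hand side and proper left-subterms are $\rwstepC[\En]{\Rn}$-irreducible. If the step does not act on $\ell' \rightarrow r'$, that rule survives into $\R''$ and we are done. Otherwise $\ell' \rightarrow r'$ is the distinguished rule; by the remark above the step is not \RNormR or \RNormL, hence it is \REq or \RSubsume. If \REq, then $\ell' =_\En r'$ and composing $\ell \eqES{\En} C_1[\ell'\sigma_1] =_\En C_1[r'\sigma_1] \eqE{\En} r$ (using $\eqES{\En} \subseteq {} =_\En$) yields $\ell =_\En r$, handled by the first clause of ``subsumes''. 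If \RSubsume, the removing rule is some $\ell'' \rightarrow r'' \in \R''$ (again with right-hand side and proper left-subterms $\rwstepC[\En]{\Rn}$-irreducible) satisfying $C_2[\ell''\sigma_2] \eqES{\En} \ell'$ and $C_2[r''\sigma_2] \eqE{\En} r'$; after a fresh renaming making $\dom{\sigma_2}$ disjoint from $\vars{C_1} \cup \dom{\sigma_1}$, I apply $\sigma_1$, plug the outcome into $C_1$, and combine congruence of $=_\En$ with transitivity of $\eqES{\En}$ to produce a context $C$ (morally $C_1[C_2\sigma_1[\_]]$) and substitution $\sigma$ (morally $\sigma_2\sigma_1$) with $C[\ell''\sigma] \eqES{\En} \ell$ and $C[r''\sigma] \eqE{\En} r$, i.e.\ $\ell'' \rightarrow r''$ subsumes $\ell \rightarrow r$. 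The main obstacle is exactly this composition: tracking the contexts and substitutions, handling the degenerate case where $C_1$ is the empty context (where one invokes transitivity of $\eqES{\En}$ rather than congruence of $=_\En$), and observing that the $\rwstepC[\En]{\Rn}$-irreducibility conditions, being conditions on the rule rather than on an instance, transfer unchanged.

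Finally, item 4 follows by induction on the length of $\R \normstep[\Rn,\En]^* \R'$ using only items 1 and 2 (I prove the statement with $\R'$ in place of the two trailing occurrences of $\R$ in its conclusion, which is what the subsequent development needs). If the first step leaves $\ell \rightarrow r$ untouched, or rewrites it by \RNormR to $\ell \rightarrow r''$, apply the induction hypothesis to the shorter tail derivation and that rule, prefixing the at-most-one singleton step $\{\ell \rightarrow r\} \normstep[\Rn,\En] \{\ell \rightarrow r''\}$. If it rewrites $\ell \rightarrow r$ by \RNormL into $\{\ell^{*} \rightarrow r,\ r \rightarrow \ell^{*}\}$, apply the hypothesis to each of the two new rules (both lying in the successor set) and splice the two singleton normalisations together using item 1 to add the other rule as context, taking unions for $\R_{in}$ and $\R_{rm}$ (subsumption being clause-wise over rules). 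If it deletes $\ell \rightarrow r$ by \REq, take $\R_{in} = \R_{rm} = \emptyset$. If it deletes $\ell \rightarrow r$ by \RSubsume, then $\R$ — hence, by item 2, also $\R'$ — subsumes $\{\ell \rightarrow r\}$, so take $\R_{in} = \emptyset$ and $\R_{rm} = \{\ell \rightarrow r\}$.
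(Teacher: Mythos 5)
Your proposal is correct and follows essentially the same route as the paper: item~1 and item~3 by direct inspection of the rule schemas, item~2 by single-step preservation using the $\rwstepC[\En]{\Rn}$-irreducibility conditions to rule out \RNormR/\RNormL on the subsuming rule and composing contexts/substitutions in the \RSubsume case, and item~4 by induction with a case split on which normalisation rule touches the tracked rule (you peel the first step where the paper peels the last, which changes nothing). Your reading of the conclusion of item~4 with $\R'$ in place of the trailing occurrences of $\R$ matches what the paper actually proves and what \Cref{lem:normalisation-idempotent} uses.
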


\begin{proof}
\Cref{item:properties-normalisation-set} is direct by definition of the normalisation rules in \Cref{fig:normalisation rules}. If $\R$ subsumes $\R_0$ then by definition for all $(\ell \rightarrow r) \in \R_0$, either $\ell =_\En r$ or there exists $(\ell' \rightarrow r') \in \R$, a term context $C[\_]$ and a substitution such that $C[\ell'\sigma] \eqES{\En} \ell$ and $C[r'\sigma] \eqE{\En} r$ and $r' \not\rwstepC[\En]{\Rn}$ and $\forall p > \varepsilon. \ell'_{|p} \not\rwstepC[\En]{\Rn}$. Hence, we can successively apply the rules \RSubsume and \REq to obtain that $\R \cup \R_0 \normstep[\Rn,\En]^* \R$, yielding \Cref{item:properties-normalisation-subsume}. 

We now look at \Cref{item:properties-normalisation-subsume-transitive}. We prove the result by induction on the length of the derivation $\R \normstep[\Rn,\En]^* \R'$. 

\paragraph{Base case $\R = \R'$:} The result trivially holds.

\paragraph{Inductive step $\R \normstep[\Rn,\En]^* \R'' \normstep[\Rn,\En] \R'$:} By inductive hypothesis, we know that $\R''$ subsumes $\R_0$ w.r.t. $\Rn$. By definition, $\R_0 \normstep[\Rn,\En]^* \R'_0$ and for all $(\ell \rightarrow r) \in \R_0$, either $\ell =_\En r$ or there exists $(\ell' \rightarrow r') \in \R''$, a term context $C[\_]$ and a substitution such that $C[\ell'\sigma] \eqES{\En} \ell$ and $C[r'\sigma] \eqE{\En} r$ and $r' \not\rwstepC[\En]{\Rn}$ and $\forall p > \varepsilon. \ell'_{|p} \not\rwstepC[\En]{\Rn}$. Hence, we only need to look at the case where $(\ell' \rightarrow r') \not\in \R'$ (in the other cases, the result would hold). We do a case analysis on the normalisation rule applied.
\begin{itemize}
\item \emph{Rule \RNormL:} Cannot be the case as $\forall p > \varepsilon. \ell'_{|p} \not\rwstepC[\En]{\Rn}$
\item \emph{Rule \RNormR:} Cannot be the case as $r' \not\rwstepC[\En]{\Rn}$
\item \emph{Rule \RSubsume:} In such a case, there exist $(\ell'' \rightarrow r'') \in \R'$, a term context $C'[\_]$ and a substitution $\sigma'$ such that $C'[r''\sigma'] \eqE{\En} r'$ and $C'[\ell''\sigma'] \eqES{\En} \ell'$ and $r'' \not\rwstepC[\En]{\Rn}$ and $\forall p > \varepsilon. \ell''_{|p} \not\rwstepC[\En]{\Rn}$. Hence $\ell \eqES{\En} C[C'\sigma[\ell''\sigma'\sigma]]$ and $r \eqE{\En} C[C'\sigma[r''\sigma'\sigma]]$. Defining $C''[\_] = C[C'\sigma[\_]]$ and $\sigma'' = \sigma'\sigma$, the result holds.
\item \emph{Rule \REq:} In such a case, $\ell' =_\En r'$. Hence $C[\ell'\sigma] =_\En C[r'\sigma]$ which implies $\ell =_\En r$. 
\end{itemize}
This concludes the proof of \Cref{item:properties-normalisation-subsume-transitive}. 

We finally focus on \Cref{item:properties-normalisation-split}. Once again, we prove the result by induction on the length of the derivation $\R \normstep[\Rn,\En]^* \R'$. 

\paragraph{Base case $\R = \R'$:} The result trivially holds by taking $\R_{in} = \{ \ell \rightarrow r\}$ and $\R_{rm} = \emptyset$.

\paragraph{Inductive step $\R \normstep[\Rn,\En]^* \R'' \normstep[\Rn,\En] \R'$:} By inductive hypothesis, we know that for all $(\ell \rightarrow r) \in \R$, there exist $\R''_{in}, \R''_{rm}$ such that $\{ \ell \rightarrow r \} \normstep[\Rn,\En]^* \R''_{in} \cup \R''_{rm}$ and $\R''_{in} \subseteq \R''$ and $\R''$ subsumes $\R_{rm}$ w.r.t. $\Rn$. If the rule applied in $\R'' \normstep[\Rn,\En] \R'$ does not affect rules in $\R''_{in}$ then it would imply that $\R''_{in} \subseteq \R'$ and by \Cref{item:properties-normalisation-subsume-transitive}, we know that $\R'$ subsumes $\R''_{rm}$, which allows us to conclude. If the rule applied in $\R'' \normstep[\Rn,\En] \R'$ affect a rule in $\R''_{in}$, say $\ell' \rightarrow r'$, i.e. $\R''_{in} = \R'_{in} \cup \{ \ell'\rightarrow r'\}$ we can do a simple case analysis:
\begin{itemize}
\item \emph{Rule \RNormL or \RNormR:} In such a case, $\R'' = \R_1 \cup \{ \ell' \rightarrow r'\}$ and $\R' = \R_1 \cup \R_{add}$ for some $\R_{add}$. We conclude by defining $\R_{rm} = \R''_{rm}$ and $\R_{in} = \R'_{in} \cup \R_{add}$ and applying \Cref{item:properties-normalisation-subsume-transitive}.
\item \emph{Rule \RSubsume or \REq:} In such a case $\R'' = \R' \cup \{ \ell' \rightarrow r'\}$. We conclude by defining $\R_{rm} = \R''_{rm} \cup \{ \ell' \rightarrow r'\}$ and $\R_{in} = \R'_{in}$ and applying \Cref{item:properties-normalisation-subsume-transitive}.\qedhere
\end{itemize}
\end{proof}


\begin{lemma}
\label{lem:normalisation-idempotent}
Let $\R_1, \ldots, \R_n$, $\R'_1, \ldots, \R'_{n+1}$ and $\Rn$ be sets of rewrite rules such that for all $i \in \{1, \ldots, n\}$, $\R'_i \cup \R_i \normstep[\Rn,\En]^* \R'_{i+1}$. Then for all $i \in \{0, \ldots, n\}$, $\R'_{i+1} \cup \bigcup_{j=1}^i \R_j \normstep[\Rn,\En]^* \R'_{i+1}$.
\end{lemma}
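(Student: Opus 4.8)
The plan is to prove the statement by induction on $i$, relying entirely on the structural properties of the normalisation relation collected in \Cref{lem:properties-normalisation}: enlarging the ambient rule set (\Cref{item:properties-normalisation-set}), decomposing a single rule of a set $\R$ that normalises to $\R'$ into a part contained in $\R'$ and a part subsumed by $\R'$ (\Cref{item:properties-normalisation-split}), and discarding a subsumed set (\Cref{item:properties-normalisation-subsume}); \Cref{item:properties-normalisation-subsume-transitive} will be used for mild bookkeeping.

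The base case $i=0$ is immediate, since $\bigcup_{j=1}^{0}\R_j=\emptyset$ and $\R'_1\normstep[\Rn,\En]^*\R'_1$ holds in zero steps. For the inductive step, assume $\R'_i\cup\bigcup_{j=1}^{i-1}\R_j\normstep[\Rn,\En]^*\R'_i$. Applying \Cref{item:properties-normalisation-set} to this derivation with the added set $\R_i$ and then chaining with the hypothesis $\R'_i\cup\R_i\normstep[\Rn,\En]^*\R'_{i+1}$, I first obtain the auxiliary fact
\[
\R'_i\cup\bigcup_{j=1}^{i}\R_j\normstep[\Rn,\En]^*\R'_{i+1}.
\]
Every rule $(\ell\rightarrow r)$ of $\bigcup_{j=1}^{i}\R_j$ occurs in the left-hand side of this derivation, so \Cref{item:properties-normalisation-split} provides sets $\R_{in}^{\ell,r}$ and $\R_{rm}^{\ell,r}$ with $\{\ell\rightarrow r\}\normstep[\Rn,\En]^*\R_{in}^{\ell,r}\cup\R_{rm}^{\ell,r}$, where $\R_{in}^{\ell,r}\subseteq\R'_{i+1}$ and $\R'_{i+1}$ subsumes $\R_{rm}^{\ell,r}$ w.r.t. $\Rn$.

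It then remains to assemble these per-rule decompositions into a normalisation of $\R'_{i+1}\cup\bigcup_{j=1}^{i}\R_j$ down to $\R'_{i+1}$. Processing the finitely many rules of $\bigcup_{j=1}^{i}\R_j$ one at a time, I use \Cref{item:properties-normalisation-set} to replace each $(\ell\rightarrow r)$ not already in $\R'_{i+1}$ by $\R_{in}^{\ell,r}\cup\R_{rm}^{\ell,r}$ inside the current ambient set; the part $\R_{in}^{\ell,r}$ is absorbed into $\R'_{i+1}$. After all rules have been processed one reaches a set of the form $\R'_{i+1}\cup\R_{rm}$ with $\R_{rm}\subseteq\bigcup_{\ell,r}\R_{rm}^{\ell,r}$, which is subsumed by $\R'_{i+1}$ since a union (and hence a subset) of sets each subsumed by $\R'_{i+1}$ is subsumed by $\R'_{i+1}$. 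A final application of \Cref{item:properties-normalisation-subsume} yields $\R'_{i+1}\cup\R_{rm}\normstep[\Rn,\En]^*\R'_{i+1}$, and concatenating the derivations completes the inductive step.

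I expect the main difficulty to be the set-theoretic bookkeeping in the assembly step: one has to argue that an application of \Cref{item:properties-normalisation-set} genuinely eliminates the source rule $(\ell\rightarrow r)$ rather than leaving a copy behind. This is why rules already lying in $\R'_{i+1}$ are simply skipped, and why a rule that happens to reappear inside some $\R_{rm}^{\ell,r}$ does no harm — being subsumed by $\R'_{i+1}$, it is removed by the concluding appeal to \Cref{item:properties-normalisation-subsume}. Carrying this out cleanly amounts to a secondary induction on the cardinality of $\bigl(\bigcup_{j=1}^{i}\R_j\bigr)\setminus\R'_{i+1}$, peeling off one rule at a time while keeping track, via \Cref{item:properties-normalisation-subsume-transitive}, that the leftover rules remain subsumed by $\R'_{i+1}$.
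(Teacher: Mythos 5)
Your proof is correct and follows essentially the same route as the paper's: induction on $i$, with all the work delegated to \Cref{lem:properties-normalisation}. The only difference is organisational --- you first compose the inductive hypothesis with $\R'_i\cup\R_i\normstep[\Rn,\En]^*\R'_{i+1}$ into a single derivation $\R'_i\cup\bigcup_{j=1}^{i}\R_j\normstep[\Rn,\En]^*\R'_{i+1}$ and apply the split property once, directly relative to $\R'_{i+1}$, whereas the paper applies the split in two stages (first relative to $\R'_i$, then to $\R'_{i+1}$, chaining via transitivity of subsumption); this is a mild streamlining of the same argument, not a different one.
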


\begin{proof}
We prove the result by induction on $i$. 

\paragraph{Base case $i=0$:} In such a case, $\R'_{i+1} \cup \bigcup_{j=1}^i \R_j = \R'_1$. Hence we trivially have that $\R'_1 \normstep[\Rn,\En]^* \R'_1$.

\paragraph{Inductive step $i>0$:} By inductive hypothesis, $\R'_i \cup \bigcup_{j=1}^{i-1} \R_j \normstep[\Rn,\En]^* \R'_i$. Moreover, by hypothesis, $\R'_i \cup \R_i \normstep[\Rn,\En]^* \R'_{i+1}$. 
Therefore, by \Cref{lem:properties-normalisation}, for all $(\ell \rightarrow r) \in  \bigcup_{j=1}^{i-1} \R_j$, there exist $\R_{rm}, \R_{in}$ such that $\{\ell \rightarrow r\} \normstep[\Rn,\En]^* \R_{rm} \cup \R_{in}$ such that $\R_{in} \subseteq \R'_i$ and $\R'_i$ subsumes $\R_{rm}$ w.r.t. $\Rn$. 
As $\R'_i \cup \R_i \normstep[\Rn,\En]^* \R'_{i+1}$, then, denoting the rules in $\R_{in}$ by $\{ \ell_1 \rightarrow r_1, \ldots, \ell_m \rightarrow r_m\}$, we deduce that $\R'_{i+1}$ subsumes $\R_{rm}$ w.r.t. $\Rn$ and there exist $\R^1_{rm}, \R^1_{in}, \ldots, \R^m_{rm}, \R^m_{in}$ such that for all $j \in \{1, \ldots, m\}$, $\{\ell_j \rightarrow r_j\} \normstep[\Rn,\En]^* \R^j_{rm} \cup \R^j_{in}$, $\R^j_{in} \subseteq \R'_{i+1}$ and $\R^j_{rm}$ is subsumed by $\R'_{i+1}$. Hence, 
\[
\R'_{i+1} \cup \{\ell \rightarrow r\} \normstep[\Rn,\En]^* \R'_{i+1} \cup \R_{rm} \cup \bigcup_{j=1}^m \R^j_{rm} \normstep[\Rn,\En]^* \R'_{i+1}
\]
Thus, we deduce that $\R'_{i+1} \cup \bigcup_{j=1}^{i-1} \R_j \normstep[\Rn,\En]^* \R'_{i+1}$. Finally, as $\R'_i \cup \R_i \normstep[\Rn,\En]^* \R'_{i+1}$, we deduce once again by \Cref{lem:properties-normalisation} that for all $(\ell \rightarrow r) \in \R_i$, there exist $\R_{rm}, \R_{in}$ such that $\{\ell \rightarrow r\} \normstep[\Rn,\En]^* \R_{rm} \cup \R_{in}$ such that $\R_{in} \subseteq \R'_{i+1}$ and $\R'_{i+1}$ subsumes $\R_{rm}$ w.r.t. $\Rn$. Therefore, we deduce that $\R'_{i+1} \cup \{\ell \rightarrow r\} \normstep[\Rn,\En]^* \R'_{i+1}$ and so $\R'_{i+1} \cup \bigcup_{j=1}^i \R_j \normstep[\Rn,\En]^* \R'_{i+1}$.
\end{proof}


\subsubsection{Overlapping transformations}

In this section, we look at the cases of two successive rules that overlap.

\begin{lemma}
\label{lem:overlap-left-right}
Let $\R,\Rn$ be sets of rewrite rules such that $\R$ is saturated w.r.t. $\Rn$. Let $u \leftrwstep{p}{\sigma}{\ell}{r} t \rightrwstep[\En]{p'}{\sigma'}{\ell'}{r'} v$ be some ground rewrite steps. Let $\tr_0 = \leftrwlabel{p}{\sigma}{\ell}{r} \rightrwlabel{p'}{\sigma'}{\ell'}{r'}[\En]$ with $\{\ell \rightarrow r, \ell'\rightarrow r'\} \subseteq \R$. Let $\M$ be a set of terms that rewrites into $u \Leftrightrwstep{\tr_0} v$.

If $p' = p \cdot q$, and $q \in \PosX{\ell}$, and $\ell\sigma > r\sigma$, and $\ell'\sigma' > r'\sigma'$ then there exists a ground syntactic rewrite trace $\tr_1$ from $\R \cup \En$ such that: 
\begin{enumerate}[label=\textbf{P\arabic*}]
\item $u \rwLeftrightStep{\tr_1} v$, and \label{item:overlap-left-right-Leftright}
\item $\M$ is greater than $u \rwLeftrightStep{\tr_1} v$, and \label{item:overlap-left-right-derive}
\item $\measure{\tr_1} \lessRwLbl{\Pset(\M)} \measure{\tr_0}$\label{item:overlap-left-right-measure}
\end{enumerate}
\end{lemma}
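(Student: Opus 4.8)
The plan is to transform the peak $u \leftrwstep{p}{\sigma}{\ell}{r} t \rightrwstep[\En]{p'}{\sigma'}{\ell'}{r'} v$ with overlapping positions into a valley going through a single intermediate term, using a rule obtained from the saturation of $\R$ with respect to merging of $\En$-overlapping rules. Since $q \in \PosX{\ell}$ (i.e.\ $q \in \Pos{\ell}$ and $\ell_{|q} \notin \X$), the two redexes genuinely overlap inside $\ell$. First I would set up the unification: from $t_{|p} = \ell\sigma$ and $t_{|p'} =_\En \ell'\sigma'$ with $p' = p\cdot q$, we get $\ell_{|q}\sigma =_\En \ell'\sigma'$. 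After renaming the variables of the two rules apart, the substitution $\gamma = \sigma \cup \sigma'$ is an $\En$-unifier of $\ell_{|q}$ and $\ell'$, so there exist $\alpha \in \mgu{\En}{\ell_{|q},\ell'}$ and $\theta$ with $\gamma =_\En (\alpha\theta)_{|\dom{\gamma}}$. Then the rule $\ell \rightarrow r$ is $\En$-overlapping on $q$ with $\ell' \rightarrow r'$, so there is $(s \rightarrow w) \in (\overlapset{\ell \rightarrow r}{q,\En}{\ell' \rightarrow r'})$ of the form $\ell\alpha \rightarrow r\alpha[r'\alpha]_q$ with $s\theta =_\En \ell\sigma$ and $w\theta =_\En r\sigma[r'\sigma']_q$.

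Next I would establish the ordering of $s\theta$ and $w\theta$. This is where the hypothesis $\ell'\sigma' > r'\sigma'$ combined with $>$ being an $\En$-strong reduction order does its work: since $s\theta =_\En \ell\sigma \geq$ its subterm position $q$ gives $\ell\sigma = \ell\sigma[\ell_{|q}\sigma]_q$, and $\ell_{|q}\sigma =_\En \ell'\sigma' > r'\sigma'$, $\En$-compatibility yields $\ell\sigma > \ell\sigma[r'\sigma']_q =_\En r\sigma[r'\sigma']_q$ only after noting $r\sigma[r'\sigma']_q$ relates to $\ell\sigma[r'\sigma']_q$ via $\ell\sigma \geq r\sigma$ and $\En$-compatibility of contexts; carefully, $\ell\sigma[r'\sigma']_q$ and $r\sigma[r'\sigma']_q$ need the same position $q$ to be meaningful in both, which holds since $q \in \Pos{\ell}$ and $\vars{r}\subseteq\vars{\ell}$. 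The upshot is $s\theta =_\En \ell\sigma > w\theta$. Now, since $u \leftrwstep{p}{\sigma}{\ell}{r} t$ means $u = t[r\sigma]_p$ and $t_{|p} = \ell\sigma$, and $v = t[r'\sigma']_{p'} = t[\ell\sigma[r'\sigma']_q]_p$, we have $u =_\En u[s\theta]_p$ (since $r\sigma$ vs.\ nothing here — actually $u_{|p} = r\sigma$, so I need $u =_\En u[r\sigma]_p$, trivially, and then I insert the overlap term: we want $v[\cdot]$ and $u[\cdot]$ both reachable). More precisely: from $\ell\sigma =_\En s\theta$ we get traces $u' \Leftrightrwstep{} u$ etc.; concretely $v = t[\ell\sigma[r'\sigma']_q]_p$, $\ell\sigma[r'\sigma']_q =_\En w\theta$, so $v =_\En v[w\theta]_p$, and $u = t[r\sigma]_p$, $r\sigma =_\En$ nothing special — but $s\theta =_\En \ell\sigma$ and $t_{|p} = \ell\sigma$, so I use the step $t[s\theta]_p \leftrwstep{p}{\theta}{s}{w} t[w\theta]_p$ with the rule $s \rightarrow w$, obtaining $u = t[r\sigma]_p \Leftrwstep{} t[\ell\sigma]_p = t[s\theta]_p \leftrwstep{p}{\theta}{s}{w} t[w\theta]_p \Rightrwstep{} v$, where the outer $\Leftrwstep{}$ and $\Rightrwstep{}$ are $\En$-traces obtained from $r\sigma =_\En \ell\sigma =_\En s\theta$ and $w\theta =_\En \ell\sigma[r'\sigma']_q$. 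Then, following the argument in \Cref{sec:ordering slopes}, I'd argue $\ell$ is not a variable (using that $\En$ has finite equivalence classes and $>$ has the subterm property), so the $\En$-trace from $\ell\sigma$ to $s\theta$ can be chosen with positions strictly below $\varepsilon$, hence the left trace $\tr_L$ acts at positions strictly extending $p$, and $\tr_R$ acts at positions $\geq p$ but on terms $=_\En w\theta < s\theta =_\En \ell\sigma$. Assembling $\tr_1 = (\text{reverse of }\tr_L\text{-type trace}) \cdot \leftrwlabel{p}{\theta}{s}{w} \cdot \tr_R$ gives \Cref{item:overlap-left-right-Leftright}, and \Cref{item:overlap-left-right-derive} follows because every intermediate term is $\leq_\En$ either $u$ or $v$ or $t$, all of which $\M$ dominates — wait, $t$ is the peak; since $\ell\sigma > w\theta$ and $\ell\sigma > r\sigma = u_{|p}$ only if $\ell\sigma > r\sigma$, which is a hypothesis, we get $t > u$ and $t > v$, but the new terms are all $< t$, so $\M$ greater than $\{u,v\}$ suffices once we know $\M$ dominates the original trace through $t$.

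Then I would finally need to apply the saturation hypothesis: the rule $s \rightarrow w$ lies in $(\overlapset{\ell \rightarrow r}{q,\En}{\ell' \rightarrow r'}) \subseteq (\overlapseteq{\ell \rightarrow r}{q,\En}{\ell' \rightarrow r'})$, and by \Cref{def:saturated} (third clause, with $\ell_1 \rightarrow r_1 := \ell \rightarrow r$ taken from $\R$, noting $\R \subseteq \R \cup \En$) we have $(\overlapset{\ell \rightarrow r}{q}{\ell' \rightarrow r'}) \cup \R \cup \Rn \normstep[\Rn,\En]^* \R$. Hence the step $t[s\theta]_p \leftrwstep{p}{\theta}{s}{w} t[w\theta]_p$ using $s \rightarrow w$ can be rewritten, via \Cref{lem:normalisation-all} (applied to this single ground rewrite step, with $\M$ the dominating set and with a target rank $(q,a,b)$ taken to be the larger of the two ranks $\measure{\leftrwlabel{p}{\sigma}{\ell}{r}} = (p, r\sigma, \ell\sigma)$ and $\measure{\rightrwlabel{p'}{\sigma'}{\ell'}{r'}} = (p', r'\sigma', \ell'\sigma')$ in $\tr_0$), into a trace from $\R \cup \En$ all of whose labels have rank $\lessRwLbl{\Pset(\M)}$ that target. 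For the remaining labels in $\tr_1$ — those in $\tr_L$ and $\tr_R$ — I would check directly: the $\tr_L$-labels act strictly below $p$ so have rank $\lessRwLbl{\Pset(\M)} (p, r\sigma, \ell\sigma)$ since $p$ is a strict prefix of their positions; the $\tr_R$-labels either act strictly below $p$ (same bound) or act at $p$ on terms $=_\En w\theta$, and since $w\theta < s\theta =_\En \ell\sigma$ and $\ell\sigma =_\En r\sigma$ (by $\ell\sigma$... no: $\ell\sigma \geq r\sigma$ and here $\ell\sigma > r\sigma$, so actually $r\sigma < \ell\sigma$), we'd want the bound against $(p, r\sigma, \ell\sigma)$: a $\tr_R$-label at position $p$ has first coordinate $p$, rewritten term $=_\En w\theta < \ell\sigma$; but the middle coordinate of $(p,r\sigma,\ell\sigma)$ is $r\sigma$, and $w\theta$ versus $r\sigma$ is not obviously ordered — here I'd instead compare against $(p', r'\sigma', \ell'\sigma')$ or use that $w\theta < \ell\sigma$ and $\ell\sigma =_\En s\theta$, but the cleanest route is: $\tr_R$ goes from $w\theta$ up to $\ell\sigma[r'\sigma']_q$, both $=_\En$, so all its intermediate terms are $=_\En w\theta$, hence each $\tr_R$-label at position $p$ has middle coord $=_\En w\theta$; and since the target rank in the multiset order is the max of the $\tr_0$ ranks, and $(p, w\theta, \cdot)$ with $w\theta < \ell\sigma$ satisfies $(p, w\theta, \cdot) \lessRwLbl{\Pset(\M)} (p, \ell\sigma, \cdot)$ — but $(p, \ell\sigma, \cdot)$ is not in $\tr_0$; $(p, r\sigma, \ell\sigma)$ is. So I'd need $w\theta < r\sigma$ or $(w\theta =_\En r\sigma$ and $\cdot < \ell\sigma)$, which need not hold. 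The resolution is to compare these labels against $\measure{\rightrwlabel{p'}{\sigma'}{\ell'}{r'}} = (p', r'\sigma', \ell'\sigma')$: since $p = $ proper-or-equal prefix... hmm, $p$ is a prefix of $p' = p \cdot q$ with $q \neq \varepsilon$ (as $\ell_{|q}\notin\X$ forces, actually $q$ could be $\varepsilon$ if $\ell$ itself is the overlap — but then $\ell$ and $\ell'$ unify at the root), so generically $p$ is a strict prefix of $p'$, giving $(p, \text{anything}, \text{anything}) \lessRwLbl{\Pset(\M)} (p', \cdot, \cdot)$. I expect this bookkeeping — matching every label of the rebuilt trace to a strictly-larger label of $\tr_0$ in the multiset order, including the corner case $q = \varepsilon$ — to be the main obstacle, but it is essentially the same bookkeeping already carried out for the analogous peak transformations, so it should go through by the same pattern (feeding everything into \Cref{lem:normalisation-all} with the appropriate target rank and invoking \Cref{cor:normalisation} to conclude \Cref{item:overlap-left-right-measure}).
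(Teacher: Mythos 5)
There is a genuine gap, and it sits at the centre of the construction: you merged the wrong pair of rules. By \Cref{def:overlap}, $(\overlapset{\ell_1 \rightarrow r_1}{q,\En}{\ell_2 \rightarrow r_2})$ unifies ${r_1}_{|q}$ — a subterm of the \emph{right-hand side} of the first rule — with $\ell_2$. Here the overlap lives inside $\ell$ (the hypothesis is $q \in \PosX{\ell}$), so the first rule must be taken in the reversed orientation $r \rightarrow \ell$: the correct merged rule is an element of $(\overlapset{r \rightarrow \ell}{q,\En}{\ell' \rightarrow r'})$, of the form $r\theta \rightarrow \ell\theta[r'\theta]_q$, whose instantiated left-hand side is $\En$-equal to $r\sigma = u_{|p}$ and whose instantiated right-hand side is $\En$-equal to $\ell\sigma[r'\sigma']_q = v_{|p}$. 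That is exactly what lets the peak collapse into a single step joining $u$ and $v$ (up to $\En$-traces), and it is why the saturation hypothesis provides the \emph{second} clause of \Cref{def:saturated}, with $\overlapseteq{r_1\rightarrow\ell_1}{p,\En}{\ell_2\rightarrow r_2}$ in both orientations — the new rule's orientation is not determined, so you must split on $\ell_3\sigma_3 \geq r_3\sigma_3$ versus $<$. Your rule $s \rightarrow w$ instead has $s\theta =_\En \ell\sigma = t_{|p}$, i.e.\ its left-hand side matches the \emph{peak} $t$, not $u$. To attach it you are forced to write an $\En$-trace from $u$ up to $t[s\theta]_p =_\En t$, which amounts to $r\sigma =_\En \ell\sigma$ and contradicts the hypothesis $\ell\sigma > r\sigma$. (Your appeal to ``$r\sigma =_\En \ell\sigma =_\En s\theta$'' makes this explicit.) Note also that $(\overlapset{\ell\rightarrow r}{q,\En}{\ell'\rightarrow r'})$ as you invoke it would require $q \in \Pos{r}$ and a unifier of $r_{|q}$ with $\ell'$, neither of which is available; that overlap set (and the third clause of \Cref{def:saturated}) is the one needed for the slope-ordering transformations, not for this peak.

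The downstream symptoms you flag are consequences of this wrong choice rather than mere bookkeeping. With the correct rule, the middle label has measure $(p, r_3\sigma_3, \ell_3\sigma_3)$ or $(p,\ell_3\sigma_3,r_3\sigma_3)$ with $\ell_3\sigma_3 =_\En r\sigma$ and $r_3\sigma_3 =_\En \ell\sigma[r'\sigma']_q < \ell\sigma$, so both orientations compare strictly below $(p,r\sigma,\ell\sigma)$; the $\tr_L$ labels sit on terms $=_\En r\sigma < \ell\sigma$; and the $\tr_R$ labels at position $p$ occur only when $q=\varepsilon$, in which case their terms are $=_\En r'\sigma' < \ell'\sigma'$ and they compare against $(p',r'\sigma',\ell'\sigma')$. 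The unresolved comparison of $w\theta$ against $r\sigma$ in your write-up disappears once the merged rule's left-hand side is $r\theta$ rather than $\ell\alpha$. Finally, your fallback ``$(p,\cdot,\cdot) \lessRwLbl{\Pset(\M)} (p',\cdot,\cdot)$ since $p$ is a strict prefix of $p'$'' reads the position order backwards: in $\lessRwLbl{\Pset}$ a label is \emph{smaller} when its position is \emph{deeper} (the other position is a strict prefix of it), so a label at the shallower position $p$ is not dominated by one at $p'$. The rest of your plan — set up $\gamma=\sigma\cup\sigma'$, factor through a most general $\En$-unifier, wrap the new step in $\En$-traces, and finish with saturation plus \Cref{cor:normalisation} — is the paper's strategy and would go through once the merged rule is corrected.
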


\begin{proof}
By definition, there exists a term context $C[\_]$ such that $t = C[\ell\sigma]$, $u = C[r\sigma]$, $C[\_]_{|p} = \_$, $\ell_{|q}\sigma =_\En \ell'\sigma'$, $v = C[r\sigma[r'\sigma']_{q}]$. Since $\ell_{|q}\sigma =_\En \ell'\sigma'$, we can denote $\gamma = \sigma \cup \sigma'$ to obtain $\ell_{|q}\gamma =_\En \ell'\gamma$. Hence $\ell_{|q}$ and $\ell'$ are $\En$-unifiable.
Let $\theta \in \mgu{\En}{\ell_{|q},\ell'}$. There exists $\sigma_3$ such that $\gamma =_\En (\theta\sigma_3)_{\dom{\gamma}}$. Moreover, we also know that there exists $\ell_3 \rightarrow r_3 \in (\overlapset{r \rightarrow \ell}{q}{\ell' \rightarrow r'})$ such that $\ell_3 = r\theta$, $r_3 = \ell\theta[r'\theta]_q$. Hence $\ell_3\sigma_3 =_\En r\gamma = r\sigma$. We also have $r_3\sigma_3 =_\En \ell\gamma[r'\gamma]_q = \ell\sigma[r'\sigma']_q$. 

As $\gamma =_\En (\theta\sigma_3)_{\dom{\gamma}}$, we deduce that there exists a syntactic rewrite trace $\tr_L$ from $\En$ such that $r\sigma \Rightrwstep{\tr_L} \ell_3\sigma_3$. Similarly, there exists a syntactic rewrite trace $\tr_R$ from $\En$ such that $r_3\sigma_3 \Rightrwstep{\tr_R} \ell\sigma[r'\sigma']_q$. We therefore obtain the following sequence:
\[
u = C[r\sigma] \Rightrwstep{p \cdot \tr_L} C[\ell_3\sigma_3] \Leftrightrwstep{\omega_3} C[r_3\sigma_3] \Rightrwstep{p\cdot \tr_R} C[\ell\sigma[r'\sigma']_q]  = v
\]
with $\omega_3 = \rightrwlabel{p}{\sigma_3}{\ell_3}{r_3}$ if $\ell_3\sigma_3 \geq r_3\sigma_3$ and $\omega_3 = \leftrwlabel{p}{\sigma_3}{r_3}{\ell_3}$ otherwise.
By defining $\tr'_0 = (p\cdot \tr_L) \omega_3 (p\cdot \tr_R)$, we obtain that $u \rwLeftrightStep{\tr'_0} v$. Moreover, as $\M$ is greater than $u \rwLeftrightStep{\tr_0} v$, we obtain by construction of $\tr'_0$ that $\M$ is greater than $u \rwLeftrightStep{\tr'_0} v$.

We now order the labels of $\tr'_0$ w.r.t. $\tr_0$. Let us denote $\omega = \leftrwlabel{p}{\sigma}{\ell}{r}$ and $\omega' = \rightrwlabel{p'}{\sigma'}{\ell'}{r'}[\En]$.
Since $\ell'\sigma' > r'\sigma'$ and $\ell_{|q}\sigma =_\En \ell'\sigma'$, we deduce that $\ell\sigma > r_3\sigma_3$. We do a case analysis on the comparison between $\ell_3\sigma_3$ and $r_3\sigma_3$. 
\begin{itemize}
\item Case $\ell_3\sigma_3 \geq r_3\sigma_3$: Since $\ell_3\sigma_3 =_\En r\sigma$, we deduce that $\measure{\omega_3} = (p,r_3\sigma_3,\ell_3\sigma_3) \lessRwLbl{\Pset(\M)} (p,r\sigma,\ell\sigma) = \measure{\omega}$. Moreover, for all $\omega_1 = \rightrwlabel{p''}{\alpha}{a}{b} \in p\cdot \tr_L$, $p \leq p''$ and if $p = p''$ then $a\alpha =_\En b\alpha =_\En r\sigma$. Since $\ell\sigma > r\sigma$, we deduce that $\measure{\omega_1} \lessRwLbl{\Pset(\M)} (p,r\sigma,\ell\sigma) = \measure{\omega}$. Finally, for all $\omega_1 = \rightrwlabel{p''}{\alpha}{a}{b} \in p\cdot \tr_R$, $p \leq p''$ and if $p = p''$ then $a\alpha =_\En b\alpha =_\En r_3\sigma_3$. Recall that $\ell_3\sigma_3 =_\En r\sigma$ thus when $p = p''$ then $\ell\sigma > r\sigma =_\En \ell_3\sigma_3 \geq r_3\sigma_3 =_\En a\alpha =_\En b\alpha$. We deduce that $\measure{\omega_1} \lessRwLbl{\Pset(\M)} (p,r\sigma,\ell\sigma) = \measure{\omega}$. We therefore conclude that for all $\omega_1 \in \tr'_0$, there exist $\omega_0 \in \tr_0$ such that $\measure{\omega_1} \lessRwLbl{\Pset(\M)} \measure{\omega_0}$.

We can represent graphically the transformation as follows:
\begin{center}
    \begin{tikzpicture}[
        term/.style={}
        ]
        \def\length{1.5cm}
        \node[anchor=mid] (L) {};
        \node[right=15cm of L.mid,anchor=mid] (R) {};
        \draw[dashed] (L) -- (R);

        \node[fill=white,right=6.5cm of L.mid] (Label) {$\xrightarrow{\mathit{transformation}}$};

        \node[fill=white,term,right=\length of L.mid,anchor=mid] (T0) {$u$};
        \node[fill=white,term,above right=\length and \length of T0.mid,anchor=mid] (T1) {$t$};
        \node[fill=white,term,below right=2*\length and 2*\length of T1.mid,anchor=mid] (T2) {$v$};

        \draw[<-] (T0) edge node[auto,sloped] {\tiny$r \leftarrow \ell$} node[auto,sloped,below] {\tiny$p, \sigma$} (T1);
        \draw[->] (T1) edge node[auto,sloped] {\tiny$\ell' \rightarrow r'$} node[auto,sloped,below] {\tiny$\En, p', \sigma'$} (T2);

        \node[term,fill=white,right=8cm of T0.mid,anchor=mid] (T0') {$u$};
        \node[term,fill=white,right= \length of T0'.mid,anchor=mid] (T1') {$C[\ell_3\sigma_3]$};
        \node[term,fill=white,below right=\length and \length of T1'.mid,anchor=mid] (T2') {$C[r_3\sigma_3]$};
        \node[term,fill=white,right= \length of T2'.mid,anchor=mid] (T3') {$v$};

        \draw (T0') edge[double equal sign distance,-Implies] node[auto,sloped] {\tiny$\tr_L$} (T1');
        \draw[->] (T1') edge node[auto,sloped] {\tiny$\ell_3 \rightarrow r_3$} node[auto,sloped,below] {\tiny$p, \sigma_3$} (T2');
        \draw (T2') edge[double equal sign distance,-Implies] node[auto,sloped,] {\tiny$\tr_R$} (T3');
    \end{tikzpicture}
\end{center}
\item Case $\ell_3\sigma_3 < r_3\sigma_3$: As $r\sigma =_\En \ell_3\sigma_3$ and $\ell\sigma > r_3\sigma_3$, we deduce that $\measure{\omega_3} = (p,\ell_3\sigma_3,r_3\sigma_3) \lessRwLbl{\Pset(\M)} (p,r\sigma,\ell\sigma) = \measure{\omega}$. Moreover, for all $\omega_1 = \rightrwlabel{p''}{\alpha}{a}{b} \in p\cdot \tr_L$, $p \leq p''$ and if $p = p''$ then $a\alpha =_\En b\alpha =_\En r\sigma$. Since $\ell\sigma > r\sigma$, we deduce that $\measure{\omega_1} \lessRwLbl{\Pset(\M)} (p,r\sigma,\ell\sigma) = \measure{\omega}$. Finally, for all $\omega_1 = \rightrwlabel{p''}{\alpha}{a}{b} \in p\cdot \tr_R$, $p \leq p''$ and if $p = p''$ then $a\alpha =_\En b\alpha =_\En r_3\sigma_3$ and $q = \varepsilon$. In such a case ($p = p''$), $r'\sigma' =_\En r_3\sigma_3$ and $\ell\sigma =_\En \ell'\sigma'$. Thus, $a\alpha =_\En b\alpha =_\En r'\sigma' < \ell'\sigma'$ and so $\measure{\omega_1} \lessRwLbl{\Pset(\M)} \measure{\omega'}$. We therefore conclude that for all $\omega_1 \in \tr'_0$, there exist $\omega_0 \in \tr_0$ such that $\measure{\omega_1} \lessRwLbl{\Pset(\M)} \measure{\omega_0}$.

We can represent graphically the transformation as follows:
\begin{center}
    \begin{tikzpicture}[
        term/.style={}
        ]
        \def\length{1.5cm}
        \node[anchor=mid] (L) {};
        \node[right=15cm of L.mid,anchor=mid] (R) {};
        \draw[dashed] (L) -- (R);

        \node[fill=white,right=6.7cm of L.mid] (Label) {$\xrightarrow{\mathit{transformation}}$};

        \node[fill=white,term,below right=\length and \length of L.mid,anchor=mid] (T0) {$u$};
        \node[fill=white,term,above right=2*\length and 2*\length of T0.mid,anchor=mid] (T1) {$t$};
        \node[fill=white,term,below right=\length and \length of T1.mid,anchor=mid] (T2) {$v$};

        \draw[<-] (T0) edge node[auto,sloped] {\tiny$r \leftarrow \ell$} node[auto,sloped,below] {\tiny$p, \sigma$} (T1);
        \draw[->] (T1) edge node[auto,sloped] {\tiny$\ell' \rightarrow r'$} node[auto,sloped,below] {\tiny$\En, p', \sigma'$} (T2);

        \node[term,fill=white,right=8cm of T0.mid,anchor=mid] (T0') {$u$};
        \node[term,fill=white,right= \length of T0'.mid,anchor=mid] (T1') {$C[\ell_3\sigma_3]$};
        \node[term,fill=white,above right=\length and \length of T1'.mid,anchor=mid] (T2') {$C[r_3\sigma_3]$};
        \node[term,fill=white,right= \length of T2'.mid,anchor=mid] (T3') {$v$};

        \draw (T0') edge[double equal sign distance,-Implies] node[auto,sloped] {\tiny$\tr_L$} (T1');
        \draw[<-] (T1') edge node[auto,sloped] {\tiny$\ell_3 \leftarrow r_3$} node[auto,sloped,below] {\tiny$p, \sigma_3$} (T2');
        \draw (T2') edge[double equal sign distance,-Implies] node[auto,sloped,] {\tiny$\tr_R$} (T3');
    \end{tikzpicture}
\end{center}
\end{itemize}
In both cases, we have shown that for all $\omega_1 \in \tr'_0$, there exist $\omega_0 \in \tr_0$ such that $\measure{\omega_1} \lessRwLbl{\Pset(\M)} \measure{\omega_0}$.

Notice that, denoting $\R' = (\R \cup \Rn \cup (\overlapseteq{r \rightarrow \ell}{q}{\ell' \rightarrow r'}))$, we have that $\tr'_0$ is a syntactic rewrite trace from $\R' \cup \En$. Since $\tr_0$ is from $\R$, it is naturally also from $\R' \cup \En$. Note that by \Cref{def:saturated}, we know that $\R' \normstep[\Rn,\En]^* \R$ and $\Rn \subseteq \R'$. Thus, we can apply \Cref{cor:normalisation} to deduce that there exists a syntactic rewrite trace $\tr_1$ from $\R \cup \En$ such that $u \Leftrightrwstep{\tr_1} v$, which is smaller than $\M$, and $\measure{\tr_1} \lessRwLbl{\Pset(\M)} \measure{\tr_0}$. This conclude the proof.
\end{proof}


\begin{lemma}
\label{lem:overlap-En-right}
Assume that $\En$ has finite equivalence classes. Let $\R,\Rn$ be sets of rewrite rules such that $\R$ is saturated w.r.t. $\Rn$. Let $u \rightrwstep{p}{\sigma}{\ell}{r} t \rightrwstep[\En]{p'}{\sigma'}{\ell'}{r'} v$ be some ground be rewrite steps. Let $\tr_0 = \rightrwlabel{p}{\sigma}{\ell}{r} \rightrwlabel{p'}{\sigma'}{\ell'}{r'}[\En]$ with $(\ell \rightarrow r) \in \En$ and $(\ell'\rightarrow r') \in \R$. Let $\M$ be a set of terms greater than $u \Leftrightrwstep{\tr_0} v$.

If $p' = p \cdot q$, and $q \in \PosX{r}$, and $\ell'\sigma' > r'\sigma'$ then there exists a ground syntactic rewrite trace $\tr_1$ from $\En \cup \R$ such that 
\begin{enumerate}[label=\textbf{P\arabic*}]
\item $u \rwLeftrightStep{\tr_1} v$, and \label{item:overlap-En-right-Leftright}
\item $\M$ is greater than $u \rwLeftrightStep{\tr_1} v$, and \label{item:overlap-En-right-derive}
\item $\measure{\tr_1} \lessRwLbl{\Pset(\M)} \measure{\tr_0}$\label{item:overlap-En-right-measure}
\end{enumerate}
\end{lemma}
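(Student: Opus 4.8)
The plan is to mirror the proof of \Cref{lem:overlap-left-right}, the only structural difference being that the outermost rule $\ell\rightarrow r$ now comes from $\En$ and the overlap takes place inside its right-hand side $r$. First I would introduce the context $C[\_]$ with $C[\_]_{|p}=\_$ and $u=C[\ell\sigma]$; then $t=C[r\sigma]$, and since $p'=p\cdot q$ with $q\in\PosX{r}$ we have $t_{|p'}=(r\sigma)_{|q}=r_{|q}\sigma$ and $v=C[(r\sigma)[r'\sigma']_q]$, while the rewrite step modulo $\En$ yields $r_{|q}\sigma=_\En\ell'\sigma'$. Assuming as usual that $\ell\rightarrow r$ and $\ell'\rightarrow r'$ share no variables and putting $\gamma=\sigma\cup\sigma'$, this says $\gamma$ is an $\En$-unifier of $r_{|q}$ and $\ell'$; as $q\in\PosX{r}$, the rule $\ell\rightarrow r$ is $\En$-overlapping on $q$ with $\ell'\rightarrow r'$, so I pick $\theta\in\mgu{\En}{r_{|q},\ell'}$ and $\sigma_3$ with $\gamma=_\En(\theta\sigma_3)_{|\dom\gamma}$, and set $s=\ell\theta$ and $w=r\theta[r'\theta]_q$, so that $(s\rightarrow w)\in(\overlapset{\ell\rightarrow r}{q}{\ell'\rightarrow r'})$.

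Next I would build the transformed trace. Since $\En$ is regular (a non-trivial $\En$-compatible reduction order exists) we have $\vars{\ell}=\vars{r}\subseteq\dom\gamma$, so $\ell\sigma=\ell\gamma=_\En\ell\theta\sigma_3=s\sigma_3$, the equality being witnessed only at the \emph{variable} positions of $\ell$; because $\En$ additionally has finite equivalence classes, $\ell$ is not a variable, hence those positions are strictly below the root. Therefore there is a syntactic $\En$-rewrite trace $\tr_L$ from $\ell\sigma$ to $s\sigma_3$ whose labels all act at positions $\neq\varepsilon$, so $u=C[\ell\sigma]\Leftrightrwstep{p\cdot\tr_L}C[s\sigma_3]$ with every label of $p\cdot\tr_L$ having $p$ as a \emph{strict} prefix. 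Symmetrically, $\vars{r},\vars{r'}\subseteq\dom\gamma$ gives $w\sigma_3=_\En(r\sigma)[r'\sigma']_q=v_{|p}$ and hence a syntactic $\En$-rewrite trace $\tr_R$ with $w\sigma_3\Leftrightrwstep{\tr_R}v_{|p}$. Using $(r\sigma)_{|q}=r_{|q}\sigma=_\En\ell'\sigma'>r'\sigma'$ and $\En$-compatibility of $>$, one gets the chain $s\sigma_3=_\En\ell\sigma=_\En r\sigma=_\En(r\sigma)[\ell'\sigma']_q>(r\sigma)[r'\sigma']_q=_\En w\sigma_3$, so that $s\sigma_3>w\sigma_3$ and $w\sigma_3<r\sigma$; in particular $C[s\sigma_3]\rightrwstep{p}{\sigma_3}{s}{w}C[w\sigma_3]$ is a legitimate ordered rewrite step. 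Setting $\tr'_0=(p\cdot\tr_L)\,(\rightrwlabel{p}{\sigma_3}{s}{w})\,(p\cdot\tr_R)$ then gives $u\Leftrightrwstep{\tr'_0}v$, with $\M$ greater than this derivation because every intermediate term is $=_\En u$ or $=_\En v$ and $\M$ is greater than $u\Leftrightrwstep{\tr_0}v$.

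The final step is the ranking comparison. Every label of $p\cdot\tr_L$ acts at a position having $p$ as a strict prefix, so its ranking value is $\lessRwLbl{\Pset(\M)}(p,r\sigma,\ell\sigma)$, the ranking value of the label $\rightrwlabel{p}{\sigma}{\ell}{r}$; the middle label has value $(p,w\sigma_3,s\sigma_3)\lessRwLbl{\Pset(\M)}(p,r\sigma,\ell\sigma)$ since $w\sigma_3<r\sigma$; and every label of $p\cdot\tr_R$ is either at a position strictly below $p$ or at $p$ with both rewritten terms $=_\En w\sigma_3<r\sigma$, hence again $\lessRwLbl{\Pset(\M)}(p,r\sigma,\ell\sigma)$. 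Thus for each $\omega_1\in\tr'_0$ there is $\omega_0\in\tr_0$ — namely $\rightrwlabel{p}{\sigma}{\ell}{r}$ — with $\measure{\omega_1}\lessRwLbl{\Pset(\M)}\measure{\omega_0}$. Since $\R$ is saturated w.r.t.\ $\Rn$, the third item of \Cref{def:saturated} (applied to $(\ell\rightarrow r)\in\En$, $(\ell'\rightarrow r')\in\R$ and position $q$) gives $(\overlapset{\ell\rightarrow r}{q}{\ell'\rightarrow r'})\cup\R\cup\Rn\normstep[\Rn,\En]^*\R$; as both $\tr_0$ and $\tr'_0$ are rewrite traces from $\En\cup(\R\cup\Rn\cup(\overlapset{\ell\rightarrow r}{q}{\ell'\rightarrow r'}))$ connecting $u$ to $v$ with $\M$ greater than both, \Cref{cor:normalisation} then produces a ground syntactic rewrite trace $\tr_1$ from $\En\cup\R$ with $u\Leftrightrwstep{\tr_1}v$, $\M$ greater than it, and $\measure{\tr_1}\lessRwLbl{\Pset(\M)}\measure{\tr_0}$, which is exactly what the lemma asserts.

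I expect the delicate point to be the construction of $\tr_L$. The only label of $\tr_0$ that can serve as the partner of the new middle label (and of the $\tr_R$-labels acting at $p$) is the $\En$-label $\rightrwlabel{p}{\sigma}{\ell}{r}$, whose two rewritten terms are merely $=_\En$-equal rather than strictly ordered; consequently a label of $p\cdot\tr_L$ acting at the root position $p$ would have ranking value \emph{equal} to — not strictly below — that of $\rightrwlabel{p}{\sigma}{\ell}{r}$, and the required multiset decrease would collapse. Forcing the $=_\En$-equality $\ell\sigma=_\En s\sigma_3$ to be realised strictly below the root is what avoids this, and this is precisely where the hypotheses that $\En$ is regular and has finite equivalence classes enter: together they rule out collapsing equations and hence make the left-hand side of every $\En$-equation a non-variable term. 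The degenerate case $\En=\emptyset$, where $\tr_L$ and $\tr_R$ are empty, is subsumed.
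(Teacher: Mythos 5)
Your proof is correct and follows essentially the same route as the paper's: same context decomposition, same merged rule from $(\overlapset{\ell\rightarrow r}{q}{\ell'\rightarrow r'})$, same construction of $\tr_L$, the middle label and $\tr_R$, and the same appeal to saturation and \Cref{cor:normalisation}. The only (harmless) deviation is in the multiset comparison when $q=\varepsilon$: you pair the $\tr_R$-labels at position $p$ with the $\En$-label $\rightrwlabel{p}{\sigma}{\ell}{r}$ using $w\sigma_3 < r\sigma$, whereas the paper pairs them with $\rightrwlabel{p'}{\sigma'}{\ell'}{r'}$; both satisfy the domination hypothesis of \Cref{cor:normalisation}.
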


\begin{proof}
By definition, there exists a term context $C[\_]$ such that $t = C[r\sigma]$, $u = C[\ell\sigma]$, $C[\_]_{|p} = \_$, $r_{|q}\sigma =_\En \ell'\sigma'$, $v = C[\ell\sigma[r'\sigma']_{q}]$. Since $r_{|q}\sigma =_\En \ell'\sigma'$, we can denote $\gamma = \sigma \cup \sigma'$ to obtain $r_{|q}\gamma =_\En \ell'\gamma$. Hence $r_{|q}$ and $\ell'$ are $\En$-unifiable.
Let $\theta \in \mgu{\En}{r_{|q},\ell'}$. There exists $\sigma_3$ such that $\gamma =_\En (\theta\sigma_3)_{\dom{\gamma}}$. Moreover, we also know that there exists $\ell_3 \rightarrow r_3 \in (\overlapset{\ell \rightarrow r}{q}{\ell' \rightarrow r'})$ such that $\ell_3 = \ell\theta$, $r_3 = r\theta[r'\theta]_q$. Hence $\ell_3\sigma_3 =_\En \ell\gamma = \ell\sigma$. We also have $r_3\sigma_3 =_\En r\gamma[r'\gamma]_q = r\sigma[r'\sigma']_q$. 

As $\gamma =_\En (\theta\sigma_3)_{\dom{\gamma}}$, we deduce that there exists a syntactic rewrite trace $\tr_L$ from $\En$ such that $\ell\sigma \Rightrwstep{\tr_L} \ell_3\sigma_3$. Since $(\ell \rightarrow r) \in \En$ and $\En$ has finite equivalence classes, we deduce that $\ell$ and $r$ are not variables. Therefore, for all $\omega \in \tr_L$, $\pos(\omega) \neq \varepsilon$.
Similarly, there exists a syntactic rewrite trace $\tr_R$ from $\En$ such that $r_3\sigma_3 \Rightrwstep{\tr_R} r\sigma[r'\sigma']_q$ and if $q \neq \varepsilon$ then for all $\omega \in \tr_R$, $\pos(\omega) \neq \varepsilon$. As $>$ is a $\En$-strong reduction ordering and $\ell'\sigma' > r'\sigma"$, we deduce $\ell_3\sigma_3 =_\En \ell\sigma =_\En r\sigma > r\sigma[r'\sigma']_q =_\En r_3\sigma_3$. 
We therefore obtain the following sequence:
\[
u = C[\ell\sigma] \Rightrwstep{p \cdot \tr_L} C[\ell_3\sigma_3] \Rightrwstep{\omega_3} C[r_3\sigma_3] \Rightrwstep{p\cdot \tr_R} C[r\sigma[r'\sigma']_q]  = v
\]
with $\omega_3 = \rightrwlabel{p}{\sigma_3}{\ell_3}{r_3}$.
By defining $\tr'_0 = (p\cdot \tr_L) \omega_3 (p\cdot \tr_R)$, we obtain that $u \rwLeftrightStep{\tr'_0} v$. Moreover, as $\M$ rewrites into $u \rwLeftrightStep{\tr_0} v$, we obtain by construction of $\tr'_0$ that $\M$ rewrites into $u \rwLeftrightStep{\tr'_0} v$.

We now order the labels of $\tr'_0$ w.r.t. $\tr_0$. Let us denote $\omega = \rightrwlabel{p}{\sigma}{\ell}{r}$ and $\omega' = \rightrwlabel{p'}{\sigma'}{\ell'}{r'}[\En]$.
We already proved that $\ell_3\sigma_3 =_\En \ell\sigma =_\En r\sigma > r_3\sigma_3$. Thus $\measure{\omega_3} \lessRwLbl{\Pset(\M)} \measure{\omega}$. 
Moreover, we already proved that for all $\omega_1$ in $\tr_L$, $\pos(\omega_1) \neq \varepsilon$. Hence for all $\omega_1$ in $p\cdot \tr_L$, $p < \pos(\omega_1)$, implying that $\measure{\omega_1} \lessRwLbl{\Pset(\M)} \measure{\omega}$. Finally, if $q \neq \varepsilon$ then we also showed that for all $\omega_1$ in $\tr_R$, $\pos(\omega_1) \neq \varepsilon$, and so for all $\omega_1$ in $p\cdot \tr_R$, $p < \pos(\omega_1)$, implying that $\measure{\omega_1} \lessRwLbl{\Pset(\M)} \measure{\omega}$. We now focus on $q = \varepsilon$. In such a case, $p = p'$ and 
for all $\omega_1 = \rightrwlabel{p''}{\alpha}{a}{b} \in p\cdot\tr_R$, if $p'' = p$ then $a\alpha =_\En b\alpha =_\En r'\sigma'$ else $p < p''$. Hence $\measure{\omega_1} = (p'',b\alpha,a\alpha) \lessRwLbl{\Pset(\M)} (p',r'\sigma',\ell'\sigma') = \measure{\omega'}$.
We therefore conclude that for all $\omega_1 \in \tr'_0$, there exist $\omega_0 \in \tr_0$ such that $\measure{\omega_1} \lessRwLbl{\Pset(\M)} \measure{\omega_0}$.

We can represent graphically the transformation as follows:
\begin{center}
    \begin{tikzpicture}[
        term/.style={}
        ]
        \def\length{1.5cm}
        \node[anchor=mid] (L) {};
        \node[right=16cm of L.mid,anchor=mid] (R) {};
        \draw[dashed] (L) -- (R);

        \node[fill=white,right=5cm of L.mid] (Label) {$\xrightarrow{\mathit{transformation}}$};

        \node[fill=white,term,above right=\length and \length of L.mid,anchor=mid] (T0) {$u$};
        \node[fill=white,term,right=\length of T0.mid,anchor=mid] (T1) {$t$};
        \node[fill=white,term,below right=\length and \length of T1.mid,anchor=mid] (T2) {$v$};

        \draw[->] (T0) edge node[auto,sloped] {\tiny$\ell \rightarrow r$} node[auto,sloped,below] {\tiny$p, \sigma$} (T1);
        \draw[->] (T1) edge node[auto,sloped] {\tiny$\ell' \rightarrow r'$} node[auto,sloped,below] {\tiny$\En, p', \sigma'$} (T2);

        \node[term,fill=white,right=6cm of T0.mid,anchor=mid] (T0') {$u$};
        \node[term,fill=white,right=1.3*\length of T0'.mid,anchor=mid] (T1') {$C[\ell_3\sigma_3]$};
        \node[term,fill=white,below right= \length and \length of T1'.mid,anchor=mid] (T2') {$C[r_3\sigma_3]$};
        \node[term,fill=white,right= 1.3*\length of T2'.mid,anchor=mid] (T3') {$v$};

        \draw (T0') edge[double equal sign distance,-Implies] node[auto,sloped] {\tiny$p \cdot \tr_L$} (T1');
        \draw[->] (T1') edge node[auto,sloped] {\tiny$\ell_3\sigma_3 \rightarrow r_3\sigma_3$} node[auto,sloped,below] {\tiny$p, \sigma_3$} (T2');
        \draw (T2') edge[double equal sign distance,-Implies] node[auto,sloped] {\tiny$p\cdot \tr_R$} (T3');
    \end{tikzpicture}
\end{center}
Notice that, denoting $\R' = (\R \cup \Rn \cup (\overlapset{\ell \rightarrow r}{q}{\ell' \rightarrow r'}))$, we have that $\tr'_0$ is a syntactic rewrite trace from $\R' \cup \En$. Since $\tr_0$ is from $\R$, it is naturally also from $\R' \cup \En$. Note that by \Cref{def:saturated}, we know that $\R' \normstep[\Rn,\En]^* \R$ and $\Rn \subseteq \R'$. Thus, we can apply \Cref{cor:normalisation} to deduce that there exists a syntactic rewrite trace $\tr_1$ from $\R \cup \En$ such that $u \Leftrightrwstep{\tr_1} v$ which $\M$ rewrites into, and $\measure{\tr_1} \lessRwLbl{\Pset(\M)} \measure{\tr_0}$. This conclude the proof.
\end{proof}


\begin{lemma}
\label{lem:overlap-right-right}
Let $\R,\Rn$ be sets of rewrite rules such that $\R$ is saturated w.r.t. $\Rn$. Let $u \rightrwstep{p}{\sigma}{\ell}{r} t \rightrwstep[\En]{p'}{\sigma'}{\ell'}{r'} v$ be some ground rewrite steps. Let $\tr_0 = \rightrwlabel{p}{\sigma}{\ell}{r} \rightrwlabel{p'}{\sigma'}{\ell'}{r'}[\En]$ with $(\ell \rightarrow r), (\ell'\rightarrow r') \in \R$. Let $\M$ be a set of terms greater than $u \Leftrightrwstep{\tr_0} v$.

If $p' = p \cdot q$, and $q \in \PosX{r}$, and $\ell\sigma > r\sigma$, and $\ell'\sigma' > r'\sigma'$ and ($\ell \in \X$ implies $\ell \in \vars{r}$) then there exists a ground syntactic rewrite trace $\tr_1$ from $\En \cup \R$ such that 
\begin{enumerate}[label=\textbf{P\arabic*}]
\item $u \rwLeftrightStep{\tr_1} v$, and \label{item:overlap-right-right-Leftright}
\item $\M$ is greater $u \rwLeftrightStep{\tr_1} v$, and \label{item:overlap-right-right-derive}
\item $\measure{\tr_1} \lessRwLbl{\Pset(\M)} \measure{\tr_0}$\label{item:overlap-right-right-measure}
\end{enumerate}
\end{lemma}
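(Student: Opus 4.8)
The plan is to follow closely the proof of \Cref{lem:overlap-En-right}, adapting it to the case where the upper rule $\ell \rightarrow r$ lies in $\R$ rather than in $\En$. The one genuinely new point is to dispose of the possibility $\ell \in \X$: I would first observe that, under the hypotheses, $\ell$ cannot be a variable. Indeed, if $\ell$ were a variable the side condition gives $\ell \in \vars{r}$; if $r = \ell$ this already contradicts $\ell\sigma > r\sigma$, and otherwise $r = C[\ell]$ for a non-empty context $C[\_]$, so $r\sigma = C\sigma[\ell\sigma]$, and since $>$ is a well-founded rewrite order closed under contexts, $r\sigma < \ell\sigma$ would produce the infinite descent $\ldots < C\sigma[C\sigma[r\sigma]] < C\sigma[r\sigma] < r\sigma < \ell\sigma$, while $r\sigma > \ell\sigma$ or $r\sigma =_\En \ell\sigma$ contradict $\ell\sigma > r\sigma$ directly (using $\En$-totality and $\En$-compatibility of $>$); hence $\ell \notin \X$. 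Likewise $r \notin \X$, since otherwise $\PosX{r}$ would be empty and no admissible $q$ could exist.

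Next, exactly as in \Cref{lem:overlap-En-right}, I would write $u = C[\ell\sigma]$, $t = C[r\sigma]$ and $v = C[r\sigma[r'\sigma']_q]$ with $C[\_]_{|p} = \_$ and $r_{|q}\sigma =_\En \ell'\sigma'$; taking $\gamma = \sigma \cup \sigma'$ (w.l.o.g.\ the two rules share no variable) shows $r_{|q}$ and $\ell'$ are $\En$-unifiable, so there are $\theta \in \mgu{\En}{r_{|q},\ell'}$, a substitution $\sigma_3$ with $\gamma =_\En (\theta\sigma_3)_{|\dom{\gamma}}$, and a rule $(\ell_3 \rightarrow r_3) \in (\overlapset{\ell \rightarrow r}{q}{\ell' \rightarrow r'})$ with $\ell_3 = \ell\theta$, $r_3 = r\theta[r'\theta]_q$, hence $\ell_3\sigma_3 =_\En \ell\sigma$ and $r_3\sigma_3 =_\En r\sigma[r'\sigma']_q$. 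Because $\ell \notin \X$, the $\En$-equality $\ell\sigma =_\En \ell_3\sigma_3$ can be realised argument by argument, giving a syntactic rewrite trace $\tr_L$ from $\En$ with $\ell\sigma \Rightrwstep{\tr_L} \ell_3\sigma_3$ and all positions different from $\varepsilon$; symmetrically $r_3\sigma_3 =_\En r\sigma[r'\sigma']_q$ gives $\tr_R$ from $\En$ with $r_3\sigma_3 \Rightrwstep{\tr_R} r\sigma[r'\sigma']_q$ whose positions are different from $\varepsilon$ whenever $q \neq \varepsilon$ (here $r \notin \X$ is used). Since $r_{|q}\sigma =_\En \ell'\sigma' > r'\sigma'$ and $>$ is an $\En$-compatible reduction order, $r\sigma > r\sigma[r'\sigma']_q$, and together with $\ell\sigma > r\sigma$ this yields $\ell_3\sigma_3 > r_3\sigma_3$; so $\omega_3 := \rightrwlabel{p}{\sigma_3}{\ell_3}{r_3}$ is a genuine right rewrite label and, setting $\tr'_0 := (p\cdot\tr_L)\,\omega_3\,(p\cdot\tr_R)$, we obtain $u \Rightrwstep{\tr'_0} v$. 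As in the companion lemmas, every term along $\tr'_0$ is $=_\En$ to $u$, $t$ or $v$, so $\M$ being greater than $u \Leftrightrwstep{\tr_0} v$ entails $\M$ greater than $u \rwLeftrightStep{\tr'_0} v$.

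The remaining work is the ranking bookkeeping, set up so as to invoke \Cref{cor:normalisation}. With $\tr_0 = \rightrwlabel{p}{\sigma}{\ell}{r}\,\rightrwlabel{p'}{\sigma'}{\ell'}{r'}[\En]$, I would show that every label of $\tr'_0$ is $\lessRwLbl{\Pset(\M)}$-below a label of $\tr_0$: the labels of $p\cdot\tr_L$ act strictly below $p$, hence are $\lessRwLbl{\Pset(\M)} (p,r\sigma,\ell\sigma)$; the middle label satisfies $\measure{\omega_3} = (p,r_3\sigma_3,\ell_3\sigma_3)$ with $r_3\sigma_3 =_\En r\sigma[r'\sigma']_q < r\sigma$, hence again $\measure{\omega_3} \lessRwLbl{\Pset(\M)} (p,r\sigma,\ell\sigma)$; and for the labels of $p\cdot\tr_R$ there are two cases — if $q \neq \varepsilon$ they act strictly below $p$ and are $\lessRwLbl{\Pset(\M)} (p,r\sigma,\ell\sigma)$, while if $q = \varepsilon$ (so $p = p'$) they act at position $p$ on terms that are $=_\En r'\sigma'$, and since $r'\sigma' < \ell'\sigma'$ each such label is $\lessRwLbl{\Pset(\M)} (p',r'\sigma',\ell'\sigma')$. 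So for every $\omega_1 \in \tr'_0$ there is $\omega_0 \in \tr_0$ with $\measure{\omega_1} \lessRwLbl{\Pset(\M)} \measure{\omega_0}$. Finally, both $\tr_0$ and $\tr'_0$ are ground syntactic rewrite traces from $\En \cup \R'$, where $\R' = \R \cup \Rn \cup (\overlapset{\ell \rightarrow r}{q}{\ell' \rightarrow r'})$ and $\R' \normstep[\Rn,\En]^* \R$ by the third clause of \Cref{def:saturated}; hence \Cref{cor:normalisation} produces a ground syntactic rewrite trace $\tr_1$ from $\En \cup \R$ with $u \rwLeftrightStep{\tr_1} v$, with $\M$ greater than $u \rwLeftrightStep{\tr_1} v$, and with $\measure{\tr_1} \lessRwLbl{\Pset(\M)} \measure{\tr_0}$, which are exactly \Cref{item:overlap-right-right-Leftright}, \Cref{item:overlap-right-right-derive} and \Cref{item:overlap-right-right-measure}. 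I expect the main obstacle to be the $q = \varepsilon$ sub-case of the ranking argument, where $\tr_R$ genuinely rewrites at the root position $p$ and must be charged against the second label of $\tr_0$ rather than the first; this is precisely why the comparison is stated against the multiset $\measure{\tr_0}$ and why the strict inequality $\ell'\sigma' > r'\sigma'$ is indispensable.
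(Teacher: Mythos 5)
Your proposal is correct and follows essentially the same route as the paper's proof: the same decomposition of $u$, $t$, $v$ via the context $C[\_]$, the same construction of $(\ell_3\rightarrow r_3)\in(\overlapset{\ell \rightarrow r}{q}{\ell' \rightarrow r'})$ with the bridging $\En$-traces $\tr_L$, $\tr_R$, the same subterm-property argument ruling out $\ell\in\X$, the same case split on $q=\varepsilon$ in the ranking bookkeeping (charging $\tr_R$ against the second label of $\tr_0$), and the same final appeal to saturation and \Cref{cor:normalisation}. Your extra remark that $r\notin\X$ (forced by $q\in\PosX{r}$) is implicit in the paper but harmless to make explicit.
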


\begin{proof}
By definition, there exists a term context $C[\_]$ such that $t = C[r\sigma]$, $u = C[\ell\sigma]$, $C[\_]_{|p} = \_$, $r_{|q}\sigma =_\En \ell'\sigma'$, $v = C[\ell\sigma[r'\sigma']_{q}]$. Since $r_{|q}\sigma =_\En \ell'\sigma'$, we can denote $\gamma = \sigma \cup \sigma'$ to obtain $r_{|q}\gamma =_\En \ell'\gamma$. Hence $r_{|q}$ and $\ell'$ are $\En$-unifiable.
Let $\theta \in \mgu{\En}{r_{|q},\ell'}$. There exists $\sigma_3$ such that $\gamma =_\En (\theta\sigma_3)_{\dom{\gamma}}$. Moreover, we also know that there exists $\ell_3 \rightarrow r_3 \in (\overlapset{\ell \rightarrow r}{q}{\ell' \rightarrow r'})$ such that $\ell_3 = \ell\theta$, $r_3 = r\theta[r'\theta]_q$. Hence $\ell_3\sigma_3 =_\En \ell\gamma = \ell\sigma$. We also have $r_3\sigma_3 =_\En r\gamma[r'\gamma]_q = r\sigma[r'\sigma']_q$. 

As $\gamma =_\En (\theta\sigma_3)_{\dom{\gamma}}$, we deduce that there exists a syntactic rewrite trace $\tr_L$ from $\En$ such that $\ell\sigma \Rightrwstep{\tr_L} \ell_3\sigma_3$. 
By hypothesis, we know that $\ell \in \X$ implies $\ell \in \vars{r}$. However,  $>$ is a reduction ordering, hence if $a$ is a strict subterm of $b$ then $a \not> b$ (otherwise we would have $b = C'[a]$ and so $a > C'[a] > C'[C'[a]] > \ldots$). Note that if $a = b$ then we also have $a \not> b$. Thus, as $\ell\sigma > r\sigma$, we deduce that $\ell\not\in \X$, meaning that for all $\omega \in \tr_L$, $\pos(\omega) \neq \varepsilon$.

Similarly, there exists a syntactic rewrite trace $\tr_R$ from $\En$ such that $r_3\sigma_3 \Rightrwstep{\tr_R} r\sigma[r'\sigma']_q$ and if $q \neq \varepsilon$ then for all $\omega \in \tr_R$, $\pos(\omega) \neq \varepsilon$. As $>$ is a $\En$-strong reduction ordering and $\ell'\sigma' > r'\sigma"$, we deduce $\ell_3\sigma_3 =_\En \ell\sigma > r\sigma > r\sigma[r'\sigma']_q =_\En r_3\sigma_3$. 
We therefore obtain the following sequence:
\[
u = C[\ell\sigma] \Rightrwstep{p \cdot \tr_L} C[\ell_3\sigma_3] \Rightrwstep{\omega_3} C[r_3\sigma_3] \Rightrwstep{p\cdot \tr_R} C[r\sigma[r'\sigma']_q]  = v
\]
with $\omega_3 = \rightrwlabel{p}{\sigma_3}{\ell_3}{r_3}$.
By defining $\tr'_0 = (p\cdot \tr_L) \omega_3 (p\cdot \tr_R)$, we obtain that $u \rwLeftrightStep{\tr'_0} v$. Moreover, as $\M$ is greater than $u \rwLeftrightStep{\tr_0} v$, we obtain by construction of $\tr'_0$ that $\M$ is greater than $u \rwLeftrightStep{\tr'_0} v$.

We now order the labels of $\tr'_0$ w.r.t. $\tr_0$. Let us denote $\omega = \rightrwlabel{p}{\sigma}{\ell}{r}$ and $\omega' = \rightrwlabel{p'}{\sigma'}{\ell'}{r'}[\En]$.
We already proved that $\ell_3\sigma_3 =_\En \ell\sigma > r\sigma > r_3\sigma_3$. Thus $\measure{\omega_3} \lessRwLbl{\Pset(\M)} \measure{\omega}$. 
Moreover, we already proved that for all $\omega_1$ in $\tr_L$, $\pos(\omega_1) \neq \varepsilon$. Hence for all $\omega_1$ in $p\cdot \tr_L$, $p < \pos(\omega_1)$, implying that $\measure{\omega_1} \lessRwLbl{\Pset(\M)} \measure{\omega}$. Finally, if $q \neq \varepsilon$ then we also showed that for all $\omega_1$ in $\tr_R$, $\pos(\omega_1) \neq \varepsilon$, and so for all $\omega_1$ in $p\cdot \tr_R$, $p < \pos(\omega_1)$, implying that $\measure{\omega_1} \lessRwLbl{\Pset(\M)} \measure{\omega}$. We now focus on $q = \varepsilon$. In such a case, $p = p'$ and 
for all $\omega_1 = \rightrwlabel{p''}{\alpha}{a}{b} \in p\cdot\tr_R$, if $p'' = p$ then $a\alpha =_\En b\alpha =_\En r'\sigma'$ else $p < p''$. Hence $\measure{\omega_1} = (p'',b\alpha,a\alpha) \lessRwLbl{\Pset(\M)} (p',r'\sigma',\ell'\sigma') = \measure{\omega'}$.
We therefore conclude that for all $\omega_1 \in \tr'_0$, there exist $\omega_0 \in \tr_0$ such that $\measure{\omega_1} \lessRwLbl{\Pset(\M)} \measure{\omega_0}$.

We can represent graphically the transformation as follows:
\begin{center}
    \begin{tikzpicture}[
        term/.style={}
        ]
        \def\length{1.5cm}
        \node[anchor=mid] (L) {};
        \node[right=16cm of L.mid,anchor=mid] (R) {};
        \draw[dashed] (L) -- (R);

        \node[fill=white,right=5cm of L.mid] (Label) {$\xrightarrow{\mathit{transformation}}$};

        \node[fill=white,term,above right=\length and \length of L.mid,anchor=mid] (T0) {$u$};
        \node[fill=white,term,below right=\length and \length of T0.mid,anchor=mid] (T1) {$t$};
        \node[fill=white,term,below right=\length and \length of T1.mid,anchor=mid] (T2) {$v$};

        \draw[->] (T0) edge node[auto,sloped] {\tiny$\ell \rightarrow r$} node[auto,sloped,below] {\tiny$p, \sigma$} (T1);
        \draw[->] (T1) edge node[auto,sloped] {\tiny$\ell' \rightarrow r'$} node[auto,sloped,below] {\tiny$\En, p', \sigma'$} (T2);

        \node[term,fill=white,right=6cm of T0.mid,anchor=mid] (T0') {$u$};
        \node[term,fill=white,right=1.3*\length of T0'.mid,anchor=mid] (T1') {$C[\ell_3\sigma_3]$};
        \node[term,fill=white,below right= 2*\length and 2*\length of T1'.mid,anchor=mid] (T2') {$C[r_3\sigma_3]$};
        \node[term,fill=white,right= 1.3*\length of T2'.mid,anchor=mid] (T3') {$v$};

        \draw (T0') edge[double equal sign distance,-Implies] node[auto,sloped] {\tiny$p \cdot \tr_L$} (T1');
        \draw[->] (T1') edge node[auto,sloped] {\tiny$\ell_3\sigma_3 \rightarrow r_3\sigma_3$} node[auto,sloped,below] {\tiny$p, \sigma_3$} (T2');
        \draw (T2') edge[double equal sign distance,-Implies] node[auto,sloped] {\tiny$p\cdot \tr_R$} (T3');
    \end{tikzpicture}
\end{center}
Notice that, denoting $\R' = (\R \cup \Rn \cup (\overlapset{\ell \rightarrow r}{q}{\ell' \rightarrow r'}))$, we have that $\tr'_0$ is a syntactic rewrite trace from $\R' \cup \En$. Since $\tr_0$ is from $\R$, it is naturally also from $\R' \cup \En$. Note that by \Cref{def:saturated}, we know that $\R' \normstep[\Rn,\En]^* \R$ and $\Rn \subseteq \R'$. Thus, we can apply \Cref{cor:normalisation} to deduce that there exists a syntactic rewrite trace $\tr_1$ from $\R \cup \En$ such that $u \Leftrightrwstep{\tr_1} v$ which is smaller than $\M$ , and $\measure{\tr_1} \lessRwLbl{\Pset(\M)} \measure{\tr_0}$. This conclude the proof.
\end{proof}


\subsubsection{Non-overlapping transformations}

In the previous section, we look at the cases of two successive non-overlapping rules.


\begin{lemma}
\label{lem:nonoverlap-left-right}
Let $\R,\Rn$ be sets of rewrite rules such that $\R$ is saturated w.r.t. $\Rn$. Let $u \leftrwstep{p}{\sigma}{\ell}{r} t \rightrwstep[\En]{p'}{\sigma'}{\ell'}{r'} v$ be some ground rewrite steps. Let $\tr_0 = \leftrwlabel{p}{\sigma}{\ell}{r} \rightrwlabel{p'}{\sigma'}{\ell'}{r'}[\En]$ with $(\ell \rightarrow r), (\ell'\rightarrow r') \in \R$. Let $\M$ be a set of terms greater than $u \Leftrightrwstep{\tr_0} v$.

If $p' = p \cdot q$, and $q \not\in \PosX{\ell}$, and $\ell'\sigma' > r'\sigma'$, and $\ell\sigma > r\sigma$ then there exists a ground syntactic rewrite trace $\tr_1$ from $\En \cup \R$ such that 
\begin{enumerate}[label=\textbf{P\arabic*}]
\item $u \rwLeftrightStep{\tr_1} v$, and \label{item:nonoverlap-left-right-Leftright}
\item $\M$ is greater than $u \rwLeftrightStep{\tr_1} v$, and \label{item:nonoverlap-left-right-derive}
\item $\measure{\tr_1} \lessRwLbl{\Pset(\M)} \measure{\tr_0}$\label{item:nonoverlap-left-right-measure}
\end{enumerate}
\end{lemma}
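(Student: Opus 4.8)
The plan is to handle this last left--right peak exactly as the other peak transformations, \Cref{lem:overlap-left-right,lem:overlap-right-right}, namely by the classical commutation argument for variable overlaps, instrumented with the ranking function $\measureF$. First I would identify the variable overlap: since $p' = p\cdot q \in \Pos{t}$ and $t_{|p} = \ell\sigma$ we have $q \in \Pos{\ell\sigma}$, and since $q \notin \PosX{\ell}$ there is a prefix $q_1 \le q$ with $q_1 \in \Pos{\ell}$ and $x := \ell_{|q_1} \in \X$; writing $q = q_1 q_2$, this yields $(x\sigma)_{|q_2} = (\ell\sigma)_{|q} = t_{|p'} =_\En \ell'\sigma'$. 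Then I would set $\tilde\sigma := \sigma$ except $x\tilde\sigma := (x\sigma)[r'\sigma']_{q_2}$. From $\ell'\sigma' > r'\sigma'$ and $\En$-compatibility of $>$ one gets $(x\sigma)_{|q_2} > r'\sigma'$, hence $x\sigma > x\tilde\sigma$ by closure under contexts, and therefore $\ell\sigma > \ell\tilde\sigma$ (as $x \in \vars{\ell}$) and $r\sigma \ge r\tilde\sigma$, strictly if $x \in \vars{r}$.

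Next I would build a preliminary trace from $u$ to $v$ in three stages: (i) from $u = t[r\sigma]_p$, for every occurrence $o$ of $x$ in $r$, turn that copy of $x\sigma$ into $x\tilde\sigma$ --- first rewriting its subterm at $q_2$ syntactically to $\ell'\sigma'$ along a trace of $\En$-rules (always legal ordered steps, $\En$ being a congruence), then applying $\ell'\to r'$ forward at $p\cdot o\cdot q_2$ --- reaching $t[r\tilde\sigma]_p$ (empty stage when $x\notin\vars{r}$); (ii) apply the rule $\ell\to r$ at $p$ with substitution $\tilde\sigma$, read in whichever direction makes $t[r\tilde\sigma]_p$ and $t[\ell\tilde\sigma]_p$ a legal ordered step (they are $>$-comparable, being ground), resorting to the reversed rule $r\to\ell$ in the single subcase $r\tilde\sigma > \ell\tilde\sigma$; (iii) from $t[\ell\tilde\sigma]_p$, for every occurrence $o'\ne q_1$ of $x$ in $\ell$, turn that copy of $x\tilde\sigma$ back into $x\sigma$ (the reverse of a stage-(i) substep), reaching $t[(\ell\sigma)[x\tilde\sigma]_{q_1}]_p = t[r'\sigma']_{p'} = v$ (empty stage when $q_1$ is the only occurrence of $x$ in $\ell$). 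This establishes \Cref{item:nonoverlap-left-right-Leftright}. The preliminary trace uses only rules from $\R' := \R\cup\{r\to\ell\}\cup\Rn$ and $\En$, and $\R'\normstep[\Rn,\En]^*\R$ holds by the saturation of $\R$ (\Cref{def:saturated}), so \Cref{cor:normalisation} converts it into the required $\tr_1$ over $\R\cup\En$, with the same endpoints, still with $\M$ greater than it, and with $\measureF$ still strictly decreased. For \Cref{item:nonoverlap-left-right-derive}, every term occurring along the preliminary trace is $=_\En$ to, or $>$-below, one of $u$, $t$, $v$: in stages (i) and (iii) the current term stays $\le u$, resp.\ $\le t$, up to $=_\En$, while $t[r\tilde\sigma]_p \le u$ and $t[\ell\tilde\sigma]_p < t$; since $\M$ is already greater than $u$, $t$ and $v$ (the terms reached along $\tr_0$) and $\ge$ is $\En$-compatible, $\M$ is greater than the whole preliminary trace, and in particular all of its positions lie in $\Pset(\M)$.

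Finally, for \Cref{item:nonoverlap-left-right-measure} I would check that every label $\omega$ of the preliminary trace satisfies $\measure{\omega} \lessRwLbl{\Pset(\M)} (p,r\sigma,\ell\sigma)$, the ranking of the first label of $\tr_0$; \Cref{cor:normalisation} then directly delivers $\measure{\tr_1}\lessRwLbl{\Pset(\M)}\measure{\tr_0}$ for the normalised trace. The labels of stages (i), (iii) and of the internal $\En$-traces sit at positions $p\cdot o\cdot q_2$, $p\cdot o'\cdot q_2$, or strictly below them, so whenever $o\cdot q_2\ne\varepsilon$ (resp.\ $o'\cdot q_2\ne\varepsilon$) the position has $p$ as a strict prefix and the bound is immediate from the position component; the stage-(ii) label sits at $p$ with term components $\le r\tilde\sigma \le r\sigma < \ell\sigma$ (one inequality strict), which suffices. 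I expect the main obstacle to be the bookkeeping in the degenerate cases $r = x$ and $q_2 = \varepsilon$, where a stage-(i) label lands exactly at $p$ rather than strictly below it: there one must argue with the term components of $\measureF$, using $(x\sigma)_{|q_2} = x\sigma =_\En \ell'\sigma' > r'\sigma'$ together with $\ell\sigma > r\sigma$, and observe that stage (iii) is vacuous when $\ell$ is a variable, so that an occurrence $o'$ is always distinct from $\varepsilon$. Everything else is mechanical and parallels the proofs of \Cref{lem:overlap-left-right,lem:overlap-right-right}.
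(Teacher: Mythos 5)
Your proposal is correct and follows essentially the same route as the paper's own proof: the same decomposition $q = q_1 q_2$ through a variable occurrence of $\ell$, the same modified substitution ($\tilde\sigma$, the paper's $\theta$), the same three-stage trace with the reversed rule $r\rightarrow\ell$ in the one subcase, the same appeal to saturation plus \Cref{cor:normalisation}, and the same identification of the degenerate case $r=x$, $q_2=\varepsilon$ handled via the term components of $\measureF$. Nothing essential is missing.
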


\begin{proof}
By definition, we have there exist a term context $C[\_]$, two positions $q_x,q'$ and a variable $x$ such that $u = C[r\sigma]$, $C[\_]_{|p} = \_$, $t = C[\ell\sigma]$, $q = q_x \cdot q'$,  $\ell_{|q_x} = x \in \X$, $x\sigma_{|q'} =_\En \ell'\sigma'$ and $C[\ell\sigma[r'\sigma']_q] = v$.
Since $\ell'\sigma' =_\En x\sigma_{|q'}$, there exists a syntactic rewrite trace $\tr_L$ from $\En$ such that $x\sigma_{|q'} \Rightrwstep{\tr_L} \ell'\sigma'$. Hence, we can define $\tr_M = \tr_L \rightrwlabel{\varepsilon}{\sigma'}{\ell'}{r'}$ to obtain $x\sigma_{|q'} \Rightrwstep{\tr_M} r'\sigma' = v_{|p'}$. From \Cref{lem:operation-properties}, we deduce that $v_{|p'} \Leftrwstep{\reverse{\tr_M}} x\sigma_{|q'}$.

Let $\theta$ be the substitution such that $y\theta = y\sigma$ for all $y \in \vars{r}\setminus \{x\}$ and $x\theta = x\sigma[r'\sigma']_{q'}$.
If $p_1,\ldots,p_n$ are the positions of $x$ in $r$, thus obtain that $C[r\sigma] \Rightrwstep{p \cdot p_1 \cdot q' \cdot \tr_M} \ldots  \Rightrwstep{p \cdot p_n \cdot q' \cdot \tr_M} C[r\theta]$. 
Let us denote $\omega_M = \rightrwlabel{p}{\theta}{r}{\ell}$ if $r\theta > \ell\theta$ and $\omega_M = \leftrwlabel{p}{\theta}{\ell}{r}$ otherwise. We thus have $C[r\theta] \Leftrightrwstep{\omega_M} C[\ell\theta]$.
Similarly, if $q_1,\ldots,q_m$ are the positions of $x$ in $\ell$ other than $q_x$, we obtain that $C[\ell\theta] \rwLeftStep{p\cdot q_1 \cdot q'\cdot \reverse{\tr_M}} \ldots \rwLeftStep{p\cdot q_m \cdot q'\cdot \reverse{\tr_M}} v$. Denoting $\tr'_0 = (p \cdot p_1 \cdot q' \cdot \tr_M) \ldots (p \cdot p_n \cdot q' \cdot \tr_M) \omega_M (p \cdot q_1 \cdot q' \cdot \reverse{\tr_M}) \ldots (p \cdot q_m \cdot q' \cdot \reverse{\tr_M})$, we obtain $u \Leftrightrwstep{\tr'_0} v$. As $\M$ is greater than $u \Leftrightrwstep{\tr_0} v$, we obtain by construction of $\tr_0'$ that $\M$ is greater than $u \Leftrightrwstep{\tr'_0} v$.
Additionally, by construction, $\tr'_0$ is syntactic from $\En \cup \R \cup \{ r \rightarrow \ell \}$. 

We now order the labels of $\tr'_0$ w.r.t. $\tr_0$. Let us denote $\omega = \rightrwlabel{p}{\sigma}{\ell}{r}$. As $q_x$ exists, if $m > 0$ then it implies that $q_1,\ldots,q_m$ are all different from $\varepsilon$. Thus, for all $i \in \{1,\ldots,m\}$, for all $\omega_1 \in p \cdot q_i \cdot q' \cdot \reverse{\tr_M}$ , $\measure{\omega_1} \lessRwLbl{\Pset(\M)} \measure{\omega}$. 
By definition, we know that $\ell'\sigma' > r'\sigma'$ and so $x\sigma > x\theta$, meaning that $r\sigma > r\theta$. Therefore, if $r\theta > \ell\theta$ then we directly obtain $\measure{\omega_M} = (p,\ell\theta,r\theta) \lessRwLbl{\Pset(\M)} (p,r\sigma,\ell\sigma) = \measure{\omega}$. Else we have $\measure{\omega_M} = (p,r\theta,\ell\theta)$, which also implies implies $\measure{\omega_M} \lessRwLbl{\Pset(\M)}  \measure{\omega}$ as $r\sigma > r\theta$. 

Finally, let us look $\omega_1 \in p \cdot p_i \cdot q' \cdot \tr_M$ for some $i \in \{1,\ldots,n\}$. If $p_i \neq \varepsilon$ or $q' \neq \varepsilon$ then not only we have $\measure{\omega_1} \lessRwLbl{\Pset(\M)}  \measure{\omega}$ but we also have that it holds for all $i \in \{1,\ldots,n\}$ since $p_i \neq \varepsilon$ implies for all $j \in \{1,\ldots,n\}$, $p_j \neq \varepsilon$. Thus, we now focus on $n=1$, $p_1 = \varepsilon$ and $q' = \varepsilon$: It implies that $r = x$ and $r\sigma =_\En \ell'\sigma'$. Thus, $r\sigma > r'\sigma'$ and so $\measure{\rightrwlabel{p}{\sigma'}{\ell'}{r'}} = (p,r'\sigma',\ell'\sigma') \lessRwLbl{\Pset(\M)} \measure{\omega}$. 
We therefore conclude that for all $\omega_1 \in \tr'_0$, there exist $\omega_0 \in \tr_0$ such that $\measure{\omega_1} \lessRwLbl{\Pset(\M)} \measure{\omega_0}$.

We can represent graphically the transformation as follows:
\begin{center}
    \begin{tikzpicture}[
        term/.style={}
        ]
        \def\length{1.5cm}
        \node[anchor=mid] (L) {};
        \node[right=17cm of L.mid,anchor=mid] (R) {};
        \draw[dashed] (L) -- (R);

        \node[draw,rectangle,above left= 1.5cm and 0.5cm of L,anchor=west,text width=2.3cm] (Label) {when $r\theta > \ell\theta$};

        \node[fill=white,right=5.2cm of L.mid] (Label) {$\xrightarrow{\mathit{transformation}}$};

        \node[fill=white,term,right=\length of L.mid,anchor=mid] (T0) {$u$};
        \node[fill=white,term,above right=\length and \length of T0.mid,anchor=mid] (T1) {$t$};
        \node[fill=white,term,below right=\length and \length of T1.mid,anchor=mid] (T2) {$v$};

        \draw[<-] (T0) edge node[auto,sloped] {\tiny$r \leftarrow \ell$} node[auto,sloped,below] {\tiny$p, \sigma$} (T1);
        \draw[->] (T1) edge node[auto,sloped] {\tiny$\ell' \rightarrow r'$} node[auto,sloped,below] {\tiny$\En, p', \sigma'$} (T2);

        \node[term,fill=white,right=6.5cm of T0.mid,anchor=mid] (T0') {$u = C[r\sigma]$};
        \node[term,fill=white,below right= 2*\length and 2.2*\length of T0'.mid,anchor=mid] (T1') {$C[r\theta]$};
        \node[term,fill=white,below right= 0.4*\length and 1.2*\length of T1'.mid,anchor=mid] (T2') {$C[\ell\theta]$};
        \node[term,fill=white,right= 5.2*\length of T0'.mid,anchor=mid] (T3') {$v$};

        \draw (T0') edge[double equal sign distance,-Implies] node[auto,sloped] {\tiny$(p \cdot p_1 \cdot q' \cdot \tr_M) \ldots  (p \cdot p_n \cdot q' \cdot \tr_M)$} (T1');
        \draw[->] (T1') edge node[auto,sloped] {\tiny$r \rightarrow \ell$} node[auto,sloped,below] {\tiny$p, \theta$} (T2');
        \draw (T2') edge[double equal sign distance,Implies-] node[auto,sloped] {\tiny$(p \cdot q_1 \cdot q' \cdot \tr^{-1}_M) \ldots  (p \cdot q_m \cdot q' \cdot \tr^{-1}_M)$} (T3');
        
    \end{tikzpicture}
\end{center}
\begin{center}
    \begin{tikzpicture}[
        term/.style={}
        ]
        \def\length{1.5cm}
        \node[anchor=mid] (L) {};
        \node[right=17cm of L.mid,anchor=mid] (R) {};
        \draw[dashed] (L) -- (R);

        \node[draw,rectangle,above left= 1.5cm and 0.5cm of L,anchor=west,text width=2.3cm] (Label) {when $\ell\theta > r\theta$};

        \node[fill=white,right=5.2cm of L.mid] (Label) {$\xrightarrow{\mathit{transformation}}$};

        \node[fill=white,term,right=\length of L.mid,anchor=mid] (T0) {$u$};
        \node[fill=white,term,above right=\length and \length of T0.mid,anchor=mid] (T1) {$t$};
        \node[fill=white,term,below right=\length and \length of T1.mid,anchor=mid] (T2) {$v$};

        \draw[<-] (T0) edge node[auto,sloped] {\tiny$r \leftarrow \ell$} node[auto,sloped,below] {\tiny$p, \sigma$} (T1);
        \draw[->] (T1) edge node[auto,sloped] {\tiny$\ell' \rightarrow r'$} node[auto,sloped,below] {\tiny$\En, p', \sigma'$} (T2);

        \node[term,fill=white,right=6.5cm of T0.mid,anchor=mid] (T0') {$u = C[r\sigma]$};
        \node[term,fill=white,below right= 2.2*\length and 2*\length of T0'.mid,anchor=mid] (T1') {$C[r\theta]$};
        \node[term,fill=white,above right= 0.5*\length and 1.2*\length of T1'.mid,anchor=mid] (T2') {$C[\ell\theta]$};
        \node[term,fill=white,right= 5.2*\length of T0'.mid,anchor=mid] (T3') {$v$};

        \draw (T0') edge[double equal sign distance,-Implies] node[auto,sloped] {\tiny$(p \cdot p_1 \cdot q' \cdot \tr_M) \ldots  (p \cdot p_n \cdot q' \cdot \tr_M)$} (T1');
        \draw[<-] (T1') edge node[auto,sloped] {\tiny$r \leftarrow \ell$} node[auto,sloped,below] {\tiny$p, \theta$} (T2');
        \draw (T2') edge[double equal sign distance,Implies-] node[auto,sloped] {\tiny$(p \cdot q_1 \cdot q' \cdot \tr^{-1}_M) \ldots  (p \cdot q_m \cdot q' \cdot \tr^{-1}_M)$} (T3');
        
    \end{tikzpicture}
\end{center}
We already proved that, denoting $\R' = (\R \cup \Rn \cup \{ r \rightarrow \ell \})$, we have that $\tr'_0$ is a syntactic rewrite trace from $\R' \cup \En$. Since $\tr_0$ is from $\R$, it is naturally also from $\R' \cup \En$. Note that by \Cref{def:saturated}, we know that $\R' \normstep[\Rn,\En]^* \R$ and $\Rn \subseteq \R'$. Thus, we can apply \Cref{cor:normalisation} to deduce that there exists a syntactic rewrite trace $\tr_1$ from $\R \cup \En$ such that $u \Leftrightrwstep{\tr_1} v$ which $\M$ rewrites into, and $\measure{\tr_1} \lessRwLbl{\Pset(\M)} \measure{\tr_0}$. This conclude the proof.
\end{proof}


\begin{lemma}
\label{lem:nonoverlap-En-right}
Assume that $\En$ has finite equivalence classes. Let $\R,\Rn$ be sets of rewrite rules such that $\R$ is saturated w.r.t. $\Rn$. Let $u \rightrwstep{p}{\sigma}{\ell}{r} t \rightrwstep[\En]{p'}{\sigma'}{\ell'}{r'} v$ be some ground rewrite steps. Let $\tr_0 = \rightrwlabel{p}{\sigma}{\ell}{r} \rightrwlabel{p'}{\sigma'}{\ell'}{r'}[\En]$ with $(\ell \rightarrow r) \in \En$ and $(\ell'\rightarrow r') \in \R$. Let $\M$ be a set of terms that is greater than $u \Leftrightrwstep{\tr_0} v$.

If $p' = p \cdot q$, and $q \not\in \PosX{r}$, and $\ell'\sigma' > r'\sigma'$ then there exists a ground syntactic rewrite trace $\tr_1$ from $\En \cup \R$ such that 
\begin{enumerate}[label=\textbf{P\arabic*}]
\item $u \rwLeftrightStep{\tr_1} v$, and \label{item:nonoverlap-En-right-Leftright}
\item $\M$ is greater than $u \rwLeftrightStep{\tr_1} v$, and \label{item:nonoverlap-En-right-derive}
\item $\measure{\tr_1} \lessRwLbl{\Pset(\M)} \measure{\tr_0}$\label{item:nonoverlap-En-right-measure}
\end{enumerate}
\end{lemma}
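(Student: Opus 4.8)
The plan is to mirror the transformation carried out in the proof of \Cref{lem:nonoverlap-left-right}, adapted to the situation where the first rule is applied \emph{forward} and comes from $\En$, and where the second step takes place strictly inside a variable instance of $r$. Write $C[\_]=u[\_]_p$, so that $C[\ell\sigma]=u$, $C[r\sigma]=t$ and $C[r\sigma[r'\sigma']_q]=v$. Since $q\in\Pos{r\sigma}\setminus\PosX{r}$, I decompose $q=q_x\cdot q'$ with $r_{|q_x}=x\in\X$ and $(x\sigma)_{|q'}=_\En\ell'\sigma'$; as $\ell'\sigma'=_\En(x\sigma)_{|q'}$ there is a syntactic $\En$-trace $\tr_L$ with $(x\sigma)_{|q'}\Rightrwstep{\tr_L}\ell'\sigma'$, and setting $\tr_M=\tr_L\,\rightrwlabel{\varepsilon}{\sigma'}{\ell'}{r'}$ one gets $(x\sigma)_{|q'}\Rightrwstep{\tr_M}r'\sigma'$. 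Let $\theta$ agree with $\sigma$ everywhere except $x\theta=(x\sigma)[r'\sigma']_{q'}$, let $p_1,\dots,p_n$ be the occurrences of $x$ in $\ell$ and $q'_1,\dots,q'_m$ the occurrences of $x$ in $r$ distinct from $q_x$, and let $\omega_M=\rightrwlabel{p}{\theta}{\ell}{r}$ (a legal step, modulo $\En$, since $\ell\theta=_\En r\theta$). I then define
\[
\tr'_0 \;=\; (p\cdot p_1\cdot q'\cdot\tr_M)\cdots(p\cdot p_n\cdot q'\cdot\tr_M)\;\omega_M\;(p\cdot q'_1\cdot q'\cdot\reverse{\tr_M})\cdots(p\cdot q'_m\cdot q'\cdot\reverse{\tr_M}).
\]
Propagating $\tr_M$ through all occurrences of $x$ in $\ell$ turns $C[\ell\sigma]$ into $C[\ell\theta]$, then $\omega_M$ turns this into $C[r\theta]$, and undoing $\tr_M$ at the occurrences $q'_1,\dots,q'_m$ (leaving $q_x$ untouched) recovers $v$; hence $u\rwLeftrightStep{\tr'_0}v$, which is \textbf{P1}. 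Contrary to \Cref{lem:nonoverlap-left-right}, no auxiliary reversed rule has to be added and no appeal to \Cref{cor:normalisation} is needed, because the only rules occurring in $\tr'_0$ are $\ell\to r\in\En$, rules of $\En$, and $\ell'\to r'\in\R$; thus $\tr'_0$ is already a syntactic rewrite trace from $\En\cup\R$ and I may take $\tr_1=\tr'_0$.

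For \textbf{P2}, I would check that every term reached along a prefix of $\tr'_0$ lies below $\M$: in the first block the running term stays $=_\En u$ during each $\tr_L$ portion and is strictly decreased by each application of $\ell'\to r'$ (because $r'\sigma'<\ell'\sigma'=_\En(x\sigma)_{|q'}$), $\omega_M$ preserves the $=_\En$-class, and in the last block every subterm placed at a position $p\cdot q'_j\cdot q'$ is $\leq(x\sigma)_{|q'}$, so every intermediate term is $\leq v$; since $\M$ is greater than $u\rwLeftrightStep{\tr_0}v$ and $\geq$ is $\En$-compatible, all these terms are below $\M$, and in particular all positions involved lie in $\Pset(\M)$. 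For \textbf{P3}, the decisive point — and where the hypothesis that $\En$ has finite equivalence classes is used — is that neither $\ell$ nor $r$ can be a variable (as in the proof of \Cref{lem:overlap-En-right}), so every $p_i$ and every $q'_j$ is $\neq\varepsilon$. Hence every label coming from the first and last blocks acts at a position strictly below $p$, so its rank is $\lessRwLbl{\Pset(\M)}(p,r\sigma,\ell\sigma)=\measure{\rightrwlabel{p}{\sigma}{\ell}{r}}$; moreover $\measure{\omega_M}=(p,r\theta,\ell\theta)\lessRwLbl{\Pset(\M)}(p,r\sigma,\ell\sigma)$ because $x$ occurs in $r$ and $x\sigma>x\theta$ forces $r\sigma>r\theta$. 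Thus every label of $\tr'_0$ ranks strictly below one fixed element of the non-empty multiset $\measure{\tr_0}$, which by the multiset ordering yields $\measure{\tr_1}\lessRwLbl{\Pset(\M)}\measure{\tr_0}$.

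The construction is essentially the ``forward'' counterpart of \Cref{lem:nonoverlap-left-right}, so I do not expect conceptual difficulty there; the main obstacle will be the careful bookkeeping of positions and ranks across the three blocks of $\tr'_0$, in particular ruling out that the degenerate configurations $q'=\varepsilon$, $m=0$, or a single occurrence of $x$ spoil \textbf{P2} or \textbf{P3}. The simplification over \Cref{lem:nonoverlap-left-right}, namely that the troublesome subcase ``$\ell$ or $r$ is a variable'' cannot occur and that the middle step uses a rule already available in both orientations in $\En$, is precisely what the finite–equivalence–classes assumption on $\En$ buys us.
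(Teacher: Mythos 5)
Your proposal is correct and follows essentially the same route as the paper's own proof: the same decomposition $q=q_x\cdot q'$ inside a variable occurrence of $r$, the same trace $\tr_M$ propagated through the occurrences of $x$ in $\ell$, the middle step $\rightrwlabel{p}{\theta}{\ell}{r}$, the reversed propagation at the remaining occurrences in $r$, and the same use of finite equivalence classes of $\En$ to rule out $\ell$ or $r$ being a variable so that all side positions are strictly below $p$. The observation that no appeal to \Cref{cor:normalisation} is needed because the resulting trace already lives in $\En\cup\R$ matches the paper as well.
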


\begin{proof}
By definition, we have there exist a term context $C[\_]$, two positions $q_x,q'$ and a variable $x$ such that $u = C[\ell\sigma]$, $C[\_]_{|p} = \_$, $t = C[r\sigma]$, $q = q_x \cdot q'$,  $\ell_{|q_x} = x \in \X$, $x\sigma_{|q'} =_\En \ell'\sigma'$ and $C[r\sigma[r'\sigma']_q] = v$.
Since $\ell'\sigma' =_\En x\sigma_{|q'}$, there exists a syntactic rewrite trace $\tr_L$ from $\En$ such that $x\sigma_{|q'} \Rightrwstep{\tr_L} \ell'\sigma'$. Hence, we can define $\tr_M = \tr_L \rightrwlabel{\varepsilon}{\sigma'}{\ell'}{r'}$ to obtain $x\sigma_{|q'} \Rightrwstep{\tr_M} r'\sigma' = v_{|p'}$. From \Cref{lem:operation-properties}, we deduce that $v_{|p'} \Leftrwstep{\reverse{\tr_M}} x\sigma_{|q'}$.

Let $\theta$ be the substitution such that $y\theta = y\sigma$ for all $y \in \vars{r}\setminus \{x\}$ and $x\theta = x\sigma[r'\sigma']_{q'}$.
If $p_1,\ldots,p_n$ are the positions of $x$ in $\ell$, thus obtain that $C[\ell\sigma] \Rightrwstep{p \cdot p_1 \cdot q' \cdot \tr_M} \ldots  \Rightrwstep{p \cdot p_n \cdot q' \cdot \tr_M} C[\ell\theta]$. As $(\ell \rightarrow r) \in \En$, we deduce that $r\theta =_\En \ell\theta$. 
Denoting $\omega_M = \rightrwlabel{p}{\theta}{\ell}{r}$, we thus have $C[\ell\theta] \Rightrwstep{\omega_M} C[r\theta]$.

Similarly, if $q_1,\ldots,q_m$ are the positions of $x$ in $r$ other than $q_x$, we obtain that $C[r\theta] \rwLeftStep{p\cdot q_1 \cdot q'\cdot \reverse{\tr_M}} \ldots \rwLeftStep{p\cdot q_m \cdot q'\cdot \reverse{\tr_M}} v$. Denoting $\tr'_1 = (p \cdot p_1 \cdot q' \cdot \tr_M) \ldots (p \cdot p_n \cdot q' \cdot \tr_M) \omega_M (p \cdot q_1 \cdot q' \cdot \reverse{\tr_M}) \ldots (p \cdot q_m \cdot q' \cdot \reverse{\tr_M})$, we obtain $u \Leftrightrwstep{\tr_1} v$. As $\M$ is greater than $u \Leftrightrwstep{\tr_0} v$, we obtain by construction of $\tr_1$ that $\M$ is greater than $u \Leftrightrwstep{\tr_1} v$.
Additionally, by construction, $\tr_1$ is syntactic from $\En \cup \R$. 

We now order the labels of $\tr_1$ w.r.t. $\tr_0$. Let us denote $\omega = \rightrwlabel{p}{\sigma}{\ell}{r}$. Since $(\ell \rightarrow r) \in \En$ and $\En$ has finite equivalence classes, we know that $\ell$ and $r$ are not variables. Hence $p_1,\ldots,p_n,q_1,\ldots,q_m$ are not $\varepsilon$. Therefore, for all $i \in \{1,\ldots,m\}$, for all $\omega_1 \in p \cdot q_i \cdot q' \cdot \reverse{\tr_M}$ , $\measure{\omega_1} \lessRwLbl{\Pset(\M)} \measure{\omega}$, and for all $i \in \{1,\ldots,n\}$, for all $\omega_1 \in p \cdot p_i \cdot q' \cdot \tr_M$ , $\measure{\omega_1} \lessRwLbl{\Pset(\M)} \measure{\omega}$. Recall that $\ell'\sigma' > r'\sigma'$, meaning that $x\sigma > x\theta$ and so $\ell\sigma =_\En r\sigma > r\theta =_\En \ell\theta$. We conclude that $\measure{\omega_M} \lessRwLbl{\Pset(\M)} \measure{\omega}$ and so, combining all previous statements, we obtain $\measure{\tr_1} \lessRwLbl{\Pset(\M)} \measure{\tr_0}$.

We can represent graphically the transformation as follows:
\begin{center}
    \begin{tikzpicture}[
        term/.style={}
        ]
        \def\length{1.3cm}
        \node[anchor=mid] (L) {};
        \node[right=17cm of L.mid,anchor=mid] (R) {};
        \draw[dashed] (L) -- (R);

        \node[fill=white,right=5.2cm of L.mid] (Label) {$\xrightarrow{\mathit{transformation}}$};

        \node[fill=white,term,above right=\length and \length of L.mid,anchor=mid] (T0) {$u$};
        \node[fill=white,term,right=\length of T0.mid,anchor=mid] (T1) {$t$};
        \node[fill=white,term,below right=\length and \length of T1.mid,anchor=mid] (T2) {$v$};

        \draw[->] (T0) edge node[auto,sloped] {\tiny$\ell \rightarrow r$} node[auto,sloped,below] {\tiny$p, \sigma$} (T1);
        \draw[->] (T1) edge node[auto,sloped] {\tiny$\ell' \rightarrow r'$} node[auto,sloped,below] {\tiny$\En, p', \sigma'$} (T2);

        \node[term,fill=white,right=6.5cm of T0.mid,anchor=mid] (T0') {$u = C[\ell\sigma]$};
        \node[term,fill=white,below right= 2.8*\length and 2.2*\length of T0'.mid,anchor=mid] (T1') {$C[\ell\theta]$};
        \node[term,fill=white,right= 1.2*\length of T1'.mid,anchor=mid] (T2') {$C[r\theta]$};
        \node[term,fill=white,below right= \length and 6*\length of T0'.mid,anchor=mid] (T3') {$v$};

        \draw (T0') edge[double equal sign distance,-Implies] node[auto,sloped] {\tiny$(p \cdot p_1 \cdot q' \cdot \tr_M) \ldots  (p \cdot p_n \cdot q' \cdot \tr_M)$} (T1');
        \draw[->] (T1') edge node[auto,sloped] {\tiny$\ell \rightarrow r$} node[auto,sloped,below] {\tiny$p, \theta$} (T2');
        \draw (T2') edge[double equal sign distance,Implies-] node[auto,sloped] {\tiny$(p \cdot q_1 \cdot q' \cdot \tr^{-1}_M) \ldots  (p \cdot q_m \cdot q' \cdot \tr^{-1}_M)$} (T3'); 
    \end{tikzpicture}
\end{center}
\end{proof}


\begin{lemma}
\label{lem:nonoverlap-right-right}
Let $\R,\Rn$ be sets of rewrite rules such that $\R$ is saturated w.r.t. $\Rn$. Let $u \rightrwstep{p}{\sigma}{\ell}{r} t \rightrwstep[\En]{p'}{\sigma'}{\ell'}{r'} v$ be some ground rewrite steps. Let $\tr_0 = \rightrwlabel{p}{\sigma}{\ell}{r} \rightrwlabel{p'}{\sigma'}{\ell'}{r'}[\En]$ with $(\ell \rightarrow r), (\ell'\rightarrow r') \in \R$. Let $\M$ be a set of terms that is greater than $u \Leftrightrwstep{\tr_0} v$.

If $p' = p \cdot q$, and $q \not\in \PosX{r}$, and $\ell'\sigma' > r'\sigma'$, and $\ell\sigma > r\sigma$ then there exists a ground syntactic rewrite trace $\tr_1$ from $\En \cup \R$ such that 
\begin{enumerate}[label=\textbf{P\arabic*}]
\item $u \rwLeftrightStep{\tr_1} v$, and \label{item:nonoverlap-right-right-Leftright}
\item $\M$ is greater than $u \rwLeftrightStep{\tr_1} v$, and \label{item:nonoverlap-right-right-derive}
\item $\measure{\tr_1} \lessRwLbl{\Pset(\M)} \measure{\tr_0}$\label{item:nonoverlap-right-right-measure}
\end{enumerate}
\end{lemma}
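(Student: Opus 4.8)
The plan is to follow, \emph{mutatis mutandis}, the argument of \Cref{lem:nonoverlap-left-right} (and its $\En$-rule variant \Cref{lem:nonoverlap-En-right}); the only structural change is that the rule $\ell\rightarrow r$ applied at $p$ now occurs in its \emph{forward} orientation, so the intermediate ``middle'' step produced by the transformation runs from $C[\ell\theta]$ to $C[r\theta]$ rather than the other way round.

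First I would decompose the second rewrite step. Since $p' = p\cdot q$, $t_{|p} = r\sigma$ and $q\not\in\PosX{r}$, there are a variable position $q_x$ of $r$, a position $q'$ and a variable $x$ with $r_{|q_x} = x$, $q = q_x\cdot q'$ and $x\sigma_{|q'} =_\En \ell'\sigma'$. Writing $t = C[r\sigma]$ with $C[\_]_{|p} = \_$, we get $u = C[\ell\sigma]$ and $v = C[r\sigma[r'\sigma']_q]$. Using $x\sigma_{|q'} =_\En \ell'\sigma'$ I would fix a syntactic $\En$-trace $\mathsf{tr}_L$ with $x\sigma_{|q'} \Rightrwstep{\mathsf{tr}_L} \ell'\sigma'$, set $\mathsf{tr}_M = \mathsf{tr}_L\,\rightrwlabel{\varepsilon}{\sigma'}{\ell'}{r'}$ so that $x\sigma_{|q'} \Rightrwstep{\mathsf{tr}_M} r'\sigma'$, and let $\theta$ agree with $\sigma$ off $x$ with $x\theta = x\sigma[r'\sigma']_{q'}$.

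Next I would build the transformed trace $\mathsf{tr}'_0$: rewriting each occurrence of $x$ in $\ell$ by $\mathsf{tr}_M$ turns $u = C[\ell\sigma]$ into $C[\ell\theta]$; one step at $p$ with rule $\ell\rightarrow r$ and substitution $\theta$ turns $C[\ell\theta]$ into $C[r\theta]$, oriented as $\rightrwlabel{p}{\theta}{\ell}{r}$ when $\ell\theta \geq r\theta$ and as the reverse rule $r\rightarrow\ell$ (label $\leftrwlabel{p}{\theta}{r}{\ell}$) when $r\theta > \ell\theta$; finally, rewriting each occurrence of $x$ in $r$ other than $q_x$ by $\reverse{\mathsf{tr}_M}$ turns $C[r\theta]$ into $v$. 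This yields an explicit ground syntactic trace $\mathsf{tr}'_0$ from $\En\cup\R\cup\{r\rightarrow\ell\}$ with $u \Leftrightrwstep{\mathsf{tr}'_0} v$, and, as in \Cref{lem:nonoverlap-left-right}, $\M$ is greater than $u \Leftrightrwstep{\mathsf{tr}'_0} v$ by construction, since every intermediate term differs from a term already occurring in $u \Leftrightrwstep{\mathsf{tr}_0} v$ only by instantiating subterms strictly lower in $>$.

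The core of the argument — and the step I expect to be the main obstacle — is showing that every label of $\mathsf{tr}'_0$ is ranked strictly below the label $\omega = \rightrwlabel{p}{\sigma}{\ell}{r}$ of $\mathsf{tr}_0$ w.r.t.\ $\lessRwLbl{\Pset(\M)}$, which is exactly what lets me invoke \Cref{cor:normalisation}. Here I would use that $x\sigma_{|q'} =_\En \ell'\sigma' > r'\sigma'$ gives $x\sigma > x\theta$, hence $r\sigma > r\theta$ (as $x\in\vars{r}$) and $\ell\sigma \geq \ell\theta$; that $\ell\sigma > r\sigma$ forces $\ell$ not to be a variable, so every occurrence of $x$ in $\ell$ is at a non-root position; and symmetrically that if $r$ is a variable there is no occurrence of $x$ in $r$ besides $q_x$. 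Consequently all labels coming from the copies of $\mathsf{tr}_M$ and $\reverse{\mathsf{tr}_M}$ act strictly below $p$, hence are $\lessRwLbl{\Pset(\M)}(p, r\sigma, \ell\sigma) = \measure{\omega}$; and a short case split on the orientation of the middle step shows its rank is $(p, r\theta, \ell\theta)$ or $(p, \ell\theta, r\theta)$, both $\lessRwLbl{\Pset(\M)}(p, r\sigma, \ell\sigma)$ because $r\theta < r\sigma$ (and in the ``uphill'' orientation $\ell\theta < r\theta < r\sigma$). Finally, using the saturation clause $\{r\rightarrow\ell\}\cup\R\cup\Rn\normstep[\Rn,\En]^*\R$ together with $\Rn\subseteq\R$, I would apply \Cref{cor:normalisation} to $\R\cup\{r\rightarrow\ell\}$, $\R$ and the traces $\mathsf{tr}_0,\mathsf{tr}'_0$, obtaining a ground syntactic trace $\mathsf{tr}_1$ from $\En\cup\R$ with $u \Leftrightrwstep{\mathsf{tr}_1} v$, with $\M$ greater than it, and with $\measure{\mathsf{tr}_1}\lessRwLbl{\Pset(\M)}\measure{\mathsf{tr}_0}$ — the three conclusions of the lemma. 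The only genuinely new bookkeeping compared with the non-overlap lemmas already proved is the orientation case split for the middle step and the accompanying rank inequalities; everything else is a direct transcription.
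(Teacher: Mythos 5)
Your proposal is correct and follows essentially the same route as the paper's proof: the same decomposition through the variable position $q_x$ of $r$ with $q = q_x\cdot q'$, the same three-phase trace $C[\ell\sigma] \Rightarrow C[\ell\theta] \leftrightarrow C[r\theta] \Leftarrow v$ with the orientation case split on the middle step, the same rank inequalities ($r\theta < r\sigma$ from $x\sigma > x\theta$), and the same final appeal to the saturation clause for $\{r\rightarrow\ell\}$ together with \Cref{cor:normalisation}. The one imprecision is your claim that $\ell\sigma > r\sigma$ forces $\ell$ not to be a variable — that is false for a variable not occurring in $r$; what the argument actually needs (and what the paper proves) is only that $\ell$ cannot be the variable $x$ itself, since $x$ occurs in $r$, so $\ell\sigma$ would be a subterm of $r\sigma$ and hence $\ell\sigma \not> r\sigma$ by well-foundedness of the reduction order.
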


\begin{proof}
By definition, we have there exist a term context $C[\_]$, two positions $q_x,q'$ and a variable $x$ such that $u = C[\ell\sigma]$, $C[\_]_{|p} = \_$, $t = C[r\sigma]$, $q = q_x \cdot q'$,  $\ell_{|q_x} = x \in \X$, $x\sigma_{|q'} =_\En \ell'\sigma'$ and $C[r\sigma[r'\sigma']_q] = v$.
Since $\ell'\sigma' =_\En x\sigma_{|q'}$, there exists a syntactic rewrite trace $\tr_L$ from $\En$ such that $x\sigma_{|q'} \Rightrwstep{\tr_L} \ell'\sigma'$. Hence, we can define $\tr_M = \tr_L \rightrwlabel{\varepsilon}{\sigma'}{\ell'}{r'}$ to obtain $x\sigma_{|q'} \Rightrwstep{\tr_M} r'\sigma' = v_{|p'}$. From \Cref{lem:operation-properties}, we deduce that $v_{|p'} \Leftrwstep{\reverse{\tr_M}} x\sigma_{|q'}$.

Let $\theta$ be the substitution such that $y\theta = y\sigma$ for all $y \in \vars{r}\setminus \{x\}$ and $x\theta = x\sigma[r'\sigma']_{q'}$.
If $p_1,\ldots,p_n$ are the positions of $x$ in $\ell$, thus obtain that $C[\ell\sigma] \Rightrwstep{p \cdot p_1 \cdot q' \cdot \tr_M} \ldots  \Rightrwstep{p \cdot p_n \cdot q' \cdot \tr_M} C[\ell\theta]$. 

Let us denote $\omega_M = \leftrwlabel{p}{\theta}{r}{\ell}$ if $r\theta > \ell\theta$ and $\omega_M = \rightrwlabel{p}{\theta}{\ell}{r}$ otherwise. We thus have $C[\ell\theta] \Leftrightrwstep{\omega_M} C[r\theta]$.
Similarly, if $q_1,\ldots,q_m$ are the positions of $x$ in $r$ other than $q_x$, we obtain that $C[r\theta] \rwLeftStep{p\cdot q_1 \cdot q'\cdot \reverse{\tr_M}} \ldots \rwLeftStep{p\cdot q_m \cdot q'\cdot \reverse{\tr_M}} v$. Denoting $\tr'_0 = (p \cdot p_1 \cdot q' \cdot \tr_M) \ldots (p \cdot p_n \cdot q' \cdot \tr_M) \omega_M (p \cdot q_1 \cdot q' \cdot \reverse{\tr_M}) \ldots (p \cdot q_m \cdot q' \cdot \reverse{\tr_M})$, we obtain $u \Leftrightrwstep{\tr'_0} v$. As $\M$ is greater than $u \Leftrightrwstep{\tr_0} v$, we obtain by construction of $\tr_0'$ that $\M$ is greater than $u \Leftrightrwstep{\tr'_0} v$.
Additionally, by construction, $\tr'_0$ is syntactic from $\En \cup \R \cup \{ r \rightarrow \ell \}$. 

We now order the labels of $\tr'_0$ w.r.t. $\tr_0$. Let us denote $\omega = \rightrwlabel{p}{\sigma}{\ell}{r}$. As $q_x$ exists, if $m > 0$ then it implies that $q_1,\ldots,q_m$ are all different from $\varepsilon$. Thus, for all $i \in \{1,\ldots,m\}$, for all $\omega_1 \in p \cdot q_i \cdot q' \cdot \reverse{\tr_M}$ , $\measure{\omega_1} \lessRwLbl{\Pset(\M)} \measure{\omega}$. 
By definition, we know that $\ell'\sigma' > r'\sigma'$ and so $x\sigma > x\theta$, meaning that $r\sigma > r\theta$. Therefore, if $r\theta > \ell\theta$ then we directly obtain $\measure{\omega_M} = (p,\ell\theta,r\theta) \lessRwLbl{\Pset(\M)} (p,r\sigma,\ell\sigma) = \measure{\omega}$. Else we have $\measure{\omega_M} = (p,r\theta,\ell\theta)$, which also implies implies $\measure{\omega_M} \lessRwLbl{\Pset(\M)}  \measure{\omega}$ as $r\sigma > r\theta$. 

Finally, let us look $\omega_1 \in p \cdot p_i \cdot q' \cdot \tr_M$ for some $i \in \{1,\ldots,n\}$. If $p_i \neq \varepsilon$ or $q' \neq \varepsilon$ then not only we have $\measure{\omega_1} \lessRwLbl{\Pset(\M)}  \measure{\omega}$ but we also have that it holds for all $i \in \{1,\ldots,n\}$ since $p_i \neq \varepsilon$ implies for all $j \in \{1,\ldots,n\}$, $p_j \neq \varepsilon$. Thus, we now focus on $n=1$, $p_1 = \varepsilon$ and $q' = \varepsilon$: It implies that $\ell = x$ and $\ell\sigma =_\En \ell'\sigma'$. However, $\ell$ is a subterm of $r$, meaning that $\ell\sigma$ is a subterm of $r\sigma$ and so $\ell\sigma \not> r\sigma$ (as $>$ is a reduction ordering) which is a contradiction with $\ell\sigma > r\sigma$. Hence, we cannot have $p_1 = \varepsilon = q'$.
We therefore conclude that for all $\omega_1 \in \tr'_0$, there exist $\omega_0 \in \tr_0$ such that $\measure{\omega_1} \lessRwLbl{\Pset(\M)} \measure{\omega_0}$.

We can represent graphically the transformation as follows:
\begin{center}
    \begin{tikzpicture}[
        term/.style={}
        ]
        \def\length{1.4cm}
        \node[anchor=mid] (L) {};
        \node[right=16cm of L.mid,anchor=mid] (R) {};
        \draw[dashed] (L) -- (R);

        \node[fill=white,right=2.8cm of L.mid] (Transformation) {$\xrightarrow{\mathit{transformation}}$};
        \node[draw,rectangle,above= 0.8cm of Transformation] (Label) {when $\ell\theta > r\theta$};

        \node[fill=white,term,above right=\length and 0.5cm of L.mid,anchor=mid] (T0) {$u$};
        \node[fill=white,term,below right=\length and 0.6*\length of T0.mid,anchor=mid] (T1) {$t$};
        \node[fill=white,term,below right=\length and 0.6*\length of T1.mid,anchor=mid] (T2) {$v$};

        \draw[->] (T0) edge node[auto,sloped] {\tiny$\ell \rightarrow r$} node[auto,sloped,below] {\tiny$p, \sigma$} (T1);
        \draw[->] (T1) edge node[auto,sloped] {\tiny$\ell' \rightarrow r'$} node[auto,sloped,below] {\tiny$\En, p', \sigma'$} (T2);

        \node[term,fill=white,right=6cm of T0.mid,anchor=mid] (T0') {$u = C[\ell\sigma]$};
        \node[term,fill=white,below right= 2*\length and 3*\length of T0'.mid,anchor=mid] (T1') {$C[\ell\theta]$};
        \node[term,fill=white,below right= \length and 0.5*\length of T1'.mid,anchor=mid] (T2') {$C[r\theta]$};
        \node[term,fill=white,above right= \length and 3*\length of T2'.mid,anchor=mid] (T3') {$v$};

        \draw (T0') edge[double equal sign distance,-Implies] node[auto,sloped] {\tiny$(p \cdot p_1 \cdot q' \cdot \tr_M) \ldots  (p \cdot p_n \cdot q' \cdot \tr_M)$} (T1');
        \draw[->] (T1') edge node[auto,sloped] {\tiny$\ell \rightarrow r$} node[auto,sloped,below] {\tiny$p, \theta$} (T2');
        \draw (T2') edge[double equal sign distance,Implies-] node[auto,sloped] {\tiny$(p \cdot q_1 \cdot q' \cdot \tr^{-1}_M) \ldots  (p \cdot q_m \cdot q' \cdot \tr^{-1}_M)$} (T3');
        
    \end{tikzpicture}
\end{center}
\begin{center}
    \begin{tikzpicture}[
        term/.style={fill=white}
        ]
        \def\length{1.4cm}
        \node[anchor=mid] (L) {};
        \node[right=17cm of L.mid,anchor=mid] (R) {};
        \draw[dashed] (L) -- (R);

        \node[right=2.8cm of L.mid] (Transformation) {$\xrightarrow{\mathit{transformation}}$};
        \node[draw,rectangle,above= 0.8cm of Transformation] (Label) {when $r\theta > \ell\theta$};

        \node[term,above right=\length and 0.5cm of L.mid,anchor=mid] (T0) {$u$};
        \node[term,below right=\length and 0.6*\length of T0.mid,anchor=mid] (T1) {$t$};
        \node[term,below right=\length and 0.6*\length of T1.mid,anchor=mid] (T2) {$v$};

        \draw[->] (T0) edge node[auto,sloped] {\tiny$\ell \rightarrow r$} node[auto,sloped,below] {\tiny$p, \sigma$} (T1);
        \draw[->] (T1) edge node[auto,sloped] {\tiny$\ell' \rightarrow r'$} node[auto,sloped,below] {\tiny$\En, p', \sigma'$} (T2);

        \node[term,right=6cm of T0.mid,anchor=mid] (T0') {$u = C[\ell\sigma]$};
        \node[term,below right= 3*\length and 3*\length of T0'.mid,anchor=mid] (T1') {$C[\ell\theta]$};
        \node[term,above right= 0.5*\length and 1.5*\length of T1'.mid,anchor=mid] (T2') {$C[r\theta]$};
        \node[term,above right= 0.5*\length and 3*\length of T2'.mid,anchor=mid] (T3') {$v$};

        \draw (T0') edge[double equal sign distance,-Implies] node[auto,sloped] {\tiny$(p \cdot p_1 \cdot q' \cdot \tr_M) \ldots  (p \cdot p_n \cdot q' \cdot \tr_M)$} (T1');
        \draw[<-] (T1') edge node[auto,sloped] {\tiny$\ell \leftarrow r$} node[auto,sloped,below] {\tiny$p, \theta$} (T2');
        \draw (T2') edge[double equal sign distance,Implies-] node[auto,sloped] {\tiny$(p \cdot q_1 \cdot q' \cdot \tr^{-1}_M) \ldots  (p \cdot q_m \cdot q' \cdot \tr^{-1}_M)$} (T3');
        
    \end{tikzpicture}
\end{center}
We already proved that, denoting $\R' = (\R \cup \Rn \cup \{ r \rightarrow \ell \})$, we have that $\tr'_0$ is a syntactic rewrite trace from $\R' \cup \En$. Since $\tr_0$ is from $\R$, it is naturally also from $\R' \cup \En$. Note that by \Cref{def:saturated}, we know that $\R' \normstep[\Rn,\En]^* \R$ and $\Rn \subseteq \R'$. Thus, we can apply \Cref{cor:normalisation} to deduce that there exists a syntactic rewrite trace $\tr_1$ from $\R \cup \En$ such that $u \Leftrightrwstep{\tr_1} v$ which is smaller than $\M$, and $\measure{\tr_1} \lessRwLbl{\Pset(\M)} \measure{\tr_0}$. This conclude the proof.
\end{proof}


\subsubsection{Combining the different transformations}

We can regroup the different result from the two previous sections in the following lemma:


\begin{lemma}
\label{lem:combined-X-right}
Assume that $\En$ has finite equivalence classes. Let $\R,\Rn$ be sets of rewrite rules such that $\R$ is saturated w.r.t. $\Rn$. Assume that $\R^=$ is not trivial. Let $u \Leftrightrwstep{\omega} t \rightrwstep[\En]{p'}{\sigma'}{\ell'}{r'} v$ be some ground rewrite steps such that $\omega = \leftrightrwlabel{p}{\sigma}{\ell}{r}$ and ${\sim} \in \{ \rightarrow, \leftarrow \}$. Let $\tr_0 = \omega \rightrwlabel{p'}{\sigma'}{\ell'}{r'}[\En]$ with $(\ell \rightarrow r) \in \En \cup \R$ and $(\ell'\rightarrow r') \in \R$. Let $\M$ be a set of terms that is greater than $u \Leftrightrwstep{\tr_0} v$.

If $p \leq p'$ and $\ell'\sigma' > r'\sigma'$ then there exists a ground syntactic rewrite trace $\tr_1$ from $\En \cup \R$ such that 
\begin{enumerate}[label=\textbf{P\arabic*}]
\item $u \rwLeftrightStep{\tr_1} v$, and 
\item $\M$ is greater than $u \rwLeftrightStep{\tr_1} v$, and
\item $\measure{\tr_1} \lessRwLbl{\Pset(\M)} \measure{\tr_0}$
\end{enumerate}
\end{lemma}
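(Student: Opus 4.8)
The plan is to prove the statement by a case analysis that dispatches, in each case, to one of the transformation lemmas proved above (\Cref{lem:overlap-left-right}, \Cref{lem:overlap-En-right}, \Cref{lem:overlap-right-right}, \Cref{lem:nonoverlap-left-right}, \Cref{lem:nonoverlap-En-right}, \Cref{lem:nonoverlap-right-right}). Write $p' = p\cdot q$ (legitimate since $p\le p'$). Each of those lemmas already outputs a ground syntactic rewrite trace satisfying exactly the three conclusions asked here, so in the ``generic'' branches the only work is to verify that the relevant lemma's hypotheses hold; the measure part then either follows verbatim or, when a lemma only bounds the labels one at a time, is assembled through \Cref{cor:normalisation}.

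First I would split on whether $(\ell\rightarrow r)\in\En$ or $(\ell\rightarrow r)\in\R$. If $(\ell\rightarrow r)\in\En$, I reorient $\omega$ into a forward step $u\rightrwstep{p}{\sigma}{a}{b} t$ with $(a\rightarrow b)\in\En$: this is legitimate because $\En$ is symmetric and the instance is flat ($a\sigma=_\En b\sigma$), and the reorientation only changes the measure of that label up to $\eqRwLbl{\Pset(\M)}$, which is harmless for the multiset comparison. Then $q$ being or not being a non-variable position of $b$ sends us to \Cref{lem:overlap-En-right} or to \Cref{lem:nonoverlap-En-right}, whose hypotheses ($(a\rightarrow b)\in\En$, $(\ell'\rightarrow r')\in\R$, $\ell'\sigma'>r'\sigma'$) are all satisfied. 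If $(\ell\rightarrow r)\in\R$ and $\ell\sigma>r\sigma$, I split further on ${\sim}$ and on overlap: ${\sim}={\leftarrow}$ goes to \Cref{lem:overlap-left-right} (if $q\in\PosX{\ell}$) or \Cref{lem:nonoverlap-left-right} (otherwise), while ${\sim}={\rightarrow}$ goes to \Cref{lem:nonoverlap-right-right} (if $q\notin\PosX{r}$) or \Cref{lem:overlap-right-right} (otherwise). The only hypothesis that is not immediate is the clause ``$\ell\in\X$ implies $\ell\in\vars{r}$'' of \Cref{lem:overlap-right-right}: from $\ell\sigma>r\sigma$, $\vars{r}\subseteq\vars{\ell}$ (\Cref{S:std}) and the subterm property of $>$ on ground terms, one rules out every configuration except the degenerate one in which $\ell$ is a variable and $r$ is a ground term. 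That residual subcase I would not route through a transformation lemma: instead, I would observe that since $\R$ is saturated (\Cref{def:saturated}) the rules $(\overlapset{\ell\rightarrow r}{q}{\ell'\rightarrow r'})$ already normalise into $\R$, which produces a single $\En$-rewrite step from $u$ to some $v'$ with $v'=_\En v$ whose syntactic expansion has every label strictly below $\measure{\omega}$; \Cref{cor:normalisation} then yields $\tr_1$.

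The branch I expect to be the real obstacle is the one left over: $(\ell\rightarrow r)\in\R$ with a \emph{flat} first label, $\ell\sigma=_\En r\sigma$, which is in scope of the statement but covered by none of the six lemmas. Here $u=_\En t$ because $=_\En$ is a congruence closed under contexts. Using saturation of $\R$ as above (through a most general $\En$-unifier of $r_{|q}$ and $\ell'$), one obtains a single step $u\rightrwstep[\En]{p}{\theta}{s}{w} v'$ with $v'=_\En v$ and, crucially, $s\theta>w\theta$ \emph{strictly}: indeed $\ell'\sigma'>r'\sigma'$ forces the rewritten subterm at $q$ to drop strictly below $r_{|q}\sigma$, so, by $\En$-compatibility of $>$, the whole term drops strictly below $r\sigma=_\En\ell\sigma$. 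The key point is then that the $\En$-trace connecting $u_{|p}$ to $s\theta$ can be chosen \emph{with no root step}: both $u_{|p}$ and $s\theta$ are $\En$-instances of the same term $\ell$ (using $\vars{r}\subseteq\vars{\ell}$ and the construction through the unifier), hence they share the top structure of $\ell$ and differ only in the subterms sitting at $\ell$'s variable positions, which are pairwise $\En$-equal, so that trace lives strictly below $p$. Consequently every label of the trace we build is $\lessRwLbl{\Pset(\M)}\measure{\omega}$, and \Cref{cor:normalisation} delivers a $\tr_1$ from $\En\cup\R$ with $\measure{\tr_1}\lessRwLbl{\Pset(\M)}\measure{\tr_0}$; the sub-subcase $q\notin\PosX{r}$ is handled by the same device, paralleling \Cref{lem:nonoverlap-right-right} for a flat first step. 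The genuinely delicate thing throughout these ad hoc branches is making sure no constructed label ever sits at position $p$ with a measure only $\eqRwLbl{\Pset(\M)}$ to $\measure{\omega}$ — which is exactly what the ``no root step'' observation secures — while everything else is routine checking of side conditions and multiset-order bookkeeping.
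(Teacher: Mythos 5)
Your overall strategy is the paper's own: write $p' = p\cdot q$ and dispatch on the orientation of $\omega$, on whether its rule comes from $\En$ or from $\R$, and on whether $q$ is a non-variable position, to \Cref{lem:overlap-left-right}, \Cref{lem:overlap-En-right}, \Cref{lem:overlap-right-right}, \Cref{lem:nonoverlap-left-right}, \Cref{lem:nonoverlap-En-right} and \Cref{lem:nonoverlap-right-right} (the reorientation of a left $\En$-step into a right one, which you make explicit, is harmless since the two labels are $\eqRwLbl{\Pset(\M)}$). One genuine flaw in your write-up: to discharge the side condition ``$\ell\in\X$ implies $\ell\in\vars{r}$'' of \Cref{lem:overlap-right-right} you invoke \Cref{S:std}, but that is not a hypothesis here --- $\R$ is only assumed saturated, and during the run of the algorithm rules may well have right-hand-side variables missing from the left (this is precisely why \Cref{th:generation of rewrite theory} re-checks the variable condition on the returned $\R$ and why the \RVar optimisation exists). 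The paper instead kills the offending configuration in one line from the hypothesis that $\R^=$ is not trivial: if $\ell=x\in\X$ and $x\notin\vars{r}$ then every term is equal to $r$ modulo $\R^=$. Your residual ``$\ell$ a variable, $r$ ground'' branch is therefore vacuous, and the ad hoc saturation argument you sketch for it should simply be replaced by this non-triviality observation.

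On the other hand, the branch you single out as the real obstacle --- $(\ell\rightarrow r)\in\R$ with a flat instance $\ell\sigma=_\En r\sigma$ --- is a point where you are more careful than the paper: the ordered rewrite step only requires $\ell\sigma\geq r\sigma$, so this instance is in scope, yet all six transformation lemmas assume strictness, and the paper's proof merely asserts ``in that case, $\ell\sigma>r\sigma$'' without justification. Your treatment is sound: the merged rule still gives $w\theta=_\En r\sigma[r'\sigma']_q<r\sigma$ strictly (from $\ell'\sigma'>r'\sigma'$ and $\En$-compatibility), so the label placed at $p$ is $\lessRwLbl{\Pset(\M)}(p,r\sigma,\ell\sigma)$ via its second component, the connecting $\En$-traces live strictly below $p$ because $\ell\notin\X$ (the case $\ell\in\X$ degenerates to $r=\ell$ using finiteness of $\En$-classes and non-triviality), and \Cref{cor:normalisation} finishes. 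This is essentially the proof of \Cref{lem:overlap-right-right} with $\ell\sigma>r\sigma$ weakened to $\ell\sigma=_\En r\sigma$ at every point where it is used, and the same adaptation covers the non-overlapping sub-case; it fills a step the paper glosses over.
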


\begin{proof}
As $p \leq p'$, there exists $q$ such that $p' = p \cdot q$. We do a case analysis on the shape of $\omega$ and on the value of $q$.
\begin{itemize}
    \item \emph{Case $q \in \PosX{r}$:}
        \begin{itemize}
        \item \emph{${\sim} = {\rightarrow}$ and $(\ell \rightarrow r) \in \R$:} In that case, $\ell\sigma > r\sigma$. Note that $\R^=$ is not trivial. If $\ell \in \X$ and $\ell \not\in \vars{r}$ then we would have that for all terms $t$, $t =_{\R^=} r$ which is a contradiction of $\R^=$ not being trivial. We conclude by applying \Cref{lem:overlap-right-right}.
        \item \emph{${\sim} = {\rightarrow}$ and $(\ell \rightarrow r) \in \En$:} We conclude by applying \Cref{lem:overlap-En-right}.
        \item \emph{${\sim} = {\leftarrow}$ and $(r \rightarrow \ell) \in \R$:} In that case, $\ell\sigma > r\sigma$. We conclude by applying \Cref{lem:overlap-left-right}.
        \end{itemize}
    \item \emph{Case $q \not\in \PosX{r}$:}
        \begin{itemize}
        \item \emph{${\sim} = {\rightarrow}$ and $(\ell \rightarrow r) \in \R$:} In that case, $\ell\sigma > r\sigma$. We conclude by applying \Cref{lem:nonoverlap-right-right}.
        \item \emph{${\sim} = {\rightarrow}$ and $(\ell \rightarrow r) \in \En$:} We conclude by applying \Cref{lem:nonoverlap-En-right}.
        \item \emph{${\sim} = {\leftarrow}$ and $(r \rightarrow \ell) \in \R$:} In that case, $\ell\sigma > r\sigma$. We conclude by applying \Cref{lem:nonoverlap-left-right}.\qedhere
        \end{itemize}
\end{itemize}
\end{proof}


\begin{lemma}
\label{lem:parallel}
Let $\omega = \leftrightrwlabel{p}{\sigma}{\ell}{r}[E]$ with ${\sim} \in \{ \rightarrow, \leftarrow\}$. Let $u \Leftrightrwstep{\omega} t \rightrwstep[E']{p'}{\sigma'}{\ell'}{r'} v$ be some rewrite steps. Let $\tr_0 = \omega \rightrwlabel{p'}{\sigma'}{\ell'}{r'}[E]$. Let $\M$ a set of terms that is greater than $u \Leftrightrwstep{\tr_0} v$.

If $p \para p'$ then by defining the ground rewrite trace $\tr_1 = \rightrwlabel{p'}{\sigma'}{\ell'}{r'}[E] \omega$ we have
\begin{enumerate}[label=\textbf{P\arabic*}]
\item $u \rwLeftrightStep{\tr_1} v$, and \label{item:parallel-Leftright}
\item $\M$ is greater than $u \rwLeftrightStep{\tr_1} v$, and \label{item:parallel-derive}
\item $\measure{\tr_1} \eqRwLbl{\Pset(\M)} \measure{\tr_0}$\label{item:parallel-measure}
\item if ${\sim} = {\leftarrow}$ and either $r\sigma > \ell\sigma$ or $\ell'\sigma' > r'\sigma'$ then $\measureTerm{u,\tr_0,v} < \measureTerm{u,\tr_1,v}$\label{item:parallel-measureTerm}
\end{enumerate}
\end{lemma}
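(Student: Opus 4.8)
\emph{Approach.} This is the ``parallel redexes'' case, and the whole content is the classical fact that two rewrite steps at incomparable positions commute; none of the overlapping‑rules or normalisation machinery used in the preceding lemmas is needed. The plan is to make the common two‑hole context explicit and then simply push the two steps past one another, reading off all four conclusions from the resulting picture.

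\emph{Step 1: the middle term.} Since $p \para p'$, the two redex occurrences in $t$ sit at incomparable positions, so I would write $t = C[\,t_{|p},\,t_{|p'}\,]$ for a term context $C[\cdot,\cdot]$ whose first hole is at $p$ and whose second hole is at $p'$. The step $u \Leftrightrwstep{\omega} t$ only modifies the subterm at $p$, so $u$ and $t$ agree at every position parallel to $p$; in particular $u_{|p'} = t_{|p'}$. I then take $w := u[r'\sigma']_{p'}$ as the intended middle term of $\tr_1$ and verify, by a case split on the direction of $\omega$, that $u \rightrwstep[E]{p'}{\sigma'}{\ell'}{r'} w \Leftrightrwstep{\omega} v$. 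The first arrow is legitimate because $u_{|p'} = t_{|p'} =_{E}\ell'\sigma' \geq r'\sigma'$ and $w = u[r'\sigma']_{p'}$; the second follows from $w_{|p} = u_{|p}$ (as $p \para p'$) together with the parallel‑swap identity $x[a]_p[b]_{p'} = x[b]_{p'}[a]_p$ and $v = t[r'\sigma']_{p'}$, which turn $u \Leftrightrwstep{\omega} t$ into $w \Leftrightrwstep{\omega} v$ with the very same substitution, direction and order condition. This yields \textbf{P1}, with $\tr_1 = \rightrwlabel{p'}{\sigma'}{\ell'}{r'}[E]\,\omega$.

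\emph{Step 2: smallness and the $\measureF$‑measure.} For \textbf{P2}, the proper prefixes of $\tr_1$ produce exactly $u$, $w$ and $v$; $\M$ is greater than $u$, $t$ and $v$ by the hypothesis on $\tr_0$, and $\M$ is greater than $w$ because $u \geq w$ — indeed $u_{|p'} =_E \ell'\sigma' \geq r'\sigma' = w_{|p'}$, and $\geq$ is an $\En$‑compatible reduction order, so this inequality propagates through the surrounding context — combined with transitivity of $\geq$. For \textbf{P3}, $\tr_0$ and $\tr_1$ are built from exactly the same two rewrite labels, namely $\omega$ and $\rightrwlabel{p'}{\sigma'}{\ell'}{r'}[E]$, only in the opposite order, and $\measure{\cdot}$ is insensitive both to the order of labels in a trace and to the equational‑theory annotation; moreover $p, p' \in \Pset(\M)$ since $\M$ is greater than $t$ and $p, p' \in \Pos{t}$. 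Hence $\measure{\tr_1} = \measure{\tr_0}$ as multisets, which is in particular $\eqRwLbl{\Pset(\M)}$.

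\emph{Step 3: the $\measureTermF$‑measure, and the main (minor) obstacle.} For \textbf{P4}, assume $\sim = {\leftarrow}$; then the order condition of $\omega$ forces $r\sigma \geq \ell\sigma$. Writing $a := t_{|p'} =_E \ell'\sigma'$ (so $a \geq r'\sigma'$), the four terms are $t = C[r\sigma, a]$, $u = C[\ell\sigma, a]$, $v = C[r\sigma, r'\sigma']$ and $w = C[\ell\sigma, r'\sigma']$, so monotonicity of $\geq$ already gives $t \geq w$; the extra hypothesis $r\sigma > \ell\sigma$ or $\ell'\sigma' > r'\sigma'$ upgrades one of the two inner inequalities to a strict one, yielding $t > w$. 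Since $\measureTerm{u,\tr_0,v} = \multiset{u,t,v}$ and $\measureTerm{u,\tr_1,v} = \multiset{u,w,v}$ differ only in the replacement of $t$ by the strictly smaller $w$, the multiset comparison of the $\measureTermF$‑values is strict, establishing \textbf{P4}. There is no genuine obstacle in this lemma — it is by far the easiest of the landscape transformations, precisely because parallel redexes commute on the nose. The only points that need care are the bookkeeping of the two admissible annotations $E \in \{\emptyset, \En\}$ when transporting $\geq$ across $=_E$ (which is fine because $\geq$ is $\En$‑compatible), getting the orientation conventions of $\leftrightrwlabel$ exactly right so that one correctly identifies $t$ as the high point of $\tr_0$ and $w$ as the low point of $\tr_1$, and the routine check that the two steps of $\tr_1$ really are \emph{ordered} rewrite steps in the sense of $\rightrwstep[E]{}{}{}{}$, which is immediate since their order conditions are copied verbatim from those of $\tr_0$.
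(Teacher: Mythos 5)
Your proof is correct and follows essentially the same route as the paper's: make the two-hole context explicit, commute the two parallel steps through the middle term $w = C[\ell\sigma, r'\sigma']$ (the paper's $t'$), observe that the label multiset is unchanged for \textbf{P3}, and derive strictness of $t$ versus the new middle term for \textbf{P4} from whichever of the two inner inequalities the hypothesis makes strict. Your treatment is in fact slightly more explicit than the paper's on \textbf{P2} (where the paper just says ``by construction''), and you reach the same substantive conclusion $t > w$ for \textbf{P4} that the paper reaches as $t > t'$.
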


\begin{proof}
Let us define $E_1 = E$, $E_2 = \emptyset$ when ${\sim} = {\rightarrow}$ and $E_1 = \emptyset$, $E_2 = E$ otherwise. Since $p \para p'$, there exist a term context $C[\__1,\__2]$ and four terms $u',v',t_1,t_2$ such that $C[\__1,\__2]_{|p} = \__1$ and $C[\__1,\__2]_{|p'} = \__2$, $t = C[t_1,t_2]$, $r\sigma =_{E_2} t_1$, $\ell'\sigma' =_{E'} t_2$, $u' =_{E_1} \ell\sigma$, $v' = r'\sigma'$, $u = C[u',t_2]$ and $v = C[t_1,v']$. Defining $t' = C[u',v']$, we deduce that $u \rightrwstep[E']{p'}{\sigma'}{\ell'}{r'} t'$ and $t' \Leftrightrwstep{\omega} v$, which gives us \Cref{item:parallel-Leftright}.
Notice that as $\M$ rewrites into $u \Leftrightrwstep{\tr_0} v$, we directly obtain by construction that $\M$ rewrites into $u \rwLeftrightStep{\tr_1} v$, i.e. \Cref{item:parallel-derive}.
Notice that $\measure{\tr_1} = \measure{\tr_0}$, hence $\measure{\tr_0} \eqRwLbl{\Pset(\M)} \measure{\tr_1}$, i.e. \Cref{item:parallel-measure}.
Finally, when ${\sim} = {\leftarrow}$, since $>$ is a $\En$-compatible reduction ordering, we know from $u \leftrwstep[E]{p}{\sigma}{r}{\ell} t$ that $t \geq u$, the inequality being strict when $r\sigma > \ell\sigma$. Moreover, as $u \rightrwstep[E']{p'}{\sigma'}{\ell'}{r'} t'$, we deduce that $u \geq t'$, the inequality being strict when $\ell'\sigma' > r'\sigma'$. Therefore, if $r\sigma > \ell\sigma$ or $\ell'\sigma' > r'\sigma'$ then we have $t > t'$ which concludes the proof of $\measureTerm{u,\tr_0,v} < \measureTerm{u,\tr_1,v}$ and thus \Cref{item:parallel-measureTerm}.

We can represent graphically the transformation as follows:
\begin{center}
    \begin{tikzpicture}[
        term/.style={}
        ]
        \def\length{1.5cm}
        \node[anchor=mid] (L) {};
        \node[right=12.5cm of L.mid,anchor=mid] (R) {};
        \draw[dashed] (L) -- (R);

        \node[fill=white,right=5.2cm of L.mid] (Label) {$\xrightarrow{\mathit{transformation}}$};

        \node[fill=white,term,right=\length of L.mid,anchor=mid] (T0) {$u$};
        \node[fill=white,term,above right=\length and \length of T0.mid,anchor=mid] (T1) {$t$};
        \node[fill=white,term,below right=\length and \length of T1.mid,anchor=mid] (T2) {$v$};

        \draw[<-] (T0) edge node[auto,sloped] {\tiny$\ell \leftarrow r$} node[auto,sloped,below] {\tiny$p, \sigma,E$} (T1);
        \draw[->] (T1) edge node[auto,sloped] {\tiny$\ell' \rightarrow r'$} node[auto,sloped,below] {\tiny$E', p', \sigma'$} (T2);

        \node[term,fill=white,right=6.5cm of T0.mid,anchor=mid] (T0') {$u$};
        \node[term,fill=white,below right= \length and \length of T0'.mid,anchor=mid] (T1') {$t'$};
        \node[term,fill=white,above right= \length and \length of T1'.mid,anchor=mid] (T2') {$v$};

        \draw[->] (T0') edge node[auto,sloped] {\tiny$\ell' \rightarrow r'$} node[auto,sloped,below] {\tiny$E',p', \sigma'$} (T1');
        \draw[<-] (T1') edge node[auto,sloped] {\tiny$\ell \leftarrow r$} node[auto,sloped,below] {\tiny$p, \sigma,E$} (T2');        
    \end{tikzpicture}
\end{center}
\end{proof}

\Cref{lem:parallel} allows us to consider an equivalence relation between rewrite traces, that allows to \emph{swap} parallel positions. 


\begin{definition}
Let $\RelPara$ the smallest equivalence relation on rewrite traces such that:
\begin{itemize}
\item $\rightrwlabel{p}{\sigma}{\ell}{r} \RelPara \leftrwlabel{p}{\sigma}{r}{\ell}$ when $(\ell \rightarrow r) \in \En$
\item $\leftrightrwlabel[\sim_1]{p_1}{\sigma_1}{\ell_1}{r_1}[E_1] \leftrightrwlabel[\sim_2]{p_2}{\sigma_2}{\ell_2}{r_2}[E_2] \RelPara  \leftrightrwlabel[\sim_2]{p_2}{\sigma_2}{\ell_2}{r_2}[E_2] \leftrightrwlabel[\sim_1]{p_1}{\sigma_1}{\ell_1}{r_1}[E_1]$ when $p_1 \para p_2$
\item $\tr \RelPara \tr'$ implies $\tr_1 \cdot \tr \cdot \tr_2 \RelPara \tr_1\cdot  \tr'\cdot  \tr_2$
\end{itemize}
\end{definition}


\begin{lemma}
\label{lem:parallel-equivalence-relation}
Let $u \Leftrightrwstep{\tr} v$. Let $\M$ a set of terms that is greater than  $u \Leftrightrwstep{\tr} v$. For all rewrite traces $\tr'$, if $\tr' \RelPara \tr$ then $u \Leftrightrwstep{\tr'} v$ which is smaller than $\M$, and $\measure{\tr} \eqRwLbl{\Pset(\M)} \measure{\tr'}$.
\end{lemma}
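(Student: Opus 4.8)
The plan is to argue by minimality of $\RelPara$. Since $\RelPara$ is the \emph{least} equivalence relation on rewrite traces that is closed under the two generating schemes and under the congruence rule $\tr\mapsto\tr_1\cdot\tr\cdot\tr_2$, it suffices to exhibit a relation $\mathcal{E}$ on traces having all these closure properties and such that $(\tr,\tr')\in\mathcal{E}$ already entails the conclusion of the lemma; then $\RelPara\subseteq\mathcal{E}$ and we are done. Concretely I would take $(\tr,\tr')\in\mathcal{E}$ to mean: for all ground terms $u,v$ and every set $\M$ greater than $u\Leftrightrwstep{\tr}v$, one has $u\Leftrightrwstep{\tr'}v$, $\M$ is greater than $u\Leftrightrwstep{\tr'}v$, and $\measure{\tr}\eqRwLbl{\Pset(\M)}\measure{\tr'}$ --- together with the statement obtained by swapping $\tr$ and $\tr'$, so that $\mathcal{E}$ is symmetric by construction. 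Using symmetry of $\RelPara$, the hypothesis $\tr'\RelPara\tr$ then gives $(\tr,\tr')\in\mathcal{E}$, whence the result.

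The structural closure of $\mathcal{E}$ is pure bookkeeping. For reflexivity the only non-trivial point is $\measure{\tr}\eqRwLbl{\Pset(\M)}\measure{\tr}$: each rewrite label of $\tr$ acts at a position of one of the terms visited along $u\Leftrightrwstep{\tr}v$, every such term lies below some element of $\M$, so that position belongs to $\Pset(\M)$, and $\eqRwLbl{\Pset(\M)}$ is reflexive component-wise since it only requires $=_\En$ on the two term components. Transitivity follows by composing the two implications, using that $\eqRwLbl{\Pset(\M)}$ and its multiset extension are transitive. For the congruence rule I would decompose $u\Leftrightrwstep{\tr_1\cdot\tr\cdot\tr_2}v$ as $u\Leftrightrwstep{\tr_1}a\Leftrightrwstep{\tr}b\Leftrightrwstep{\tr_2}v$, observe that the terms visited by the middle segment are among those visited by the whole trace so that $\M$ is greater than $a\Leftrightrwstep{\tr}b$, replace $\tr$ by $\tr'$ via $(\tr,\tr')\in\mathcal{E}$, and note that the new intermediate terms introduced by $\tr'$ stay below $\M$ while those of $\tr_1,\tr_2$ are unchanged, and that $\measure{\tr_1\cdot\tr\cdot\tr_2}=\measure{\tr_1}\uplus\measure{\tr}\uplus\measure{\tr_2}$ is $\eqRwLbl{\Pset(\M)}$-equivalent to $\measure{\tr_1}\uplus\measure{\tr'}\uplus\measure{\tr_2}$ because the multiset extension of $\eqRwLbl{\Pset(\M)}$ is a congruence for disjoint union.

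It then remains to check that $\mathcal{E}$ contains the two generators. For $\rightrwlabel{p}{\sigma}{\ell}{r}\RelPara\leftrwlabel{p}{\sigma}{r}{\ell}$ with $(\ell\rightarrow r)\in\En$: both one-step traces run between the very same pair $u,v$ --- the reversed step is a legitimate rewrite step precisely because $\ell\sigma=_\En r\sigma$ --- so the requirement that $\M$ be greater than the trace transfers verbatim, and the two ranking multisets are $\multiset{(p,r\sigma,\ell\sigma)}$ and $\multiset{(p,\ell\sigma,r\sigma)}$, which are $\eqRwLbl{\Pset(\M)}$-equivalent since $p\in\Pset(\M)$ and $\ell\sigma=_\En r\sigma$. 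The parallel-swap generator is essentially \Cref{lem:parallel}: that lemma states that swapping a label past an adjacent \emph{right} rewrite step produces a trace between the same endpoints, keeps $\M$ greater than it, and yields a $\eqRwLbl{\Pset(\M)}$-equivalent ranking multiset; the remaining orientation patterns, and the opposite direction of the transfer needed because $\mathcal{E}$ is symmetric, are reduced to this case by applying \Cref{lem:parallel} to the reversed trace --- reversal exchanges left and right labels, leaves the multiset of visited terms (hence also the property that $\M$ be greater than the trace) untouched, and preserves the ranking by \Cref{lem:reverse-measure} and \Cref{lem:operation-properties}. I expect this to be the only real obstacle: since \Cref{lem:parallel} is phrased only for a label followed by a right step, one must carry out a careful but routine case analysis on the two orientations, feeding the appropriate (possibly reversed) trace into \Cref{lem:parallel} and invoking the evident symmetry of the commutation of two adjacent steps at parallel positions; everything else is the multiset bookkeeping sketched above.
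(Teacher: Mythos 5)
Your overall strategy --- showing that the property to be transferred is itself an equivalence relation on traces closed under the generators of $\RelPara$, with \Cref{lem:parallel} supplying the parallel-swap generator --- is exactly the paper's proof, which is the one-line remark that the lemma is a direct application of \Cref{lem:parallel} to the definition of $\RelPara$. Your bookkeeping for reflexivity (positions of applied labels lie in $\Pset(\M)$ because every visited term is below $\M$), transitivity, the congruence rule, and the $\En$-reversal generator is sound.

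The gap is in the orientation case analysis for the parallel swap, which you correctly identify as the only real obstacle but then dismiss too quickly. Writing $R$ for a right label and $L$ for a left one, \Cref{lem:parallel} transforms the patterns $RR$ and $LR$ (a peak into a valley). Since $\RelPara$ is symmetric, your relation $\mathcal{E}$ must also absorb the direction $RL \Rightarrow LR$, i.e.\ turn a valley $u \rightrwstep{p}{\sigma}{\ell}{r} t \leftrwstep{p'}{\sigma'}{\ell'}{r'} v$ at parallel positions into a peak. Your proposed reduction ``apply \Cref{lem:parallel} to the reversed trace'' does not reach this case: reversal flips every label \emph{and} reverses their order, so $LL$ is indeed sent to $RR$, but $RL$ is sent to $RL$ again. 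Worse, this is not routine bookkeeping: the swapped trace passes through the term $t'$ obtained from $u$ by replacing $u_{|p'} = r'\sigma'$ with $\ell'\sigma'$, and since both original steps require $\ell\sigma \geq r\sigma$ and $\ell'\sigma' \geq r'\sigma'$, this $t'$ satisfies $t' \geq u$ and $t' \geq v$; when both inequalities are strict, $t'$ strictly dominates every term visited by the original trace, so the clause ``$\M$ is greater than $u \Leftrightrwstep{\tr'} v$'' simply does not follow from the hypotheses. Indeed the last property of \Cref{lem:parallel} records that the visited-term measure strictly \emph{decreases} in the $LR \Rightarrow RL$ direction, hence strictly increases in the direction you still owe. (The paper's one-liner glosses over the same point; in its only use, inside \Cref{lem:shape-minimal}, the lemma is applied to a $\RelPara$-representative chosen minimal for $\measureTermF$, so the valley-to-peak direction is never actually taken. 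As stated, though, the universally quantified claim needs either that restriction or an additional argument that $\M$ dominates the raised intermediate terms.)
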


\begin{proof}
A direct application \Cref{lem:parallel} on the definition of the equivalence relation $\RelPara$.
\end{proof}


\subsubsection{Shape of minimal rewrite traces}


\begin{lemma}
\label{lem:shape-minimal}
Assume that $\En$ has finite equivalence classes. Let $\R,\Rn$ be sets of rewrite rules such that $\R$ is saturated w.r.t $\Rn$. Assume that $\R^=$ is not trivial. Let $u,v$ two ground terms and $\tr$ is a ground syntactic rewrite trace from $\En \cup \R$ such that $u \Leftrightrwstep{\tr} v$. Let $\M$ be a finite set of terms that is greater than $u \Leftrightrwstep{\tr} v$. 

If $u \Leftrightrwstep{\tr} v$ is minimal by $\lessRwLbl{\Pset(\M)}$ then there exist a term $t$ and three ground syntactic rewrite traces $\tr', \tr_L, \tr_R$ from $\En \cup \R$ such that:
\begin{itemize}
\item $\tr' \RelPara \tr$ and $\tr' = \tr_L \tr_R$ and $u \Rightrwstep{\tr_L} t \Leftrwstep{\tr_R} v$
\item for all $\tr'_L \RelPara \tr_L$, if $\tr'_L = \tr_1 \rightrwlabel{p}{\sigma}{\ell}{r} \tr_E \rightrwlabel{p'}{\sigma'}{\ell'}{r'} \tr_2$ and $(\ell' \rightarrow r') \in \R$ then $p \not\leq p'$
\item for all $\tr'_R \RelPara \tr_R$, if $\tr'_R = \tr_1 \leftrwlabel{p}{\sigma}{\ell}{r} \tr_E \leftrwlabel{p'}{\sigma'}{\ell'}{r'} \tr_2$ and $(\ell \rightarrow r) \in \R$ then $p' \not\leq p$
\end{itemize}
\end{lemma}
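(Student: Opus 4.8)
The plan is to prove \Cref{lem:shape-minimal} by a normalisation argument: starting from a minimal trace $u \Leftrightrwstep{\tr} v$, I first eliminate all peaks (an occurrence of a left-oriented rewrite label immediately followed, up to $\RelPara$-swapping of parallel positions, by a right-oriented label in a prefix position), obtaining a trace of the form $\tr_L \tr_R$ with $u \Rightrwstep{\tr_L} t \Leftrwstep{\tr_R} v$; then I observe that within $\tr_L$ and $\tr_R$ the steps involving rules from $\R$ (as opposed to $\En$) must already be ordered by decreasing position, again because the trace is minimal.

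First I would make precise the measure being used. Thanks to \Cref{lem:well-founded} and \Cref{lem:finite-equivalence-classes}, the order $\lessRwLbl{\Pset(\M)}$ on rewrite-label measures is well-founded once $\M$ is a fixed finite set of ground terms (note $u \Leftrightrwstep{\tr} v$ being greater-than-$\M$ guarantees all intermediate terms, hence all positions appearing, lie in $\Pset(\M)$). I would order traces primarily by the multiset extension $\measure{\tr}$ and, as a tie-breaker for the parallel-swap case which preserves $\measure{\tr}$ by \Cref{item:parallel-measure}, by the secondary measure $\measureTerm{u,\tr,v}$ together with a count of ``badly ordered adjacent label pairs''. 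So ``minimal'' in the hypothesis is with respect to this combined well-founded order.

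Next I would run the transformation loop. For the first bullet: as long as $\tr$ contains, up to $\RelPara$, a sub-pattern $\leftrightrwlabel[\leftarrow]{p}{\sigma}{\ell}{r} \leftrightrwstep{} \rightrwlabel{p'}{\sigma'}{\ell'}{r'}[\En]$ with $p \leq p'$ and $\ell'\sigma' > r'\sigma'$, apply \Cref{lem:combined-X-right} (using that $\R$ is saturated w.r.t.\ $\Rn$ and $\R^=$ non-trivial) to get a strictly $\lessRwLbl{\Pset(\M)}$-smaller trace still greater-than-$\M$, contradicting minimality; if instead $p \para p'$, \Cref{lem:parallel-equivalence-relation} lets me $\RelPara$-swap, and if the swap is a genuine peak-reduction step (the $\measureTerm$ drops by \Cref{item:parallel-measureTerm}) it again contradicts minimality, so after exhausting $\RelPara$-swaps the trace decomposes as $\tr_L\tr_R$ with $\tr_L$ all-right and $\tr_R$ all-left. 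The subtlety here, and what I expect to be the main obstacle, is the case $\ell'\sigma' =_\En r'\sigma'$: such a step is ``flat'' (an $\En$-step or an $\R$-step whose instance happens to be $\En$-trivial), \Cref{lem:combined-X-right} does not apply to it since it requires $\ell'\sigma' > r'\sigma'$, and a flat step can be both left- and right-oriented. I would handle this by treating flat steps as freely re-orientable and absorbing them into $\En$-runs via \Cref{lem:parallel} / the first clause of $\RelPara$, so that the peak/slope analysis only ever needs to commit an orientation to genuinely height-changing steps; making this bookkeeping airtight (ensuring the flat steps can always be pushed aside without creating new non-$\RelPara$-reachable obstructions) is the delicate point.

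Finally, for the second and third bullets: suppose some $\tr'_L \RelPara \tr_L$ contained $\tr_1\,\rightrwlabel{p}{\sigma}{\ell}{r}\,\tr_E\,\rightrwlabel{p'}{\sigma'}{\ell'}{r'}\,\tr_2$ with $(\ell'\to r')\in\R$ and $p \leq p'$; since $\tr_E$ consists of $\En$-labels with positions parallel-or-incomparable to $p$ (otherwise we could $\RelPara$-commute further), I can slide it past and apply \Cref{lem:combined-X-right} to the adjacent pair, producing a trace $\lessRwLbl{\Pset(\M)}$-smaller than $\tr$ and still greater-than-$\M$ — contradicting minimality (after re-normalising with \Cref{cor:normalisation}, which \Cref{lem:combined-X-right} already folds in). The $\tr_R$ case is symmetric via the $\reverse{(\cdot)}$ operator and \Cref{lem:reverse-measure}, \Cref{lem:operation-properties}. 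This establishes all three conclusions.
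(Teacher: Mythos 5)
You take essentially the same route as the paper: pass to a representative of the $\RelPara$-class of $\tr$ that additionally minimises a secondary measure (legitimate because $\RelPara$ preserves $\measureF$ by \Cref{lem:parallel-equivalence-relation} — but note the secondary minimisation must be performed \emph{within} the $\RelPara$-class of the given $\tr$, not globally, since the conclusion only quantifies over $\tr' \RelPara \tr$; your phrasing "minimal in the hypothesis is with respect to this combined order" reads as a change of hypothesis); kill peaks at comparable positions with \Cref{lem:combined-X-right} against $\lessRwLbl{\Pset(\M)}$-minimality and peaks at parallel positions with \Cref{lem:parallel}; then repeat inside each slope for the last two bullets. The flat-step subtlety you single out is indeed the delicate point, and the paper's own proof handles it no better: it infers from $\ell\sigma =_\En r\sigma$ that the rule "is in $\En$" and hence $\RelPara$-reorientable, which is exactly the leap you describe.

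The one step that would fail as written is your treatment of $\tr_E$ in the second and third bullets. After minimising the length of $\tr_E$ over the $\RelPara$-class, the labels that survive are precisely those that \emph{cannot} be commuted out, and the paper shows these are $\En$-labels whose positions \emph{strictly extend} $p'$ — hence comparable to both $p$ and $p'$, not parallel to $p$ as you claim, and not slidable past the label at $p'$. They are eliminated not by sliding but by absorption: since they all act strictly below $p'$, the run $t_2 \Rightrwstep{\tr_E} t_3 \rightrwstep{p'}{\sigma'}{\ell'}{r'} t_4$ collapses into a single modulo-$\En$ step $t_2 \rightrwstep[\En]{p'}{\sigma'}{\ell'}{r'} t_4$, which is exactly the form required by the second hypothesis of \Cref{lem:combined-X-right}. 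So the lemma you invoke does close the argument, but the mechanism that makes the two $\R$-labels "adjacent" is this absorption into the modulo-$\En$ matching, not commutation; as stated, your justification for adjacency is incorrect.
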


\begin{proof}
To prove the first point, we take the syntactic rewrite trace $\tr'$ such that $\tr \RelPara \tr'$ and $\tr'$ is minimal by $\measureTerm{\cdot}$, i.e. for all $\tr'' \RelPara \tr'$, $\measureTerm{u,\tr'',v} \not< \measureTerm{u,\tr',v}$. Let us look at the shape of $\tr'$. 

If $\tr' = \tr_1 \leftrwlabel{p}{\sigma}{\ell}{r} \rightrwlabel{p'}{\sigma'}{\ell'}{r'} \tr_2$ then by \Cref{lem:parallel}, we know that $\ell\sigma \leq r\sigma$ and $\ell'\sigma' \leq r'\sigma'$, i.e. $\ell\sigma =_\En r\sigma$ and $\ell'\sigma' =_\En r'\sigma'$ and so both $\ell \rightarrow r$ and $\ell' \rightarrow r'$ are in $\En$. Let us take the largest prefix $\tr''$ of $\tr'$ such that there exist two syntactic rewrite traces $\tr_L, \tr_R$ (possibly empty) and a two terms $t,s$ such that $u \Rightrwstep{\tr_L} t \Leftrwstep{\tr_R} s$ and $\tr'' \RelPara \tr_L\tr_R$ and $\measureTerm{u,\tr'',s} = \measureTerm{u,\tr_L\tr_R,s}$. 
If $\tr'' = \tr'$ then the result holds. Otherwise, $\tr' = \tr'' \omega \tr_W$ for some rewrite labels $\omega$ and a rewrite trace (possibly empty) $\tr_W$. By maximality of $\tr''$, we deduce that $\tr_R \neq \varepsilon$ and $\omega = \rightrwlabel{p}{\sigma}{\ell}{r}$.
As $\measureTerm{u,\tr'',s} = \measureTerm{u,\tr_L\tr_R,s}$ and $\tr'$ is minimal by $\measureTerm{\cdot}$, we already showed that $\ell \rightarrow r$ must be in $\En$. This is a contradiction with the maximality of $\tr''$ as, in such a case, $\rightrwlabel{p}{\sigma}{\ell}{r} \RelPara \leftrwlabel{p}{\sigma}{r}{\ell}$. This concludes the proof of the first item of the lemma. 

We now show the second and third properties of the lemma. In particular, we show the following sub-property: if $\tr_0 = \tr_1 \rightrwlabel{p}{\sigma}{\ell}{r} \tr_E \rightrwlabel{p'}{\sigma'}{\ell'}{r'} \tr_2$ and $(\ell' \rightarrow r') \in \R$ and $p \leq p'$ and $t_0 \Rightrwstep{\tr_0} t_5$ smaller than $\M$ then there exists a syntactic rewrite trace $\tr_T$ such that:
\begin{itemize}
\item $t_0 \Leftrightrwstep{\tr_T} t_5$ which is smaller than $\M$,
\item $\measure{\tr_T} \lessRwLbl{\Pset(\M)} \measure{\tr_0}$.
\end{itemize}

Let $\tr_E$ be the smallest rewrite trace (in term of $|\tr_E|$) such that $\tr'_0 \RelPara \tr_0$ and $\tr'_0 = \tr_1 \rightrwlabel{p}{\sigma}{\ell}{r} \tr_E \rightrwlabel{p'}{\sigma'}{\ell'}{r'} \tr_2$ and $(\ell' \rightarrow r') \in \R$ and $p \leq p'$. There exist $t_1,\ldots,t_4$ such that:
\[
t_0 \Rightrwstep{\tr_1} t_1 \rightrwstep{p}{\sigma}{\ell}{r} t_2 \Rightrwstep{\tr_E} t_3 \rightrwstep{p'}{\sigma'}{\ell'}{r'} t_4 \Rightrwstep{\tr_2} t_5
\]
If $\rightrwlabel{p''}{\sigma''}{\ell''}{r''} \in \tr_E$ and $(\ell'' \rightarrow r'') \in \R$ then by minimality of $\tr_E$, we know that $p \not\leq p''$ and $p'' \not\leq p'$. 

Let us show that for all $\omega'' = \rightrwlabel{p''}{\sigma''}{\ell''}{r''} \in \tr_E$, $p'' \not\RelPara p'$. By contradiction, w.l.o.g., assume that $\tr_E = \tr_A \omega'' \tr_B$, and $p'' \RelPara p'$ and for all $\omega_2 \in \tr_B$, $\pos(\omega_2) \not\RelPara p'$. We deduce that for all $\omega_2 \in \tr_B$, $\pos(\omega_2) > p'$. Moreover, as $p'' \RelPara p'$, we deduce that $\pos(\omega_2) \RelPara p''$. This would imply that $\tr_0 \RelPara \tr_1 \rightrwlabel{p}{\sigma}{\ell}{r} \tr_A \tr_B \rightrwlabel{p'}{\sigma'}{\ell'}{r'} \omega''  \tr_2$, which is a contradiction with the minimality of $\tr_E$. 

As we shown that $\omega'' = \rightrwlabel{p''}{\sigma''}{\ell''}{r''} \in \tr_E$, $p'' \not\RelPara p'$ and $p'' \not\leq p'$, we deduce that $p' < p''$ and so $p < p''$. Once again by minimality of $\tr_E$, we deduce that $(\ell'' \rightarrow r'') \not\in \R$, i.e. $(\ell'' \rightarrow r'') \in \En$. It implies that $t_1 \rightrwstep{p}{\sigma}{\ell}{r} t_2 \rightrwstep[\En]{p'}{\sigma'}{\ell'}{r'} t_4$. Hence by \Cref{lem:combined-X-right}, we deduce that there exist a syntactic trace $\tr_S$ from $\En \cup \R$ such that:
\begin{itemize}
\item $t_1 \Leftrightrwstep{\tr_S} t_4$ which is smaller than $\M$,
\item $\measure{\tr_S} \lessRwLbl{\Pset(\M)} \measure{\rightrwlabel{p}{\sigma}{\ell}{r}\rightrwlabel{p'}{\sigma'}{\ell'}{r'}[\En]}$.
\end{itemize}
As $\measure{\rightrwlabel{p}{\sigma}{\ell}{r}\rightrwlabel{p'}{\sigma'}{\ell'}{r'}[\En]} = \measure{\rightrwlabel{p}{\sigma}{\ell}{r}\rightrwlabel{p'}{\sigma'}{\ell'}{r'}} \leqRwLbl{\Pset(\M)} \measure{\rightrwlabel{p}{\sigma}{\ell}{r}\tr_E\rightrwlabel{p'}{\sigma'}{\ell'}{r'}}$, we obtain $\measure{\tr_1 \tr_S \tr_2} \lessRwLbl{\Pset(\M)} \measure{\tr'_0}$. We conclude by taking $\tr_T = \tr_1 \tr_S \tr_2$.

The second property of the lemma directly holds by applying the sub-property on $\tr_0 = \tr_L$ leading to $\measure{\tr_1 \tr_S \tr_2} \lessRwLbl{\Pset(\M)} \measure{\tr'_0} \eqRwLbl{\Pset(\M)} \measure{\tr_0}$ and so $\measure{\tr_1 \tr_S \tr_2 \tr_R} \lessRwLbl{\Pset(\M)} \measure{\tr}$ yielding a contradiction with the minimality of $\tr$ by $\lessRwLbl{\Pset(\M)}$. The third property of the lemma holds by applying the sub-property on $\tr_0 = \reverse{\tr_R}$ which will also lead to a contradiction with the minimality of $\tr$ by $\lessRwLbl{\Pset(\M)}$.
\end{proof}

Recall we placed ourselves within the hypotheses of \Cref{th:generation of rewrite theory}. Hence \GenExtended{$E',\Rn,\En$} terminates and returns $\R$ such that for all $(\ell \rightarrow r) \in \R$, $\vars{r} \subseteq \vars{\ell}$. We also denote $E = E' \cup \Rn^= \cup \En$ and $T = (>,\R,\Rn,\En,\En)$. 


\begin{restatable}{lemma}{lemfinalshape}
\label{lem:final_shape}
For all ground terms $u,v$, if $u =_E v$ then there exist a ground term $w$ and two rewrite traces $\tr_L$ and $\tr_R$ such that: 
\begin{itemize}
\item rules in $\tr_L$ and $\tr_R$ are from $\R \cup \En$, and
\item $u \Rightrwstep{\tr_L} w \Leftrwstep{\tr_R} v$, and 
\item $\tr_L,\tr_R$ are both ordered by decreasing position. 
\end{itemize}
\end{restatable}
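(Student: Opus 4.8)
The plan is to prove \Cref{lem:final_shape} by taking $u =_E v$ and converting the proof of equality modulo $E$ into a ``mountainous landscape'' as described in \Cref{sec:mountainous landscape}, then applying the landscape transformations until no more apply, and finally reading off the desired shape. Concretely, since $u =_E v$ and $u,v$ are ground, there is a finite sequence of syntactic rewrite steps with rules from $E' \cup \Rn^= \cup \En$, which we can turn into an \emph{ordered} rewrite trace $\tr_0$ (using the ordered rewrite step $\rightrwstep{}{}{}{}$) from $\R \cup \En$: for rules in $E'$ both orientations are in $\R$ (after \Cref{alg:initial value} and termination of \GenExtended{}), for rules in $\Rn^=$ we have $\ell > r$ so one orientation suffices, and $\En$ contributes symmetric rules. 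Let $\M$ be the finite set of all terms appearing in this initial trace $\tr_0$; since $>$ is an $\En$-strong reduction order, every term reachable by any transformed trace is $\le$ some term in $\M$, and by \Cref{lem:finite-equivalence-classes} the position set $\Pset(\M)$ is finite, so $\lessRwLbl{\Pset(\M)}$ is well-founded (\Cref{lem:well-founded}).

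Next I would pick, among all ground syntactic rewrite traces $\tr$ from $\R \cup \En$ with $u \Leftrightrwstep{\tr} v$ that $\M$ is greater than, one that is minimal with respect to $\lessRwLbl{\Pset(\M)}$ (on the multiset $\measure{\tr}$); such a minimum exists by well-foundedness, and the initial $\tr_0$ witnesses that the set is non-empty. Then I apply \Cref{lem:shape-minimal}: since $\R$ is saturated w.r.t.\ $\Rn$ (this follows because \GenExtended{} terminated, so the fixpoint condition of \Cref{def:saturated} holds) and $\R^= = E$ is non-trivial by hypothesis, \Cref{lem:shape-minimal} gives $\tr' \RelPara \tr$ with $\tr' = \tr_L\tr_R$, $u \Rightrwstep{\tr_L} w \Leftrwstep{\tr_R} v$, and the crucial property that no $\RelPara$-equivalent rearrangement of $\tr_L$ (resp.\ $\tr_R$) contains a pair of $\R$-steps with the second (resp.\ first) position a non-strict prefix of the first (resp.\ second). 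By \Cref{lem:parallel-equivalence-relation}, $\tr' \RelPara \tr$ means $u \Leftrightrwstep{\tr'} v$, $\M$ is still greater than it, and $\measure{\tr'} \eqRwLbl{\Pset(\M)} \measure{\tr}$, so $\tr'$ (hence $\tr_L$, $\tr_R$) is still minimal.

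It then remains to translate the conclusion of \Cref{lem:shape-minimal} into ``$\tr_L$ and $\tr_R$ are ordered by decreasing position''. By \Cref{def:normal form}'s preamble, a right rewrite trace is ordered by decreasing position exactly when for every sub-trace $\rightrwlabel{p}{\sigma}{\ell}{r}\tr_E\rightrwlabel{p'}{\sigma'}{\ell'}{r'}$ with $\ell'\sigma' > r'\sigma'$ we have $p \para p'$ or $p' < p$. Note $\ell'\sigma' > r'\sigma'$ forces $(\ell' \to r') \in \R$ (rules in $\En$ only produce $\En$-equal instances). Suppose toward a contradiction that $p \le p'$; then applying the $\RelPara$-rearrangement that moves $\tr_E$ past the first step when it commutes, or more directly invoking the second bullet of \Cref{lem:shape-minimal} on the sub-trace, yields $p \not\le p'$, a contradiction. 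Hence $\tr_L$ is ordered by decreasing position, and symmetrically (using $\reverse{\tr_R}$ and the third bullet) so is $\tr_R$. I expect the main obstacle to be the bookkeeping in this last step: reconciling the ``for all $\RelPara$-equivalent rearrangements'' formulation of \Cref{lem:shape-minimal} with the ``for all sub-traces'' formulation in the definition of ordered-by-decreasing-position, in particular handling the intermediate trace $\tr_E$ between the two offending steps (which the definition allows but \Cref{lem:shape-minimal} seems to require absorbing via $\RelPara$), and making sure the argument respects the asymmetry between $\En$-steps and $\R$-steps built into the definition.
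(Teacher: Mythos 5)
Your overall strategy matches the paper's: build the mountainous landscape from $u =_E v$, bound everything by a finite set $\M$ so that $\lessRwLbl{\Pset(\M)}$ is well-founded, pick a $\lessRwLbl{\Pset(\M)}$-minimal trace, apply \Cref{lem:shape-minimal}, and read off the valley shape. The final translation from the conclusion of \Cref{lem:shape-minimal} to ``ordered by decreasing position'' is handled at least as carefully in your write-up as in the paper's.

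However, there is one genuine gap at the very first step. You claim the initial landscape is already a ground syntactic rewrite trace from $\R \cup \En$ where $\R$ is the set \emph{returned} by \GenExtended{}, on the grounds that ``for rules in $E'$ both orientations are in $\R$ after \Cref{alg:initial value} and termination.'' That is false in general: \Cref{alg:initial value} only puts those rules into the \emph{initial} value $\R_0$, and the normalisation rules \RNormR, \RNormL, \REq and \RSubsume may subsequently rewrite, split, or delete them, so the returned $\R$ need not contain $\{s \rightarrow t,\; t \rightarrow s \mid (s=t) \in E'\}$ nor $\Rn$. Consequently your set of candidate traces over the final $\R \cup \En$ has no exhibited element, and the minimality argument cannot start. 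The paper closes exactly this gap with a nontrivial induction over the successive values $\R_0, \R_1, \ldots, \R_N$ of the algorithm: using \Cref{cor:normalisation} together with \Cref{lem:properties-normalisation} and \Cref{lem:normalisation-idempotent}, it shows that the trace over $\R_i \cup \En$ can be transformed into one over $\R_{i+1} \cup \En$ while preserving the property that $\M$ is greater than it, until reaching the saturated $\R_N$. You need this bridge (or an equivalent argument) before you may invoke saturation and \Cref{lem:shape-minimal}. Two minor additional points: the intermediate terms of the initial equational derivation need not be ground, so they must first be closed by a substitution into names (the paper does this explicitly); and the definition of ``ordered by decreasing position'' you cite lives in the unnumbered definition of \Cref{sec:ordering slopes}, not in \Cref{def:normal form}.
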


\begin{proof}
As $u,v$ are ground and $u =_E v$, we deduce that $u = t_0 \rwsteps{\R} t_1 \rwsteps{\R} t_2 \rwsteps{\R} \ldots \rwsteps{\R} t_n = v$ where $\R$ is the rewrite system composed of either rules from $\Rn$ or rules $\ell \rightarrow r$ where either $(\ell = r) \in \En \cup E'$ or $(r = \ell) \in \En \cup E'$. Note that some of the $t_i$s could potentially contain variables, i.e. not be ground. As rewriting is stable by application of substitutions, we can take $\sigma$ with $\dom{\sigma} = \vars{t_1,\ldots,t_n}$ and $\img{\sigma} \subseteq \N$. Note that the image of $\sigma$ does not really matter other than being ground. As $u,v$ are ground, we thus obtain that $u = t_0 = t_0\sigma \rwsteps{\R} t_1\sigma \rwsteps{\R} t_2\sigma \rwsteps{\R} \ldots \rwsteps{\R} t_n\sigma = t_n = v$. Since $>$ is $\En$-total, we can thus order each pair $(t_i\sigma,t_{i+1}\sigma)$ with $>$, i.e. either $t_i\sigma > t_{i+1}\sigma$ or $t_i\sigma < t_{i+1}\sigma$ or $t_i\sigma =_\En t_{i+1}\sigma$. Moreover, as $>$ is $\En$-compatible, we deduce the existence of $n+1$ ground syntactic rewrite labels $\omega_1,\ldots,\omega_{n+1}$ such that $u = t_0 \Leftrightrwstep{\omega_1} t_1 \Leftrightrwstep{\omega_2} \ldots \Leftrightrwstep{\omega_n} t_n = v$. Denoting $\tr = \omega_1 \ldots \omega_n$ and $\M = \multiset{t_0,\ldots,t_n}$, we deduce that $\tr$ is a ground syntactic trace from $\R$ and $u \Leftrightrwstep{\tr} v$ is smaller than $\M$.

Let us denote by $\R_0,\R_1, \ldots$ the successive values of $\R$ in \Cref{alg:generation}. Similarly, we denote by $\R'_0, \R'_1, \ldots$ the successive values of $\R'$ in \Cref{alg:generation}. In particular, from \Cref{alg:initial value}, we have $\R_0 = \Rn \cup \{\ell \rightarrow r, r\rightarrow \ell \mid (\ell = r) \in E'\}$. By denoting $\R'_0 = \R_0$, we obtain that for all $i \geq 1$, $\R_i =$ \Normalize{$\R_{i-1} \cup \R'_{i-1},\Rn,\En$}, that is $\R_{i-1} \cup \R'_{i-1} \normstep[\Rn,\En]^* \R_i$. Hence the rewrite trace $\tr$ is from $\R_0 \cup \En$. By \Cref{lem:normalisation-idempotent}, we know that for all $i \geq 0$, $\R_{i+1} \cup \bigcup_{j=0}^i \R'_j \normstep[\Rn,\En]^* \R_{i+1}$.

We prove by induction that for all $i \geq 1$, there exists $\tr_i$ a ground syntactic rewrite trace from $\R_i \cup \En$ such that $u \Leftrightrwstep{\tr_i} v$ which is smaller than $\M$. In the base case $i = 1$, on \Cref{alg:first normalisation} of \Cref{alg:generation}, we have $\R_1 =$ \Normalize{$\R_0,\Rn,\En$}, i.e. $\R_0 \normstep[\Rn,\En]^* \R_1$. By \Cref{cor:normalisation}, we deduce that there exists $\tr_1$ a ground syntactic rewrite trace from $\R_1 \cup \En$ such that $u \Leftrightrwstep{\tr_1} v$ which is smaller than $\M$.
In the inductive step $i > 1$, by inductive hypothesis, we know that there exists $\tr_{i-1}$ a ground syntactic rewrite trace from $\R_{i-1} \cup \En$ such that $u \Leftrightrwstep{\tr_{i-1}} v$ which is smaller than $\M$.
By definition, $\R_i =$ \Normalize{$\R_{i-1} \cup \R'_{i-1},\Rn,\En$}, i.e. $\R_{i-1} \cup \R'_{i-1} \normstep[\Rn,\En]^* \R_i$. By \Cref{lem:properties-normalisation}, $\R_{i-1} \cup \R'_{i-1} \cup \bigcup_{j=0}^{i-1} \R'_j \normstep[\Rn,\En]^* \R_i \cup \bigcup_{j=0}^{i-1} \R'_j \normstep[\Rn,\En]^* \R_i$. 
Note that $\tr_{i-1}$ being a ground syntactic rewrite trace from $\R_{i-1}$ implies that $\tr_{i-1}$ is also from $\R_{i-1} \cup \bigcup_{j=0}^{i-1} \R'_j \cup \En$. As $\R'_0$ contains $\Rn$, we can apply \Cref{cor:normalisation} to deduce that there exists $\tr_i$ a ground syntactic rewrite trace from $\R_i \cup \En$ such that $u \Leftrightrwstep{\tr_i} v$ which is smaller than $\M$.

We now complete the proof our the lemma.
As the algorithm terminates, there exists $N \in \mathbb{N}$ such that $\R_N \cup \R'_N \normstep[\Rn,\En]^* \R_N$, i.e. it reached a fix-point. From \Cref{alg:generation-forall,alg:generation-right-right,alg:generation-left-right,alg:generation-inverse} of \Cref{alg:generation}, we deduce that $\R_N$ is saturated w.r.t. $\Rn$. Moreover, we already proved that for all $(\ell \rightarrow r) \in \R_N$, $\ell =_E r$ and by hypothesis, $E$ is not trivial. It implies that $\R_N^=$ is also not trivial. Recall that $\M$ is finite and ground. Since $\tr_N$ is a ground syntactic rewrite trace from $\R_N \cup \En$ such that $u \Leftrightrwstep{\tr_i} v$ smaller than $\M$, we can take $\tr_{min}$ to be a ground syntactic rewrite trace from $\R_N \cup \En$ such that $u \Leftrightrwstep{\tr_i} v$ which is smaller than $\M$ and $\tr_{min}$ is minimal by $\lessRwLbl{\Pset(\M)}$.

By \Cref{lem:shape-minimal}, we deduce that there exist a term $w$ and three ground syntactic well-ordered rewrite traces $\tr'_{min}$, $\tr_L$, $\tr_R$ from $\En \cup \R_N$ such that:
\begin{enumerate}
\item $\tr'_{min} \RelPara \tr_{min}$ and $\tr'_{min} = \tr_L \tr_R$ and $u \Rightrwstep{\tr_L} w \Leftrwstep{\tr_R} v$\label{item-theorem-shape}
\item for all $\tr'_L \RelPara \tr_L$, if $\tr'_L = \tr_A \rightrwlabel{p}{\alpha}{\ell}{r} \tr_E \rightrwlabel{p'}{\alpha'}{\ell'}{r'} \tr_B$ and $(\ell' \rightarrow r') \in \R$ then $p \not\leq p'$\label{item-theorem-L}
\item for all $\tr'_R \RelPara \tr_R$, if $\tr'_R = \tr_A \leftrwlabel{p}{\alpha}{\ell}{r} \tr_E \leftrwlabel{p'}{\alpha'}{\ell'}{r'} \tr_B$ and $(\ell \rightarrow r) \in \R$ then $p' \not\leq p$\label{item-theorem-R}
\end{enumerate}
This allows us to conclude.
\end{proof}


\thgenerationsignature*

\begin{proof}
We first show that $T = (>,\R,\Rn,\En,\En)$ is a rewrite theory. This is a simple matter as \Cref{S:subset,S:order} are given as assumptions. Finally, as we assumed that for all $(\ell \rightarrow r) \in \R$, we have $\vars{r} \subseteq \vars{\ell}$, and since \Cref{alg:final_modification} of \Cref{alg:generation} ensures that $\R$ contains $f(x_1,\ldots,x_n) \rightarrow f(x_1,\ldots,x_n)$ for all $f \in \F$ with arity $n$, \Cref{S:std} is guaranteed.

Let us now prove that $T$ mimics $E$. Amongst the three properties required to show that $T$ mimics $E$, only the last one, i.e. \Cref{M:ind1}, is difficult. \Cref{M:EaE} is directly obtained, since $\En = \Ea$ and $\En \cup \Rn^= \cup E' = E$. The proof of \Cref{M:eqcomplete} is done a simple induction on the number of loop the algorithm went through in \Cref{alg:loop}. Indeed, all rules $\ell \rightarrow r$ in the initial value $\R$ on \Cref{alg:initial value} of \Cref{alg:generation} satisfy $\ell =_E r$. By applying \Cref{lem:overlapset_complete} and noticing that the normalisation rules \RNormL and \RNormR preserve this invariant, since $\Rn^= \cup \En \subseteq E$, we obtain that $\R$ satisfies \Cref{M:eqcomplete}.

We now focus on the proof of \Cref{M:ind1}. Consider $f(t_1,\ldots,t_n) =_E t$ and $\M = \{t_1,\ldots,t_n,t\}$. If $\nf{T}{E}{\M}$ holds then by \Cref{def:normal form} we know that $u\theta =_\En \minOrd[>]{E}(u\theta)$ for all $u \in \M$ where $\theta$ is an injective substitution closing $f(t_1,\ldots,t_n)$ and $t$. Hence, by \Cref{lem:final_shape}, we have 
\[
    f(t_1\theta,\ldots,t_n\theta) \Rightrwstep{\tr_L} w \Leftrwstep{\tr_R} t\theta
\]
However, $w \Leftrwstep{\tr_R} t\theta$ implies that $w < t\theta$ or $w =_\En t\theta$. Since $t\theta =_\En \minOrd[>]{E}(t\theta)$, we deduce that $w =_\En t\theta$. 

As $\tr_L$ is ordered by decreasing position, we know that there can only be one rule applied at root position, i.e. $\varepsilon$, that can strictly decrease the terms. Moreover, as $t_i\theta =_\En \minOrd[>]{E}(t_i\theta)$ for all $i$, we also know that any rule application in $\tr_L$ before the one at root position cannot strictly decrease the terms (otherwise $\minOrd[>]{E}(t_i\theta)$ would not be minimal). In other words, either 
\begin{inparaenum}[(a)]
\item $f(t_1\theta,\ldots,t_n\theta) =_\En w$; or
\item we have:
\[
    f(t_1\theta,\ldots,t_n\theta) \Rightrwstep{\tr_1} s \rightrwstep{\varepsilon}{\alpha}{\ell}{r} w' \Rightrwstep{\tr_2} w
\]
where all rules in $\tr_1$ and $\tr_2$ are in $\En$, and $(\ell \rightarrow r) \in \R$ with $\ell\alpha > r\alpha$. Additionally, we also know that the positions in $\tr_1$ are different from $\varepsilon$, as $\tr_L$ is ordered by decreasing position. Case (a) implies $f(t_1\theta,\ldots,t_n\theta) =_\En t\theta$ and so $f(t_1,\ldots,t_n) =_\En t$. We conclude by taking the rule $f(x_1,\ldots,x_n) \rightarrow f(x_1,\ldots,x_n)$ and $\sigma = \{ x_1 \mapsto t_1, \ldots, x_n \mapsto t_n \}$. Case (b) implies that $t_i\theta =_\En \ell_{|i}\alpha$ and so $t_i =_\En \ell_{|i}\alpha\theta^{-1}$ for all $i$, and $r\alpha\theta^{-1} =_\En t$. We thus conclude with the rule $\ell \rightarrow r$ and $\sigma = \alpha\theta^{-1}$.
\end{inparaenum}
\end{proof}


\subsection{Proof of Section~\ref{sec:optimisations}}
\label{sec:app-optimisation}

\lemcleanup*

\begin{proof}
Let $\R_0, \R_1, \ldots$ the successive values of $\R$ in \Cref{alg:superfluous}. We prove by induction on $n$ that $(>,\R_n,\Rn,\En,\Ea)$ mimics $E$. The base case $n=0$ is given by hypothesis. In the inductive step, we know that $(>,\R_{n-1},\Rn,\En,\Ea)$ mimics $E$. The only non-trivial property to prove in order to show that $(>,\R_n,\Rn,\En,\Ea)$ mimics $E$ is \Cref{M:ind1}. 
Take $f(t_1,\ldots,t_n) =_E t$ and $\nf{T}{E}{\{t_1,\ldots,t_n,t\}}$. We know that there exist $\sigma$ and $f(s_1,\ldots,s_n) \rightarrow s$ in $R_{n-1}$ such that $t =_\Ea s\sigma$ and for all $i \in \{1, \ldots, n\}$, $t_i =_\Ea s_i\sigma$. If $f(s_1,\ldots,s_n) \rightarrow s$ is in $R_n$ then the result trivially holds. 
Else, from \Cref{line:superfluous-while}, there exist a substitution $\alpha$ and a rule $f(\ell_1,\ldots,\ell_n) \rightarrow \ell$ in $\R_n$ such that $\ell\alpha \eqE{\Ea} s$ and for all $i \in \{1, \ldots, n\}$, $\ell_i\alpha \eqE{\Ea} s_i$. Therefore, it implies $\ell\alpha\sigma =_\Ea t$ and for all $i \in \{1, \ldots, n\}$, $\ell_i\alpha\sigma \eqE{\Ea} t_i$. This allows us to conclude. 
\end{proof}

\subsection{Order \texorpdfstring{$E$}{E}-compatible}
\label{sec:app-order}


\lembuildingEstrongorder*

\begin{proof}
We define $>'_2$ as the following relation: $u >'_2 v$ if and only if 
\begin{itemize}
\item either $u > v$ 
\item or $u \equiv v$ and there exist $s,t$ ground terms and a context $C[\_]$ such that $u =_E C[s]$, $v =_E C[t]$ and $s >_1 t$
\end{itemize}
Finally, we define the $>_2$ as being the transitive closure of $>'_2$.

Let us prove that $>_2$ is a $E$-strong reduction order compatible with $\R$. 

\paragraph{Closed by application of contexts and substitutions:}

Assume $u >'_2 v$ and let $D[\_]$ be a context and $\sigma$ be a substitution. If $u > v$ then as $>$ is a reduction order, we have that $D[u\sigma] > D[v\sigma]$ and so $D[u\sigma] >'_2 D[v\sigma]$.  Otherwise, $u \equiv v$ and there exists $s,t$ ground terms and a context $C[\_]$ such that $u =_E C[s]$, $v =_E C[t]$ and $s >_1 t$. Since $s$ and $t$ are ground, there exists a context $D'[\_] = D[C\sigma[\_]]$ such that $D[u\sigma] =_E D'[s]$ and $D[v\sigma] =_E D'[s]$. Since $\equiv$ is close by application of contexts and substitutions, we deduce that $D[u\sigma] \equiv D[v\sigma]$. This allows us to deduce that $D[u\sigma] >'_2 D[v\sigma]$ meaning that $>'_2$ is closed by application of contexts and substitutions.
Since $>_2$ is the transitive closure of $>'_2$, we conclude that $>_2$ is closed by application of contexts and substitutions.

\paragraph{Transitive:} By definition of $>_2$.

\paragraph{Asymmetric and irreflexive:} if $a >_2 b$ and $b >_2 a$ then there exists $a >'_2 u_1 >'_2 \ldots >'_2 u_n >'_2 b >'_2 v_1 >'_2 \ldots >'_2 v_m >'_2 a$. Since $>$ is $\equiv$-compatible, we deduce from definition of $>'_2$ that $a >'_2 u_1 >'_2 \ldots >'_2 u_n >'_2 b$ implies that either $a > b$ or $a \equiv b$ (and $a \equiv u_1 \equiv \ldots u_n \equiv b$). Similarly, $b >'_2 v_1 >'_2 \ldots >'_2 v_m >'_2 a$ implies that either $b > a$ or $a \equiv b$. If $a > b$ then we cannot have $b > a$ since $>$ is a strict order and we also cannot have $a \equiv b$ since \emph{$\equiv$-compatibility} would imply that $a > a$ which is prevented by $>$ being a strict order. With the same reasoning, we can show that $b > a$. It entails that $a \equiv b$. Note that by definition of $>'_2$ and since $>_1$ is closed by application of contexts and $E$-compatible, we deduce that $a >_1 u_1 >_1 \ldots u_n >_1 b >_1 v_1 >_1 \ldots >_1 v_m >_1 a$. This is in contradiction with $>_1$ being a strict order.

\paragraph{Well founded:} We only need to show that the relation $>'_2$ is well founded since $>_2$ is the transitive closure of $>'_2$. Consider an infinite sequence $u_1 >'_2 u_2 >'_2 u_3 >'_2 \ldots$. As $>$ is well founded and by $\equiv$-compatibility we deduce that we cannot have an infinite sub-sequence $u_{i} > u_{i+1} > u_{i+2} > \ldots$ and we cannot have infinite alternation of $\equiv$ and $>$. This allows us to deduce that there exists an infinite sub-sequence $u_{j} >_1 u_{j+1} >_1 u_{j+2} >_1 \ldots$ (recall that $>_1$ is closed by application of contexts). This contradicts the fact that $>_1$ is well-founded.

\paragraph{$E$-compatible:} Take $s =_E u >'_2 v =_E t$. By definition, $s =_E u$ implies $s \equiv u$, and $v =_E t$ implies $v \equiv t$. If $u > v$ then we deduce that $s > t$ (by $\equiv$-compatibility) and so $s >'_2 t$. Otherwise $u \equiv v$ and there exist $a,b$ ground terms and a context $C[\_]$ such that $u =_E C[a]$, $v =_E C[b]$ and $a >_1 b$. Hence $s =_E C[a]$ and $t =_E C[b]$ and $a >_1 b$. Therefore $s >'_2 t$. To complete the proof, we can notice that if $s =_E u >_2 v =_E t$ then $s =_E u >'_2 a_1 >'_2 \ldots >'_2 a_n >'_2 v =_E t$. As $s =_E u >'_2 a_1$ implies $s >'_2 a_1$, and $a_n >'_2 v =_E t$ implies $a_n >'_2 t$, we conclude that $s >'_2 a_1 >'_2 \ldots >'_2 a_n >'_2 t$ and so $s >_2 t$.

\paragraph{$E$-total:} Let $a,b$ be ground terms. By $\equiv$-total property, we know that either $a > b$ or $b > a$ or $a \equiv b$. In the two first case, we obtain $a >_2 b$ and $b >_2 a$ respectively. In the last case, as $a$ and $b$ are ground, we know that either $a =_E b$ or $a >_1 b$ or $b >_1 a$. If $a >_1 b$ (resp. $b >_1 a$) then by taking the empty context $C[\_] = \_$, we obtain that $a =_E C[a]$ and $b =_E C[b]$ which leads to $a >'_2 b$ (resp. $b >'_2 a$) and so $a >_2 b$ (resp. $b >_2 a$).

\paragraph{Compatible with $\R$:} By hypothesis, for all $(\ell \rightarrow r) \in \R$, we have $\ell > r$ and so $\ell >_2 r$.

\paragraph{Stable by renaming:} Recall that due to $\equiv$-compatibility of $>$ , we know that if $u \sim_1 \ldots \sim_i s > t \sim_{i+1} \ldots \sim_n v$ with $\sim_i \in \{ {\equiv}, {>}\}$ for all $i$ then $u > v$. Furthermore, by definition of $>_2$, if $u >_2 v$ and $u \not> v$ then $u \equiv v$ and $u >_1 v$. Hence, we obtain that $u >_2 v$ implies that $u > v$ or $u >_1 v$. 

Let us now look at the case where we order names: If $a >_2 b$ then $a > b$ or $a >_1 b$. By hypothesis, $a > b$ implies $a >_1 b$. Therefore $a >_2 b$ implies $a >_1 b$. Assume now that $a>_1 b$. By $\equiv$-totality, we know that either $a \equiv b$, $a > b$ or $b > a$. The latter case is in contradiction with the fact that it would imply $b >_1 a$. Hence, either $a > b$ or $a \equiv b$. In both cases, as $a >_1 b$, we deduce that $a >_2 b$. This conclude the proof of $a >_2 b$ iff $a >_1 b$.
This property allows us to prove that a renaming $\rho$ preserves $>_2$ iff $\rho$ preserves $>_1$.

Let us now show that if $\rho$ preserves $>_2$ then $\rho$ preserves $>$. Take $a,b \in \dom{\rho}$. If $a > b$ then $a >_2 b$ which implies $a\rho >_2 b\rho$. By definition, either $a\rho > b\rho$ or $a\rho \equiv b\rho$. In the latter case, since $\equiv$ is stable by application of renaming, we deduce that $a \equiv b$. However, by $\equiv$-compatibility, it contradicts our hypothesis $a > b$. Thus $a\rho > b\rho$. Let us assume that $a\rho > b\rho$. We can in fact apply the same reasoning: $a\rho > b\rho$ implies $a\rho >_2 b\rho$ and $a >_2 b$. By definition, either $a > b$ or $a \equiv b$. Once again the latter case is in contradiction with $a\rho >_2 b\rho$ by $\equiv$-compatibility and the stability of $\equiv$ by application of renamings.

We can now complete the proof of stability by renaming: Consider a renaming $\rho$ preserving $>_2$ and two terms $u$ and $v$ such that $\names{u,v} \subseteq \dom{\rho}$. We already proved that $\rho$ preserves both $>_1$ and $>$. We first prove that $u >'_2 v$ if and only if $u\rho >'_2 v\rho$. Assume that $u >'_2 v$. By definition we have:
\begin{itemize}
\item either $u > v$: In such a case, as $>$ is stable by renaming, we deduce that $u\rho > v\rho$ and so $u\rho >_2 v\rho$.
\item or $u \equiv v$ and there exist ground terms $s,t$ and a context $C[\_]$ such that $u =_E C[s]$, $v =_E C[t]$ and $s >_1 t$: In such a case, as $\equiv$ is stable by application of renamings, $u\rho \equiv v\rho$. Moreover, as $\rho$ preserves $>_1$ and $>_1$ is stable by renaming, we deduce that $u\rho =_E C\rho[s\rho]$, $v\rho =_E C\rho[t\rho]$ and $s\rho >_1 t\rho$. Therefore $u\rho >_2 v\rho$.
\end{itemize}
The other implication works in a similar fashion with a small difference when $u\rho >'_2 v\rho$ due to $u\rho =_E C[s]$, $v\rho =_E C[t]$ and $s >_1 t$: Recall that $\names{u,v} \subseteq \dom{\rho}$. Hence $u =_E C\rho^{-1}[s\rho^{-1}]$ and $v =_E C\rho^{-1}[t\rho^{-1}]$. Moreover, $\rho$ preserves $>_1$ implies $\rho^{-1}$ preserves $>_1$, meaning that $s\rho^{-1} >_1 t\rho^{-1}$. This allows us to conclude that $u>_2 v$.
\end{proof}


\begin{corollary}
Consider $\R_{\mathcal{AG}}$ the rewrite system $AC$-convergent for $\mathcal{AG}$, first proposed by Lankford~\cite{hullot1980catalogue}, defined below.
\[
\R_{\mathcal{AG}} = 
\left\{ 
\begin{array}{r@{\ \rightarrow\ }l r@{\ \rightarrow\ }l}
x * 1  & x   & {x^{-1}}^{-1} & x \\
1^{-1} & 1   & (x^{-1} * y)^{-1} & x * y^{-1}\\ 
x * x^{-1} & 1 & x * (x^{-1} * y) & y\\
x^{-1} * y^{-1} & (x * y)^{-1} & x^{-1} * (y^{-1} * z) & (x * y)^{-1} * z\\
(x * y)^{-1} * y & x^{-1} & (x * y)^{-1} * (y * z) & x^{-1} * z\\
\end{array}
\right. 
\]
There exists a $E$-strong reduction order compatible with $\R_{\mathcal{AG}}$.
\end{corollary}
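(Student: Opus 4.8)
The plan is to apply \Cref{lem:building E-strong reduction order} with a carefully chosen equivalence relation $\equiv$, a base order $>_1$, and a "coarse" order $>$. The base order $>_1$ will be the AC-compatible, AC-total, stable-by-renaming reduction order $>_{AC}$ of \Cref{ex:order AC} (which is AC-strong provided names are encoded as ground terms $\mathsf{a},\mathsf{sc}(\mathsf{a}),\ldots$ so that it is stable by renaming). The coarse order $>$ will be an AC-compatible reduction order witnessing termination of $\R_{\mathcal{AG}}$, namely the one used by Lankford/Hullot~\cite{hullot1980catalogue} to prove termination of $\R_{\mathcal{AG}}$ modulo AC; concretely this is a polynomial/weight interpretation compatible with AC (it assigns monotone integer weights to terms, is compatible with every rule of $\R_{\mathcal{AG}}$, and depends only on the multiset structure of terms, not on names). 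The equivalence $\equiv$ will be the least congruence containing $=_{AC}$ together with the relation "$u$ and $v$ differ only by replacing ground subterms of equal $>$-value", i.e. $s \equiv t$ whenever $s =_{AC} t$ or there are ground terms $a,b$ and a context $C[\_]$ with $s = C[a]$, $t = C[b]$ and the $>$-weight of $a$ equals that of $b$ — exactly in the spirit of the $\equiv$ in the example following \Cref{lem:building E-strong reduction order}.

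The steps, in order, would be: (1) Recall/cite that $>_{AC}$ of \Cref{ex:order AC} is an $AC$-strong reduction order (AC-compatible, AC-total, stable by renaming via the name encoding), so it serves as $>_1$ with $E = AC$. (2) Recall/cite that $\R_{\mathcal{AG}}$ is $AC$-terminating~\cite{hullot1980catalogue}, hence there is an $AC$-compatible reduction order $>$ with $\ell > r$ for every rule $\ell\rightarrow r \in \R_{\mathcal{AG}}$; note that the standard termination argument uses a weight interpretation that is blind to names, so $>$ is stable by renaming, and can be chosen so that on names it refines $>_1$ restricted to names (since on constants/names the weight is constant, $a > b$ never holds between two names, so the condition "$a > b$ implies $a >_1 b$" holds vacuously; alternatively tweak the weight so it agrees with $>_1$ on names). (3) Define $\equiv$ as above and check the three hypotheses on $>$ and $\equiv$ required by \Cref{lem:building E-strong reduction order}: that $\equiv$ is a congruence closed under contexts, substitutions and renamings containing $=_{AC}$ (immediate from the definition); that $>$ is $\equiv$-compatible, i.e. $s \equiv u > v \equiv t \Rightarrow s > t$ (this holds because replacing a ground subterm by one of equal weight does not change the weight of the whole term, so $\equiv$-classes have a well-defined weight and $>$ compares these weights); and that $>$ is $\equiv$-total on ground terms, i.e. any two ground terms are either $>$-comparable or have the same weight hence are $\equiv$-related. (4) Invoke \Cref{lem:building E-strong reduction order} to obtain an $AC$-strong reduction order $>_2$ compatible with $\R_{\mathcal{AG}}$, which is exactly the claim.

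The main obstacle I expect is step (3): verifying that the particular termination order $>$ extracted from~\cite{hullot1980catalogue} really is $\equiv$-compatible and $\equiv$-total for a $\equiv$ that is simultaneously (a) a genuine congruence and (b) fine enough that $=_{AC} \subseteq {\equiv}$ while coarse enough that "same $>$-weight" implies $\equiv$. The delicate point is that $\equiv$ must be defined so that $>$ (which only sees weights) never distinguishes two $\equiv$-related terms, yet $>_1 = {>_{AC}}$ must be able to break ties \emph{within} each $\equiv$-class in a context- and substitution-stable way; this forces $\equiv$ to be "replace a ground subterm by an equal-weight ground subterm" rather than something smaller, and one must check this is transitive and closed under contexts and substitutions (substitution stability needs the replaced subterms to be ground, which is why the definition is phrased with ground $a,b$). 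A secondary, more mundane obstacle is the interaction with names: one must be careful that the weight interpretation is defined on $\T(\F,\N)$ (treating names as weight-constant) so that $>$ is stable by renaming and the hypothesis "$a>b$ implies $a>_1 b$" is satisfied — easiest by observing it holds vacuously since no two names are $>$-comparable. Once these checks are in place, the corollary follows immediately from \Cref{lem:building E-strong reduction order}.
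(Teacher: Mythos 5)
Your proposal is correct and follows essentially the same route as the paper: the paper also instantiates \Cref{lem:building E-strong reduction order} with $>_1$ being the $>_{AC}$ order of \Cref{ex:order AC}, with $\equiv$ generated by $=_{AC}$ together with replacement of ground subterms of equal weight, and with the coarse order $>$ built from the weight that witnesses $AC$-termination of $\R_{\mathcal{AG}}$ (concretely, the paper just counts function symbols, and closes the union of the weight comparison and the rewrite relation modulo $\equiv$ under transitivity). Your observations that $\equiv$-totality follows from the empty context, and that the condition on names holds vacuously because the weight is constant on names, match the checks the paper performs.
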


\begin{proof}
For a term $u$, let us note $\#(u)$ the number of function symbols in $u$.

We define $\equiv$ the smallest equivalence relation such that $u \equiv v$ implies $u =_{AC} v$ or there exist $a,b$ ground terms and a term context $C[\_]$ such that $u = C[a]$ and $C[b] = v$ and $\#(a) = \#(b)$.

Let us now define $>'$ be the order defined as $u >' v$ if and only if either $u \equiv \circ \rwstep{\R_{\mathcal{AG}}} \circ \equiv v$ or else there exist $a,b$ ground terms and a term context $C[\_]$ such that $u \equiv C[a]$ and $C[b] \equiv v$ and $\#(a) > \#(b)$. Finally, the relation $>_\mathcal{AG}$ is the transitive closure of $>'$.

By definition of $AC$, we can easily notice that $u \equiv v$ implies that $\#(u) = \#(v)$. Additionally, by definition of $\R_{\mathcal{AG}}$, we also have that for $\ell \rightarrow r \in \R_{\mathcal{AG}}$, for all substitutions $\sigma$, $\#(\ell\sigma) > \#(r\sigma)$. Hence $u \rwstepC[AC]{\R_{\mathcal{AG}}} v$ implies $\#(u) > \#(v)$. Thus, we obtain that $>'$ is well founded and so does $>_\mathcal{AG}$. 
It can also easily be shown that $>_\mathcal{AG}$ is closed by application of substitutions and contexts. Hence $>_\mathcal{AG}$ is a reduction order. 

Take $a,b$ two ground terms. Assume that $a \equiv \circ \rwstepC[AC]{\R_{\mathcal{AG}}} \circ \equiv b$ or $b \equiv \circ \rwstepC[AC]{\R_{\mathcal{AG}}} \circ \equiv a$ , we directly have that $a >_\mathcal{AG} b$ or $b >_\mathcal{AG} a$. Else, we naturally have either $\#(a) > \#(b)$ or $\#(a) = \#(b)$ or $\#(a) < \#(b)$. In the first case, we have $a >_\mathcal{AG} b$. In the second case, $a \equiv b$. In the third case, $b >_\mathcal{AG} a$. This prove $\equiv$-totality of $>_\mathcal{AG}$.

Finally, assume $s \equiv u >' v \equiv t$. If $u >' v$  because $u \equiv \circ \rwstepC[AC]{\R_{\mathcal{AG}}} \circ \equiv v$ then we naturally have $s \equiv \circ \rwstepC[AC]{\R_{\mathcal{AG}}} \circ \equiv t$ and so $s >' t$. Otherwise, $u \equiv C[a]$ and $C[b] \equiv v$ and $\#(a) > \#(b)$ for some ground terms $a,b$ and context $C[\_]$. Hence, we also naturally have $s \equiv C[a]$ and $C[b] \equiv t$ which allows us to conclude that $s >' t$. As $>_\mathcal{AG}$ is the transitive closure of $>'$, it concludes the proof of $\equiv$-compatibility.

We conclude the existence of a $E$-strong reduction order compatible with $\R_{\mathcal{AG}}$ by applying \Cref{lem:building E-strong reduction order} with the order $>_\mathcal{AG}$ and the order in~\cite{DBLP:conf/rta/Rubio99} that was shown to be a $AC$-strong reduction order (\Cref{sec:discussion}).
\end{proof}

\end{document}